\renewcommand{\>}{\rangle}
\newcommand{\<}{\langle}
\newcommand{\bra}[1]{\langle #1 \vert}
\newcommand{\ket}[1]{\vert #1 \rangle}
\newcommand{\HH}{\mathcal{H}}
\newcommand{\DD}{\mathcal{D}}
\newcommand{\cS}{\mathcal{S}}
\newcommand{\LL}{\mathcal{L}}
\newcommand{\QO}{\mathcal{QO}}
\newcommand{\QC}{\mathcal{QC}}
\newcommand{\EE}{\mathcal{E}}
\newcommand{\sem}[1]{\llbracket #1 \rrbracket}
\newcommand{\NN}{\mathbb{N}}
\newcommand{\RR}{\mathbb{R}}
\newcommand{\E}{\mathbb{E}}
\newcommand{\guard}{\Box}
\newcommand{\qqRHL}{\mathsf{qOTL}}
\newcommand{\pqRHL}{\mathsf{pqRHL}}
\newcommand{\rqPD}{\mathsf{rqPD}}
\newcommand{\eRHL}{\mathsf{eRHL}}
\newcommand{\eqdef}{\triangleq}
\newcommand{\AST}{\mathsf{AST}}
\newcommand{\Pos}{\mathsf{Pos}}
\newcommand{\PosI}{\mathsf{Pos}^{\infty}}
\newcommand{\Herm}{\mathsf{Herm}}
\newcommand{\TD}{\mathsf{TD}}
\newcommand{\var}{\mathit{var}}
\newcommand{\symeq}{=_{\text{sym}}}
\newcommand{\abs}[1]{| #1 |}
\newcommand{\qprogs}{ \textbf{qProgs} }
\newcommand{\qvar}{\textbf{qVar}}
\DeclareMathOperator{\tr}{tr}
\DeclareMathOperator{\spanv}{span}
\DeclareMathOperator{\supp}{supp}
\newcommand{\skp}{\mathbf{skip}}
\newcommand{\ifb}{\mathbf{if}}
\newcommand{\ife}{\mathbf{fi}}
\newcommand{\while}{\mathbf{while}}
\newcommand{\wdo}{\mathbf{do}}
\newcommand{\wod}{\mathbf{od}}
\newcommand{\oq}{\overline{q}}
\newcommand{\triple}[3]{\{#1\} \ #2 \ \{#3\}}
\newcommand{\rtriple}[4]{\triple{#1}{#2 \sim #3}{#4}}
\newcommand{\varsinone}{\langle1\rangle}
\newcommand{\varsintwo}{\langle2\rangle}
\newcommand{\geqlow}{\sqsupseteq}
\newcommand{\leqlow}{\sqsubseteq}
\newcommand{\swap}{\mathsf{SWAP}}
\newcommand{\halfI}{\frac{I}{2}}
\newcommand{\qWhile}{\mathsf{qWhile}}
\newcommand{\whileloop}{\while\ M[\oq]=1\ \wdo\ S\ \wod}
\newcommand{\ifmeasure}{\ifb\ (\guard m\cdot M[\oq] = m \to S_m)\ \ife}
\DeclarePairedDelimiter\rbra{\lparen}{\rparen}
\DeclarePairedDelimiter\sbra{\lbrack}{\rbrack}
\let\NAT@parse\undefined
\theoremstyle{plain} 
\newtheorem{theorem}{Theorem}[section]
\newtheorem{definition}[theorem]{Definition}
\newtheorem*{definition*}{Definition}
\newtheorem{proposition}[theorem]{Proposition}
\newtheorem{lemma}[theorem]{Lemma}
\newtheorem*{lemma*}{Lemma}
\begin{document}

\title{Complete Quantum Relational Hoare Logics from Optimal Transport Duality}

  


\author{\IEEEauthorblockN{
    Gilles Barthe\IEEEauthorrefmark{1}\thanks{${}^\ast$Corresponding authors: Gilles Barthe, Li Zhou}\IEEEauthorrefmark{2},
    Minbo Gao\IEEEauthorrefmark{3}, 
    Theo Wang\IEEEauthorrefmark{4}, 
    Li Zhou\IEEEauthorrefmark{1}\IEEEauthorrefmark{5}}
    \IEEEauthorblockA{
    \IEEEauthorrefmark{2}MPI for Security and Privacy, Germany and IMDEA Software Institute, Spain, 
    \href{mailto:gilles.barthe@mpi-sp.org}{gilles.barthe@mpi-sp.org}\\
    \IEEEauthorrefmark{3}
    Institute of Software, CAS, China${}^1$\thanks{${}^1$Key Laboratory of System Software (Chinese Academy of Sciences) and State Key Laboratory of Computer Science, Institute of Software, Chinese Academy of Sciences, China} and University of Chinese Academy of Sciences, China, 
    \href{mailto:gaomb@ios.ac.cn}{gaomb@ios.ac.cn} \\
    \IEEEauthorrefmark{4}University of Cambridge, United Kingdom, \href{mailto:tcw57@cam.ac.uk}{tcw57@cam.ac.uk} \\
    \IEEEauthorrefmark{5}
    Institute of Software, CAS, China${}^1$,
    \href{mailto:zhouli@ios.ac.cn}{zhouli@ios.ac.cn}}}

\maketitle


\begin{abstract}
We introduce a quantitative relational Hoare logic for quantum
programs. Assertions of the logic range over a new infinitary
extension of positive semidefinite operators. We prove that our logic
is sound, and complete for bounded postconditions and almost surely
terminating programs. Our completeness result is based on a quantum
version of the duality theorem from optimal transport. We also define
a complete embedding into our logic of a relational Hoare logic with
projective assertions.

\end{abstract}




\section{Introduction}
Relational Hoare logics are program logics used to reason about
relationships between programs. Typically, their judgments are of the
form $\rtriple{P}{S_1}{S_2}{Q}$, where $S_1$ and $S_2$ are programs,
and $P$ and $Q$ are relational assertions, traditionally known as pre-
and postcondition. In this paper, we consider the setting where $S_1$
and $S_2$ are quantum programs in the pure $\qWhile$ language.
In this setting, it is natural to define validity based on quantum
couplings. Indeed, there exist several proof systems that support a
rich set of proof rules and are sound w.r.t.\, coupling-based notions
of validity~\cite{qRHL_Unruh_2019,barthe_rqpd,qRHLWithExpectations}.
These proof systems have been used to reason about quantum processes
and quantum security. However, the proof-theoretic foundations of
these proof systems remain unexplored. In particular, there is no
prior account of the completeness of these systems. The challenge with
completeness arises from the existential nature of coupling-based
reasoning: validity of a Hoare judgment $\rtriple{P}{S_1}{S_2}{Q}$
asserts the existence of a suitable coupling, called witness coupling,
between (output states of) $S_1$ and $S_2$. Therefore, the completeness of
the proof system is intuitively equivalent to proving that the rules of the proof system suffice to build all valid couplings between two programs. Unfortunately, it seems difficult to establish a direct argument of this kind. One reason is that proof rules are compositional and allow to build couplings that respect the structure of programs, so it seems plausible that the proof rules are incomplete. In this paper, we do not attempt a direct proof of completeness. Rather, we observe that one can achieve completeness by leveraging a duality theorem for quantum couplings.

\subsubsection*{Contributions}
The main contribution of this paper is a complete proof system for
almost surely terminating programs and positive semi-definite (PSD)
assertions. The proof system contains three parts. The first part is a
minimalistic, standard, set of rules---concretely, one left and right
rule for each construct, two-sided rules for skip and sequential
composition, and a rule of consequence w.r.t.\, the usual L\"owner
order $\sqsubseteq$ on assertions. We prove that this set of rules is
complete for \emph{split} postconditions, i.e.\, postconditions of the
form $Q_1\otimes I + I\otimes Q_2$, where $Q_1$ and $Q_2$ are unary
assertions. The proof follows by classic structural induction on
programs---for technical considerations that will be explained later,
the proof also requires that validity be defined using a new variant
of quantum coupling, called partial coupling, of independent interest.
The second part is a new structural rule, called the duality rule.
The validity of the (duality) rule is based on a quantum duality
theorem, akin to the celebrated Kantorovich-Rubinstein duality theorem
for the probabilistic setting. The main benefit of the rule is that it allows to reduce a judgment of the form
$$\rtriple{P}{S_1}{S_2}{Q}$$
to a judgment of the form
$$\rtriple{P}{S_1}{S_2}{Q_1\otimes I - I\otimes Q_2}
     $$ 
where informally $Q_1$ and $Q_2$ are quantified universally over all unary
assertions such that $Q_1\otimes I - I\otimes Q_2 \sqsubseteq
Q$. Therefore, the duality rules allow us to reduce the proof
of a judgment with an arbitrary postcondition to the validity of a
judgment with a split postcondition, for which the standard rules
suffice.
The third part of the logic are two-sided proof rules. These proof
rules are important for the usability of the logic and are present in
prior works, but are not needed for completeness, and will only
be discussed briefly in the paper.

The second contribution of the paper is an alternative interpretation
of our proof system where assertions are drawn from an infinite-valued
generalization of PSD operators. The logic remains sound for all
postconditions and complete for all bounded postconditions---provided
one restricts the (dual) rule to bounded postconditions. However, the
main benefit of this generalization is that it provides a means to
unify projective assertions, used e.g.\, in~\cite{qRHL_Unruh_2019},
and positive semi-definite operators. As an application, we provide a
complete embedding into our logic of a relational Hoare logic with
projective predicates.

Finally, we leverage our completeness theorems to characterize some
properties of interest. We give two characterizations of program
equivalence. The first characterization is based on (finite-valued) positive semi-definite assertions and uses tools from stable quantum optimal transport. The second characterization is based on projective
assertions (and infinite-valued predicates). We also present characterizations of quantum distance measures (trace distance and Wasserstein semi-distance), diamond norm for programs, non-interference and quantum differential privacy. Finally, as a contribution of independent interest, we prove that the recently proposed relational Hoare logic $\eRHL$ for probabilistic programs~\cite{erhl} is complete for all bounded postconditions and AST programs.

\subsection*{Summary of contributions}
In summary, the main contributions of the paper are:
\begin{itemize}
\item a sound and complete relational program logic for quantum
  programs (\Cref{thm:soundness}, \Cref{thm:completeness post cond});
 \item a complete semantic embedding of quantum relational Hoare logics using projective predicates using infinite-valued predicates (\Cref{prop:semanticembedding projector logic});
 \item characterizations of observational equivalence (\Cref{thm:equal rule}), trace distance and diamond norm (\Cref{prop:encodingoftracedistance} and \Cref{thm:completeness diamond norm}), Wasserstein distance (\Cref{thm:completeness-quantum-wasserstein-semi-distance}),
   non-interference (\Cref{thm:completeness-quantum-non-interference}), and quantum differential privacy (\Cref{thm:completeness-quantum-differential-privacy});
 \item a proof of completeness for the $\eRHL$ relational
   program logic for probabilistic programs (\Cref{prop:completeness-erhl}).
 \end{itemize}



\section{Notation and Preliminaries}

We assume basic familiarity to quantum computing (see standard textbook \cite{Nielsen_Chuang_2010}) and set the scene with some notations.

\paragraph{Quantum states and maps} Let $\HH$ be a Hilbert space.
We define $\DD(\HH)$ and $\DD^1(\HH)$ to be the set of partial density operators
(i.e.
positive semi-definite (PSD) operators with trace $\leq 1$) and density
operators (i.e.
partial density operators with trace 1) over $\HH$, respectively.
Intuitively, $\DD(\HH)$ represents the subdistributions over pure states in
$\HH$ and $\DD^1(\HH)$ contains only the full distributions.
Furthermore, we write $\QC(\HH)$ and $\QO(\HH)$ for the set of quantum channels
(CPTP maps) and quantum operations (trace-nonincreasing CP maps) over $\HH$.
We use the former to interpret all almost surely terminating quantum programs
and the latter to represent general quantum programs.
Obviously, $\QC(\HH)\subsetneq \QO(\HH)$.

\paragraph{Quantum predicates} We define $\cS(\HH)$ and $\Pos(\HH)$ to be respectively the closed subspaces (equivalently the orthogonal projectors) and the PSD operators on $\HH$. Subspaces can be used as a `discrete' predicate: a state $\rho \in \DD(\HH)$ satisfies $X \in \cS(\HH)$ if $\supp(\rho) \subseteq X$. General PSD operators are used as bounded quantitative predicates: the `extent' to which $\rho$ satisfies $P \in \Pos(\HH)$ is defined to be $\tr(P\rho)$. 
Commonly used predicates in the work include: the `symmetric' predicate $P_{sym}[\HH] = \frac{1}{2}(I + \swap[\HH])$ (we sometimes denote it as $\symeq$) where $\swap[\HH] = \sum_{ij}|ij\>\<ji|$, and parameter $\HH$ is omitted if it is clear from the context; and the `anti-symmetric' predicate $P_{sym}^\bot$, i.e., the complement of the projector $P_{sym}$, $P_{sym}^\bot[\HH] = \frac{1}{2}(I - \swap[\HH])$. Note that both $P_{sym}[\HH]$ and $P_{sym}^\bot[\HH]$ are in $\cS(\HH\otimes\HH)$.

\paragraph{Infinite-valued predicates} In this work, we introduce a novel notion of
possibly infinite-valued quantitative predicates, denoted $\PosI(\HH)$, by allowing
positive operators to have an eigenspace corresponding to eigenvalue $+\infty$.
In other words, any $A\in\PosI(\HH)$ has an eigenvalue decomposition
$\{(\lambda_i,X_i)\}_i$ where $\lambda_i\in \mathbb{R}^{+\infty} \eqdef [0, +\infty]$, the non-zero
eigenspaces $X_i$ are pairwise orthogonal, and $\sum_iX_i = I$.
As a convention, we define $(+\infty) \cdot 0 = 0 \cdot (+\infty) = 0$, $(+\infty) + a = a + (+\infty) = +\infty$
for $a\in \mathbb{R}^{+\infty}$, and $+\infty\le +\infty$.
We now extend the definitions of various operations on PSD operators to
$\PosI(\HH)$.
Firstly, for any $|\psi\>$, the inner product $\<\psi|A|\psi\>$ is defined as
$$\<\psi|A|\psi\> \triangleq \sum_i\lambda_i \<\psi|X_i|\psi\>.$$
This definition allows us to extend all the operations and constructions on PSD operators that this work relies on to the infinite-valued case. For example, the extended L\"owner order is defined by $A_1\sqsubseteq A_2$ if for all $|\psi\>$, $\<\psi|A_1|\psi\> \le \<\psi|A_2|\psi\>$. We refer the reader to the appendix for more details on the supported operations (see \Cref{def:operations-infinite-valued-pre}). 
Finally, for $X\in \cS(\HH)$ and $A\in\PosI(\HH)$, we define $X\mid A \triangleq A + (+\infty\cdot X^\bot) \in\PosI(\HH)$. This will be useful for enforcing assertion-based, projective preconditions in the quantitative setting.


For compactness reasons, in this paper, we present the technical development of our results in terms of the more general infinite-valued predicates.

\section{Quantum Couplings}
We review basic definitions and theorems of quantum couplings and quantum optimal transport.

\subsection{Basic Definitions and Duality Theorems}
In probability theory, probabilistic couplings are a powerful tool for
reasoning about different ways of correlating two distributions. A
coupling of two distributions $d_1$, $d_2$ is a joint distribution
with $d_1$, $d_2$ as its respective marginals. Quantum couplings are
the quantum analogue of probabilistic couplings; instead of (sub)distributions, they relate (partial) density operators.
\begin{definition}[Quantum Coupling]
\label{def:coupling}
  Let $\rho_1 \in \DD(\HH_1)$ and $\rho_2 \in \DD(\HH_2)$ be two partial density operators. A coupling between $\rho_1$ and $\rho_2$ is a partial density operator $\rho \in \DD(\HH_1\otimes \HH_2)$ such that $\tr_2(\rho) = \rho_1$ and $\tr_1(\rho) = \rho_2$. We write $\rho: \langle \rho_1, \rho_2\rangle$. 
\end{definition}
Strassen's theorem~\cite{strassen1965existence} provides a necessary and sufficient condition for the existence of a coupling with respect to a given relation. Zhou \textit{et al}.~\cite{zhou2018quantumcouplingstrassentheorem} lift Strassen's theorem to the quantum setting. Their theorem relates a quantum lifting (where for any subspace $X$, a lifting $\rho_1 X^{\#} \rho_2$ is witnessed by couplings of the form $\rho: \langle \rho_1, \rho_2\rangle$ such that $\supp (\rho) \subseteq X$) to a universally quantified property that reasons about $\rho_1$ and $\rho_2$ separately. 
Their proof is based on semi-definite programming (SDP), a common technique in quantum computing and information theory. It turns out that the same technique can be generalized to accommodate for a more general, `quantitative' version of liftings, as stated below.
\begin{definition}[Quantum Lifting with Defects]
\label{def:quantum lifting alter}
  Let $\rho_1 \in \DD(\HH_1)$ and $\rho_2 \in \DD(\HH_2)$, and $\epsilon \in \RR^{+\infty}$ be a defect. Let $X\in \Pos(\HH_1 \otimes \HH_2)$. Then $\rho \in \DD(\HH_1\otimes \HH_2)$ is called a witness of the lifting $\rho_1 X^\#_\epsilon \rho_2$ iff 
  \begin{enumerate}
    \item $\rho: \langle \rho_1, \rho_2\rangle$,
    \item $\tr(X\rho)\leq \epsilon$,
  \end{enumerate}
\end{definition}
Note that for any subspace $X$, $\rho_1 X^\# \rho_2$ iff $\rho_1 (X^\bot)^\#_0 \rho_2$.
\begin{theorem}[Quantum Strassen's Theorem with Defects]
  \label{thm:quantum strassen defect alter}
  For any $\rho_1 \in \DD(\HH_1)$ and $\rho_2 \in \DD(\HH_2)$ with $\tr(\rho_1) = \tr(\rho_2)$, for any defect $\epsilon \in \RR^{+\infty}$ and for any $X \in \Pos(\HH_1 \otimes \HH_2)$, the following are equivalent:
  \begin{enumerate}
    \item $\rho_1 X^\#_\epsilon \rho_2$;
    \item For any $Y_1 \in \Pos(\HH_1)$ and $Y_2 \in \Pos(\HH_2)$ such that $X \geqlow Y_1 \otimes I_2 - I_1 \otimes Y_2$, it holds that 
    \[
      \tr(Y_1\rho_1) \leq \tr(Y_2\rho_2) + \epsilon
    \]
  \end{enumerate}
\end{theorem}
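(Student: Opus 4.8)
The plan is to read condition~(1) as the feasibility question ``does the semidefinite program $\min\{\tr(X\rho) : \rho : \langle \rho_1,\rho_2\rangle\}$ have value $\le\epsilon$?'' and condition~(2) as a lower bound on the value of its dual, so that the equivalence becomes an instance of semidefinite duality; the only genuine work will be to obtain strong duality without a constraint qualification and to force the dual variables into the positive semidefinite cone. Before that I would dispose of the degenerate cases. If $\epsilon=+\infty$, condition~(2) is vacuous, and a coupling of $\rho_1,\rho_2$ always exists because $\tr(\rho_1)=\tr(\rho_2)$ (take $\frac{1}{p}\,\rho_1\otimes\rho_2$ with $p\eqdef\tr(\rho_1)>0$, and $\rho=0$ when $p=0$), so~(1) holds; likewise if $p=0$ then $\rho_1=\rho_2=0$ and both conditions hold. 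So assume $p>0$ and $\epsilon<+\infty$.

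The direction $(1)\Rightarrow(2)$ is a one-line computation: given a witness $\rho$ of $\rho_1 X^\#_\epsilon\rho_2$ and any $Y_1,Y_2\in\Pos$ with $X\geqlow Y_1\otimes I_2 - I_1\otimes Y_2$, the marginal conditions give $\tr(Y_1\rho_1)-\tr(Y_2\rho_2)=\tr\big((Y_1\otimes I_2 - I_1\otimes Y_2)\rho\big)$, which is $\le\tr(X\rho)\le\epsilon$ because $X-(Y_1\otimes I_2 - I_1\otimes Y_2)\succeq0$ and $\rho\succeq0$. For $(2)\Rightarrow(1)$ I argue by contraposition: assume no coupling $\rho$ satisfies $\tr(X\rho)\le\epsilon$. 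The set of couplings of $\rho_1,\rho_2$ is convex and compact, so $v^\ast\eqdef\min\{\tr(X\rho):\rho:\langle\rho_1,\rho_2\rangle\}$ is attained and, by assumption, $v^\ast>\epsilon$. The crux is the strong-duality identity
\[
  v^\ast\;=\;\sup\big\{\,\tr(Y_1\rho_1)-\tr(Y_2\rho_2)\ :\ Y_1\in\Pos(\HH_1),\ Y_2\in\Pos(\HH_2),\ X\geqlow Y_1\otimes I_2 - I_1\otimes Y_2\,\big\};
\]
granting it, the supremum exceeds $\epsilon$, so some admissible pair $(Y_1,Y_2)$ has $\tr(Y_1\rho_1)-\tr(Y_2\rho_2)>\epsilon$, which is precisely a failure of~(2).

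To establish the identity I would use a minimax argument rather than a Slater condition, since a strictly feasible coupling need not exist when $\rho_1$ or $\rho_2$ is rank-deficient. Introducing Hermitian Lagrange multipliers $Z_1,Z_2$ for the two marginal constraints, $v^\ast=\min_{\rho\succeq0,\ \tr(\rho)\le1}\ \sup_{Z_1,Z_2}\ L(\rho,Z_1,Z_2)$ where $L$ is bilinear and the inner supremum equals $\tr(X\rho)$ on couplings and $+\infty$ otherwise; since the $\rho$-domain is compact convex and $L$ is affine in each block, Sion's minimax theorem swaps the quantifiers and gives $v^\ast=\sup_{Z_1,Z_2}\big(\min(0,\lambda_{\min}(G))+\tr(Z_1\rho_1)+\tr(Z_2\rho_2)\big)$ with $G\eqdef X-Z_1\otimes I_2 - I_1\otimes Z_2$. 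Two adjustments finish the argument. First, replacing $Z_1$ by $Z_1-aI_1$ with $a\eqdef\max(0,-\lambda_{\min}(G))$ makes $G\succeq0$ while changing the objective by $a\,(1-\tr(\rho_1))\ge0$ --- the bound $\tr(\rho_1)=p\le1$ being exactly what makes this nonnegative --- so the supremum may be restricted to pairs with $X\geqlow Z_1\otimes I_2 + I_1\otimes Z_2$ and objective $\tr(Z_1\rho_1)+\tr(Z_2\rho_2)$. Second, to convert Hermitian multipliers into positive semidefinite ones I use the shift $(Z_1,Z_2)\mapsto(Z_1+cI_1,\,cI_2-Z_2)$ (now playing the roles of $Y_1,Y_2$): it leaves $Z_1\otimes I_2 + I_1\otimes Z_2$ unchanged and, because $\tr(\rho_1)=\tr(\rho_2)$, leaves the objective unchanged, and for $c$ large enough both $Y_1\eqdef Z_1+cI_1$ and $Y_2\eqdef cI_2-Z_2$ are positive semidefinite, with $Y_1\otimes I_2 - I_1\otimes Y_2=Z_1\otimes I_2 + I_1\otimes Z_2\preceq X$ and $\tr(Y_1\rho_1)-\tr(Y_2\rho_2)=\tr(Z_1\rho_1)+\tr(Z_2\rho_2)$. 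This, together with the trivial reverse inclusion $(Y_1,Y_2)\mapsto(Y_1,-Y_2)$, yields the displayed identity.

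The step I expect to be the main obstacle is exactly this strong-duality claim: semidefinite programs can exhibit a duality gap, the naive Slater condition is unavailable here, and one must exploit the specific structure --- subnormalization $\tr(\rho)\le1$ and equality of the input traces --- both to close the gap and to push the dual variables into the PSD cone demanded by~(2). An equivalent route avoiding Sion's theorem is a direct separating-hyperplane argument: separate $(0,0,\epsilon)$ from the closed convex set $\{(\tr_2(\rho)-\rho_1,\ \tr_1(\rho)-\rho_2,\ \tr(X\rho)+t):\rho\succeq0,\ \tr(\rho)\le1,\ t\ge0\}$, rule out the degenerate (objective-blind) separating functional using the existence of some coupling, and clean up the resulting multipliers with the same two shifts. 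Finally, since here $X\in\Pos$ is finite-valued, $X-Y_1\otimes I_2 - I_1\otimes Y_2$ is an ordinary bounded operator and none of the $\PosI$ machinery intervenes in this statement; up to the defect $\epsilon$ and the PSD clean-up, this is the semidefinite-programming argument underlying the quantum Strassen theorem of Zhou et al.
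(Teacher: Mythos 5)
Your proposal is correct and follows essentially the same route as the paper: both directions reduce to semidefinite (Lagrangian) duality for the coupling-minimization program, followed by the same shift trick --- exploiting $\tr(\rho_1)=\tr(\rho_2)$ and subnormalization --- to convert Hermitian dual multipliers into the positive semidefinite pairs $(Y_1,Y_2)$ demanded by condition (2). The only difference is that you supply a self-contained proof of strong duality via Sion's minimax theorem (correctly noting that Slater-type strict primal feasibility is unavailable), whereas the paper invokes the strong duality of the same SDP from the cited work of Zhou \emph{et al}.; this is a matter of packaging rather than of substance.
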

The setting of the primal and dual problems in the proof is essentially the same as in \cite{zhou2018quantumcouplingstrassentheorem, QuantumOptimalTransport_Cole_2023}. 

\subsection{Partial Couplings}
The following fact is a basic consequence of the definition of quantum couplings.
\begin{lemma}[Trace Equivalence]
  \label{lemma:traceequiv}
  Let $\rho: \langle \rho_1, \rho_2 \rangle$. Then, $\tr(\rho) = \tr(\rho_1) = \tr(\rho_2)$.
\end{lemma}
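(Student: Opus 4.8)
The plan is to unwind \Cref{def:coupling} and invoke the elementary fact that the partial trace is trace-preserving. Concretely, by definition a coupling $\rho : \langle \rho_1, \rho_2 \rangle$ is a partial density operator on $\HH_1 \otimes \HH_2$ whose partial traces satisfy $\tr_2(\rho) = \rho_1$ and $\tr_1(\rho) = \rho_2$. So it suffices to show that taking a partial trace does not change the overall trace, i.e.\ $\tr(\tr_i(\sigma)) = \tr(\sigma)$ for any operator $\sigma$ on $\HH_1 \otimes \HH_2$.

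First I would recall why $\tr \circ \tr_2 = \tr$: picking orthonormal bases $\{|e_j\>\}$ of $\HH_1$ and $\{|f_k\>\}$ of $\HH_2$, one has $\tr(\tr_2(\sigma)) = \sum_j \<e_j| \tr_2(\sigma) |e_j\> = \sum_{j,k} \<e_j f_k| \sigma | e_j f_k\> = \tr(\sigma)$, since $\{|e_j f_k\>\}$ is an orthonormal basis of $\HH_1\otimes\HH_2$; the argument for $\tr_1$ is symmetric. Applying this to $\sigma = \rho$ gives $\tr(\rho_1) = \tr(\tr_2(\rho)) = \tr(\rho)$ and $\tr(\rho_2) = \tr(\tr_1(\rho)) = \tr(\rho)$, which is exactly the claim.

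There is no real obstacle here: the statement is a direct consequence of the definition together with the trace-preservation property of the partial trace, which is standard (and can equivalently be cited from \cite{Nielsen_Chuang_2010}). The only thing worth being careful about is that $\DD(\HH)$ consists of \emph{partial} density operators, so the common trace need not equal $1$; the lemma asserts only that the three traces agree, which is all that the computation above delivers.
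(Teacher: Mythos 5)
Your proof is correct and follows the same route as the paper: the paper's proof is a one-line observation that $\tr(\rho)=\tr(\tr_2(\rho))=\tr(\rho_1)$ and $\tr(\rho)=\tr(\tr_1(\rho))=\tr(\rho_2)$, relying on the trace-preservation of the partial trace, which you simply spell out with an explicit basis computation. No issues.
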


It follows that partial density operators can be coupled only if they have the same trace. This basic fact is a limiting factor for coupling-based relational Hoare logics. In particular, it limits our ability to reason about pairs of non-trace-preserving quantum operations (e.g.\ quantum programs with while loops). 
To address this limitation, we draw ideas from \cite{erhl} ($\star$-couplings) and \cite{qRHLWithExpectations} (quantum $\bot$-memories) and introduce the concept of partial couplings (see \Cref{lemma: relation star coupling} for precise relationship between $\star$-couplings and partial couplings).
\begin{definition}[Partial Coupling]
  For $\rho_1\in\DD(\HH_1)$ and $\rho_2\in\DD(\HH_2)$, we say $\rho\in\DD(\HH_1\otimes\HH_2)$ is a partial coupling of $\rho_1$ and $\rho_2$, written $\rho : \<\rho_1,\rho_2\>_p$, if:
  $$\tr_2(\rho) \sqsubseteq \rho_1, \quad\tr_1(\rho)\sqsubseteq \rho_2,\quad \tr(\rho_1) + \tr(\rho_2) \le 1 + \tr(\rho).$$
\end{definition}
The first two inequalities say that the coupling $\rho$ is partial: it represents a correlation between parts of the marginal state $\rho_1$, $\rho_2$, and leaves another part of the states uncorrelated. The last inequality is a requirement on the uncorrelated parts of the marginal states. It can be decomposed into two inequalities:
\begin{align*}
  \tr(\rho_1 - \tr_2(\rho)) &\le 1 - \tr(\rho_2) \\
  \tr(\rho_2 - \tr_1(\rho)) &\le 1 - \tr(\rho_1). 
\end{align*}
Explained using programming language terms, the first inequality says that the probability of the uncorrelated part of the first system, $\tr(\rho_1 - \tr_2(\rho))$, should not exceed the probability of non-termination in the second system, $1 - \tr(\rho_2)$. The meaning of the second inequality can be obtained by symmetry.




Obviously, any coupling is a partial coupling, i.e., $\rho :\<\rho_1,\rho_2\>$ implies $\rho :\<\rho_1,\rho_2\>_p$.
In the case where $\rho_1, \rho_2$ are density operators, any partial coupling is also a coupling, i.e., $\rho :\<\rho_1,\rho_2\>_p$ implies $\rho :\<\rho_1,\rho_2\>$ if $\tr(\rho_1) = \tr(\rho_2) = 1$. Partial coupling is preserved under (sub-)convex combination and 
scalar multiplication (see \Cref{lem: scale of partial coupling}).
A variant of duality theorem for partial coupling is established via SDP (see \Cref{thm: strassen partial coupling}).

\section{Quantum Optimal Transport}
\label{sec: QOT}


One of the applications of quantum coupling is to reason about relational properties of quantum states and thus quantum channels and operations. We first review the basic concept of quantum optimal transport and then show how it can be used to characterize the equivalence of quantum channels.


\subsection{Basic Definitions}
The optimal transport problem~\cite{monge1781memoire} is a classical
optimization problem. Its goal is to minimize the transportation cost
of goods from sources to sinks. The optimal transport problem has a
natural formulation based on probabilistic couplings. In this section,
we review a quantum version of optimal transport.
We mainly follow~\cite{QuantumOptimalTransport_Cole_2023}.
\begin{definition}[Partial Quantum Optimal Transport (c.f. \cite{QuantumOptimalTransport_Cole_2023})]
    For a given cost function $C\in\PosI(\HH_1\otimes\HH_2)$ and two states $\rho_1\in\DD(\HH_1), \rho_2\in\DD(\HH_2)$, the quantum optimal transport 
    $$T_C(\rho_1,\rho_2)\triangleq \min_{\rho:\<\rho_1,\rho_2\>_p}\tr(C\rho),$$
    where $\rho$ is ranging over all partial couplings of $\rho_1$ and $\rho_2$. 
\end{definition}
The minimum can be attained because the set of partial couplings is an non-empty, closed and convex set (see \Cref{prop:partial-coupling}). 
Whenever $\rho_1$ and $\rho_2$ are (total) density operators, every partial coupling is a coupling, and therefore, $T_C(\rho_1,\rho_2) = \min_{\rho:\<\rho_1,\rho_2\>}\tr(C\rho)$.

The basic properties of QOT have been systematically studied, see \cite{QuantumOptimalTransport_Cole_2023} for a comprehensive review.
For example, QOT is jointly convex on its input (see \Cref{lem:QOT convexity}).

\subsection{QOT under Data Processing}
The original definition of QOT studies the relationship between quantum \textit{states}. In this work, we go one step further and ask: can QOT be used to represent and evaluate the relationship between quantum \textit{state transformers} (i.e. quantum channels or operations)? To answer this question, we study how QOT evolves `under data processing'.
\begin{definition}
Let $C_i, C_o\in\PosI(\HH_1\otimes\HH_2)$ be input and output cost
functions respectively. We say that a pair of quantum operations
$(\EE_1,\EE_2)$ is monotone w.r.t. $C_i$ and $C_o$ iff
$T_{C_o}(\EE_1(\rho_1),\EE_2(\rho_2)) \le T_{C_i}(\rho_1,\rho_2)$ hold
for all possible inputs $\rho_1\in\DD(\HH_1)$ and
$\rho_2\in\DD(\HH_2)$. 
\end{definition}
It can be shown that it is sufficient to check monotonicity on inputs
$\rho_1\in\DD^1(\HH_1)$ and $\rho_2\in\DD^1(\HH_2)$, see
\Cref{lem:validity quantum operation alter}.  The next proposition
establishes key properties of monotonicity.

\begin{proposition}
\label{prop:monotone basic}
Monotonicity satisfies several desired properties for data processing:
\begin{enumerate}
    \item 
  \label{lem:judgment two side}
  \emph{Backward}.
  $(\EE_1,\EE_2)$ is monotone w.r.t. $(\EE_1^\dag\otimes \EE_2^\dag)(C)$ and $C$. Here, $\EE^\dag$ is the dual of $\EE$, which satisfies $\tr(A\EE(B)) = \tr(\EE^\dag(A)B)$ for all linear operator $A,B$\footnote{For any super-operator $\EE$, its dual $\EE^\dag$ is another super-operator. Whenever $\EE$ is a quantum operation with Kraus operator $\{E_i\}$, then $\EE^\dag$ has Kraus representation $\{E_i^\dag\}$.}.
  \item
  \label{lem:judgment csq}
  \emph{Consequence}.
  Suppose $(\EE_1,\EE_2)$ is monotone w.r.t. $C_i'$ and $C_o'$, and $C_i'\sqsubseteq C_i$, $C_o\sqsubseteq C_o'$, then $(\EE_1,\EE_2)$ is monotone w.r.t. $C_i$ and $C_o$.
  \item 
  \label{lem:judgment seq}
  \emph{Sequential composition}.
  Suppose $(\EE_1,\EE_1')$ is monotone w.r.t. $C_i$ and $C_m$, and $(\EE_2,\EE_2')$ is monotone w.r.t. $C_m$ and $C_o$, then $(\EE_2\circ \EE_1,\EE_2'\circ \EE_1')$ is monotone w.r.t. $C_i$ and $C_o$.
  \end{enumerate}
  Here, $\circ$ is the composition of two quantum operations, i.e., for all $\rho$, $(\EE_1\circ \EE_2)(\rho) \triangleq \EE_1(\EE_2(\rho))$.
\end{proposition}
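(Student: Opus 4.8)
The plan is to establish each of the three properties by reducing them to the definition of monotonicity together with basic facts about partial couplings and the duality relation $\tr(A\EE(B)) = \tr(\EE^\dag(A)B)$ extended to cost functions in $\PosI$. For the \emph{Backward} property, I would take any inputs $\rho_1 \in \DD(\HH_1)$, $\rho_2 \in \DD(\HH_2)$ and any partial coupling $\rho : \<\rho_1,\rho_2\>_p$. The key observation is that $(\EE_1\otimes\EE_2)(\rho)$ is then a partial coupling of $\EE_1(\rho_1)$ and $\EE_2(\rho_2)$: the marginal conditions $\tr_2((\EE_1\otimes\EE_2)(\rho)) = \EE_1(\tr_2\rho) \sqsubseteq \EE_1(\rho_1)$ follow from the fact that quantum operations are positive and hence monotone w.r.t.\ the L\"owner order (this extends to $\PosI$ since $\EE(X^\bot)$ terms behave correctly), and the trace inequality $\tr(\rho_1)+\tr(\rho_2)\le 1+\tr(\rho)$ is preserved because $\EE_i$ are trace-nonincreasing, giving $\tr(\EE_1(\rho_1)) + \tr(\EE_2(\rho_2)) \le \tr(\rho_1)+\tr(\rho_2) \le 1 + \tr(\rho)$, and one must also check $\tr((\EE_1\otimes\EE_2)(\rho)) \ge$ the relevant quantity — actually the cleaner route is to directly verify the two decomposed inequalities after applying $\EE_i$. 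Then $\tr(C\,(\EE_1\otimes\EE_2)(\rho)) = \tr((\EE_1^\dag\otimes\EE_2^\dag)(C)\,\rho)$, so minimizing the left side over partial couplings of the outputs is bounded above by minimizing the right side over partial couplings $\rho$ of $\rho_1,\rho_2$, which is exactly $T_{(\EE_1^\dag\otimes\EE_2^\dag)(C)}(\rho_1,\rho_2)$.

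For \emph{Consequence}, this is immediate monotonicity of $T_{(\cdot)}$ in the cost function and its value: if $C_i' \sqsubseteq C_i$ then $T_{C_i'}(\rho_1,\rho_2) \le T_{C_i}(\rho_1,\rho_2)$ since the same couplings are optimized but against a smaller operator ($\tr(C_i'\rho)\le\tr(C_i\rho)$ for $\rho\in\DD$), and similarly $T_{C_o}(\cdot,\cdot)\le T_{C_o'}(\cdot,\cdot)$ when $C_o\sqsubseteq C_o'$; chaining these with the hypothesis $T_{C_o'}(\EE_1(\rho_1),\EE_2(\rho_2))\le T_{C_i'}(\rho_1,\rho_2)$ yields the claim. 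I should be careful that these inequalities still hold for infinite-valued cost functions, but since $\tr(A\rho) = \sum_i\lambda_i\tr(X_i\rho)$ is monotone in the eigenvalues, this goes through.

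For \emph{Sequential composition}, I would take inputs $\rho_1,\rho_2$, apply monotonicity of $(\EE_1,\EE_1')$ to get $T_{C_m}(\EE_1(\rho_1),\EE_1'(\rho_2)) \le T_{C_i}(\rho_1,\rho_2)$, then apply monotonicity of $(\EE_2,\EE_2')$ with inputs $\EE_1(\rho_1)$ and $\EE_1'(\rho_2)$ to get $T_{C_o}(\EE_2(\EE_1(\rho_1)),\EE_2'(\EE_1'(\rho_2))) \le T_{C_m}(\EE_1(\rho_1),\EE_1'(\rho_2))$, and compose. Since $(\EE_2\circ\EE_1)(\rho) = \EE_2(\EE_1(\rho))$ by definition, this is exactly what is needed. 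The only subtlety is that monotonicity is quantified over \emph{all} inputs in $\DD(\HH_1)$ (not just $\DD^1$), so the intermediate states $\EE_1(\rho_1), \EE_1'(\rho_2)$, which may be subnormalized, are legitimate inputs to the second monotonicity hypothesis — this is why the definition is stated over all partial density operators, and it is the reason \Cref{lem:validity quantum operation alter} (reduction to normalized inputs) is not circular here.

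The main obstacle I anticipate is the \emph{Backward} case: verifying that $(\EE_1\otimes\EE_2)(\rho)$ is genuinely a partial coupling requires care with the third defining inequality and with the behavior of the L\"owner order under quantum operations when cost functions take value $+\infty$; the marginal and trace-nonincreasing facts are routine, but one must confirm that applying a CP trace-nonincreasing map does not break the balance condition $\tr(\rho_1)+\tr(\rho_2)\le 1+\tr(\rho)$ — it does not, because both sides can only decrease on the left and the right decreases by at most as much, but writing this cleanly via the two decomposed inequalities $\tr(\EE_1(\rho_1) - \tr_2((\EE_1\otimes\EE_2)(\rho))) \le 1 - \tr(\EE_2(\rho_2))$ is the step that needs the most attention.
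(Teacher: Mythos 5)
Your items (2) and (3) follow the paper's proof essentially verbatim: Consequence is the pointwise monotonicity of $\tr(C\,\cdot)$ in the cost chained with the hypothesis, and Sequential composition is the two-step relay through an optimal intermediate partial coupling; your remark that the intermediate states $\EE_1(\rho_1),\EE_1'(\rho_2)$ may be subnormalized and are covered precisely because monotonicity is quantified over all of $\DD$ is correct and is how the paper argues. For Backward your strategy is also the paper's: take $\sigma=(\EE_1\otimes\EE_2)(\rho)$ as the witness and use $\tr(C\sigma)=\tr((\EE_1^\dag\otimes\EE_2^\dag)(C)\rho)$. (Minor point: the marginal identity you write as an equality, $\tr_2((\EE_1\otimes\EE_2)(\rho))=\EE_1(\tr_2\rho)$, only holds as $\sqsubseteq$ when $\EE_2$ is merely trace-nonincreasing, but $\sqsubseteq$ is all that is needed.)

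The genuine gap is in the trace-balance condition for Backward, which is the only step of the proposition with real content, and which you flag but do not prove. Your first chain bounds $\tr(\EE_1(\rho_1))+\tr(\EE_2(\rho_2))$ by $1+\tr(\rho)$, which is the wrong target since $\tr(\sigma)\le\tr(\rho)$; your fallback claim that ``the right-hand side decreases by at most as much as the left'' is exactly the inequality to be established and is asserted without justification. The paper closes it (after using \Cref{lem:validity quantum operation alter} to normalize so that $\rho\in\DD^1$, $\rho_1=\tr_2(\rho)$, $\rho_2=\tr_1(\rho)$) by the computation
\begin{align*}
  1+\tr(\sigma) &= \tr\big((I\otimes I+\EE_1^\dag(I)\otimes\EE_2^\dag(I))\rho\big)
  \;\ge\; \tr\big((\EE_1^\dag(I)\otimes I+I\otimes\EE_2^\dag(I))\rho\big)
  = \tr(\EE_1(\tr_2\rho))+\tr(\EE_2(\tr_1\rho)),
\end{align*}
where the middle inequality is the missing idea: it is equivalent to $(I-\EE_1^\dag(I))\otimes(I-\EE_2^\dag(I))\sqsupseteq 0$, i.e.\ positivity of the tensor product of the two ``trace defects'', which holds because $0\sqsubseteq\EE_i^\dag(I)\sqsubseteq I$ for quantum operations. (In your more general setting with arbitrary $\rho_1,\rho_2$ the same inequality, combined with $\tr_2(\rho)\sqsubseteq\rho_1$, $\tr_1(\rho)\sqsubseteq\rho_2$ and the input balance $\tr(\rho_1)+\tr(\rho_2)\le 1+\tr(\rho)$, still does the job.) Until this operator inequality is supplied, the verification that $(\EE_1\otimes\EE_2)(\rho)$ is a partial coupling of the outputs is incomplete.
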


The (Backward) property asserts that every pair of quantum channels is
monotonic w.r.t.\, an output cost and its pre-image under some form of
relational pre-image. The (Consequence) property states that one can
strengthen the input cost or weaken the output cost in the style of
the rule of consequence. The (Sequential composition) property states
that monotonocity is compositional.




Additionally, our formulation of QOT under data processing allows us to translate the previous duality result about quantum states (\Cref{thm:quantum strassen defect alter}) to the following duality theorem about quantum operations. Specifically, we show that monotonicity w.r.t.\, $C_i$ and $C_o$ is equivalent to monotonicity w.r.t.\, a split
output cost function. The duality theorem is carefully stated to match our assumptions, in particular that $C_o$ is positive. It is also restricted to the case that $C_o$ is finite.
\begin{theorem}[Duality 
under Data Processing]
  \label{lem:judgment strassen}
  Suppose $\EE_1$ and $\EE_2$ are quantum channels and costs $C_i\in\PosI$ and $C_o\in \Pos$ (i.e., $C_o$ is finite). Then the following statements are equivalent:
  \begin{enumerate}
    \item $(\EE_1,\EE_2)$ is monotone w.r.t. $C_i$ and $C_o$;
    \item for all $(Y_1,Y_2,n)\in\mathcal{Y}$, $(\EE_1,\EE_2)$ is monotone w.r.t. $C_i + nI$ and $Y_1\otimes I + I\otimes (nI - Y_2)$,
    where $\mathcal{Y} \eqdef \{(Y_1, Y_2, n) \ | \ n \in \NN; 0\leqlow Y_1; 0\leqlow Y_2 \leqlow nI; C_o \geqlow Y_1 \otimes I - I \otimes Y_2\}$.
  \end{enumerate}
\end{theorem}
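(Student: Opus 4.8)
The plan is to translate the statement into the language of quantum couplings and reduce it to the quantum Strassen theorem (\Cref{thm:quantum strassen defect alter}). First I would invoke \Cref{lem:validity quantum operation alter} to test monotonicity only on total density operators, so that statement (1) becomes ``$T_{C_o}(\EE_1(\rho_1),\EE_2(\rho_2)) \le T_{C_i}(\rho_1,\rho_2)$ for all $\rho_1\in\DD^1(\HH_1),\rho_2\in\DD^1(\HH_2)$'' and statement (2) becomes the analogous family of inequalities with input cost $C_i+nI$ and output cost $D \eqdef Y_1\otimes I + I\otimes(nI-Y_2)$, one for each $(Y_1,Y_2,n)\in\mathcal{Y}$. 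Since $\EE_1,\EE_2$ are channels, the outputs $\sigma_i\eqdef\EE_i(\rho_i)$ are again total density operators, and I would rely on three elementary facts about total density operators $\tau_1,\tau_2$: (i) partial couplings of $\tau_1,\tau_2$ coincide with couplings, and couplings always exist (e.g.\ $\tau_1\otimes\tau_2$); (ii) $T_{C+nI}(\tau_1,\tau_2)=T_{C}(\tau_1,\tau_2)+n$ since $\tr(\rho)=1$ on every coupling $\rho$; and (iii) for a \emph{split} cost $A\otimes I + I\otimes B$ with $A,B\sqsupseteq 0$, the value $\tr\big((A\otimes I + I\otimes B)\rho\big)$ equals $\tr(A\tau_1)+\tr(B\tau_2)$ for \emph{every} coupling $\rho$, so $T_{A\otimes I + I\otimes B}(\tau_1,\tau_2)=\tr(A\tau_1)+\tr(B\tau_2)$. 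I would also note that one may assume $T_{C_i}(\rho_1,\rho_2)<+\infty$ throughout (otherwise both inequalities are vacuous), and that membership in $\mathcal{Y}$ forces $Y_1,Y_2$ finite ($Y_2\sqsubseteq nI$ by definition; an infinite eigenvalue of $Y_1$ would make $Y_1\otimes I - I\otimes Y_2$ infinite, contradicting $C_o\in\Pos$), so that $Y_1\in\Pos(\HH_1)$, $Y_2\in\Pos(\HH_2)$ and \Cref{thm:quantum strassen defect alter}, which quantifies over finite PSD operators, applies verbatim.

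For $(1)\Rightarrow(2)$ I would argue as follows. Fix $(Y_1,Y_2,n)\in\mathcal{Y}$. By fact (ii), monotonicity w.r.t.\ $C_i,C_o$ upgrades to monotonicity w.r.t.\ $C_i+nI$ and $C_o+nI$; since $D=(Y_1\otimes I - I\otimes Y_2)+nI\sqsubseteq C_o+nI$ and trivially $C_i+nI\sqsubseteq C_i+nI$, the (Consequence) clause of \Cref{prop:monotone basic} yields monotonicity w.r.t.\ $C_i+nI$ and $D$, which is statement (2). (Alternatively, one can bypass \Cref{prop:monotone basic}: the attained minimum defining $T_{C_o}(\sigma_1,\sigma_2)$ witnesses $\sigma_1\,(C_o)^{\#}_{T_{C_o}(\sigma_1,\sigma_2)}\,\sigma_2$, so \Cref{thm:quantum strassen defect alter} applied to $C_o\sqsupseteq Y_1\otimes I - I\otimes Y_2$ gives $\tr(Y_1\sigma_1)\le\tr(Y_2\sigma_2)+T_{C_o}(\sigma_1,\sigma_2)$; combining with (1) and fact (iii) turns this into $T_{D}(\sigma_1,\sigma_2)\le n+T_{C_i}(\rho_1,\rho_2)=T_{C_i+nI}(\rho_1,\rho_2)$.)

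For $(2)\Rightarrow(1)$ I would fix $\rho_1,\rho_2\in\DD^1$, put $t\eqdef T_{C_i}(\rho_1,\rho_2)$, $\sigma_i\eqdef\EE_i(\rho_i)$, and apply \Cref{thm:quantum strassen defect alter} with $X=C_o$ and $\epsilon=t$ (legitimate since $\tr(\sigma_1)=\tr(\sigma_2)=1$): the goal $T_{C_o}(\sigma_1,\sigma_2)\le t$, i.e.\ $\sigma_1\,(C_o)^{\#}_t\,\sigma_2$, is then equivalent to the assertion that $\tr(Y_1\sigma_1)\le\tr(Y_2\sigma_2)+t$ for all $Y_1\in\Pos(\HH_1),Y_2\in\Pos(\HH_2)$ with $C_o\sqsupseteq Y_1\otimes I - I\otimes Y_2$. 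Given such a pair, I would pick any integer $n\ge\norm{Y_2}$, so that $0\sqsubseteq Y_2\sqsubseteq nI$ and hence $(Y_1,Y_2,n)\in\mathcal{Y}$; then hypothesis (2) together with fact (ii) gives $T_{D}(\sigma_1,\sigma_2)\le T_{C_i+nI}(\rho_1,\rho_2)=t+n$, whereas fact (iii) evaluates $T_{D}(\sigma_1,\sigma_2)=\tr(Y_1\sigma_1)+\tr((nI-Y_2)\sigma_2)=\tr(Y_1\sigma_1)+n-\tr(Y_2\sigma_2)$, and cancelling $n$ yields exactly $\tr(Y_1\sigma_1)\le\tr(Y_2\sigma_2)+t$.

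The conceptual heart of the argument is fact (iii): a split cost is insensitive to the choice of coupling, which is precisely what lets the universally quantified family of ``split'' monotonicity statements in (2) stand in for the minimization over couplings hidden inside $T_{C_o}$, with \Cref{thm:quantum strassen defect alter} supplying the exact dictionary between the two. I expect the only fiddly points to be the bookkeeping of the additive $nI$ shifts, the integrality constraint $n\in\NN$ (handled by rounding $\norm{Y_2}$ up), and verifying that $\mathcal{Y}$ contains only finite $Y_1,Y_2$ so that \Cref{thm:quantum strassen defect alter} applies; none of these is a genuine obstacle once that theorem is granted.
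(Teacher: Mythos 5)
Your proposal is correct and follows essentially the same route as the paper's proof: both reduce the two monotonicity statements, via the coupling characterization for channels, to the primal and dual sides of the quantum Strassen theorem with defects (\Cref{thm:quantum strassen defect alter}), using the fact that a split cost $Y_1\otimes I + I\otimes(nI-Y_2)$ evaluates identically on every coupling so that statement (2) collapses to the family of dual inequalities $\tr(Y_1\sigma_1)\le\tr(Y_2\sigma_2)+\tr(C_i\rho)$. Your additional bookkeeping (finiteness of $Y_1,Y_2$ in $\mathcal{Y}$, the vacuous case $T_{C_i}=+\infty$, rounding $\norm{Y_2}$ up to an integer $n$) makes explicit some details the paper passes over silently, but does not change the argument.
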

This formalization will be instrumental in reducing arbitrary judgments
to judgments with split postconditions.  


\subsection{Characterizing Equivalence}
\label{sec: QOT equal}

The symmetric and anti-symmetric predicates are standard tools used to characterize equivalence of quantum \textit{states}~\cite{qRHL_Unruh_2019,barthe_rqpd}: indeed, two states $\rho_1$, $\rho_2$ are equal iff there a (non-quantitative) lifting of the form $\rho_1 (\symeq)^{\#} \rho_2$. In this section, by lifting this tool to the setting of QOT under data processing, we give a complete characterization of equivalence between quantum \textit{channels}. This is a significant result: as we shall see in \Cref{thm:equal rule}, it directly leads to the first complete characterization of program equivalence in quantum relational Hoare logics only using finite-valued PSD predicates.







Our starting point is the instantiation of QOT under data processing with $C_o = C_i = P_{sym}^\perp$, studied in \cite{Bistron_2023,QuantumOptimalTransport_Cole_2023,mullerhermes2022monotonicity,quantumearthmover}.
For simplicity, we write $T$ instead of $T_{P_{sym}^\bot}$.
It is clear that $T$ encapsulates some notion of equivalence: $T(\rho,\sigma) = 0$ if and only if $\rho = \sigma$, given $\rho,\sigma$ density operators. However, $T$ cannot fully capture equivalence under data processing, because it is not monotone under general quantum channels \cite{mullerhermes2022monotonicity}, i.e. the following does not always hold for every $\mathcal{E}$.
\[
  T(\mathcal{E}(\rho), \mathcal{E}(\sigma)) \leq T(\rho, \sigma)
\]
Indeed, it does hold for tensoring with an arbitrary quantum state~\cite{QuantumOptimalTransport_Cole_2023}, i.e., $\mathcal{E}: \rho \mapsto \rho\otimes\gamma$, but not for the partial trace.
This makes it difficult to \emph{completely} reason about the equivalence of data processing operations using the current definition of $T$. 


Fortunately, \cite{mullerhermes2022monotonicity} proposed a stabilized version of $T$, defined by $T_s\triangleq \inf_{\gamma}T(\rho\otimes\gamma,\sigma\otimes\gamma)$ by extending (tensoring) with an arbitrary auxiliary state $\gamma$, which satisfies several desired properties such as joint convexity, and
\begin{itemize}
  \item (Invariance under tensor product) $$T_s(\rho\otimes\gamma,\sigma\otimes\gamma) = T_s(\rho,\sigma).$$
  \item  (Monotonicity under data processing) For $\EE\in\QC$,
  $$T_s(\EE(\rho),\EE(\sigma))\le T_s(\rho,\sigma).$$ 
\end{itemize}
Surprisingly, it turns out that $T_s(\rho,\sigma) = T\big(\rho\otimes\halfI,\sigma\otimes\halfI\big)$. The proof is technical and employs techniques like the Haar measure; we leave the details to Appendix \ref{sec:stablized-quantum-ot-cost}, and provide some intuition. Intuitively, this fact can be understood from two perspectives:
1) the quantum marginal problem, such as the monogamy of entanglement~\cite{koashi2004monogamy}, implies that extending the state can yield more couplings and therefore $T_s(\rho,\sigma)\le T(\rho,\sigma)$ and 2)
extending it by a maximally mixed qubit is sufficient to produce all couplings that minimize optimal transport on the cost function $P_{sym}^\bot$, instead of ranging over all $\gamma$. 
These properties give a complete criterion for checking the equivalence of two quantum channels:
\begin{proposition}
  \label{lem: channel equivalence}
    Two quantum channels $\EE_1$ and $\EE_2$ are equivalent if and only if for all density operators $\rho_1,\rho_2$, $T_s(\EE_1(\rho_1),\EE_2(\rho_2))\le T_s(\rho_1,\rho_2).$
\end{proposition}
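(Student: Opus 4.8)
The plan is to establish the equivalence by two implications, using the stabilized cost $T_s$ together with its properties listed above. For the forward direction, assume $\EE_1 = \EE_2 =: \EE$. Then for all density operators $\rho_1, \rho_2$ we have $T_s(\EE_1(\rho_1), \EE_2(\rho_2)) = T_s(\EE(\rho_1), \EE(\rho_2))$, and since $\EE \in \QC$ is a quantum channel, the monotonicity of $T_s$ under data processing gives $T_s(\EE(\rho_1), \EE(\rho_2)) \le T_s(\rho_1, \rho_2)$. This is exactly the right-hand side, so the forward direction is immediate.

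For the converse, assume that $T_s(\EE_1(\rho_1), \EE_2(\rho_2)) \le T_s(\rho_1, \rho_2)$ for all density operators $\rho_1, \rho_2$. The goal is to conclude $\EE_1(\rho) = \EE_2(\rho)$ for every density operator $\rho$. I would instantiate the hypothesis at $\rho_1 = \rho_2 = \rho$. On the right-hand side, $T_s(\rho, \rho) = 0$: indeed $T_s(\rho,\rho) \le T(\rho,\rho) = 0$ because the diagonal coupling (or more precisely, the coupling supported on the symmetric subspace obtained from any purification-style construction) witnesses zero cost against $P_{sym}^\bot$, using that $T(\sigma,\tau) = 0 \iff \sigma = \tau$ for density operators and the fact that $T_s \le T$. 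Hence $T_s(\EE_1(\rho), \EE_2(\rho)) \le 0$, and since $T_s$ is a minimum of nonnegative quantities it equals $0$. It then remains to transfer this back: $T_s(\sigma, \tau) = 0$ should imply $\sigma = \tau$ for density operators $\sigma, \tau$. Using the identity $T_s(\sigma,\tau) = T(\sigma \otimes \tfrac{I}{2}, \tau \otimes \tfrac{I}{2})$ stated just above the proposition, $T_s(\sigma,\tau) = 0$ gives $T(\sigma \otimes \tfrac{I}{2}, \tau \otimes \tfrac{I}{2}) = 0$, hence $\sigma \otimes \tfrac{I}{2} = \tau \otimes \tfrac{I}{2}$ by the faithfulness of $T$ on density operators, and therefore $\sigma = \tau$ by cancelling the tensor factor. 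Applying this with $\sigma = \EE_1(\rho)$, $\tau = \EE_2(\rho)$ yields $\EE_1(\rho) = \EE_2(\rho)$ for all $\rho$, i.e.\ $\EE_1 = \EE_2$.

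The main point requiring care — and the step I expect to be the real obstacle — is the faithfulness claim $T_s(\sigma,\tau) = 0 \iff \sigma = \tau$, and in particular verifying that $T_s(\rho,\rho) = 0$ cleanly. The subtlety is that $\EE_1(\rho)$ and $\EE_2(\rho)$ are genuine density operators (since $\EE_1, \EE_2 \in \QC$ are trace-preserving), so every partial coupling between them is an honest coupling and the faithfulness of $T$ on density operators applies without the complications that partiality would introduce; one must make sure this trace-preservation hypothesis is used explicitly. Once the $T_s(\sigma,\tau)=0 \iff \sigma=\tau$ equivalence on density operators is in hand — which follows from the stated identity $T_s(\sigma,\tau) = T(\sigma\otimes\tfrac{I}{2}, \tau\otimes\tfrac{I}{2})$ and faithfulness of $T$ — the rest of the argument is a short instantiation, and no appeal to the duality theorems is needed for this particular proposition.
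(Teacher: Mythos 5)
Your proposal is correct and follows essentially the same route as the paper: the forward direction is the monotonicity of $T_s$ under data processing, and the converse instantiates the hypothesis at $\rho_1=\rho_2=\rho$, uses $0\le T_s(\rho,\rho)\le T(\rho,\rho)=0$, and then applies the identity $T_s(\sigma,\tau)=T(\sigma\otimes\tfrac{I}{2},\tau\otimes\tfrac{I}{2})$ together with faithfulness of $T$ on density operators to conclude $\EE_1(\rho)=\EE_2(\rho)$. The point you flag as needing care (faithfulness, and that outputs of trace-preserving maps are genuine density operators) is handled exactly as you describe in the paper's proof.
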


While \cite{mullerhermes2022monotonicity} already gives a precise characterization of $T_s$ in terms of QOT, as a semi-definite program, we rephrase it as the following duality theorem:
\begin{proposition}[Duality 
for Stabilized QOT]
  \label{thm:strassen QOT}
  Given $\rho_1,\rho_2\in\DD^1(\HH)$ and $\epsilon\in \mathbb{R}^+$, the following are equivalent:
  \begin{enumerate}
    \item $T_s(\rho_1,\rho_2)\le \epsilon$;
    \item For all $Y_1, Y_2\in \Pos(\HH\otimes\HH_2)$ such that $P_{sym}^\bot[\HH\otimes\HH_2] \ge 2(Y_1\otimes I - I\otimes Y_2)$, it holds that:
    $$\tr(\tr_2(Y_1)\rho_1) \le \tr(\tr_2(Y_2))\rho_2) + \epsilon.$$
  \end{enumerate}
\end{proposition}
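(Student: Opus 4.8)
The plan is to reduce the statement to the already-established duality theorem for states with defects (\Cref{thm:quantum strassen defect alter}) applied to the cost function $P_{sym}^\bot$, after first rewriting $T_s$ via the identity $T_s(\rho_1,\rho_2) = T\big(\rho_1\otimes\halfI,\sigma\otimes\halfI\big)$ stated just above. Since $\rho_1,\rho_2$ are density operators, every partial coupling of $\rho_1\otimes\halfI$ and $\rho_2\otimes\halfI$ is an ordinary coupling, so $T\big(\rho_1\otimes\halfI,\rho_2\otimes\halfI\big) = \min_{\rho:\<\rho_1\otimes\halfI,\rho_2\otimes\halfI\>}\tr(P_{sym}^\bot[\HH\otimes\HH_2]\,\rho)$, where I take $\HH_2$ to be a qubit space so that $\halfI$ is the maximally mixed qubit. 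The condition $T_s(\rho_1,\rho_2)\le\epsilon$ is then exactly $(\rho_1\otimes\halfI)\,(P_{sym}^\bot[\HH\otimes\HH_2])^\#_\epsilon\,(\rho_2\otimes\halfI)$ in the notation of \Cref{def:quantum lifting alter}, noting that $\tr(\rho_1\otimes\halfI) = \tr(\rho_2\otimes\halfI) = 1$ so the hypotheses of \Cref{thm:quantum strassen defect alter} are met.

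Next I would invoke \Cref{thm:quantum strassen defect alter} with $X = P_{sym}^\bot[\HH\otimes\HH_2]$: statement (1) there becomes our (1), and statement (2) becomes: for all $Y_1,Y_2\in\Pos(\HH\otimes\HH_2)$ with $P_{sym}^\bot[\HH\otimes\HH_2] \geqlow Y_1\otimes I - I\otimes Y_2$ (where the outer tensor is across the two copies of $\HH\otimes\HH_2$), it holds that $\tr(Y_1(\rho_1\otimes\halfI)) \le \tr(Y_2(\rho_2\otimes\halfI)) + \epsilon$. It then remains to rewrite these traces and the ordering constraint in the form appearing in the proposition. For the traces, $\tr(Y_1(\rho_1\otimes\halfI)) = \tr\big((\mathrm{id}\otimes\tr_2)(Y_1)\cdot(\rho_1\otimes\halfI_{\text{partial trace out}})\big)$ — more simply, using $\tr_2$ for the partial trace over the $\HH_2$ (qubit) factor, $\tr(Y_1(\rho_1\otimes\halfI)) = \tfrac12\tr(\tr_2(Y_1)\rho_1)$, and similarly for $Y_2$; the factor $\tfrac12$ from $\halfI$ on both sides is absorbed, matching why the proposition carries a factor $2$ in the constraint $P_{sym}^\bot \ge 2(Y_1\otimes I - I\otimes Y_2)$. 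I would need to check carefully that the substitution $Y_i \mapsto \tfrac12 Y_i$ (or equivalently rescaling) turns the constraint $P_{sym}^\bot \geqlow Y_1\otimes I - I\otimes Y_2$ into $P_{sym}^\bot \geqlow 2(Y_1\otimes I - I\otimes Y_2)$ and that positivity $0\leqlow Y_i$ is preserved, and that the $\epsilon$ on the right is unaffected (which it is, since we do not rescale $\epsilon$).

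The main obstacle I anticipate is bookkeeping of the tensor factors and which copy each operator acts on: the space is $(\HH\otimes\HH_2)\otimes(\HH\otimes\HH_2)$, $P_{sym}^\bot[\HH\otimes\HH_2]$ swaps the two $(\HH\otimes\HH_2)$-blocks, while $Y_1,Y_2$ live on a single $\HH\otimes\HH_2$, and in the final statement we partial-trace only the $\HH_2$ component and pair the result with $\rho_i$ on $\HH$. Getting the reductions $\tr(Y_1(\rho_1\otimes\halfI)) = \tfrac12\tr(\tr_2(Y_1)\rho_1)$ right, and confirming that the defect parameter and the positivity of $P_{sym}^\bot$ (needed to legitimately apply \Cref{thm:quantum strassen defect alter}, whose $X$ ranges over $\Pos$) cause no trouble, is where the care is concentrated; the rest is a direct translation. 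A minor point to address is that \Cref{thm:quantum strassen defect alter} is stated for finite defects $\epsilon\in\RR^{+\infty}$ and our $\epsilon\in\RR^+$ is a special case, so no extra work is needed there.
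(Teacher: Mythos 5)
Your proposal is correct and follows essentially the same route as the paper's proof: rewrite $T_s(\rho_1,\rho_2)$ as $T(\rho_1\otimes\halfI,\rho_2\otimes\halfI)$, recast condition (1) as the lifting $(\rho_1\otimes\halfI)(P_{sym}^\bot)^\#_\epsilon(\rho_2\otimes\halfI)$, apply \Cref{thm:quantum strassen defect alter}, and then use $\tr(Y_i(\rho_i\otimes\halfI)) = \tr(\tr_2(\tfrac{Y_i}{2})\rho_i)$ with the rescaling $Y_i\mapsto\tfrac{Y_i}{2}$ to produce the factor $2$ in the constraint. The bookkeeping points you flag are exactly the ones the paper handles, and they all go through.
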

This property is crucial for establishing a judgment characterizing program equivalence (see \Cref{thm:equal rule}), as $T_s$ itself cannot be directly encoded within our program logic.
It additionally allows us to use a split postcondition and thus make the judgment completely derivable (\Cref{thm:weakcompleteness}) without first applying the duality rule.


\section{Quantum Programs}
\label{sec:quantum_programs}

We now present the syntax and semantics of the quantum programs considered in this paper. 

\subsection{Syntax}\label{sec:quantum_while_program}


We choose to use the quantum \textbf{while}-language defined in \cite{Ying11,Ying16}. We assume a finite set $\qvar$ of quantum variables and use $q, q_0, q_1, q_2, \dots$ to denote them.
The finite-dimensional state Hilbert space of a quantum variable $q$ is denoted
$\HH_q$. 

A quantum register is a finite sequence of distinct quantum variables. The state space of a quantum register $\oq = q_0 \dots q_n$ is then the tensor product $ \HH_{\oq} = \bigotimes_{i = 0}^{n} \HH_{q_i}.$


\begin{definition}[Syntax \cite{Ying11}] The set $\qprogs$ of quantum \textbf{while}-programs is defined by the following syntax:
\begin{align}
  S :: = \ & \skp \mid  S_1; S_2 \mid q := \ket{0} \mid \oq := U [\oq]  \label{progc2}\\
              &\mid \ifb\ (\guard m\cdot M[\oq] = m \to S_m)\ \ife \label{progc3}\\
              &\mid \while\ M[\oq]=1\ \wdo\ S\ \wod \label{progc4}
\end{align}\end{definition}

The constructs $\skp$ and sequential composition $S_1; S_2$ are similar to their counterparts in the classical or probabilistic \textbf{while}-programs. 
The initialization $q := \ket{0}$ sets the quantum register $q$ to the basis state $\ket{0}$. 
The statement $\oq := U [\oq]$ means that unitary transformation $U$ is performed on the quantum register $\oq$. 
The construct in (\ref{progc3}) is a quantum generalization of classical case 
statement. In the execution, measurement $M = \{ M_m \}$ is performed on $\oq$, 
and then a subprogram $S_m$ will be selected according to the 
outcome of the measurement. 
The statement in (\ref{progc4}) is a quantum generalization of \textbf{while}-loop, where 
the measurement $M$ has only two possible outcomes: if the outcome is
$0$, the program terminates, and if the outcome $1$ occurs, the program
executes the loop body $S$ and then continues the loop. 

\subsection{Semantics}
For each quantum program $S$, we write $\mathit{var}(S) \subseteq V$ for the set of all variables $q\in\qvar$ appearing in $S$.
The Hilbert space of program $S$ is the tensor product $$\HH_S = \bigotimes_{q \in \mathit{var}(S)} \HH_{q}.$$ 


We interpret each program $S$ denotationally as a complete positive trace non-increasing map $\sem{S} \in \QO(\HH_S)$ as follows:

  
\begin{definition}[Denotational Semantics  \cite{Ying11}]\label{lem-structural} For any input state $\rho \in \HH_S$, we have: 
\begin{enumerate}
  \item \label{dsem_skp} $\sem{\skp}(\rho) = \rho$;
  \item \label{dsem_init} $\sem{q := \ket{0}}(\rho) = \sum_n|0\rangle_q\langle n|\rho|n\rangle_q\langle 0|$;
  \item \label{dsem_uni} $\sem{\oq := U[\oq]}(\rho) = U_{\oq} \rho U_{\oq}^\dagger$;
  \item \label{dsem_comp} $\sem{S_1; S_2}(\rho) = \sem{S_2}(\sem{S_1}(\rho))$; 
  \item \label{dsem_if} $\sem{\ifb(\guard m\cdot M[\oq] = m \to S_m)\ife}(\rho) \\ = \sum_m \sem{S_m}(M_m \rho M_m^\dagger)$; 
    \item \label{dsem_while} for loop $\while[M, S]\equiv \while\ M[\oq]=1\ \wdo\ S\ \wod$: 
    $$\sem{\while[M, S]}(\rho) = \bigsqcup_{k = 0}^{\infty} \sem{\while^{(k)}[M,S]}(\rho),$$ 
where $\while^{(k)}[M,S]$ is the $k$-fold iteration of the loop $\while$: 
\begin{equation*}\label{iteration}\begin{cases}
  &\while^{(0)}[M,S]  \equiv \textbf{abort}, \\
  &\while^{(k+1)}[M,S] \\
  &\begin{aligned} 
  \equiv\ &\ifb\ M[\oq]  =\  0 \to \skp\\  
  &\guard\qquad \qquad 1\to S; \while^{(k)}[M,S]\ \ife
\end{aligned}\end{cases}\end{equation*}
for $k\geq 0$, $\bigsqcup$ stands for the least upper bound in the CPO of partial density operators with the L\"owner order $\sqsubseteq$ (see \cite{Ying16}, Lemma 3.3.2), 
and $\textbf{abort}$ is a program that never terminates so that $\sem{\textbf{abort}}(\rho) = \bf{0}$ for all $\rho$.
\end{enumerate}\end{definition}
In the special case where $\sem{S} \in \QC(\HH_S)$, we say that $S$ is almost-surely terminating (AST), or simply write $S\in\AST$. 




\section{A Quantum Relational Hoare Logic}
We now present $\qqRHL$, a quantum relational Hoare logic similar to \cite{barthe_rqpd} extended with logical variables, and prove its soundness. As we shall see in \Cref{section:completeness}, this extension is crucial to enabling our completeness results.

\subsection{Definition}

In $\qqRHL$, judgments are of the form
\[
  \vdash Z : \rtriple{P}{S_1}{S_2}{Q}
\]
where predicates $P, Q \in\PosI(\HH_{S_1} \otimes \HH_{S_2})$, i.e.,
are infinite-valued positive semi-definite operators over $\HH_{S_1}
\otimes \HH_{S_2}$, parameterized over $Z$, and $S_1, S_2$ are
programs.  Validity of the judgment is defined using partial
couplings.
\begin{definition}[$\qqRHL$ Validity]
  The judgment $\vdash Z: \rtriple{P}{S_1}{S_2}{Q}$ is valid, written
  \[
    \vDash Z : \rtriple{P}{S_1}{S_2}{Q}
  \]
  if for every $z \in Z$, and $\rho \in \DD^1(\HH_{S_1} \otimes \HH_{S_2})$, there exists a partial coupling $\sigma$ for $\langle \sem{S_1}(\tr_2(\rho)), \sem{S_2}(\tr_1(\rho)) \rangle_p$ such that
  \[
  \tr(P_z \rho) \geq \tr(Q_z \sigma).
  \]
  We usually write $P$ (resp $Q$) instead of $P_z$ (resp $Q_z$) when there is no ambiguity.
\end{definition}

Whenever $S_1,S_2$ are AST programs, the partial coupling $\sigma$ is a coupling (see \Cref{lem:validity AST program}), which is consistent or similar to previous works~\cite{qRHL_Unruh_2019,barthe_rqpd}. 

Validity can be recast in terms of monotonicity, which allows us to investigate it from a QOT view.
\begin{lemma}
  \label{lem:validity and monotone}
  $\vDash Z : \rtriple{P}{S_1}{S_2}{Q}$ if and only if for all $z\in Z$, $(\sem{S_1},\sem{S_2})$ is monotone w.r.t. $P$ and $Q$.
\end{lemma}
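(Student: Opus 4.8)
The plan is to unfold both sides of the claimed equivalence and observe that they are literally the same statement once the definitions are spelled out, modulo a reduction from arbitrary density operators $\rho \in \DD^1(\HH_{S_1}\otimes\HH_{S_2})$ to pairs $(\rho_1,\rho_2)$ of marginal density operators. Concretely, $\qqRHL$ validity says: for every $z\in Z$ and every $\rho\in\DD^1(\HH_{S_1}\otimes\HH_{S_2})$ there is a partial coupling $\sigma : \langle \sem{S_1}(\tr_2(\rho)), \sem{S_2}(\tr_1(\rho))\rangle_p$ with $\tr(P_z\rho)\ge\tr(Q_z\sigma)$; whereas monotonicity of $(\sem{S_1},\sem{S_2})$ w.r.t.\ $P$ and $Q$ says $T_{Q}(\sem{S_1}(\rho_1),\sem{S_2}(\rho_2))\le T_{P}(\rho_1,\rho_2)$ for all $\rho_1,\rho_2$. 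So I would fix $z$ and argue the equivalence of these two $z$-indexed statements.

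First I would handle the easy direction. Assume monotonicity w.r.t.\ $P_z$ and $Q_z$. Given $\rho\in\DD^1(\HH_{S_1}\otimes\HH_{S_2})$, set $\rho_1=\tr_2(\rho)$, $\rho_2=\tr_1(\rho)$; these are density operators, and $\rho$ itself is a coupling (hence a partial coupling) of $\rho_1,\rho_2$, so $T_{P_z}(\rho_1,\rho_2)\le\tr(P_z\rho)$. Monotonicity gives $T_{Q_z}(\sem{S_1}(\rho_1),\sem{S_2}(\rho_2))\le T_{P_z}(\rho_1,\rho_2)\le\tr(P_z\rho)$, and since the QOT minimum is attained (the set of partial couplings is nonempty, closed, convex, per the cited \Cref{prop:partial-coupling}), there is a partial coupling $\sigma$ of $\langle\sem{S_1}(\rho_1),\sem{S_2}(\rho_2)\rangle_p$ achieving $\tr(Q_z\sigma)=T_{Q_z}(\sem{S_1}(\rho_1),\sem{S_2}(\rho_2))\le\tr(P_z\rho)$, which is exactly the validity witness.

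For the converse, assume validity. Here the only subtlety is that monotonicity quantifies over all input density operators $\rho_1\in\DD^1(\HH_{S_1})$, $\rho_2\in\DD^1(\HH_{S_2})$ independently, while validity only directly talks about the marginals of a single joint $\rho$. But $\tr_2$ is surjective onto $\DD^1(\HH_{S_1})$ and for any prescribed pair $(\rho_1,\rho_2)$ the product state $\rho=\rho_1\otimes\rho_2$ has exactly those marginals, so every pair arises this way — this is the one place I need to say a word rather than just rearrange symbols (and I should note that by \Cref{lem:validity quantum operation alter} it indeed suffices to check monotonicity on total density operators, so the restriction to $\DD^1$ is harmless; if one instead wants monotonicity on all partial inputs, one appeals to that lemma). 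Given arbitrary $\rho_1,\rho_2$, I want $T_{Q_z}(\sem{S_1}(\rho_1),\sem{S_2}(\rho_2))\le T_{P_z}(\rho_1,\rho_2)$. Take a partial coupling $\tau$ of $\rho_1,\rho_2$ attaining $T_{P_z}(\rho_1,\rho_2)=\tr(P_z\tau)$. The technical point is that $\tau$ need not be a total density operator even if $\rho_1,\rho_2$ are — but validity as stated needs a $\rho\in\DD^1$. I expect this to be the main obstacle, and I would resolve it the way the paper's partial-coupling machinery is designed to: a partial coupling of $(\rho_1,\rho_2)$ can be completed — by adding the uncorrelated mass on a $\bot$/auxiliary component or, more simply, by scaling — so that the relevant inequality $\tr(P_z\rho)$ for a normalized $\rho$ with the right marginals still dominates $\tr(P_z\tau)$; alternatively, one observes that by joint convexity of $T_P$ (\Cref{lem:QOT convexity}) and the earlier fact that partial couplings (and hence the QOT value) interact well with (sub-)convex combinations and scalar multiplication (\Cref{lem: scale of partial coupling}), it is enough to verify the monotonicity inequality for couplings coming from total $\rho$, which is precisely what validity gives. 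Once $\rho$ with $\tr_2(\rho)=\rho_1$, $\tr_1(\rho)=\rho_2$ and $\tr(P_z\rho)\le T_{P_z}(\rho_1,\rho_2)$ (or $=$) is in hand, validity yields a partial coupling $\sigma$ of $\langle\sem{S_1}(\rho_1),\sem{S_2}(\rho_2)\rangle_p$ with $\tr(Q_z\sigma)\le\tr(P_z\rho)\le T_{P_z}(\rho_1,\rho_2)$, and since $T_{Q_z}(\sem{S_1}(\rho_1),\sem{S_2}(\rho_2))\le\tr(Q_z\sigma)$ by definition of the QOT minimum, we are done. Assembling the two directions, with the $\forall z$ quantifier carried through untouched, gives the lemma.
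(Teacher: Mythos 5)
Your proof is correct and follows essentially the same route as the paper's: reduce monotonicity to total density operators via \Cref{lem:validity quantum operation alter}, use attainment of the QOT minimum (\Cref{prop:partial-coupling}) to extract the witness $\sigma$ in the forward direction, and feed an optimal input coupling into validity for the converse. The one place you go astray is in diagnosing the ``main obstacle'' of the converse direction: the optimal partial coupling $\tau$ of two \emph{total} density operators $\rho_1,\rho_2\in\DD^1$ is automatically a genuine coupling with $\tr(\tau)=1$ and exact marginals. Indeed, the constraint $\tr(\rho_1)+\tr(\rho_2)\le 1+\tr(\tau)$ forces $\tr(\tau)\ge 1$, while $\tr_2(\tau)\sqsubseteq\rho_1$ gives $\tr(\tau)\le\tr(\rho_1)=1$; then $\rho_1-\tr_2(\tau)$ is a PSD operator of trace zero, hence zero. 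This is exactly the remark the paper makes immediately after defining partial couplings, and it is why the paper can simply ``set $\rho:\<\rho_1,\rho_2\>$ attaining $T_P(\rho_1,\rho_2)$'' with no further ado. So no completion step is needed, and two of your proposed fixes would in fact not work as stated: rescaling $\tau$ by $1/\tr(\tau)>1$ would break the marginal domination $\tr_2(\lambda\tau)\sqsubseteq\rho_1$, and adjoining a $\bot$-component changes the Hilbert space on which validity is defined. Your fallback via joint convexity is unnecessary once the above observation is made, and with it your argument closes exactly as the paper's does.
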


Fig. \ref{fig:qqrhlrules} introduces a minimal set of proof rules of our
logic. Our set of proof rules contains so-called one-sided rules for
initialization, unitaries, conditionals and loops. We only show left
rules; there exists a similar right rule for each construct. We note
that the one-sided rules are the obvious counterparts of the usual
rules for quantum Hoare logic~\cite{Ying11}; for instance, the rule
for while loops requires users to provide a loop invariant. Besides,
our proof system features the usual two-sided rules for skip and
sequential compositions. Lastly, our proof system features two
structural rules. The (csq) rule is the rule of consequence; it is
based on L\"owner order. The (duality) rule is an application of the
duality theorem, and is used to reduce postconditions to universally
quantified split postconditions. Note that the rule requires that the
postcondition $Q$ is bounded, i.e.\, $Q\in \Pos$ rather than
$Q\in\Pos^\infty$. In particular, our core set of rules does not 
feature additional two sided-rules. We discuss two-sided rules in
\Cref{sec:twosided}.
\begin{figure*}
  \textbf{Two-sided rules:}\quad
  (skip) \ \ 
  $
    \begin{prooftree}
      \infer0{\vdash Z: \rtriple{P}{\skp}{\skp}{P}}
      \end{prooftree}
  $\qquad
  (seq) \ \ 
  $
    \begin{prooftree}
      \hypo{\vdash Z: \rtriple{P}{S_1}{S'_1}{Q}}
      \hypo{\vdash Z: \rtriple{Q}{S_2}{S'_2}{R}}
      \infer2{\vdash Z: \rtriple{P}{S_1;S_2}{S_1';S_2'}{R}}
    \end{prooftree}
  $
  \\[0.5cm]
  \textbf{One-sided rules:}\quad
  (assign-L)\ \ 
  $
    \begin{prooftree}
      \infer0{\vdash Z: \rtriple{\sum_{ij} (\ket{i}_{q_1 \varsinone} \bra{0})P(\ket{0}_{q \varsinone} \bra{i})}{q:=\ket{0}}{q:=\ket{0}}{P}}
    \end{prooftree}
  $\\[0.5cm]
  \hspace*{2.74cm} (apply-L) \ \
  $
    \begin{prooftree}
      \infer0{\vdash Z: \rtriple{(U \otimes I_2)^\dagger P (U\otimes I_2)}{\bar{q} := U[\bar{q}]}{\skp}{P}}
    \end{prooftree}
  $\\[0.5cm]
  \hspace*{2.74cm} (if-L)\ \ 
  $
    \begin{prooftree}
      \hypo{\forall m. \vdash Z: \rtriple{P_m}{S_m}{\skp}{Q}}
      \infer1{\vdash Z: \triple{\sum_m (M_m \otimes I)^\dagger_m P_m (M_m \otimes I)}{
        \ifmeasure \sim \skp
      }{Q}}
    \end{prooftree}
  $\\[0.5cm]
  \hspace*{2.74cm} (while-L)\ \ 
  $
    \begin{prooftree}
      \hypo{\vdash Z: \rtriple{Q}{S}{\skp}{(M_0 \otimes I)^\dagger P (M_0 \otimes I) + (M_1 \otimes I)^\dagger Q (M_1 \otimes I)}}
      \infer1{\vdash Z: \rtriple{(M_0 \otimes I)^\dagger P (M_0 \otimes I) + (M_1 \otimes I)^\dagger Q (M_1 \otimes I)}{\whileloop}{\skp}{P}}
    \end{prooftree}
  $\\[0.5cm]
  \textbf{Structural rule:}\quad
  (csq)\ \ 
  $
    \begin{prooftree}
      \hypo{P \geqlow P'}
      \hypo{\vdash Z: \rtriple{P'}{S_1}{S_2}{Q'}}
      \hypo{Q' \geqlow Q}
      \infer3{\vdash Z: \rtriple{P}{S_1}{S_2}{Q}}
    \end{prooftree}
  $\\[0.5cm]
  \textbf{Logical rule:}\quad
  (duality)\ \ 
  $
    \begin{prooftree}
      \hypo{
        \begin{array}{c}
          \vdash Z, (Y_1,Y_2,n)\in\mathcal{Y} : \rtriple{P + nI}{S_1}{S_2}{Y_1\otimes I + I\otimes (nI - Y_2)} 
          \qquad\qquad\quad S_1, S_2 \in \AST
          \\[0.1em]
        \mbox{ where } 
          \mathcal{Y} \eqdef \{(Y_1, Y_2, n) \ | \ n \in \NN; 0\leqlow Y_1; 0\leqlow Y_2 \leqlow nI; Q \geqlow Y_1 \otimes I - I \otimes Y_2\}
          \qquad\quad Q\in \Pos
          \end{array}
      }
      \infer1{\vdash Z: \rtriple{P}{S_1}{S_2}{Q}}
    \end{prooftree}
  $
  \\
  \caption{Rules for $\qqRHL$}
  \label{fig:qqrhlrules}
\end{figure*}

\subsection{Soundness and Completeness}
\label{section:completeness}
Every derivable judgment is valid. 
\begin{theorem}[Soundness]
  \label{thm:soundness}
If $\vdash Z: \rtriple{P}{S_1}{S_2}{Q}$ then 
  $\vDash Z: \rtriple{P}{S_1}{S_2}{Q}$.
\end{theorem}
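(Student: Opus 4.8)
The plan is to prove soundness by rule induction on the derivation of $\vdash Z: \rtriple{P}{S_1}{S_2}{Q}$, showing that each of the rules in \Cref{fig:qqrhlrules} preserves validity. By \Cref{lem:validity and monotone}, validity of $\vdash Z: \rtriple{P}{S_1}{S_2}{Q}$ is equivalent to monotonicity of $(\sem{S_1},\sem{S_2})$ w.r.t.\ $P$ and $Q$ for every instantiation of $Z$, so throughout I would work with the monotonicity characterization, which makes the structural rules essentially immediate consequences of already-established facts. Concretely, the (csq) rule is exactly the \emph{Consequence} part of \Cref{prop:monotone basic}, and the (duality) rule is exactly \Cref{lem:judgment strassen} (Duality under Data Processing): observe that the side condition $Q \in \Pos$ and $S_1, S_2 \in \AST$ in the rule match the hypotheses of that theorem, and the premise's universally quantified family over $\mathcal{Y}$ is literally statement (2) of the theorem, so (duality) transports validity of the premise to validity of the conclusion. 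The (seq) rule follows from the \emph{Sequential composition} part of \Cref{prop:monotone basic} together with the denotational identity $\sem{S_1;S_2} = \sem{S_2}\circ\sem{S_1}$ (\Cref{lem-structural}, item \ref{dsem_comp}).

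For the remaining rules I would argue directly. For (skip): since $\sem{\skp}$ is the identity channel, any partial coupling of $\langle \tr_2(\rho),\tr_1(\rho)\rangle_p$ works as $\sigma$, and in fact taking $\sigma = \rho$ gives $\tr(P\rho)\ge \tr(P\sigma)$ with equality; alternatively this is the trivial instance of \emph{Backward} in \Cref{prop:monotone basic} since $(\mathrm{id}^\dag\otimes\mathrm{id}^\dag)(P) = P$. For the one-sided rules (assign-L), (apply-L), (if-L), (while-L), the cleanest route is to recognize that each precondition is precisely the relational weakest-precondition transform $(\sem{S_1}^\dag\otimes\sem{S_2}^\dag)(Q)$ applied to the postcondition (with $\sem{S_2} = \sem{\skp} = \mathrm{id}$ on the right), so they are instances of the \emph{Backward} property of \Cref{prop:monotone basic}: e.g.\ for (apply-L), $\sem{\oq:=U[\oq]}^\dag(P) = (U\otimes I)^\dagger P (U\otimes I)$ matches the displayed precondition; for (assign-L) one checks $\sem{q:=\ket 0}^\dag(P) = \sum_{ij}(\ket i_{q}\bra 0) P (\ket 0_q\bra i)$ using $\sem{q:=\ket 0}(\rho) = \sum_n \ket 0_q\bra n \rho \ket n_q\bra 0$ and the defining property $\tr(A\,\sem{S}(B)) = \tr(\sem{S}^\dag(A)\,B)$; for (if-L), $\sem{\ifmeasure}^\dag(Q) = \sum_m M_m^\dagger \sem{S_m}^\dag(Q) M_m$, which combined with the premises $\vDash Z:\rtriple{P_m}{S_m}{\skp}{Q}$ (hence $P_m \sqsupseteq \sem{S_m}^\dag(Q)$ in the relevant sense, via Backward and csq) yields the stated precondition after applying csq; and (while-L) follows by first establishing that the displayed precondition, call it $R$, satisfies $R = M_0^\dagger P M_0 + M_1^\dagger \sem{S}^\dag(R)\, M_1$ whenever the premise holds, then lifting to the loop semantics $\sem{\whileloop} = \bigsqcup_k \sem{\while^{(k)}}$ via induction on $k$ plus a continuity/limit argument showing monotonicity is preserved under the supremum.

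A couple of technical points deserve care but should be routine given the machinery already in place. First, the one-sided rules are stated with $\skp$ on the right, and the semantics is over the joint space $\HH_{S_1}\otimes\HH_{S_2}$; I would make explicit that tensoring the one-sided channel with the identity on the partner's space, and using that partial trace/marginal conditions in the definition of partial coupling behave well under this, reduces each one-sided rule to a statement about the single-sided weakest precondition on the full space (\Cref{prop:monotone basic}, Backward, is already phrased with the tensor $\EE_1^\dag\otimes\EE_2^\dag$, so this is direct once we take $\EE_2 = \mathrm{id}$). Second, in the loop case one must be careful that $\while^{(k)}$ ranges over trace-nonincreasing maps (not channels), so validity must be read via partial couplings throughout the induction; the passage to the limit uses that $\tr(Q\,\cdot)$ is continuous (for finite $Q$) and monotone, and that the set of partial couplings is closed (\Cref{prop:partial-coupling}) so a witness survives in the limit — this limit argument for the while rule is the one step I expect to require the most care, everything else being a direct appeal to \Cref{prop:monotone basic}, \Cref{lem:judgment strassen}, \Cref{lem:validity and monotone}, and the denotational equations of \Cref{lem-structural}.
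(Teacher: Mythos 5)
Your overall strategy (rule induction, recasting validity as monotonicity via \Cref{lem:validity and monotone}, and discharging (skip), (seq), (csq), (duality), (assign-L) and (apply-L) through \Cref{prop:monotone basic} and \Cref{lem:judgment strassen}) coincides with the paper's proof for those rules. The gap is in your treatment of (if-L) and (while-L). There you propose to extract from the validity of each premise $\vDash Z:\rtriple{P_m}{S_m}{\skp}{Q}$ the L\"owner inequality $P_m \sqsupseteq (\sem{S_m}^\dagger\otimes I)(Q)$ ``via Backward and csq'', and analogously for the loop to obtain a fixed-point identity for the precondition. But Backward and csq only give the \emph{converse} implication: L\"owner domination of the relational weakest precondition implies validity. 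The forward direction --- validity implies $P \sqsupseteq (\sem{S}^\dagger\otimes\sem{S'}^\dagger)(Q)$ --- is \Cref{lem:judgment two side wlp} and holds only for \emph{split} postconditions $Q_1\otimes I + I\otimes Q_2$; for a general $Q$ a premise can be valid while $P_m$ fails to dominate $(\sem{S_m}^\dagger\otimes I)(Q)$, since $T_Q$ is an infimum over couplings and $T_Q\le T_{P_m}$ on all marginal pairs does not force an operator inequality. (Indeed, if validity were equivalent to WP-domination for arbitrary $Q$, completeness would follow trivially from \Cref{lemma:twosidedweakestpre} and (csq), and the duality rule would be unnecessary.) Since the induction hypothesis in a soundness proof only hands you validity of the premises, your reduction does not go through, and the stated precondition $\sum_m (M_m\otimes I)^\dagger P_m(M_m\otimes I)$ cannot be related to $(\sem{\ifb}^\dagger\otimes I)(Q)$ by (csq) as you intend.

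The paper closes this gap differently. For (if-L) it works directly with the QOT inequalities supplied by the premises: writing $\sem{\ifb}(\rho)=\sum_m p_m\sem{S_m}(\rho_m/p_m)$, it invokes joint convexity of $T_Q$ (\Cref{lem:QOT convexity}) to obtain $T_Q(\sem{\ifb}(\rho),\sigma)\le\sum_m p_m T_{P_m}(\rho_m/p_m,\sigma)$, and then pushes an optimal input coupling through the measurement, using that $(M_m\otimes I)\delta(M_m\otimes I)^\dagger/p_m$ is again a coupling of the measured marginals. For (while-L) it constructs the output partial coupling explicitly as an infinite sum $\sigma=\sum_n\EE_0(\sigma_n)$ obtained by iterating the premise, and verifies by hand the two marginal inequalities and the trace inequality in the definition of partial coupling, together with convergence of the sum. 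You would need these (or equivalent) arguments; the closedness of the set of partial couplings does not substitute for them, since the difficulty is not passing to a limit but producing the candidate coupling in the first place.
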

Conversely, one can prove completeness for bounded postconditions and
AST programs. The proof is divided into two main steps. First, we
establish completeness result for \textit{split postconditions}, i.e., postconditions of the form $Q_1
\otimes I_2 + I_1 \otimes Q_2$. With this result in place, we can then leverage
duality to derive completeness for all AST programs, and bounded
postconditions.





The first step towards completeness is to show some form of one-sided
weakest precondition for AST programs.
\begin{lemma}[One-Sided Weakest Preconditions]
  \label{lemma:onesidedweakestpre}
  For every AST program $S$, we have
  \[
  \vdash Z: \rtriple{(\sem{S}^\dagger \otimes I)(Q)}{S}{\skp}{Q}.
  \]
\end{lemma}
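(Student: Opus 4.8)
The plan is to prove the statement by structural induction on the AST program $S$, building the derivation in $\qqRHL$ using only the left rules together with (csq). The key observation is that for an AST program $S$ and postcondition $Q$, the assertion $(\sem{S}^\dagger \otimes I)(Q)$ is exactly the quantum weakest precondition of $Q$ under $\sem{S}$ (acting on the first copy), so each left rule in Fig.~\ref{fig:qqrhlrules} is precisely designed to match the corresponding clause of \Cref{lem-structural}. The cases $\skp$, initialization $q := \ket{0}$, and unitary $\oq := U[\oq]$ are immediate: (skip), (assign-L) and (apply-L) give the stated triple after one checks that the precondition appearing in the rule coincides with $(\sem{S}^\dagger\otimes I)(Q)$, which is a direct computation from \Cref{lem-structural} clauses (\ref{dsem_skp}), (\ref{dsem_init}), (\ref{dsem_uni}) and the definition of the dual map. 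For sequential composition $S_1;S_2$, I would apply the induction hypothesis to $S_2$ with postcondition $Q$ and to $S_1$ with postcondition $(\sem{S_2}^\dagger\otimes I)(Q)$, then combine via (seq), using $(\sem{S_1;S_2}^\dagger\otimes I)(Q) = (\sem{S_1}^\dagger\otimes I)\big((\sem{S_2}^\dagger\otimes I)(Q)\big)$, which follows from clause (\ref{dsem_comp}) and functoriality of the dual.

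For the conditional $\ifmeasure$, I would invoke the induction hypothesis on each branch $S_m$ with postcondition $Q$, obtaining $\vdash Z : \rtriple{(\sem{S_m}^\dagger\otimes I)(Q)}{S_m}{\skp}{Q}$, and then apply (if-L) with $P_m \eqdef (\sem{S_m}^\dagger\otimes I)(Q)$. It remains to check that $\sum_m (M_m\otimes I)^\dagger P_m (M_m\otimes I) = (\sem{\ifmeasure}^\dagger\otimes I)(Q)$, which is a direct unfolding of clause (\ref{dsem_if}) and the Kraus form of the dual of $\rho\mapsto\sum_m \sem{S_m}(M_m\rho M_m^\dagger)$. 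The main subtlety of the proof concentrates in the while-loop case.

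For $\whileloop$ with $S\in\AST$, write $\EE \eqdef \sem{\whileloop}$ and let $P \eqdef (\EE^\dagger\otimes I)(Q)$. The difficulty is that (while-L) is a fixed-point rule requiring a loop invariant, not a syntactic construction, so I must exhibit a suitable invariant $P$ and discharge the premise $\vdash Z: \rtriple{Q'}{S}{\skp}{Q''}$ where $Q'' = (M_0\otimes I)^\dagger P (M_0\otimes I) + (M_1\otimes I)^\dagger Q' (M_1\otimes I)$ and $Q'$ is the rule's precondition. The natural choice is to take the invariant to be $P$ itself, using the fact that $P = (\EE^\dagger\otimes I)(Q)$ satisfies the fixed-point equation $P = (M_0\otimes I)^\dagger (\sem{\skp}^\dagger\otimes I)(Q)(M_0\otimes I) + (M_1\otimes I)^\dagger (\sem{S}^\dagger\otimes I)(P)(M_1\otimes I)$; here the crucial point is that since $S$ and the loop are AST, $\sem{\whileloop} = \bigsqcup_k \sem{\whileloop^{(k)}}$ is a quantum channel and its dual interacts correctly with the (decreasing, in the relevant sense) limit, so the weakest precondition is genuinely the least fixed point and in particular a fixed point. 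Concretely I would: (i) apply the induction hypothesis to the loop body $S$ with postcondition $P$, giving $\vdash Z: \rtriple{(\sem{S}^\dagger\otimes I)(P)}{S}{\skp}{P}$; (ii) use (csq) to rewrite this as the premise of (while-L) with $Q' = Q'' = (M_0\otimes I)^\dagger Q (M_0\otimes I) + (M_1\otimes I)^\dagger (\sem{S}^\dagger\otimes I)(P)(M_1\otimes I)$, checking both $\sqsubseteq$ directions reduce to the fixed-point identity for $P$; (iii) conclude $\vdash Z: \rtriple{Q'}{\whileloop}{\skp}{Q}$ by (while-L) with $P$ as invariant; (iv) observe $Q' = P$ by the fixed-point identity, closing the case. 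The hard part — and where I would spend the most care — is step (iv)/the fixed-point identity itself: justifying that $(\sem{\whileloop}^\dagger\otimes I)(Q)$ equals the claimed loop-unfolding expression, which requires continuity of the dual map along the $\bigsqcup$ in clause (\ref{dsem_while}) and the AST hypothesis to ensure no mass is lost at the limit; this is the standard quantum weakest-precondition computation (cf.\ \cite{Ying11,Ying16}) but must be stated carefully in our infinite-valued setting.
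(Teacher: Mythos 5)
Your proposal is correct and takes essentially the same route as the paper's proof: structural induction on $S$, with each one-sided rule instantiated at the corresponding weakest precondition, and the while case closed by applying the induction hypothesis at postcondition $(\sem{\while}^\dagger\otimes I)(Q)$ and invoking the fixed-point identity $\sem{\while}^\dagger = \EE_0^\dagger + \EE_1^\dagger\circ\sem{S}^\dagger\circ\sem{\while}^\dagger$ so that the resulting judgment is literally the premise of (while-L). The one notational slip is in your step (ii): the invariant supplied to (while-L) is $(\sem{S}^\dagger\otimes I)\bigl[(\sem{\while}^\dagger\otimes I)(Q)\bigr]$ rather than $(\sem{\while}^\dagger\otimes I)(Q)$ itself (your identification $Q'=Q''$ is not right as stated), and since the fixed-point unfolding is an exact equality no application of (csq) is needed there.
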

The lemma is proved by structural induction on the program $S$. One
can then lift the results to the case of two programs. 
\begin{lemma}[Two-Sided Weakest Preconditions]
  \label{lemma:twosidedweakestpre}
  For every AST programs $S_1, S_2$, we have
  \[
    \vdash Z: \rtriple{(\sem{S_1}^\dagger \otimes \sem{S_2}^\dagger) (Q)}{S_1}{S_2}{Q}
  \]
\end{lemma}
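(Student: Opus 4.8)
The plan is to reduce the two-program statement to the one-program statement already established in \Cref{lemma:onesidedweakestpre}, using the (seq) rule together with the one-sided rules applied on the right. First I would observe that, up to the structural equivalence of programs under sequential composition with $\skp$ (which holds semantically: $\sem{\skp; S_2} = \sem{S_2} = \sem{S_2;\skp}$), the program pair $(S_1, S_2)$ can be presented as the sequential composition of $(S_1, \skp)$ followed by $(\skp, S_2)$. Concretely, I would show
\[
  \vdash Z: \rtriple{(\sem{S_1}^\dagger\otimes \sem{S_2}^\dagger)(Q)}{S_1}{\skp}{(I\otimes \sem{S_2}^\dagger)(Q)}
\]
and
\[
  \vdash Z: \rtriple{(I\otimes \sem{S_2}^\dagger)(Q)}{\skp}{S_2}{Q},
\]
and then compose them with (seq), after rewriting $S_1$ as $S_1;\skp$ and $S_2$ as $\skp;S_2$ (or, more cleanly, noting that (seq) can be instantiated with $S_1' = \skp$ in the first premise and $S_2 = \skp$ in the slot matching the identity program, so that the conclusion is literally $\rtriple{\cdot}{S_1;\skp}{\skp;S_2}{Q}$, which is the same judgment modulo the trivial syntactic identities $S_1;\skp$, $\skp;S_2$; if the calculus does not build these identities in, I invoke (csq) with equality, or simply note that the semantics—and hence validity—is unchanged, and that the derivation tree is the same).

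For the first of the two judgments above, I apply \Cref{lemma:onesidedweakestpre} with the postcondition taken to be $(I\otimes \sem{S_2}^\dagger)(Q)$: this gives $\vdash Z: \rtriple{(\sem{S_1}^\dagger\otimes I)\big((I\otimes \sem{S_2}^\dagger)(Q)\big)}{S_1}{\skp}{(I\otimes \sem{S_2}^\dagger)(Q)}$, and since the superoperators $\sem{S_1}^\dagger\otimes I$ and $I\otimes\sem{S_2}^\dagger$ act on disjoint tensor factors they commute, so $(\sem{S_1}^\dagger\otimes I)\circ(I\otimes\sem{S_2}^\dagger) = \sem{S_1}^\dagger\otimes\sem{S_2}^\dagger$, giving exactly the desired precondition. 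For the second judgment, I use the mirror-image (right) version of \Cref{lemma:onesidedweakestpre}—which exists by the symmetry of the rule set, each left rule having a right counterpart—applied with postcondition $Q$, yielding $\vdash Z: \rtriple{(I\otimes\sem{S_2}^\dagger)(Q)}{\skp}{S_2}{Q}$ directly. Here I should double-check that $\sem{S_1}, \sem{S_2}\in\QC$ (the AST hypothesis) is what licenses \Cref{lemma:onesidedweakestpre}, and that it is preserved: it is, since that lemma is stated for AST programs and $S_1, S_2$ are assumed AST.

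The main obstacle, such as it is, is bookkeeping rather than conceptual: I need to make sure that (i) the commutation of $\sem{S_1}^\dagger\otimes I$ with $I\otimes\sem{S_2}^\dagger$ is valid also for infinite-valued predicates $Q\in\PosI$ (this follows from the definition of the extended operations in \Cref{def:operations-infinite-valued-pre}, since the action of a superoperator on a finite/infinite eigenspace decomposition is computed factorwise and the two maps touch disjoint factors), and (ii) the (seq) rule is applied with the correctly threaded intermediate assertion $Q$-of-the-lemma equal to $(I\otimes\sem{S_2}^\dagger)(Q)$, and that the syntactic padding by $\skp$ does not obstruct the rule—if one prefers to avoid the $\skp$-padding, one can equivalently prove the lemma by structural induction on $S_1$ only, peeling off the leftmost constructor with the appropriate one-sided left rule and using the induction hypothesis together with the already-proved right-only case as the base, but the (seq)-based argument above is shorter and I would present that.
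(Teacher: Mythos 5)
Your proposal is correct and follows essentially the same route as the paper's proof: apply \Cref{lemma:onesidedweakestpre} to $S_1$ with postcondition $(I\otimes\sem{S_2}^\dagger)(Q)$, apply its right-sided mirror to $S_2$ with postcondition $Q$, and compose with (seq), using the algebraic identity $(\sem{S_1}^\dagger\otimes I)\circ(I\otimes\sem{S_2}^\dagger)=\sem{S_1}^\dagger\otimes\sem{S_2}^\dagger$ (the paper cites \Cref{lem: IVP algebraic} for this). Your extra care about the $\skp$-padding in the (seq) conclusion is a detail the paper silently elides, but it does not change the argument.
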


Now we are ready to give our completeness result.
\begin{theorem}[Completeness for Split Postconditions]
  \label{thm:weakcompleteness}
  For every AST programs $S_1, S_2$, we have:
  \[
    \vDash Z: \rtriple{P}{S_1}{S_2}{Q_1 \otimes I + I \otimes Q_2}
  \]
  implies
  \[
   \vdash Z: \rtriple{P}{S_1}{S_2}{Q_1 \otimes I + I \otimes Q_2}
  \]
\end{theorem}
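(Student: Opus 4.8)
The goal is to derive the judgment $\vdash Z: \rtriple{P}{S_1}{S_2}{Q_1 \otimes I + I \otimes Q_2}$ from its validity, for AST programs $S_1,S_2$. The key observation is that a split postcondition is essentially the sum of two one-sided postconditions, one touching only the left copy and one touching only the right copy, so the two-sided weakest precondition of \Cref{lemma:twosidedweakestpre} should factor nicely. Concretely, I would first instantiate \Cref{lemma:twosidedweakestpre} with $Q = Q_1\otimes I + I\otimes Q_2$. Using linearity of the dual maps and the fact that $\sem{S_i}^\dagger$ is unital for AST (hence channel) programs---so $\sem{S_2}^\dagger(I) = I$ and $\sem{S_1}^\dagger(I) = I$---the precondition $(\sem{S_1}^\dagger\otimes\sem{S_2}^\dagger)(Q_1\otimes I + I\otimes Q_2)$ simplifies to $\sem{S_1}^\dagger(Q_1)\otimes I + I\otimes \sem{S_2}^\dagger(Q_2)$. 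Call this operator $P'$. Thus $\vdash Z: \rtriple{P'}{S_1}{S_2}{Q_1\otimes I + I\otimes Q_2}$ is derivable.

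The remaining task is to close the gap between $P$ and $P'$ using the rule of consequence, i.e.\ to show $P \geqlow P'$. This is where validity must be used. By \Cref{lem:validity and monotone}, $\vDash Z: \rtriple{P}{S_1}{S_2}{Q_1\otimes I + I\otimes Q_2}$ means $(\sem{S_1},\sem{S_2})$ is monotone w.r.t.\ $P$ and $Q_1\otimes I + I\otimes Q_2$. I would unfold monotonicity on product inputs $\rho_1\otimes\rho_2$ with $\rho_1\in\DD^1(\HH_{S_1})$, $\rho_2\in\DD^1(\HH_{S_2})$: since $S_1,S_2$ are AST, every partial coupling of $\sem{S_1}(\rho_1)$ and $\sem{S_2}(\rho_2)$ is a genuine coupling (by the remark following \Cref{lem:validity AST program}), and for any such coupling $\sigma$,
\[
  \tr\!\big((Q_1\otimes I + I\otimes Q_2)\sigma\big) = \tr(Q_1\sem{S_1}(\rho_1)) + \tr(Q_2\sem{S_2}(\rho_2)) = \tr(\sem{S_1}^\dagger(Q_1)\rho_1) + \tr(\sem{S_2}^\dagger(Q_2)\rho_2),
\]
which equals $\tr(P'(\rho_1\otimes\rho_2))$ and is therefore independent of the choice of $\sigma$. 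Monotonicity w.r.t.\ $P$ then gives $\tr(P(\rho_1\otimes\rho_2)) \geq T_{Q_1\otimes I + I\otimes Q_2}(\sem{S_1}(\rho_1),\sem{S_2}(\rho_2)) = \tr(P'(\rho_1\otimes\rho_2))$ for all product density operators, hence (by bilinearity and density of convex combinations of product states, together with continuity) $\tr(P\rho)\geq\tr(P'\rho)$ for all $\rho\in\DD^1(\HH_{S_1}\otimes\HH_{S_2})$, i.e.\ $P\geqlow P'$. Finally, apply (csq) with $P \geqlow P'$, the derivable judgment $\vdash Z:\rtriple{P'}{S_1}{S_2}{Q_1\otimes I + I\otimes Q_2}$, and the trivial $Q_1\otimes I + I\otimes Q_2 \geqlow Q_1\otimes I + I\otimes Q_2$ to conclude.

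\textbf{Main obstacle.} The delicate point is the reduction from the ``for all product inputs'' statement to the Löwner inequality $P\geqlow P'$ on the full bipartite space, and dually, making sure the split postcondition really does reduce the optimal transport cost to a fixed value independent of the coupling. The first issue is handled because the span of product states $\rho_1\otimes\rho_2$ is the whole operator space and $\tr((P-P')\,\cdot\,)$ is linear; one must be slightly careful to phrase density-operator inputs rather than arbitrary operators, but convex combinations of product density operators are dense enough in $\DD^1(\HH_{S_1}\otimes\HH_{S_2})$ together with closedness of the PSD cone. The second issue---that $\tr((Q_1\otimes I + I\otimes Q_2)\sigma)$ depends only on the marginals of $\sigma$---is exactly the feature that makes split postconditions tractable, and it relies crucially on AST so that $\sigma$ has the prescribed marginals $\sem{S_1}(\rho_1)$ and $\sem{S_2}(\rho_2)$ exactly rather than merely below them in Löwner order. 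I would make sure to invoke \Cref{lem:validity AST program} explicitly at that step.
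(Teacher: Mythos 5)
Your overall strategy matches the paper's: instantiate the two-sided weakest-precondition lemma at $Q = Q_1\otimes I + I\otimes Q_2$, simplify the precondition to $P' = \sem{S_1}^\dagger(Q_1)\otimes I + I\otimes\sem{S_2}^\dagger(Q_2)$ using unitality of the duals for AST programs, extract $P\geqlow P'$ from validity, and close with (csq). The paper packages the middle step as a separate fact (\Cref{lem:judgment two side wlp}), but the architecture is the same.

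However, your derivation of $P\geqlow P'$ has a genuine gap. You establish $\tr(P(\rho_1\otimes\rho_2))\geq\tr(P'(\rho_1\otimes\rho_2))$ only for \emph{product} inputs and then claim the general inequality follows ``by bilinearity and density of convex combinations of product states.'' Neither half of that works: a linear-span argument transfers equalities, not inequalities, and convex combinations of product density operators are exactly the separable states, which form a closed \emph{proper} subset of $\DD^1(\HH_{S_1}\otimes\HH_{S_2})$ — they are not dense. A Hermitian operator can have nonnegative expectation on every separable state yet negative expectation on an entangled one (this is precisely what an entanglement witness is), so $\tr((P-P')\sigma)\geq 0$ on separable $\sigma$ does not yield $P\geqlow P'$. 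The fix is immediate and is what the paper does: the validity hypothesis quantifies over \emph{all} input couplings $\rho\in\DD^1(\HH_{S_1}\otimes\HH_{S_2})$, including entangled ones, and your key computation — that $\tr\bigl((Q_1\otimes I + I\otimes Q_2)\sigma\bigr)$ depends only on the marginals of the output coupling $\sigma$, which for AST programs are exactly $\sem{S_1}(\tr_2\rho)$ and $\sem{S_2}(\tr_1\rho)$ — goes through verbatim for arbitrary $\rho$, giving $\tr(P\rho)\geq\tr(P'\rho)$ for every density operator directly, hence $P\geqlow P'$. So do not restrict to product inputs at all; apply validity to a general $\rho$ and the rest of your argument is sound. (A minor additional caution: since $P\in\PosI$ may be infinite-valued, appeals to linearity of $\tr(P\,\cdot\,)$ should go through \Cref{lem: IVP algebraic} rather than ordinary trace linearity, but this does not affect the corrected argument.)
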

Using the duality theorem, we can then derive that $\qqRHL$ is
complete for all terminating programs with finite postconditions.
\begin{theorem}[Completeness for Terminating Programs]
  \label{thm:completeness post cond}
  For every AST $S_1, S_2$ programs and bounded predicate $Q\in
  \Pos(\HH_{S_1} \otimes \HH_{S_2})$, we have
  \[\vDash Z:
  \rtriple{P}{S_1}{S_2}{Q}\] implies
  \[\vdash Z:
  \rtriple{P}{S_1}{S_2}{Q}\]
\end{theorem}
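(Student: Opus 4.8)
The plan is to derive \Cref{thm:completeness post cond} from \Cref{thm:weakcompleteness} (completeness for split postconditions) by a single application of the (duality) rule, exactly as the surrounding text suggests. So suppose $\vDash Z : \rtriple{P}{S_1}{S_2}{Q}$ with $S_1,S_2\in\AST$ and $Q\in\Pos$ finite. By \Cref{lem:validity and monotone}, this says that for every $z\in Z$ the pair $(\sem{S_1},\sem{S_2})$ is monotone w.r.t.\ $P_z$ and $Q$. Since $S_1,S_2$ are AST, $\sem{S_1},\sem{S_2}$ are quantum channels, so \Cref{lem:judgment strassen} (Duality under Data Processing) applies: monotonicity w.r.t.\ $P_z$ and $Q$ is equivalent to monotonicity w.r.t.\ $P_z + nI$ and $Y_1\otimes I + I\otimes(nI - Y_2)$ for every $(Y_1,Y_2,n)\in\mathcal{Y}$, where $\mathcal{Y} = \{(Y_1,Y_2,n)\mid n\in\NN;\ 0\leqlow Y_1;\ 0\leqlow Y_2\leqlow nI;\ Q\geqlow Y_1\otimes I - I\otimes Y_2\}$ — note this is precisely the index set appearing in the (duality) rule.

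The next step is to turn each of these monotonicity facts back into a valid $\qqRHL$ judgment and then derive it. By \Cref{lem:validity and monotone} again, monotonicity w.r.t.\ $P_z + nI$ and $Y_1\otimes I + I\otimes(nI-Y_2)$ for all $z\in Z$ and all $(Y_1,Y_2,n)\in\mathcal{Y}$ is exactly $\vDash Z, (Y_1,Y_2,n)\in\mathcal{Y} : \rtriple{P + nI}{S_1}{S_2}{Y_1\otimes I + I\otimes(nI-Y_2)}$, treating $(Y_1,Y_2,n)$ as additional logical parameters. Crucially, the postcondition $Y_1\otimes I + I\otimes(nI - Y_2)$ is split (it is $Q_1\otimes I + I\otimes Q_2$ with $Q_1 = Y_1$ and $Q_2 = nI - Y_2$, both unary and in fact PSD since $0\leqlow Y_2\leqlow nI$). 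Hence \Cref{thm:weakcompleteness} applies and yields a derivation $\vdash Z, (Y_1,Y_2,n)\in\mathcal{Y} : \rtriple{P + nI}{S_1}{S_2}{Y_1\otimes I + I\otimes(nI-Y_2)}$. Feeding this derivation as the single premise of the (duality) rule — whose side conditions $S_1,S_2\in\AST$ and $Q\in\Pos$ are met, and whose set $\mathcal{Y}$ matches — we conclude $\vdash Z : \rtriple{P}{S_1}{S_2}{Q}$, as desired.

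The only real subtlety, and the step I expect to require the most care, is matching the bookkeeping between the duality theorem and the (duality) proof rule: checking that the index set $\mathcal{Y}$ in \Cref{lem:judgment strassen} is syntactically the same as the one in the rule, that the extension of the logical context from $Z$ to $Z,(Y_1,Y_2,n)\in\mathcal{Y}$ is handled uniformly by the validity/monotonicity correspondence (i.e.\ that \Cref{lem:validity and monotone} and \Cref{thm:weakcompleteness} are stated for an arbitrary parameter set, so the enlarged context causes no trouble), and that the split postcondition $Y_1\otimes I + I\otimes(nI - Y_2)$ genuinely has the form required by \Cref{thm:weakcompleteness} with both summands PSD. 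Once these alignments are verified, the proof is just the three-line chain: validity $\Leftrightarrow$ monotonicity (\Cref{lem:validity and monotone}), monotonicity $\Leftrightarrow$ split monotonicity (\Cref{lem:judgment strassen}), split validity $\Rightarrow$ derivability (\Cref{thm:weakcompleteness}), then one instance of (duality).
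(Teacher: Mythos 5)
Your proposal is correct and follows essentially the same route as the paper's own proof: validity is converted to monotonicity via \Cref{lem:validity and monotone}, the duality theorem (\Cref{lem:judgment strassen}) reduces it to the universally quantified split postcondition, \Cref{thm:weakcompleteness} provides derivability of the split judgment, and one application of the (duality) rule concludes. The bookkeeping points you flag (matching $\mathcal{Y}$, the enlarged parameter context, and positivity of $nI - Y_2$) are exactly the right ones and all check out.
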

\begin{proof}
The desired judgment follows from an application of the duality rule
and the provability of:
  \begin{multline*}
    \vdash Z: \rtriple{P}{S_1}{S_2}{Q}
    \iff \\ \vdash Z, (Y_1, Y_2, n) \in \mathcal{Y}: \rtriple{P + nI}{\\ S_1}{S_2}{Y_1 \otimes I + I \otimes (nI - Y_2)}
  \end{multline*}
  where $\mathcal{Y}$ is defined as in \Cref{lem:judgment strassen}
  with $C_i = P$ and $C_o = Q$. Provability of the latter follows from
  completeness for split postconditions.
\end{proof}

\subsection{Two-Sided Rules}\label{sec:twosided}
This part considers two-sided rules. Such rules are not needed for completeness. However, they allow to carry lock-step reasoning about structurally similar programs, and typically lead to simpler and more intuitive derivations. For example, it may be easier to establish the equivalence of two loops using a two-sided loop rule rather than using twice a one-sided loop rule, simply because a two-sided loop rule may use the loop invariant that the two loop bodies preserve state equivalence.
However, it can be challenging to define sound and expressive two-sided proof rules for control-flow constructs. For instance, \cite{barthe_rqpd}
uses two-sided rules that involve measurement conditions and entailment between measurement conditions---where these entailments are proved by semantic means. In this section, we show that these rules remain sound for infinite-valued predicates, and we further show how our formalism yields some proof rules to reason about measurement conditions.
\begin{definition}[Measurement Condition and Entailment, c.f. \cite{barthe_rqpd}]
  Suppose $M = \{M_1,\cdots,M_k\}$ and $N = \{N_1,\cdots,N_k\}$ are two measurements with the same output set.
  We say two states $\rho,\sigma\in\DD$ satisfy the measurement condition $M\approx N$, written $(\rho,\sigma)\vDash M\approx N$, if for all $i$, $\tr(M_i\rho M_i^\dagger) = \tr(N_i\sigma N_i^\dagger)$. 
  
  We further define the entailment relation of two programs $S_1,S_2$, written $\Gamma\stackrel{(S_1,S_2)}{\vDash}\Gamma'$, if for all $\rho,\sigma\in\DD^1$ such that $(\rho,\sigma)\vDash\Gamma$, it holds $(\sem{S_1}(\rho), \sem{S_2}(\sigma)) \vDash \Gamma'$.
\end{definition}

Checking the entailment relation involves the program constructions is highly nontrivial~\cite{barthe_rqpd}. In fact, 
the proposed method in \cite{barthe_rqpd} is based on the semantics of the programs. Here, we give a complete characterization so that checking entailment itself can be done using program logic.
\begin{theorem}
  \label{thm:side-condition equal}
  For AST programs $S_1,S_2$, and measurements $M = \{M_1,\cdots,M_k\}$ and $N = \{N_1,\cdots,N_k\}$, the following are equivalent:
  \begin{enumerate}
    \item $\emptyset\stackrel{(S_1,S_2)}{\vDash}M\approx N$;
    \item $\vDash (Y_1,\cdots,Y_k,Z_1,\cdots,Z_k,n)\in\mathcal{Y}_k: \{nI\}S_1\sim S_2$\\[-0.5cm]
    \begin{multline*}
      \qquad\ \big\{ (\mbox{$\sum_i$}M_i^\dagger Y_i M_i)\otimes I +
      I \otimes \big[nI - (\mbox{$\sum_i$}N_i^\dagger Z_i N_i)\big]\big\}
    \end{multline*}
      where $\mathcal{Y}_k = \{  (Y_1,\cdots,Y_k,Z_1,\cdots,Z_k,n) \mid$\\[-0.6cm]
    \begin{multline*}
      \forall\,i,\ 0\sqsubseteq Y_i,\ 0\sqsubseteq Z_i\sqsubseteq nI, Y_i\otimes I - I\otimes Z_i\sqsubseteq 0,\\
      \forall\,j\neq i,\ Y_i\otimes I - I\otimes Z_j\sqsubseteq I\}.
    \end{multline*}
  \end{enumerate}
\end{theorem}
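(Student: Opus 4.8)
The plan is to reduce the measurement-condition entailment $\emptyset\stackrel{(S_1,S_2)}{\vDash}M\approx N$ to a family of $\qqRHL$ judgments with split postconditions, mirroring exactly the structure used for the duality theorem (\Cref{lem:judgment strassen}) and \Cref{thm:completeness post cond}. The key observation is that the entailment $(\sem{S_1}(\rho),\sem{S_2}(\sigma))\vDash M\approx N$ asserts $\tr(M_i\sem{S_1}(\rho)M_i^\dagger)=\tr(N_i\sem{S_2}(\sigma)N_i^\dagger)$ for every outcome $i$. Since $M$ and $N$ have the same (finite) output set $\{1,\dots,k\}$ and the probabilities in each family sum to $\tr(\sem{S_1}(\rho))=\tr(\sem{S_2}(\sigma))=1$ (as $S_1,S_2\in\AST$), the system of $k$ equalities is equivalent to the system of $k$ inequalities $\tr(M_i\sem{S_1}(\rho)M_i^\dagger)\le\tr(N_i\sem{S_2}(\sigma)N_i^\dagger)$ for all $i$ — the reverse inequalities follow by summing. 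Rewriting the left-hand side via the adjoint as $\tr(\sem{S_1}^\dagger(M_i^\dagger \,\cdot\, M_i)\,\rho)$ and similarly on the right, the entailment becomes: for all inputs, and for all $i$, the pair $(\sem{S_1},\sem{S_2})$ transports the cost $\sum_i M_i^\dagger Y_i M_i \otimes I - I\otimes \sum_i N_i^\dagger Z_i N_i$ appropriately whenever the coefficients $Y_i,Z_i$ are chosen to isolate outcome $i$.

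First I would make the correspondence with partial/quantum couplings precise via \Cref{lem:validity and monotone} and \Cref{thm:quantum strassen defect alter}. For AST programs the partial coupling in the definition of validity is a genuine coupling (\Cref{lem:validity AST program}), so a judgment $\vDash \{P\}S_1\sim S_2\{Q\}$ amounts to: for every density operator input $\rho$ there is a coupling $\sigma:\langle\sem{S_1}(\tr_2\rho),\sem{S_2}(\tr_1\rho)\rangle$ with $\tr(P\rho)\ge\tr(Q\sigma)$. I would then show that statement (2) — the validity of the parametrized split-postcondition judgment over $\mathcal{Y}_k$ — unfolds, after applying quantum Strassen duality to the cost $Q = (\sum_i M_i^\dagger Y_i M_i)\otimes I + I\otimes[nI - \sum_i N_i^\dagger Z_i N_i]$ and absorbing the $nI$ terms, precisely into the dual characterization of the coupling-based statement that $\tr(M_i\sem{S_1}(\rho)M_i^\dagger)\le\tr(N_i\sem{S_2}(\sigma)N_i^\dagger)$ holds for every witnessing coupling. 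The side conditions in $\mathcal{Y}_k$ are engineered so that the ``diagonal'' constraint $Y_i\otimes I - I\otimes Z_i\sqsubseteq 0$ forces the $i$-th marginal probabilities to be compared in the right direction, while the ``off-diagonal'' constraints $Y_i\otimes I - I\otimes Z_j\sqsubseteq I$ for $j\neq i$ keep the remaining mass bounded so that the dual inequality degrades to the desired per-outcome comparison rather than a coarser bound; this is the analogue in the projector/measurement setting of the $0\sqsubseteq Y_i\sqsubseteq nI$ box constraint in $\mathcal{Y}$.

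For the forward direction (1)$\Rightarrow$(2), given the entailment I would, for any $(Y_1,\dots,Y_k,Z_1,\dots,Z_k,n)\in\mathcal{Y}_k$ and any input $\rho$, take the canonical coupling of $\sem{S_1}(\tr_2\rho)$ and $\sem{S_2}(\tr_1\rho)$ (e.g.\ one realizing the optimal transport for the relevant cost, whose existence follows from \Cref{prop:partial-coupling}) and verify the inequality $\tr((nI)\rho)\ge\tr(Q\sigma)$ by expanding $\tr(Q\sigma)$, using the marginal conditions to rewrite it in terms of $\tr(M_i\sem{S_1}(\tr_2\rho)M_i^\dagger)$ and $\tr(N_i\sem{S_2}(\tr_1\rho)N_i^\dagger)$, substituting the equalities from the entailment, and bounding the cross terms using the $\mathcal{Y}_k$ constraints. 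For the converse (2)$\Rightarrow$(1), I would fix an input and specialize the parameters: choosing $Y_i = P$, $Z_i = P$ (a projector onto the relevant outcome space scaled appropriately) for the single outcome of interest and $0$ elsewhere recovers $\tr(M_i\sem{S_1}(\rho)M_i^\dagger)\le\tr(N_i\sem{S_2}(\sigma)N_i^\dagger)$ for each $i$ separately; running this for every $i$ and using that the probabilities sum to one upgrades the inequalities to equalities, giving the measurement condition.

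The main obstacle I anticipate is getting the off-diagonal side conditions $Y_i\otimes I - I\otimes Z_j\sqsubseteq I$ exactly right: one must check both that they are weak enough that the specializations needed for the converse direction actually lie in $\mathcal{Y}_k$, and strong enough that in the forward direction the cross terms $\tr((M_i^\dagger Y_i M_i\otimes I - I\otimes N_j^\dagger Z_j N_j)\sigma)$ for $i\neq j$ are controlled by the ``wrong-outcome'' probability mass, which the coupling makes small but not zero. This is where the argument differs genuinely from the finite-valued duality proof, and I expect it to require a careful accounting using Strassen duality (\Cref{thm:quantum strassen defect alter}) applied outcome-by-outcome together with the fact that $\sum_i M_i\rho M_i^\dagger$ has the same trace as $\rho$. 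The remaining steps — the adjoint rewriting, the reduction of $k$ equalities to $k$ inequalities, and the bookkeeping with the $nI$ shifts — are routine and parallel to the proof of \Cref{thm:completeness post cond}.
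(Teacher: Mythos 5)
Your converse direction matches the paper's: specializing to $Y_i=Z_i=I$, $n=1$ and zeros elsewhere (which does lie in $\mathcal{Y}_k$) yields $\tr(M_i\sem{S_1}(\rho_1)M_i^\dagger)\le\tr(N_i\sem{S_2}(\rho_2)N_i^\dagger)$ for each $i$, and summing over $i$ (using AST) upgrades these to equalities. The reduction of $k$ equalities to $k$ inequalities is fine.

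The forward direction, however, contains a genuine gap, and moreover you have misdiagnosed where the difficulty sits. The postcondition $Q=(\sum_iM_i^\dagger Y_iM_i)\otimes I+I\otimes[nI-\sum_iN_i^\dagger Z_iN_i]$ is split, so $\tr(Q\sigma)$ depends only on the marginals $\sem{S_1}(\rho_1)$ and $\sem{S_2}(\rho_2)$; there are no ``cross terms $\tr((M_i^\dagger Y_iM_i\otimes I-I\otimes N_j^\dagger Z_jN_j)\sigma)$'' to control, and the choice of output coupling is irrelevant (the paper simply uses the product coupling). The entire content of $(1)\Rightarrow(2)$ is the inequality $\sum_i\tr(Y_iM_i\sigma_1M_i^\dagger)\le\sum_i\tr(Z_iN_i\sigma_2N_i^\dagger)$, where $\sigma_1=\sem{S_1}(\rho_1)$, $\sigma_2=\sem{S_2}(\rho_2)$. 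Your plan to ``substitute the equalities from the entailment'' cannot work as stated: the entailment gives only the \emph{scalar} equalities $\tr(M_i\sigma_1M_i^\dagger)=\tr(N_i\sigma_2N_i^\dagger)$, and these say nothing directly about $\tr(Y_iM_i\sigma_1M_i^\dagger)$ for an arbitrary operator $Y_i$. The missing idea is to convert the scalar equalities into operator inequalities via the \emph{easy} direction of \Cref{thm:quantum strassen defect alter}: since $M_i\sigma_1M_i^\dagger$ and $N_i\sigma_2N_i^\dagger$ have equal trace they admit a coupling, and the diagonal constraint $Y_i\otimes I-I\otimes Z_i\sqsubseteq 0$ then forces $\tr(Y_iM_i\sigma_1M_i^\dagger)\le\tr(Z_iN_i\sigma_2N_i^\dagger)$ for each $i$ separately; summing over $i$ finishes. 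The paper achieves the same thing globally by lifting to $\HH_k\otimes\HH$, forming the block-diagonal states $\sum_i|i\>\<i|\otimes M_i\sigma_1M_i^\dagger$ and $\sum_i|i\>\<i|\otimes N_i\sigma_2N_i^\dagger$, exhibiting the explicit diagonal coupling, and applying Strassen once with the off-diagonal cost $A=\sum_{i\neq j}(|i\>\<i|\otimes I)\otimes(|j\>\<j|\otimes I)$ --- which is precisely where the off-diagonal constraints $Y_i\otimes I-I\otimes Z_j\sqsubseteq I$ in $\mathcal{Y}_k$ come from (they guarantee $Y\otimes I-I\otimes Z\sqsubseteq A$), not from any need to bound residual coupling mass. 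Without one of these two constructions your argument does not go through.
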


We further define measurement properties as side conditions to set up two-sided rules for $\ifb$ and $\while$. Our definition unifies Def. 5.4 and 7.2 in \cite{barthe_rqpd} (see \Cref{prop:measurement properties embed}).
\begin{definition}[Measurement Property, c.f. \cite{barthe_rqpd}]
  Define $\Gamma \vDash Z : \{P \} M\approx N \{Q_k\}$
  if for all $\rho,\sigma\in\DD^1$ such that $(\rho,\sigma)\vDash \Gamma$ and $z\in Z$, 
  if $T_P(\rho,\sigma) < +\infty$, then there exist \emph{couplings}
  $\delta_i : \< M_i\rho M_i^\dagger, N_i\sigma N_i^\dagger \>$ for each $i$, such that:
  $$T_P(\rho,\sigma) \ge \sum_i\tr(Q_i\delta_i).$$
\end{definition}

\begin{figure*}
  \textbf{Extra rules:}\quad
  (if)\ \
    $
      \begin{prooftree}
        \hypo{\Gamma \vDash Z: \{P\} M\approx M' \{R_k\}}
        \hypo{\forall\,k,\ \vdash Z : \rtriple{R_k}{S_k}{S_k'}{Q}}
        \infer2{\Gamma \vdash Z: \rtriple{P}{\ifb\ (\guard k\cdot M[\oq] = k \to S_k)\ \ife}{\ifb\ (\guard k\cdot M'[\oq] = k \to S_k')\ \ife}{Q}}
        \end{prooftree}
    $\\[0.5cm]
  \hspace*{2.1cm} (while)\ \ 
    $
      \begin{prooftree}
        \hypo{\vDash Z: \{P\} M\approx M' \{Q_0,Q_1\}}
        \hypo{\vdash Z : \rtriple{Q_1}{S}{S'}{P}}
        \infer2{\vdash Z: \rtriple{P}{\while\ M[\oq]=1\ \wdo\ S\ \wod}{\while\ M'[\oq]=1\ \wdo\ S'\ \wod}{Q_0}}
        \end{prooftree}
    $\\[0.5cm]
  \hspace*{2.1cm} (seq+)\ \ 
    $
    \begin{prooftree}
      \hypo{\Gamma\vdash Z : \rtriple{P}{S_1}{S_1'}{Q}}
      \hypo{\Gamma'\vdash Z : \rtriple{Q}{S_2}{S_2'}{R}}
      \hypo{\Gamma\stackrel{(S_1,S_1')}{\vDash}\Gamma'}
      \infer3{\Gamma\vdash Z : \rtriple{P}{S_1;S_2}{S_1';S_2}{R}}
    \end{prooftree}
    $
    \\
    \caption{Extra two-side rules for $\qqRHL$.}
    \label{fig:qqrhl_extra}
\end{figure*}

We can now defined two-sided rules in Fig. \ref{fig:qqrhl_extra} and prove their soundness.
\begin{theorem}[Soundness of Two-Sided Rules]
  \label{thm:soundness extra}
  The extra rules for $\qqRHL$ in Fig. \ref{fig:qqrhl_extra} are sound regarding the notion of validity.
\end{theorem}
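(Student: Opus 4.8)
\textbf{Proof proposal for \Cref{thm:soundness extra}.}

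The plan is to prove soundness of the three extra rules in \Cref{fig:qqrhl_extra} separately, in each case unfolding the definition of $\qqRHL$ validity (via \Cref{lem:validity and monotone}, in terms of monotonicity of the denotational pair $(\sem{S_1},\sem{S_2})$) and constructing the required witness (partial) couplings explicitly. Since the relevant programs are AST throughout, by \Cref{lem:validity AST program} the witness partial couplings are in fact couplings, so I may work with ordinary couplings and ordinary marginals, which simplifies the bookkeeping. The common pattern is: fix $z\in Z$ and an input state, decompose it according to the measurement outcomes, apply the side condition (a measurement property or entailment) to produce branchwise couplings, apply the inductive validity hypotheses to each branch to push those couplings through the subprograms, and finally glue the results into a single coupling that witnesses the conclusion, checking the trace inequality by summing the branchwise inequalities.

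First I would handle the (seq+) rule, as it is the most routine: given a joint input $\rho$, the hypothesis $\Gamma\vdash Z:\rtriple{P}{S_1}{S_1'}{Q}$ gives a coupling $\sigma$ of $\langle\sem{S_1}(\tr_2\rho),\sem{S_1'}(\tr_1\rho)\rangle$ with $\tr(P\rho)\ge\tr(Q\sigma)$; the entailment $\Gamma\stackrel{(S_1,S_1')}{\vDash}\Gamma'$ certifies that $(\sem{S_1}(\tr_2\rho),\sem{S_1'}(\tr_1\rho))\vDash\Gamma'$, so the second hypothesis applies to $\sigma$ as its input and yields a coupling $\tau$ of the outputs of $S_2,S_2'$ with $\tr(Q\sigma)\ge\tr(R\tau)$; composing gives $\tr(P\rho)\ge\tr(R\tau)$ and $\tau$ is the desired witness for $S_1;S_2 \sim S_1';S_2$. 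Next, for the (if) rule, I would write the input $\rho$, apply the measurement-outcome branching so that each branch $k$ receives the subnormalized states $M_k(\cdot)M_k^\dagger$ and $M'_k(\cdot)M_k'^\dagger$; the measurement property $\Gamma\vDash Z:\{P\}M\approx M'\{R_k\}$ supplies, when $T_P<+\infty$, couplings $\delta_k$ of these branch states with $T_P(\cdots)\ge\sum_k\tr(R_k\delta_k)$, and $T_P$ is itself a lower bound coming from some coupling realizing $\tr(P\rho)$; the validity hypotheses $\vdash Z:\rtriple{R_k}{S_k}{S_k'}{Q}$ convert each $\delta_k$ into a coupling $\theta_k$ of $\sem{S_k}$- and $\sem{S_k'}$-outputs with $\tr(R_k\delta_k)\ge\tr(Q\theta_k)$; summing over $k$ and recalling $\sem{\ifb\cdots\ife}(\cdot)=\sum_k\sem{S_k}(M_k(\cdot)M_k^\dagger)$ shows $\sum_k\theta_k$ is a coupling of the two conditionals' outputs with $\tr(P\rho)\ge\tr(Q\sum_k\theta_k)$. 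The case $T_P=+\infty$ is vacuous since then $\tr(P\rho)=+\infty$ and any coupling works.

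The (while) rule is the main obstacle and will require the most care. Here I would argue by the fixed-point/iteration characterization of loop semantics in \Cref{lem-structural}: $\sem{\while[M,S]} = \bigsqcup_k \sem{\while^{(k)}[M,S]}$. I would prove by induction on $k$ a uniform statement of the form ``for all inputs satisfying $P$ (i.e.\ with $T_P<+\infty$) there is a coupling of the $k$-th loop approximants witnessing postcondition $Q_0$,'' using the measurement property $\vDash Z:\{P\}M\approx M'\{Q_0,Q_1\}$ to split each iteration into the termination branch (outcome $0$, where $Q_0$ is immediately witnessed via a coupling $\delta_0$) and the continuation branch (outcome $1$, where $Q_1$ is witnessed via $\delta_1$), then applying the body hypothesis $\vdash Z:\rtriple{Q_1}{S}{S'}{P}$ to $\delta_1$ to re-establish $P$ and feed the induction hypothesis for $k-1$. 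The delicate points are: (i) showing the branchwise trace inequalities sum and telescope correctly across iterations so that the bound $\tr(P\rho)\ge\tr(Q_0\cdot(\text{output coupling}))$ is preserved in the limit; (ii) passing to the supremum --- I would use that the $\sem{\while^{(k)}}$ form an increasing chain, that the constructed witness couplings can be taken to form (or to have a convergent subsequence with) a limit coupling by closedness/compactness of the coupling set (cf.\ \Cref{prop:partial-coupling}), and that $\tr(Q_0\cdot)$ is continuous from below on this chain since $Q_0$ is a (possibly infinite-valued) positive operator and the relevant limits are monotone; and (iii) handling the AST assumption, which guarantees the loop terminates with probability one so that the ``leftover'' non-terminating mass vanishes and no genuinely partial (non-total) coupling is needed. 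Once the limit witness coupling is in hand and the trace inequality survives the limit, soundness of (while) follows, completing the proof.
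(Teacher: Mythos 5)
Your treatment of (seq+) and (if) matches the paper's proof in substance: the paper phrases both arguments through the optimal-transport value $T_P$ and joint convexity (\Cref{lem:QOT convexity}) rather than through explicit sums of witness couplings, but the content is the same chain of inequalities. For (while) you take a genuinely different route — induction on the approximants $\while^{(k)}$ followed by a limit/compactness argument — whereas the paper constructs the witness directly as an infinite sum $\sigma=\sum_n p_n\sigma_n$ of outcome-$0$ couplings obtained by unrolling the loop, and then verifies the (partial-)coupling conditions for $\sigma$ by hand. Your route is viable in principle and arguably more modular, but it forces you to confront lower semicontinuity of $\tr(Q_0\,\cdot)$ for $Q_0\in\PosI$ and convergence of the witness sequence, which the paper's direct construction avoids.

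The genuine gap is your opening assumption that ``the relevant programs are AST throughout,'' which you then use to replace partial couplings by full couplings via \Cref{lem:validity AST program}. The rules in Fig.~\ref{fig:qqrhl_extra} carry no AST side condition (unlike the (duality) rule), and a loop built from an AST body $S$ need not be AST; moreover, even when the loop \emph{is} AST, every finite approximant $\while^{(k)}$ starts from $\mathbf{abort}$ and is therefore strictly trace-decreasing, so your induction necessarily lives in the world of partial couplings at every finite stage. This is not a cosmetic issue: the marginal conditions $\tr_2(\sigma)\sqsubseteq\mu_1$, $\tr_1(\sigma)\sqsubseteq\mu_2$ glue additively across branches and iterations, but the third partial-coupling condition $\tr(\mu_1)+\tr(\mu_2)\le 1+\tr(\sigma)$ does not, and establishing it for the assembled loop witness is the hardest part of the paper's proof — it requires tracking the non-termination masses $1-p_n$ of both sides and showing they compensate each other, using the inequality $1+\tr(\delta_n)\ge\tr(\sem{S}(\cdot))+\tr(\sem{S'}(\cdot))$ supplied by the body's partial couplings. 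Your item (iii) acknowledges the issue but resolves it by invoking an AST hypothesis that the rule does not grant you; without it, the step ``glue the branchwise couplings into a single coupling'' is exactly the unproved claim. To repair the proof you would need to carry the partial-coupling trace inequality through the induction and the limit, essentially reconstructing the paper's bookkeeping with the quantities $p_n$, $\tr(\EE_1(\rho_n))-\tr(\sem{S}(\EE_1(\rho_n)))$, and $\lim_n p_n$.
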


\section{Infinite-Valued and Projective Predicates}

It might seem curious why we chose to present everything in terms of infinite-valued predicates. What exactly do they buy us? In this section, we answer this question by showcasing the expressiveness of infinite-valued predicates, by showing how it enables a complete semantic embedding of projector-based quantum relational Hoare logics in $\qqRHL$. In the context of $\qqRHL$, this gives us complete characterisations of non-trivial properties like program equivalence, for free. In the wider field of quantum program logics, this gives us a general way of unifying the two types of predicates (projective and quantitative) in the same logic.

\subsection{Projective Predicates}
Our logic, $\qqRHL$, follows a quantitative paradigm: we use (generalised) positive semi-definite operators as predicates, and reason about the `extent' to which quantum states satisfy those predicates by the expectations of the operators over the states. 
The alternative approach, followed by \cite{qRHL_Unruh_2019,approx_rel_reasoning_quantum_programs} and parts of \cite{barthe_rqpd}, uses subspaces (or equivalently projectors) as assertions: a state $\rho$ satisfies $X \in \cS(\HH)$ if $\supp(\rho) \subseteq X$. In the setting of quantum relational Hoare logics, this corresponds to a notion of validity as follows:
\begin{definition}[Logic for Projective Predicates]
  We write $\vDash_{\pqRHL}: \rtriple{X}{S_1}{S_2}{Y}$, where $X, Y \in \cS(\HH_1 \otimes \HH_2)$, if for any initial state $\rho$ with $\supp (\rho) \subseteq X$, there exists a coupling $\sigma : \langle \sem{S_1}(\tr_2(\rho)), \sem{S_2}(\tr_1(\rho)) \rangle$ such that $\supp (\sigma) \subseteq Y$. 
\end{definition}
This formulation has several advantages compared to its quantitative counterpart: the resulting logic often has simpler rules, and several non-trivial properties have much simpler formulations. Crucially, this is possible only because $\pqRHL$ allows one to enforce \textit{projective preconditions}, i.e., membership of the initial state in a particular subspace. For example, equivalence between two programs $S_1$, $S_2$, or, equivalently, the property that $\forall \rho \in \DD(\HH). \sem{S_1}(\rho) = \sem{S_2}(\rho)$ can be expressed as the judgement 
\[
\vDash_{\pqRHL} \rtriple{\symeq}{S_1}{S_2}{\symeq}
\]
where importantly, the precondition $\symeq$ forces the arbitrary initial state $\rho$ to satisfy $\tr_2(\rho) = \tr_1(\rho)$. Unfortunately, similar constraints on the initial state/coupling are not known to be expressible in the bounded quantitative case. As a consequence, it takes much more effort to characterise properties like program equivalence using only positive semi-definite operators as predicates, as we will later show in \cref{thm:equal rule}. 



\subsection{Enforcing Projective Preconditions Using Infinite-Valued Predicates}
It turns out that things are different when we allow infinite-valued predicates. Consider a $\qqRHL$ judgement of the form $\vDash \rtriple{P}{S_1}{S_2}{Q}$ where $P$ is of the form $X|A = \infty \cdot X^\perp + A$. For any initial coupling $\rho$, if $\supp(\rho) \subseteq X$, then the judgement acts as if $P=A$; if $\supp(\rho) \nsubseteq X$ however, then the judgement is rendered trivially true. In other words, $X|A$ is the same thing as a normal quantitative precondition $A$ constrained by a projective precondition $X$! A direct consequence of this insight is a semantic embedding of $\pqRHL$ in $\qqRHL$ as follows:
\begin{proposition}
  \label{prop:semanticembedding projector logic}
  For AST programs $S_1$, $S_2$, and $X,Y\in\cS(\HH_{S_1}\otimes\HH_{S_2})$ be projectors. 
  The following holds:
  \[
  \vDash \rtriple{X\mid 0}{S_1}{S_2}{Y^\bot} \iff\ \vDash_{\pqRHL} \rtriple{X}{S_1}{S_2}{Y}.
  \]
\end{proposition}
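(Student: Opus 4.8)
The plan is to unfold both notions of validity and to observe that, once the $\infty$-arithmetic is carried out, each side reduces to the same statement: for every normalized initial state $\rho$ whose support lies in $X$, the output states of $S_1$ and $S_2$ admit a coupling whose support lies in $Y$. Throughout I use that $S_1, S_2 \in \AST$, so that $\sem{S_1}$ and $\sem{S_2}$ are quantum channels; by \Cref{lem:validity AST program} the partial coupling witnessing $\qqRHL$ validity is then in fact a coupling, and for $\rho \in \DD^1(\HH_{S_1}\otimes\HH_{S_2})$ the output states $\sem{S_1}(\tr_2(\rho))$ and $\sem{S_2}(\tr_1(\rho))$ are density operators, hence always admit at least one coupling (e.g.\ their tensor product).

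Next I would analyze the precondition and postcondition separately. By definition $X\mid 0 = (+\infty)\cdot X^\bot$, and unfolding the trace pairing for this infinite-valued predicate together with the convention $(+\infty)\cdot 0 = 0$ gives $\tr((X\mid 0)\rho) = (+\infty)\cdot\tr(X^\bot\rho)$. Since $X^\bot$ is a projector and $\rho$ is positive semidefinite, $\tr(X^\bot\rho) = 0$ iff $\supp(\rho)\subseteq X$; hence $\tr((X\mid 0)\rho)$ equals $0$ when $\supp(\rho)\subseteq X$ and $+\infty$ otherwise. Symmetrically, $Y^\bot$ is a projector and every coupling $\sigma$ is positive semidefinite, so $\tr(Y^\bot\sigma)\geq 0$, with equality iff $\supp(\sigma)\subseteq Y$.

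I would then split on $\supp(\rho)$. If $\supp(\rho)\nsubseteq X$, the required inequality $\tr((X\mid 0)\rho)\geq\tr(Y^\bot\sigma)$ reads $+\infty\geq\tr(Y^\bot\sigma)$ and holds for every coupling $\sigma$, so such $\rho$ impose no constraint. If $\supp(\rho)\subseteq X$, it reads $0\geq\tr(Y^\bot\sigma)$, which together with $\tr(Y^\bot\sigma)\geq 0$ forces $\tr(Y^\bot\sigma) = 0$, i.e.\ $\supp(\sigma)\subseteq Y$. Consequently $\vDash\rtriple{X\mid 0}{S_1}{S_2}{Y^\bot}$ holds iff for every $\rho\in\DD^1(\HH_{S_1}\otimes\HH_{S_2})$ with $\supp(\rho)\subseteq X$ there exists a coupling $\sigma:\langle\sem{S_1}(\tr_2(\rho)),\sem{S_2}(\tr_1(\rho))\rangle$ with $\supp(\sigma)\subseteq Y$, which is precisely $\vDash_{\pqRHL}\rtriple{X}{S_1}{S_2}{Y}$ restricted to normalized initial states. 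To finish, I reconcile the quantifier domains: the zero state is witnessed by the coupling $\sigma = 0$, while any nonzero $\rho$ with $\supp(\rho)\subseteq X$ normalizes to $\rho/\tr(\rho)\in\DD^1$ with the same support, and a coupling $\sigma'$ supported in $Y$ for the normalized output states yields the coupling $\tr(\rho)\cdot\sigma'$ for the original ones, still supported in $Y$, since partial trace, the channels $\sem{S_i}$, and the coupling constraints are all positively homogeneous while support is scale-invariant; the reverse inclusion $\DD^1\subseteq\DD$ is immediate.

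The only delicate point I anticipate is confirming that $\tr((X\mid 0)\rho)$ is genuinely the two-valued function of $\supp(\rho)$ described above and never an indeterminate form --- i.e.\ that the convention $(+\infty)\cdot 0 = 0$ and the identity $\tr(X^\bot\rho) = 0 \Leftrightarrow \supp(\rho)\subseteq X$ combine cleanly. Everything else is routine bookkeeping about couplings and scaling.
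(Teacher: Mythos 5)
Your proof is correct and follows essentially the same route as the paper's: invoke \Cref{lem:validity AST program} to replace partial couplings by couplings, observe that $\tr((X\mid 0)\rho)$ is $0$ or $+\infty$ according to whether $\supp(\rho)\subseteq X$, and conclude by the case split; the extra normalization step you perform to pass between $\DD$ and $\DD^1$ is already absorbed into \Cref{lem:validity AST program} in the paper's version, so it is harmless but not needed.
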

Noting that the postcondition here is bounded, by the completeness theorem (\Cref{thm:completeness post cond}), we directly obtain a complete embedding of the projector logic into our logic for AST programs, as shown in the following theorem.
\begin{theorem}
  For AST programs $S_1$, $S_2$, and $X,Y\in\cS(\HH_{S_1}\otimes\HH_{S_2})$ be projectors, we can completely characterise any property defined by the judgement $\vDash_{\pqRHL} \rtriple{X}{S_1}{S_2}{Y}$ in $\qqRHL$.
\end{theorem}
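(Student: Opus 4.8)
The plan is to obtain this statement as an immediate corollary, by chaining the semantic embedding (\Cref{prop:semanticembedding projector logic}) with the soundness and completeness theorems for $\qqRHL$ (\Cref{thm:soundness} and \Cref{thm:completeness post cond}). First I would make precise what ``completely characterise'' should mean here: namely, that the $\pqRHL$ validity $\vDash_{\pqRHL}\rtriple{X}{S_1}{S_2}{Y}$ holds if and only if the $\qqRHL$ judgement with precondition $X\mid 0$ and postcondition $Y^\bot$ is \emph{derivable}, i.e.
\[
  \vDash_{\pqRHL}\rtriple{X}{S_1}{S_2}{Y}\quad\Longleftrightarrow\quad \vdash Z:\rtriple{X\mid 0}{S_1}{S_2}{Y^\bot},
\]
with $Z$ taken trivial. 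Thus any property phrased as a $\pqRHL$ judgement is captured, on the nose, by provability of an explicit $\qqRHL$ judgement.

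Then I would assemble the three ingredients. By \Cref{prop:semanticembedding projector logic}, since $S_1,S_2\in\AST$ and $X,Y$ are projectors, $\vDash_{\pqRHL}\rtriple{X}{S_1}{S_2}{Y}$ is equivalent to the $\qqRHL$ validity $\vDash\rtriple{X\mid 0}{S_1}{S_2}{Y^\bot}$. Next I would observe that the translated postcondition $Y^\bot$, being a projector, has eigenvalues in $\{0,1\}$, hence $Y^\bot\in\Pos(\HH_{S_1}\otimes\HH_{S_2})$ is a \emph{bounded} (finite-valued) predicate; this is exactly the hypothesis required to apply \Cref{thm:completeness post cond}, whose precondition is allowed to be arbitrary (here $X\mid 0 = (+\infty\cdot X^\bot)$, which is genuinely infinite-valued when $X\neq I$). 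Applying \Cref{thm:completeness post cond} gives $\vDash\rtriple{X\mid 0}{S_1}{S_2}{Y^\bot}\Rightarrow \vdash\rtriple{X\mid 0}{S_1}{S_2}{Y^\bot}$, and \Cref{thm:soundness} supplies the converse direction. Chaining these equivalences yields the displayed characterisation.

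There is essentially no hard step: the real work lives in the embedding proposition and in \Cref{thm:completeness post cond}, both of which I may assume. The only points needing care are (i) verifying that the translated postcondition is bounded so that the (restricted) duality rule and hence completeness apply — immediate for projectors; and (ii) emphasising that the translated precondition $X\mid 0$ is genuinely infinite-valued, so this characterisation is not available within the finite-valued fragment alone. It is precisely here that the $\PosI$ extension earns its keep: the projective precondition $\supp(\rho)\subseteq X$ is encoded as the $+\infty$-eigenspace constraint of $X\mid 0$. I would close by spelling out the headline instance, program equivalence: the $\pqRHL$ judgement $\rtriple{\symeq}{S_1}{S_2}{\symeq}$ is now completely characterised in $\qqRHL$ ``for free'', via derivability of $\rtriple{\symeq\mid 0}{S_1}{S_2}{(\symeq)^\bot}$.
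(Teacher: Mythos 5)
Your proposal is correct and follows essentially the same route as the paper: chain \Cref{prop:semanticembedding projector logic} with \Cref{thm:soundness} and \Cref{thm:completeness post cond}, observing that the translated postcondition $Y^\bot$ is a bounded predicate so completeness applies, while only the precondition $X\mid 0$ needs the infinite-valued extension. The paper treats this as an immediate corollary in exactly this way, so no further comment is needed.
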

Therefore, as a corollary, we obtain a complete characterisation of program equivalence for AST programs -- this has not been achieved so far in existing quantitative quantum relational Hoare logics \cite{barthe_rqpd,qRHLWithExpectations}.

\subsection{Wider Consequences}
Infinite-valued predicates provide a general recipe for unifying quantitative and projective quantum predicates.
We have seen how it works in the relational case; the same approach also works in the non-relational case. Indeed, if we define a quantum Hoare logic using projective predicates:
\begin{definition}
  Let $S$ be a $\qWhile$ program and $X, Y \in \cS(\HH_S)$. We define $\vDash_{\mathsf{pqHL}} \triple{X}{S}{Y}$ to mean $\forall \rho \in \DD(\HH_S). \ \supp(\rho) \subseteq X \implies \supp \sem{S}(\rho) \subseteq Y$.
\end{definition}
A logic similar to \cite{Ying11} but using infinite-valued quantitative predicates can also be defined:
\begin{definition}
  Let $S$ be a program and $P, Q \in \PosI(\HH_S)$. We define $\vDash_{\mathsf{iqHL}} \triple{P}{S}{Q}$ to mean $\forall \rho \in \DD(\HH_S). \ \tr(P\rho) \geq \tr(Q\sem{S}\rho)$.
\end{definition}
Following a similar reasoning, 
we could conclude a semantic embedding result:
\begin{theorem}
  For $S$ AST, $X, Y \in \cS(\HH_S)$, the following holds:
  \[
    \vDash_{\mathsf{pqHL}}\triple{X}{S}{Y} \iff \vDash_{\mathsf{iqHL}} \triple{X\mid 0}{S}{Y^\perp}.
  \]
\end{theorem}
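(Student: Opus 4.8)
The plan is to mirror the proof of the relational embedding (\Cref{prop:semanticembedding projector logic}) in the non-relational, single-program setting, exploiting the fact that $\mathsf{iqHL}$ validity is just the monotonicity/expectation condition $\tr(P\rho)\ge\tr(Q\sem{S}\rho)$ without any coupling machinery, so the argument is actually simpler. First I would unfold both sides. On the right, $\vDash_{\mathsf{iqHL}}\triple{X\mid 0}{S}{Y^\perp}$ means for all $\rho\in\DD(\HH_S)$, $\tr\big((X\mid 0)\rho\big)\ge\tr\big(Y^\perp\sem{S}(\rho)\big)$, where by definition $X\mid 0 = 0 + (+\infty)\cdot X^\perp = (+\infty)\cdot X^\perp$. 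On the left, $\vDash_{\mathsf{pqHL}}\triple{X}{S}{Y}$ means for all $\rho$ with $\supp(\rho)\subseteq X$, $\supp\sem{S}(\rho)\subseteq Y$.

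The key observation, which I would isolate as the engine of both directions, is the case split on whether $\supp(\rho)\subseteq X$. If $\supp(\rho)\subseteq X$, then $\tr(X^\perp\rho)=0$, and by the convention $(+\infty)\cdot 0 = 0$ together with the extended inner-product definition $\tr(A\rho)=\sum_i\lambda_i\tr(X_i\rho)$, we get $\tr\big((X\mid 0)\rho\big)=0$. Hence the $\mathsf{iqHL}$ inequality degenerates to $0\ge\tr(Y^\perp\sem{S}(\rho))$, i.e. $\tr(Y^\perp\sem{S}(\rho))=0$ (as $Y^\perp$ is PSD and $\sem{S}(\rho)$ is a partial density operator), which by the standard fact relating support containment to orthogonal projectors is exactly $\supp\sem{S}(\rho)\subseteq Y$. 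If instead $\supp(\rho)\nsubseteq X$, then $\tr(X^\perp\rho)>0$, so $\tr\big((X\mid 0)\rho\big)=+\infty$, and the $\mathsf{iqHL}$ inequality $+\infty\ge\tr(Y^\perp\sem{S}(\rho))$ holds vacuously; meanwhile the $\mathsf{pqHL}$ condition imposes no constraint for such $\rho$. Thus, for each fixed $\rho$, the $\mathsf{iqHL}$ inequality holds iff the $\mathsf{pqHL}$ implication holds for that $\rho$, and quantifying over all $\rho\in\DD(\HH_S)$ gives the biconditional. (The AST hypothesis on $S$ is not even strictly needed for this particular statement since no couplings are involved, but I would keep it to match the paper's conventions and because $\mathsf{pqHL}$/$\mathsf{iqHL}$ are elsewhere discussed for AST programs; alternatively one remarks it can be dropped.)

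I expect the only delicate point — the "main obstacle," though it is minor — is being careful with the arithmetic of $+\infty$ and with the extended inner product on $\PosI(\HH_S)$: one must check that $\tr\big((X\mid 0)\rho\big)$ is genuinely $0$ when $\supp(\rho)\subseteq X$ (relying on $(+\infty)\cdot 0=0$) and genuinely $+\infty$ otherwise (relying on $(+\infty)+a=+\infty$ and $(+\infty)\cdot c=+\infty$ for $c>0$), using the eigen-decomposition of $X\mid 0$ whose only nonzero, non-$+\infty$ part is $0$ and whose $+\infty$-eigenspace is exactly $X^\perp$. Once that bookkeeping is pinned down, the proof is a two-line case analysis. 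I would write it up by stating the unfolding, then the per-$\rho$ case split, then concluding. In short: the proof follows by unfolding both sides of the biconditional and performing a case analysis on whether $\supp(\rho)\subseteq X$, exactly as in the proof of \Cref{prop:semanticembedding projector logic} but without the coupling layer.
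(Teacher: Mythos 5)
Your proposal is correct and matches the paper's (implicit) argument: the paper proves this theorem only by remarking that it follows "by a similar reasoning" to \Cref{prop:semanticembedding projector logic}, whose proof is exactly your case split on whether $\supp(\rho)\subseteq X$, using that $\tr\bigl((X\mid 0)\rho\bigr)$ is $0$ in that case and $+\infty$ otherwise. Your side remark that the AST hypothesis is not actually needed here (since no couplings are involved) is also accurate.
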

Note that this is not the first or the unique possible embedding of $\mathsf{pqHL}$ in a quantitative quantum Hoare logic. In fact, in the simple, non-relational case, the naive embedding is complete \cite{ZhouYuYing2019}:
\[
\vDash_{\mathsf{pqHL}} \triple{X}{S}{Y} \iff \vDash_{\mathsf{qHL}} \triple{X^\perp}{S}{Y^\perp}
\]
where $\mathsf{qHL}$ is a special case of $\mathsf{iqHL}$ where all predicates are bounded.
The advantage of our approach lies in its generality: it works even when the naive embedding does not apply, as is the case of (quantum) relational logics \cite{barthe_rqpd}.



\section{Applications}
We present more applications of our completeness results in characterizing non-trivial relational properties, including quantum non-interference, quantum differential privacy, as well as an alternative characterization of equivalence that only needs bounded predicates. Interestingly, most of these results are direct consequences of completeness for split postconditions and do not require the duality rule.



\subsection{Program Equivalence}
In the previous section, we showed how equivalence can be characterized with the help of infinite-valued predicates. But can we do it only with bounded quantitative predicates? This question is of particular interest, for it allows us to obtain a complete characterization of equivalence with a more minimal extension to existing quantum relational Hoare logics \cite{barthe_rqpd}. We show that this is indeed possible, and it relies on deep results in QOT.




\begin{theorem}
  \label{thm:equal rule}
    Let $S_1, S_2$ be AST programs acting on the same Hilbert spaces, $\HH_{S_1} = \HH_{S_2} = \HH$. $S_1$ and $S_2$ are semantically equivalent, i.e., $\sem{S_1} = \sem{S_2}$, if and only if,
    \begin{align}
    \label{eqn:equal program sym1}
        \vdash (Y_1,&Y_2,n)\in\mathcal{Y} : \{nI + P_{sym}^\bot\}\nonumber\\
        &S_1 \sim S_2 \{ \tr_2(Y_1)\otimes I + I \otimes(nI - \tr_2(Y_2)) \}.
    \end{align}
where $\mathcal{Y} = \{(Y_1, Y_2\in \Pos(\HH\otimes\HH_2), n\in\mathbb{N}) \mid 0\sqsubseteq Y_1, 0\sqsubseteq 2Y_2\sqsubseteq nI, P_{sym}^\bot[\HH\otimes\HH_2] \ge 2(Y_1\otimes I - I\otimes Y_2)\}$.
\end{theorem}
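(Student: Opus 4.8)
The plan is to reduce the claimed equivalence to \Cref{lem: channel equivalence}, which says $\sem{S_1} = \sem{S_2}$ iff $T_s(\sem{S_1}(\rho_1),\sem{S_2}(\rho_2)) \le T_s(\rho_1,\rho_2)$ for all density operators $\rho_1,\rho_2$, and then rephrase the latter condition as a derivable $\qqRHL$ judgment. First I would unfold the validity of the judgment \eqref{eqn:equal program sym1} using \Cref{lem:validity and monotone}: it holds iff for each parameter tuple $(Y_1,Y_2,n)\in\mathcal{Y}$, the pair $(\sem{S_1},\sem{S_2})$ is monotone w.r.t.\ $nI + P_{sym}^\bot$ and $\tr_2(Y_1)\otimes I + I\otimes(nI - \tr_2(Y_2))$. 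Since $S_1,S_2$ are AST, partial couplings are couplings (\Cref{lem:validity AST program}), so monotonicity of $(\sem{S_1},\sem{S_2})$ w.r.t.\ these costs says: for all density operators $\rho_1,\rho_2$, and all couplings $\sigma$ of $\langle\rho_1,\rho_2\rangle$, there is a coupling $\delta$ of $\langle\sem{S_1}(\rho_1),\sem{S_2}(\rho_2)\rangle$ with $\tr((\tr_2(Y_1)\otimes I + I\otimes(nI-\tr_2(Y_2)))\delta) \le \tr((nI+P_{sym}^\bot)\sigma)$.

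Next I would connect this to the duality theorem for stabilized QOT (\Cref{thm:strassen QOT}). The idea is that the family of split output costs parametrized by $\mathcal{Y}$ is exactly the dual-feasible family for the $T_s \le \epsilon$ characterization. Concretely, for fixed density inputs, quantifying over all $(Y_1,Y_2,n)\in\mathcal{Y}$ and using that the split postcondition is linear in the marginals $\tr_2(Y_i)$, the condition "for all $(Y_1,Y_2,n)\in\mathcal{Y}$, monotone w.r.t.\ $nI+P_{sym}^\bot$ and the split cost" should be equivalent, via \Cref{thm:strassen QOT} applied to $\sem{S_1}(\rho_1)$ and $\sem{S_2}(\rho_2)$, to the inequality $T_s(\sem{S_1}(\rho_1),\sem{S_2}(\rho_2)) \le T(\rho_1\otimes\halfI,\rho_2\otimes\halfI) = T_s(\rho_1,\rho_2)$, where the last equality is the surprising identity $T_s(\rho,\sigma) = T(\rho\otimes\halfI,\sigma\otimes\halfI)$ quoted in \Cref{sec: QOT equal}. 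The term $P_{sym}^\bot$ in the precondition encodes the input cost $T = T_{P_{sym}^\bot}$, and the $nI$ shifts handle the affine offsets in the duality bound (the $\epsilon$ plays the role of $\tr((nI+P_{sym}^\bot)\sigma)$ minimized over $\sigma$). One then has to check both directions: validity of \eqref{eqn:equal program sym1} gives the $T_s$ inequality hence (by \Cref{lem: channel equivalence}) equivalence; conversely, equivalence gives the $T_s$ inequality, which by the dual direction of \Cref{thm:strassen QOT} gives monotonicity w.r.t.\ each split cost in the family, hence validity. Finally, since the postcondition in \eqref{eqn:equal program sym1} is a \emph{split} postcondition ($Q_1\otimes I + I\otimes Q_2$ with $Q_1 = \tr_2(Y_1)$ and $Q_2 = nI - \tr_2(Y_2)$), validity upgrades to derivability by \Cref{thm:weakcompleteness}, so the "$\vdash$" in the statement is justified without invoking the duality rule.

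I expect the main obstacle to be the precise bookkeeping in the reduction step: matching the parametrization $\mathcal{Y}$ of \eqref{eqn:equal program sym1} (whose predicates live on $\HH\otimes\HH_2$ and whose marginals $\tr_2(Y_i)$ appear in the split output cost) with the parametrization in \Cref{thm:strassen QOT} (whose feasible $Y_1,Y_2$ also live on $\HH\otimes\HH_2$ and satisfy $P_{sym}^\bot[\HH\otimes\HH_2]\ge 2(Y_1\otimes I - I\otimes Y_2)$), and verifying that the passage from $T_s$ on the outputs to $T_s = T(\cdot\otimes\halfI,\cdot\otimes\halfI)$ on the inputs is exactly captured by putting $P_{sym}^\bot$ (rather than some stabilized variant) in the precondition. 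In particular one must be careful that the "for all couplings $\sigma$" quantifier arising from monotonicity on the input side correctly produces the optimal-transport value $T(\rho_1,\rho_2)$ rather than something weaker, and that the dimension of the auxiliary space $\HH_2$ can be fixed (the factor-of-$2$ and the $\halfI$ suggest $\HH_2$ can be taken to be a qubit, per the intuition given after the statement of the $T_s$ identity). A secondary technical point is handling the shift parameter $n$: one has to argue that allowing $n$ to range over all naturals is enough to recover the full affine duality (i.e.\ that the bounded operators $0\sqsubseteq 2Y_2\sqsubseteq nI$ exhaust the relevant dual-feasible region as $n\to\infty$), which is the same device already used in \Cref{lem:judgment strassen} and \Cref{thm:completeness post cond}.
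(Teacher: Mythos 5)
Your approach is essentially the paper's: unfold validity via \Cref{lem:validity and monotone}, apply the duality for stabilized QOT (\Cref{thm:strassen QOT}) to the output states, invoke \Cref{lem: channel equivalence}, and upgrade validity to derivability with \Cref{thm:weakcompleteness} because the postcondition is split (the paper's in-text proof is literally ``immediate consequence of \Cref{thm:weakcompleteness} and \Cref{thm:strassen QOT}''). One correction on exactly the point you flag: the precondition $nI+P_{sym}^\bot$ lives on $\HH\otimes\HH$ and evaluates on the inputs to $n+T(\rho_1,\rho_2)$, not to $n+T(\rho_1\otimes\halfI,\rho_2\otimes\halfI)=n+T_s(\rho_1,\rho_2)$, so what \Cref{thm:strassen QOT} gives you directly is that validity of \eqref{eqn:equal program sym1} is equivalent to $T_s(\sem{S_1}(\rho_1),\sem{S_2}(\rho_2))\le T(\rho_1,\rho_2)$ for all inputs, with the plain (unstabilized) $T$ on the right. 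This weaker-looking inequality still closes the loop: for validity $\Rightarrow$ equivalence, take $\rho_1=\rho_2=\rho$ so the right-hand side is $0$ and $T_s(\sem{S_1}(\rho),\sem{S_2}(\rho))=0$ forces equality of the outputs; for equivalence $\Rightarrow$ validity, chain $T_s(\text{outputs})\le T_s(\rho_1,\rho_2)\le T(\rho_1,\rho_2)$ using \Cref{lem: channel equivalence} and $T_s\le T$, which is exactly the paper's argument.
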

\begin{proof}
  Immediate consequence of \Cref{thm:weakcompleteness} and \Cref{thm:strassen QOT}.
\end{proof}

Therefore, the fragment of $\qqRHL$ using only finite-valued predicates is complete for program equivalence for AST programs.

\subsection{Trace Distance and Diamond Norm}
\label{section:tracedistance}

Another application of our completeness result for split postconditions would be a notion of completeness with respect to the diamond norm of quantum channels, which builds upon the encoding of the trace distance -- the quantum analogue of the total variation distance.
The diamond norm is closely related to channel discrimination,
as it quantifies the maximum probability of successfully distinguishing between two quantum channels in a single-shot scenario with the help of auxiliary systems.
As such, it serves as the foundation in reasoning about the robustness~\cite{robustnessquantum2019} and error analysis~\cite{Gleipnir2021} of quantum programs, particularly important in the current noisy intermediate-scale quantum (NISQ) era and beyond~\cite{Preskill2018quantumcomputingin}.
We first recall some relevant definitions and properties.

\begin{definition}[Trace Distance (see e.g. \cite{Wilde_2017} Definition 9.1.2)]
  Let $\rho_1, \rho_2$ be density operators over $\HH$. Then their trace distance is defined as
  $
  \TD(\rho, \sigma) \eqdef \frac{1}{2}\|\rho - \sigma\|_1,
  $
  where $\|\cdot\|_1$ is the trace norm defined by 
  $\|M\|_1 = \tr(\sqrt{M^\dagger M})$.
\end{definition}

Trace distance is also referred to as a quantum generalisation of total variation distance, as it can be alternatively characterised by the maximum (see Lemma 9.1.1 in~\cite{Wilde_2017}):
\[
  \TD(\rho, \sigma) = \max_{0 \leqlow P \leqlow I} \tr(P(\rho - \sigma)).
\]

We have the following characterization of the trace distance.

\begin{proposition}[Encoding of Trace Distance]
\label{prop:encodingoftracedistance}
  The following are equivalent for all AST programs $S_1, S_2$ such that\footnote{We could also just ask all programs to be interpreted over $\HH = \HH_{\text{all variables}}$, or over $\HH = \HH_{\var(S_1) \cup \var(S_2)}$.} $\HH_{S_1} = \HH_{S_2}$:
  \begin{enumerate}
    \item $\TD(\sem{S_1}(\rho_1), \sem{S_2}(\rho_2)) \leq \tr(\Phi_1 \rho_1) + \tr(\Phi_2 \rho_2)$ for all $z \in Z$ and $\rho_1 X^{\#} \rho_2$, for some given 
    subspace $X$;
    \item $\vDash  0 \leqlow P \leqlow I: \rtriple{X\mid (I+\Phi_1\otimes I + I\otimes \Phi_2)}{S_1}{S_2}{P \otimes I + I \otimes (I-P)}$.
  \end{enumerate}
\end{proposition}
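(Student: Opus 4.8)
The plan is to establish the equivalence by rewriting both statements in terms of quantum optimal transport and the variational characterization of trace distance, and then applying the duality theorem for stabilized-style split postconditions already available in the excerpt. First I would unfold statement (1). Using the variational formula $\TD(\rho,\sigma) = \max_{0\sqsubseteq P\sqsubseteq I}\tr(P(\rho-\sigma))$, statement (1) says: for every $\rho_1 X^\#\rho_2$ (equivalently, every coupling $\rho$ of $\rho_1,\rho_2$ with $\supp(\rho)\subseteq X$, and by the remarks after \Cref{def:quantum lifting alter}, this is the same as a partial/full coupling witnessing the lifting when $\tr(\rho_1)=\tr(\rho_2)$) and every $0\sqsubseteq P\sqsubseteq I$, we have $\tr(P\,\sem{S_1}(\rho_1)) - \tr(P\,\sem{S_2}(\rho_2)) \le \tr(\Phi_1\rho_1) + \tr(\Phi_2\rho_2)$. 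I would then recognize the right-hand side quantifier structure: for fixed $P$, the inequality should be read as a monotonicity statement for the pair $(\sem{S_1},\sem{S_2})$ with a split output cost $P\otimes I + I\otimes(I-P)$ (note $\tr((P\otimes I + I\otimes(I-P))\sigma) = \tr(P\tr_2\sigma) + \tr((I-P)\tr_1\sigma)$, which for a coupling $\sigma$ of $\sem{S_1}(\rho_1),\sem{S_2}(\rho_2)$ equals $\tr(P\sem{S_1}(\rho_1)) + \tr(\sem{S_2}(\rho_2)) - \tr(P\sem{S_2}(\rho_2))$; the additive constant $\tr(\sem{S_2}(\rho_2))=\tr(\rho_2)$ for AST programs is absorbed into the input cost via the $I\otimes I$ summand, matching the extra ``$+I$'' inside $I + \Phi_1\otimes I + I\otimes\Phi_2$).

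The heart of the argument is then a sequence of bookkeeping reductions. I would: (i) translate the projective precondition $\rho_1 X^\#\rho_2$ into the infinite-valued precondition $X\mid(\cdot)$ using exactly the mechanism of \Cref{prop:semanticembedding projector logic}, i.e.\ observing that a $\qqRHL$ judgment with precondition $X\mid A$ is vacuous on initial couplings not supported in $X$ and behaves like precondition $A$ otherwise; (ii) use \Cref{lem:validity and monotone} to pass between validity of the $\qqRHL$ judgment in statement (2) and monotonicity of $(\sem{S_1},\sem{S_2})$ w.r.t.\ the stated input/output costs, quantified over $z = P$ with $0\sqsubseteq P\sqsubseteq I$; and (iii) match the offsets so that the ``$+I$'' and the $\tr(\rho_2)=1$ normalization (recall validity ranges over $\rho\in\DD^1$) line up on both sides. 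Because $S_1,S_2$ are AST and we are working over total density operators, every partial coupling in the definition of validity is a genuine coupling (by \Cref{lem:validity AST program}), so I do not need to worry about the partial-coupling slack here.

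I expect the main obstacle to be step (iii): carefully reconciling the normalization conventions. Statement (1) is phrased for all $\rho_1,\rho_2$ of equal trace related by $X^\#$ (subnormalized in general, since $\rho_i\in\DD$), while $\qqRHL$ validity is stated for $\rho\in\DD^1(\HH_{S_1}\otimes\HH_{S_2})$; one must check that the scaling behavior of both sides (linearity of traces, plus the homogeneity properties of couplings and QOT, e.g.\ \Cref{lem: scale of partial coupling}) makes the quantified-over-$\DD^1$ form equivalent to the quantified-over-$X^\#$ form, and that the ``$+I$'' summand in the precondition of (2) is precisely what accounts for the $\tr(P\sem{S_2}(\rho_2))$ term being bundled into the split postcondition $I\otimes(I-P)$ rather than subtracted. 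Once the dictionary is fixed, the equivalence (1) $\iff$ (2) is a direct unfolding in both directions, with no appeal to the duality rule needed — consistent with the remark that these applications follow from completeness for split postconditions.
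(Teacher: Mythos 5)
Your proposal is correct and follows essentially the same route as the paper's proof: unfold validity, observe that the split postcondition $P\otimes I + I\otimes(I-P)$ has coupling-independent expectation $\tr(P\sigma_1)+\tr(\sigma)-\tr(P\sigma_2)$ so the existential over output couplings is vacuous, use AST to identify $\tr(\sigma)=\tr(\rho)$ (absorbed by the $I$ summand in the precondition), apply the variational characterization $\TD=\max_{0\sqsubseteq P\sqsubseteq I}\tr(P(\cdot-\cdot))$, and dispose of the $X\mid(\cdot)$ precondition by the case split on whether $\supp(\rho)\subseteq X$. The opening reference to a ``duality theorem'' is unnecessary (as you yourself note at the end, no duality is invoked), and the normalization issue you flag as the main obstacle is handled automatically by the infinite-valued precondition mechanism.
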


We now introduce the notion of diamond norms of quantum channels.

\begin{definition}[Diamond Norm, Definition 8 in~\cite{AKN98}]
\label{def:diamond-norm}
    Let $\Phi: M_n(\mathbb{C})\to M_m(\mathbb{C})$ be 
    a linear transformation, where $M_n(\mathbb{C})$
    denote the set of $n\times n$ complex matrices,
    and let 
    $id_n:M_n(\mathbb{C})\to M_n(\mathbb{C})$
    be the identity map.
    Then, the diamond norm 
    (also known as the completely bounded trace norm) 
    of $\Phi$
    is given by
    \[
    \|\Phi\|_{\diamond}
    = \max_{X;\|X\|_1\le 1}
    \|(\Phi\otimes id_n) X\|.
    \]
\end{definition}
    
The diamond norm induces the diamond distance.
For completely positive, trace non-increasing maps
$\EE_1$ and $\EE_2$ with domain $\Pos(\HH)$,
their diamond distance could be written as
\[
    \|\EE_1-\EE_2\|_{\diamond}
    = \max_{\rho \in \DD^1(\HH\otimes\HH)}
    \|(\EE_1\otimes \mathcal{I})(\rho) -(\EE_2\otimes \mathcal{I})(\rho)  \|_1,
\]
where $\mathcal{I}$ is the identity quantum channel on $\HH$.
Now, consider setting
$X = P_{sym}[\HH\otimes \HH]$, and $\Phi_1 = \Phi_2 = cI/2$
in \Cref{prop:encodingoftracedistance},
where $c\ge 0$ is a constant.
The property we are trying to encode becomes
\[
    \TD((\sem{S_1}\otimes \mathcal{I})(\rho_1), (\sem{S_2}\otimes \mathcal{I})(\rho_2))
    \le c,\ \ \forall \rho_1 (=_{sym})^{\#} \rho_2.
\]
Noting that $\rho_1 (=_{sym})^{\#} \rho_2$ iff $\rho_1 = \rho_2$, this gives an encoding
of the diamond distance between 
$\sem{S_1}$ and $\sem{S_2}$, which we formally 
stated as follows.



\begin{proposition}[Encoding of Diamond Norm]
\label{prop:encodingofdiamondnorm}
  Let $c\in\mathbb{R}^+$. The following are equivalent for all AST programs $S_1, S_2$ such that $\HH = \HH_{S_1} = \HH_{S_2}$:
  \begin{enumerate}
    \item $\|\sem{S_1} - \sem{S_2}\|_\diamond \leq 2c$;
    \item $\vDash  0 \leqlow P \leqlow I_{\HH\otimes\HH}: \rtriple{P_{sym}[\HH\otimes\HH]\mid (1+c)I}{S_1}{S_2}{P \otimes I + I \otimes (I-P)}$.
  \end{enumerate}
\end{proposition}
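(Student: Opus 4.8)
\textbf{Proof plan for \Cref{prop:encodingofdiamondnorm}.}

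The plan is to derive this as a direct specialization of \Cref{prop:encodingoftracedistance}. First I would set $X = P_{sym}[\HH\otimes\HH]$ and $\Phi_1 = \Phi_2 = \frac{c}{2}I$ in \Cref{prop:encodingoftracedistance}, observing that with these choices the precondition $X\mid(I + \Phi_1\otimes I + I\otimes\Phi_2)$ becomes $P_{sym}[\HH\otimes\HH]\mid\big((1 + \tfrac{c}{2} + \tfrac{c}{2})I\big) = P_{sym}[\HH\otimes\HH]\mid(1+c)I$, matching the stated precondition, while the postcondition $P\otimes I + I\otimes(I-P)$ is unchanged. Condition (1) of \Cref{prop:encodingoftracedistance} then becomes: for all $\rho_1\,(\symeq)^\#\rho_2$, $\TD(\sem{S_1}(\rho_1),\sem{S_2}(\rho_2)) \le \tr(\tfrac{c}{2}I\,\rho_1) + \tr(\tfrac{c}{2}I\,\rho_2) = \tfrac{c}{2}(\tr\rho_1 + \tr\rho_2)$.

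Next I would use the standard fact (recalled in the paragraph preceding the proposition) that for density operators $\rho_1\,(\symeq)^\#\rho_2$ holds iff $\rho_1 = \rho_2$, so that both operate on the \emph{same} bipartite density operator $\rho\in\DD^1(\HH\otimes\HH)$ with $\tr\rho_1 = \tr\rho_2 = 1$; the right-hand side is then exactly $c$. Since $\rho_1 = \tr_2(\rho) = \tr_1(\rho) = \rho_2$ ranges over all of $\DD^1(\HH\otimes\HH)$ as $\rho$ does (viewing $\rho$ as the purifying coupling / the state itself on the doubled space), the quantifier over $\rho_1\,(\symeq)^\#\rho_2$ coincides with the maximization in the displayed formula for $\|\sem{S_1}-\sem{S_2}\|_\diamond$, after identifying the auxiliary copy of $\HH$ in the diamond-norm formula with the second tensor factor and the channels $\sem{S_i}\otimes\mathcal{I}$ with the action on $\rho$ whose first-factor reduction is $\rho_1$. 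Concretely, $\TD\big((\sem{S_1}\otimes\mathcal{I})(\rho),(\sem{S_2}\otimes\mathcal{I})(\rho)\big) = \tfrac12\|(\sem{S_1}\otimes\mathcal{I})(\rho) - (\sem{S_2}\otimes\mathcal{I})(\rho)\|_1$, so condition (1) reads $\tfrac12\|(\sem{S_1}\otimes\mathcal{I})(\rho) - (\sem{S_2}\otimes\mathcal{I})(\rho)\|_1 \le c$ for all $\rho\in\DD^1(\HH\otimes\HH)$, i.e.\ $\|\sem{S_1}-\sem{S_2}\|_\diamond \le 2c$, which is exactly condition (1) of the present proposition. Thus (1) $\iff$ (1') $\iff$ (2) $\iff$ (2), chaining the two equivalences.

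The only subtlety — and the step I would be most careful about — is the translation between the ``coupling'' picture used in \Cref{prop:encodingoftracedistance} (where $\rho$ is an initial coupling and $\sem{S_i}$ acts on its marginal $\tr_{3-i}(\rho)$) and the ``dilated channel'' picture used in \Cref{def:diamond-norm} (where $\sem{S_i}\otimes\mathcal{I}$ acts on the whole bipartite state). These agree precisely because, under the precondition $\symeq$, the marginals coincide with a single state $\rho_1 = \rho_2$, and running $\sem{S_1}$ on one side of a coupling whose two halves are identified is the same as running $\sem{S_1}\otimes\mathcal{I}$ on that state — here one uses that the denotational semantics only touches $\HH_{S_1}$, so the second factor is passive. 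I would also note explicitly that the reduction requires $c\ge 0$ (so that $\Phi_1,\Phi_2\in\Pos$, as needed to instantiate \Cref{prop:encodingoftracedistance}), which is the stated hypothesis $c\in\mathbb{R}^+$, and that $X = P_{sym}$ is indeed a projector in $\cS(\HH\otimes\HH)$ as recorded in the preliminaries, so that $X\mid(1+c)I$ is a legitimate $\PosI$ predicate.
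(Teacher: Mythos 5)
Your proposal is correct and follows essentially the same route as the paper: the paper's proof is precisely the instantiation of \Cref{prop:encodingoftracedistance} with $X = P_{sym}[\HH\otimes\HH]$ and $\Phi_1 = \Phi_2 = cI/2$, combined with the fact that $\rho_1\,(\symeq)^\#\,\rho_2$ iff $\rho_1=\rho_2$ and the definition of the diamond distance; you have simply spelled out the details (including the identification of $\sem{S_i}$ on the doubled space with $\sem{S_i}\otimes\mathcal{I}$) that the paper leaves implicit.
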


\begin{theorem}[Completeness with respect to Diamond Norm]
  \label{thm:completeness diamond norm}
  The $\qqRHL$ is complete with respect to diamond norm, for AST programs.
\end{theorem}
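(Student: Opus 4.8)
The plan is to reduce \Cref{thm:completeness diamond norm} to the already-established encoding of the diamond norm from \Cref{prop:encodingofdiamondnorm} together with the completeness theorem for split postconditions, \Cref{thm:weakcompleteness}. The statement ``$\qqRHL$ is complete with respect to diamond norm for AST programs'' should be read as: the (finite or infinite) value of $\|\sem{S_1}-\sem{S_2}\|_\diamond$ is fully captured by which judgments of the shape in \Cref{prop:encodingofdiamondnorm}(2) are \emph{derivable}, i.e.\ every valid such judgment is provable. So the proof is essentially an application of the preceding results, and the main work is to check that the judgment appearing in \Cref{prop:encodingofdiamondnorm}(2) falls within the scope of \Cref{thm:weakcompleteness}.

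First I would fix AST programs $S_1,S_2$ with $\HH=\HH_{S_1}=\HH_{S_2}$ and $c\in\RR^+$, and invoke \Cref{prop:encodingofdiamondnorm} to get the equivalence between $\|\sem{S_1}-\sem{S_2}\|_\diamond\le 2c$ and validity of
\[
  \vDash 0\leqlow P\leqlow I_{\HH\otimes\HH}:\ \rtriple{P_{sym}[\HH\otimes\HH]\mid(1+c)I}{S_1}{S_2}{P\otimes I+I\otimes(I-P)}.
\]
The crucial observation is that the postcondition here, $P\otimes I+I\otimes(I-P)$, is \emph{already a split postcondition} in the sense of \Cref{thm:weakcompleteness}, with $Q_1=P$ and $Q_2=I-P$, and moreover it is bounded; the precondition $P_{sym}[\HH\otimes\HH]\mid(1+c)I=(1+c)I+\infty\cdot P_{sym}^\bot$ is an admissible (infinite-valued) predicate in $\PosI$. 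Thus \Cref{thm:weakcompleteness}, applied uniformly in the logical parameter $Z=\{P\mid 0\leqlow P\leqlow I\}$, upgrades validity of this judgment to derivability:
\[
  \vdash 0\leqlow P\leqlow I_{\HH\otimes\HH}:\ \rtriple{P_{sym}[\HH\otimes\HH]\mid(1+c)I}{S_1}{S_2}{P\otimes I+I\otimes(I-P)}.
\]
Combining the two equivalences yields that $\|\sem{S_1}-\sem{S_2}\|_\diamond\le 2c$ holds if and only if the displayed judgment is derivable in $\qqRHL$, which is precisely the asserted completeness; by soundness (\Cref{thm:soundness}) the converse direction is automatic, so the characterization is exact. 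One should also note that no appeal to the (duality) rule is needed, since the postcondition is split from the outset --- this is the point flagged in the section preamble (``most of these results are direct consequences of completeness for split postconditions'').

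The main obstacle, such as it is, is bookkeeping rather than mathematics: one must be careful that \Cref{thm:weakcompleteness} is genuinely stated for AST programs with arbitrary infinite-valued precondition $P$ and split \emph{bounded} postcondition, and that quantifying over the logical variable $P$ in $Z$ is legitimate --- i.e.\ that validity and derivability both hold ``for every $z\in Z$'' in the sense of the $\qqRHL$ judgment. A secondary point to spell out is that ``completeness with respect to diamond norm'' is really a statement about a whole family of judgments indexed by $c$: one obtains, for each $c$, an exact correspondence, and hence the exact value $\|\sem{S_1}-\sem{S_2}\|_\diamond = 2\inf\{c\ge 0\mid \text{the above judgment is derivable}\}$ (with the convention that the infimum of the empty set is $+\infty$ when the diamond distance exceeds the maximal achievable bound, though for channels it is always at most $2$). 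Making this reading precise, and then citing \Cref{prop:encodingofdiamondnorm} and \Cref{thm:weakcompleteness}, completes the proof.
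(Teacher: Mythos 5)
Your proposal is correct and matches the paper's intended argument: the completeness with respect to the diamond norm is obtained by combining the encoding of \Cref{prop:encodingofdiamondnorm} with completeness for split postconditions (\Cref{thm:weakcompleteness}), observing that $P\otimes I + I\otimes(I-P)$ is a bounded split postcondition parameterised by the logical variable $P$, so the duality rule is not needed. Your remarks on the infinite-valued precondition being admissible and on the family of judgments indexed by $c$ are consistent with how the paper treats the other distance-based characterizations.
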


\paragraph*{Comparison to~\cite{robustnessquantum2019,Gleipnir2021}} The program logic introduced in~\cite{robustnessquantum2019,Gleipnir2021} provides a sound method for reasoning about the upper bound of $(Q,\lambda)$-diamond norm between a noisy program and its ideal counterpart. However, its completeness remains unknown.
\Cref{thm:completeness diamond norm} can be extended to establish complete reasoning for the upper bound of $(X,1)$-diamond norm where $X\in\cS(\HH)$ is a subspace.

\subsection{Quantum Wasserstein Semi-Distance}


The Wasserstein metric, also known as the earth mover's distance, 
is a measure of distance between two probability distributions. 
It is important because it characterizes the minimal cost 
required to transform one probability distribution 
into the other
in the context of optimal transport.
Several quantum generalizations of the Wasserstein metric have been proposed. 
However, so far, these generalizations have only been shown 
to satisfy the properties of a semi-distance for density matrices.
In this work, we adopt the following definition of the quantum Wasserstein semi-distance discussed in \cite{QuantumOptimalTransport_Cole_2023}.

Let $\rho, \sigma \in \mathcal{D}^1(\mathcal{H})$
be two density operators.
Their quantum $2$-Wasserstein semi-distance 
$W(\rho, \sigma)$ is defined as 
\[
    W(\rho, \sigma) = \sqrt{T(\rho, \sigma)},
\]
where $T(\rho, \sigma) = T_{P_{sym}^{\bot}}(\rho, \sigma)$ is the QOT
between $\rho$ and $\sigma$ with the cost function 
$P_{sym}^{\bot}$, see \Cref{sec: QOT equal} for details.

Verifying properties of programs related to 
the above quantum Wasserstein semi-distance
can be easily encoded in our logic.
Specifically, we investigate the Lipschitz property of programs with respect to the quantum Wasserstein semi-distance. This property asserts that the quantum Wasserstein semi-distance between a program's outputs is bounded by the quantum Wasserstein semi-distance between its inputs scaled by a constant $\lambda$.
This property can be directly encoded and verified
in our logic:

\begin{proposition}[Encoding of Quantum Wasserstein Semi-Distance]%
\label{prop:encoding-of-wasserstein-distances-lipschitz}
  Let $\lambda>0$. 
  The following are equivalent for all AST programs $S_1, S_2$ such that 
  $\HH_{S_1} = \HH_{S_2}$:
  \begin{enumerate}
    \item $W(\sem{S_1}(\tr_2(\rho)), \sem{S_2}(\tr_1(\rho))) \leq \lambda \cdot W(\tr_2(\rho), \tr_1(\rho))$ for all $\rho \in \DD(\HH_{S_1} \otimes \HH_{S_2})$;
    \item $ \vDash  \rtriple{\lambda^2 P_{sym}^{\bot}}{S_1}{S_2}{P_{sym}^{\bot}}$.
  \end{enumerate}
\end{proposition}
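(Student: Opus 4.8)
The plan is to reduce the encoding to a direct application of the definition of $\qqRHL$ validity together with \Cref{lem:validity and monotone}, which recasts validity as monotonicity of $(\sem{S_1},\sem{S_2})$ with respect to the pre- and postconditions. First I would unfold statement (2): $\vDash \rtriple{\lambda^2 P_{sym}^\bot}{S_1}{S_2}{P_{sym}^\bot}$ means that for every $\rho \in \DD^1(\HH_{S_1}\otimes\HH_{S_2})$ there is a partial coupling $\sigma : \langle \sem{S_1}(\tr_2(\rho)),\sem{S_2}(\tr_1(\rho))\rangle_p$ with $\lambda^2\tr(P_{sym}^\bot\rho) \ge \tr(P_{sym}^\bot\sigma)$. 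Since $S_1,S_2$ are AST, $\sem{S_1},\sem{S_2}$ are channels, $\tr_2(\rho)$ and $\tr_1(\rho)$ are density operators, so by \Cref{lem:validity AST program} the partial coupling $\sigma$ is in fact a genuine coupling; thus the right-hand side $\inf_\sigma \tr(P_{sym}^\bot\sigma)$ over couplings is exactly $T(\sem{S_1}(\tr_2(\rho)),\sem{S_2}(\tr_1(\rho)))$ by the definition of $T = T_{P_{sym}^\bot}$ (using that for total density operators partial couplings coincide with couplings). So (2) is equivalent to: for all $\rho \in \DD^1$, $T(\sem{S_1}(\tr_2(\rho)),\sem{S_2}(\tr_1(\rho))) \le \lambda^2\,\tr(P_{sym}^\bot\rho)$.

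Next I would massage the left-hand side similarly. The quantity $\tr(P_{sym}^\bot\rho)$ is, by the same QOT characterization (now applied to the input marginals), at least $T(\tr_2(\rho),\tr_1(\rho))$, since $\rho$ itself is a coupling of its own marginals. But we need an \emph{equality} $\tr(P_{sym}^\bot\rho) = T(\tr_2(\rho),\tr_1(\rho))$ for one direction, and this is where the main subtlety lies: in general $\tr(P_{sym}^\bot\rho)$ can be strictly larger than $T(\tr_2(\rho),\tr_1(\rho))$ because $\rho$ need not be the optimal coupling. Here the right move is to observe that it suffices to quantify over all $\rho$, and for a fixed pair of input marginals $(\rho_1,\rho_2)$ the infimum of $\tr(P_{sym}^\bot\rho)$ over couplings $\rho$ of $(\rho_1,\rho_2)$ realizes $T(\rho_1,\rho_2)$; combined with the fact (stated in \Cref{sec: QOT equal}, and used in the monotonicity-under-data-processing framework) that monotonicity need only be checked on total density inputs, one gets that statement (2) is equivalent to $(\sem{S_1},\sem{S_2})$ being monotone w.r.t. $\lambda^2 P_{sym}^\bot$ and $P_{sym}^\bot$, i.e. $T(\sem{S_1}(\rho_1),\sem{S_2}(\rho_2)) \le \lambda^2\,T(\rho_1,\rho_2)$ for all density operators $\rho_1,\rho_2$. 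I would extract this equivalence cleanly by citing \Cref{lem:validity and monotone} together with the definition of monotonicity for the cost functions $C_i = \lambda^2 P_{sym}^\bot$ and $C_o = P_{sym}^\bot$, and the reduction to $\DD^1$ inputs (\Cref{lem:validity quantum operation alter}).

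Finally I would translate the Wasserstein statement (1). By definition $W = \sqrt{T}$ on density operators, so for $\rho \in \DD(\HH_{S_1}\otimes\HH_{S_2})$ with marginals $\rho_1 = \tr_2(\rho)$, $\rho_2 = \tr_1(\rho)$, the inequality $W(\sem{S_1}(\rho_1),\sem{S_2}(\rho_2)) \le \lambda\, W(\rho_1,\rho_2)$ is, after squaring (both sides nonnegative), exactly $T(\sem{S_1}(\rho_1),\sem{S_2}(\rho_2)) \le \lambda^2\, T(\rho_1,\rho_2)$. Quantifying over all $\rho \in \DD(\HH_{S_1}\otimes\HH_{S_2})$ ranges $(\rho_1,\rho_2)$ over all pairs of subdensity operators; but again by the reduction to normalized inputs (\Cref{lem:validity quantum operation alter}) and joint convexity / scaling of $T$ (\Cref{lem:QOT convexity}, \Cref{lem: scale of partial coupling}) it suffices to check the inequality on $\DD^1$ inputs, so (1) is equivalent to monotonicity of $(\sem{S_1},\sem{S_2})$ w.r.t. $\lambda^2 P_{sym}^\bot$ and $P_{sym}^\bot$. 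Chaining the three equivalences gives (1) $\iff$ (2). The main obstacle I anticipate is the passage between ``$\tr(P_{sym}^\bot\rho)$ for an arbitrary coupling $\rho$'' and ``$T(\tr_2(\rho),\tr_1(\rho))$ the optimal cost''; the clean way around it is to never claim they are equal pointwise, but rather to note that both statements (1) and (2), after unfolding, are literally the statement that $(\sem{S_1},\sem{S_2})$ is monotone w.r.t. the pair $(\lambda^2 P_{sym}^\bot, P_{sym}^\bot)$, invoking \Cref{lem:validity and monotone} and \Cref{lem:validity quantum operation alter} to handle the quantifier over inputs uniformly.
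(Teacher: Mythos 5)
Your proposal is correct and follows essentially the same route as the paper: both unfold validity, use that the witness coupling depends only on the marginals so that quantifying over all input couplings $\rho$ turns $\tr(P_{sym}^\bot\rho)$ into the optimal cost $T(\tr_2(\rho),\tr_1(\rho))$, and reduce both statements to $T(\sem{S_1}(\rho_1),\sem{S_2}(\rho_2))\le\lambda^2\,T(\rho_1,\rho_2)$, with $W=\sqrt{T}$ handling the squaring. The only cosmetic difference is that the paper performs the min--max separation explicitly while you package it via \Cref{lem:validity and monotone} and \Cref{lem:validity quantum operation alter}.
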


\begin{theorem}[Completeness with respect to Quantum Wasserstein Semi-Distance]%
\label{thm:completeness-quantum-wasserstein-semi-distance}
  The $\qqRHL$ is complete with respect to the Lipschitz property
  of quantum Wasserstein semi-distance for AST programs.
\end{theorem}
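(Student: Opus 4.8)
The goal is to show that $\qqRHL$ is complete with respect to the Lipschitz property of the quantum Wasserstein semi-distance for AST programs, i.e.\ that whenever statement (1) of \Cref{prop:encoding-of-wasserstein-distances-lipschitz} holds, the judgment in statement (2), namely $\vDash \rtriple{\lambda^2 P_{sym}^\bot}{S_1}{S_2}{P_{sym}^\bot}$, is in fact \emph{derivable} in $\qqRHL$. Since \Cref{prop:encoding-of-wasserstein-distances-lipschitz} already establishes the equivalence of (1) with the \emph{validity} of this judgment, all that remains is to go from validity to derivability. The plan is therefore to reduce the judgment $\vDash \rtriple{\lambda^2 P_{sym}^\bot}{S_1}{S_2}{P_{sym}^\bot}$ to a judgment with a split postcondition and then invoke \Cref{thm:weakcompleteness}.

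The key step is to rewrite the postcondition $P_{sym}^\bot$ into split form using the stabilized-QOT duality (\Cref{thm:strassen QOT}), exactly as is done for program equivalence in \Cref{thm:equal rule}. First I would note that $P_{sym}^\bot = 2 \cdot P_{sym}^{\bot} \cdot \tfrac12$ and, more importantly, that by \Cref{lem:validity and monotone} validity of the judgment is equivalent to $(\sem{S_1},\sem{S_2})$ being monotone w.r.t.\ $\lambda^2 P_{sym}^\bot$ and $P_{sym}^\bot$; since $\sem{S_i}\in\QC$ this is an instance of the monotonicity-under-data-processing setup that \Cref{thm:strassen QOT} dualizes (after tensoring with $\halfI$ to pass from $T$ to $T_s$, using the identity $T_s(\rho,\sigma) = T(\rho\otimes\halfI,\sigma\otimes\halfI)$). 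Applying the duality characterization with threshold given by the scaled input cost, I obtain that the judgment is equivalent to the family of judgments
\[
\vdash (Y_1,Y_2,n)\in\mathcal{Y} : \rtriple{\lambda^2 P_{sym}^\bot + nI}{S_1}{S_2}{\tr_2(Y_1)\otimes I + I\otimes(nI - \tr_2(Y_2))}
\]
where $\mathcal{Y} = \{(Y_1,Y_2\in\Pos(\HH\otimes\HH_2), n\in\mathbb{N}) \mid 0\sqsubseteq Y_1,\ 0\sqsubseteq 2Y_2\sqsubseteq nI,\ P_{sym}^\bot[\HH\otimes\HH_2]\ge 2(Y_1\otimes I - I\otimes Y_2)\}$, mirroring \Cref{thm:equal rule} but with the nonzero input cost $\lambda^2 P_{sym}^\bot$ carried along additively (the derivation of \Cref{thm:strassen QOT} is insensitive to which input cost is used, since the input cost only enters as the right-hand side of the inequality). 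Each such judgment has a split postcondition, so \Cref{thm:weakcompleteness} gives derivability of each, hence of the original judgment.

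The main obstacle I anticipate is checking that the stabilized duality \Cref{thm:strassen QOT}, which is stated for a fixed threshold $\epsilon\in\mathbb{R}^+$ and for states, transfers cleanly to the \emph{channel}/monotonicity formulation with a \emph{non-constant} input cost $\lambda^2 P_{sym}^\bot$ in place of $\epsilon$. Concretely, one needs the analogue of \Cref{lem:judgment strassen} but specialized to the stabilized cost $P_{sym}^\bot\otimes(\text{qubit})$ rather than an arbitrary finite $C_o$: that is, the step that turns "monotone w.r.t.\ $C_i$ and $P_{sym}^\bot$" into "monotone w.r.t.\ $C_i + nI$ and the split cost built from $\tr_2(Y_i)$" must be justified using the specific shape of the stabilized dual constraints (the factor $2$ and the partial traces over $\HH_2$). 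I expect this to follow by combining the proof of \Cref{lem:judgment strassen} with \Cref{thm:strassen QOT} in the same way \Cref{thm:equal rule} already does, but the bookkeeping of the auxiliary system $\HH_2$ and the $\lambda^2$ scaling is where care is needed. Once that reduction is in hand, the appeal to \Cref{thm:weakcompleteness} is routine.
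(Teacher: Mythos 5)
There is a genuine gap. The theorem follows in one line from results you already cite, but the route you chose does not work. The intended argument is: by \Cref{prop:encoding-of-wasserstein-distances-lipschitz} the Lipschitz property is equivalent to \emph{validity} of $\rtriple{\lambda^2 P_{sym}^{\bot}}{S_1}{S_2}{P_{sym}^{\bot}}$; since $P_{sym}^{\bot}$ is a projector, hence a \emph{bounded} predicate in $\Pos(\HH_{S_1}\otimes\HH_{S_2})$, and $S_1,S_2$ are AST, \Cref{thm:completeness post cond} immediately upgrades validity to derivability. Internally that theorem does apply the (duality) rule with $Q=P_{sym}^{\bot}$ together with \Cref{lem:judgment strassen} and \Cref{thm:weakcompleteness} --- but it uses the \emph{plain} quantum Strassen duality, not the stabilized one.

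Your detour through \Cref{thm:strassen QOT} is where the argument breaks. That proposition characterizes $T_s(\rho_1,\rho_2)\le\epsilon$, i.e.\ the \emph{stabilized} transport cost, whereas the Wasserstein semi-distance is defined via the unstabilized $T=T_{P_{sym}^{\bot}}$, and the postcondition of the judgment measures $T$ of the output states, not $T_s$. Since $T_s\le T$ with strict inequality possible, dualizing the output cost via \Cref{thm:strassen QOT} yields a family of split judgments that is strictly \emph{weaker} than the original judgment; the claimed equivalence fails, so derivability of that family does not give back derivability of $\rtriple{\lambda^2 P_{sym}^{\bot}}{S_1}{S_2}{P_{sym}^{\bot}}$. (In \Cref{thm:equal rule} this issue is avoided only because equivalence corresponds to the threshold $0$, where $T_s(\sigma_1,\sigma_2)=0\iff T(\sigma_1,\sigma_2)=0\iff\sigma_1=\sigma_2$; for a general Lipschitz constant no such collapse occurs. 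Note also that \Cref{thm:equal rule} never converts its family of split judgments back into a single $P_{sym}^{\bot}$-postcondition judgment --- the family \emph{is} the characterization.) Finally, even if the semantic equivalence held, the family you propose, with the auxiliary qubit system, the factor $2$, and the partial traces $\tr_2(Y_i)$, does not match the premise shape of the (duality) rule, so there is no rule in the system licensing your last inference "hence [derivability] of the original judgment". The obstacle you flagged as "bookkeeping" is in fact the point where the approach fails; replace the whole reduction by a direct appeal to \Cref{thm:completeness post cond}.
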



\subsection{Quantum Non-Interference}
Another application of our logic
involves characterizing the concept of quantum 
non-interference.
Intuitively,
non-interference refers to a critical property
where the actions of one group of agents
in a computer system
do not influence the actions of another group 
of agents.
This concept was later extended
to quantum settings in \cite{Ying2013Quantum}.
The key components for defining quantum non-interference 
in their work 
include quantum computer systems, 
a pseudo-distance for measuring output distributions,
and the degree of interference. 
We introduce these notions as follows.


\begin{definition}[Quantum Systems, Definition 3.1 in \cite{Ying2013Quantum}]
    A quantum system is a $6$-tuple
    \[  
    \mathbb{S}=\left\langle\mathcal{H},\rho_0,A,C,do,measure \right\rangle,
    \]
    where 
    \begin{itemize}
        \item $\mathcal{H}$ is a Hilbert space specifying 
    the state space;
        \item  $\rho \in \mathcal{D}(\mathcal{H})$  specifying the initial state;
        \item  $A$ is a set of agents;
        \item $C$ is a set of commands;
        \item $do=\{\mathcal{E}_{a,c}|a\in A\text{ and }c\in C\}$ is a set of trace-preserving operations $\mathcal{E}_{a,c}$ which
        describes how states are updated 
        when agent $a$ executes command $c$;
        \item $measure=\{\mathbb{M}_a|a\in A\}$
        is a collection of sets of POVM measurements, where each
        $\mathbb{M}_a$ is allowable for agent $a$.
    \end{itemize}
\end{definition}

The quantum non-interference influence is measured 
by the following pseudo distance 
$d_{\mathbb{M}}$
induced by POVM measurements $\mathbb{M}$.
Let $d(p,q)$ denote the total variation
distance between two probability distributions 
$p$ and $q$ over the sample space $X$.
For a density operator $\rho$
and a POVM measurement 
$E = \{E_{\lambda}|\lambda\in \Lambda\}$,
we define the 
probability distribution $p_{E,\rho}$ as
$p_{E,\rho}(\lambda) = \tr (E_{\lambda}\rho)$.
The pseudo distance  
$d_{\mathbb{M}}$ is formalized as follows.

\begin{definition}[Definition 3.2 in \cite{Ying2013Quantum}]
    The pseudo distance $d_{\mathbb{M}}$ between 
    two density operators 
    $\rho, \sigma\in \mathcal{D}(\mathcal{H})$
    induced by 
    a set of POVM measurements $\mathbb{M}$
    is defined as
    \[
        d_{\mathbb{M}}(\rho, \sigma)
        = \sup_{E\in \mathbb{M}}
        d(p_E(\rho), p_E(\sigma)).
    \]
\end{definition}
For each agent $a \in A$ in a quantum system $\mathbb{S}$  , we write $d_a =
d_{\mathbb{M}_a}$
for the pseudo distance defined by the set $\mathbb{M}_a$ of POVM measurements.

Let $G\subseteq A$ be a group of agents,
$D\subseteq A$ be a set of commands.
For a sequence of actions
$\alpha = \alpha_1 \alpha_2 \cdots \alpha_n \in (A\times C)^*$,
we define a function $\mathsf{purge}_{G,D}$ for $\alpha$
that removes the actions in $D$ by the agents
in group $G$.
In other words,
\[
    \mathsf{purge_{G,D}}(\alpha) = 
    \alpha_1'\alpha_2'\cdots\alpha_n',
\]
where $\alpha_i' = \epsilon$ (the empty action) if $\alpha_i' = (a_i, c_i)$ with $a_i\in G$ and $c_i\in D$,
and $\alpha_i' = \alpha_i$ otherwise. 
The degree of (non-) interference is measured
by the pseudo distance $d_{\mathbb{M}}$ 
between the state operated by the original actions
and the state operated by the purged actions.

\begin{definition}[Interference Degree, Definition 3.3 in \cite{Ying2013Quantum}]
    For a quantum system $\mathbb{S} = \left\langle\mathcal{H},\rho_0,A,C,do,measure \right\rangle$, let $G_1, G_2\subseteq A$ be 
    two groups of agents, and $D\subseteq C$
    be a set of commands.
    Then, the degree that agents in $G_1$ with commands $D$ interfere agents in $G_2$ is
    \[
          Int(G_1, D|G_2) = 
           \sup_{\substack{\alpha\in(A\times C)^*,\\ a\in G_2}} \{d_a(\mathcal{E}_{\alpha}(\rho_0), \mathcal{E}_{\mathsf{purge}_{G_1, D}(\alpha)}(\rho_0))\}
    \]
\end{definition}

If $ Int(G_1, D|G_2) = 0$, we will denote this as
$G_1, D : |G_2$.

For a quantum system $\mathbb{S} = \left\langle\mathcal{H},\rho_0,A,C,do,measure \right\rangle$ 
with an initial state $\rho_0 = \ket{0}\bra{0}$, 
we represent trace-preserving operations $\mathcal{E}_{a,c}$
as corresponding AST programs $S_{a,c}$, 
and sequence of actions $\alpha$ as AST programs $S_{\alpha}$.
The following and theorem shows the quantum
non-interference property can be encoded and verified 
using our logic.

\begin{proposition}[Encoding of Quantum Non-Interference]\label{prop:encoding-of-non-interference}
    For a quantum system $\mathbb{S} = \left\langle\mathcal{H},\rho_0,A,C,do,measure \right\rangle$ with $\rho_0 = \ket{0}\bra{0}$,
    let $G_1, G_2\subseteq A$ be 
    two groups of agents, and $D\subseteq C$
    be a set of commands.
    The following are equivalent:
    \begin{itemize}
        \item $G_1, D : |G_2$.
        \item $\forall \alpha\in (A\times C)^{*}$,  
        \[
        \begin{aligned}
                    \vDash  a&\in G_2, E = \{E_{\lambda}|\lambda\in \Lambda_E\}\in \mathbb{M}_a,  T\subseteq \Lambda_E: \\
        &\rtriple{I}{q:=\ket{0};S_{\alpha}}{q:=\ket{0};S_{\mathsf{purge}_{G_1, D}(\alpha)}}{M}{},
        \end{aligned}
        \]
    \end{itemize}
    where $M= M_T\otimes I+I\otimes (I-M_T)$ with $M_T = \sum_{\lambda\in T} E_{\lambda}$.
\end{proposition}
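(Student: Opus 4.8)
The plan is to unfold $\qqRHL$ validity on the two programs and observe that the leading initialization together with the split postcondition collapse the relational statement into a single state-level scalar inequality, so that no duality machinery is needed. Abbreviate $\tau_1 \eqdef \mathcal{E}_\alpha(\rho_0)$ and $\tau_2 \eqdef \mathcal{E}_{\mathsf{purge}_{G_1,D}(\alpha)}(\rho_0)$. Since $q:=\ket0$ resets the register to $\rho_0=\ket0\bra0$ and $S_\beta$ implements $\mathcal{E}_\beta$ on $\rho_0$ for any action sequence $\beta$, for \emph{every} density operator $\mu$ we have $\sem{q:=\ket0;S_\alpha}(\mu)=\tau_1$ and $\sem{q:=\ket0;S_{\mathsf{purge}_{G_1,D}(\alpha)}}(\mu)=\tau_2$; in particular the two output marginals appearing in the validity condition are the fixed states $\tau_1,\tau_2$, independent of the input coupling $\rho\in\DD^1$. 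Both composite programs are AST (a composition of AST programs), so by \Cref{lem:validity AST program} the partial coupling in the validity condition may be taken to be an honest coupling $\sigma:\<\tau_1,\tau_2\>$, and such couplings exist since $\tr(\tau_1)=\tr(\tau_2)=1$.

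Next I would carry out the only computation that matters. For the split postcondition $M=M_T\otimes I+I\otimes(I-M_T)$ and any coupling $\sigma:\<\tau_1,\tau_2\>$, using $\tr((A\otimes I)\sigma)=\tr(A\,\tr_2(\sigma))$ and $\tr((I\otimes B)\sigma)=\tr(B\,\tr_1(\sigma))$ together with $\tr_2(\sigma)=\tau_1$ and $\tr_1(\sigma)=\tau_2$, we get
\[
  \tr(M\sigma) = \tr(M_T\tau_1) + \tr((I-M_T)\tau_2) = 1 + \tr(M_T\tau_1) - \tr(M_T\tau_2),
\]
which does not depend on $\sigma$. Since the precondition is $I$ and $\tr(I\rho)=1$ for $\rho\in\DD^1$, the clause ``there exists a coupling $\sigma$ with $\tr(I\rho)\ge\tr(M\sigma)$'' is therefore equivalent to the single scalar inequality $\tr(M_T\tau_1)\le\tr(M_T\tau_2)$. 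Hence, for a fixed $\alpha$, the displayed judgment (whose logical variables range over $z=(a\in G_2,\,E\in\mathbb{M}_a,\,T\subseteq\Lambda_E)$) is valid iff $\tr(M_T\tau_1)\le\tr(M_T\tau_2)$ holds for all such $(a,E,T)$.

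Finally I would read this back as a statement about $d_a$. For a POVM $E=\{E_\lambda\}_{\lambda\in\Lambda_E}$ the induced distributions $p_{E,\tau_i}(\lambda)=\tr(E_\lambda\tau_i)$ are probability distributions (as $\tr(\tau_i)=1$), and their total variation distance is $d(p_{E,\tau_1},p_{E,\tau_2})=\max_{T\subseteq\Lambda_E}\big(\tr(M_T\tau_1)-\tr(M_T\tau_2)\big)$ with $M_T=\sum_{\lambda\in T}E_\lambda$; this maximum is $0$ exactly when $\tr(M_T\tau_1)\le\tr(M_T\tau_2)$ for all $T$. So the judgment for fixed $\alpha$ is valid iff $d(p_{E,\tau_1},p_{E,\tau_2})=0$ for all $a\in G_2$ and $E\in\mathbb{M}_a$, i.e. iff $d_a(\tau_1,\tau_2)=0$ for all $a\in G_2$; quantifying over all $\alpha\in(A\times C)^*$ this is precisely $\sup_{\alpha,\,a\in G_2} d_a(\mathcal{E}_\alpha(\rho_0),\mathcal{E}_{\mathsf{purge}_{G_1,D}(\alpha)}(\rho_0))=0$, that is $Int(G_1,D|G_2)=0$, i.e. $G_1,D:|G_2$. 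For the ``$\Leftarrow$'' direction one simply extracts the scalar inequality from each judgment as above; for ``$\Rightarrow$'' one exhibits the product coupling $\tau_1\otimes\tau_2$ as the explicit witness, which is admissible since $\tr(M(\tau_1\otimes\tau_2))=1$ once $\tr(M_T\tau_1)=\tr(M_T\tau_2)$. I do not expect a genuine obstacle here: the proof is essentially bookkeeping, the only points needing care being the identification of $\HH_{S_1}$ and $\HH_{S_2}$ with $\mathcal{H}$ (reading $q:=\ket0$ as the initialization to $\rho_0$) and the observation that a split postcondition makes $\tr(M\sigma)$ coupling-independent; the duality rule plays no role, because the postcondition is already split.
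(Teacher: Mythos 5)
Your proof is correct and follows essentially the same route as the paper's: unfold validity, use the initialization and the split postcondition to reduce the judgment to the scalar inequality $\tr(M_T\tau_1)\le\tr(M_T\tau_2)$, and identify the quantification over $T$ with the total variation distance $d_a$. Your write-up is in fact somewhat more careful than the paper's, since you explicitly justify why the coupling quantifier is immaterial (the marginals are fixed by the reset and the split postcondition makes $\tr(M\sigma)$ coupling-independent) and exhibit the product coupling as witness.
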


\begin{theorem}[Completeness with respect to Quantum Non-Interference]%
\label{thm:completeness-quantum-non-interference}
  The $\qqRHL$ is complete with respect to the quantum non-interference
  property for AST programs.
\end{theorem}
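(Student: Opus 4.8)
The plan is to obtain the theorem as an essentially immediate corollary of the encoding of quantum non-interference (\Cref{prop:encoding-of-non-interference}), completeness for split postconditions (\Cref{thm:weakcompleteness}), and soundness (\Cref{thm:soundness}). Fix a quantum system $\mathbb{S} = \langle \mathcal{H}, \rho_0, A, C, do, measure\rangle$ with $\rho_0 = \ket{0}\bra{0}$, groups $G_1, G_2 \subseteq A$, and a command set $D \subseteq C$. By \Cref{prop:encoding-of-non-interference}, the non-interference property $G_1, D : |G_2$ is equivalent to the \emph{validity} of the family of $\qqRHL$ judgments
\[
  \vDash Z_\alpha : \rtriple{I}{q:=\ket{0};S_\alpha}{q:=\ket{0};S_{\mathsf{purge}_{G_1,D}(\alpha)}}{M}
\]
ranging over all $\alpha \in (A\times C)^*$, where $Z_\alpha$ collects the logical parameters $a \in G_2$, $E = \{E_\lambda \mid \lambda \in \Lambda_E\} \in \mathbb{M}_a$, and $T \subseteq \Lambda_E$, and where $M = M_T \otimes I + I \otimes (I - M_T)$ with $M_T = \sum_{\lambda \in T} E_\lambda$.

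The key observation is that each of these judgments already meets every hypothesis of \Cref{thm:weakcompleteness}. First, the two programs $q:=\ket{0};S_\alpha$ and $q:=\ket{0};S_{\mathsf{purge}_{G_1,D}(\alpha)}$ are AST: initialization is interpreted by a trace-preserving map, the programs $S_{a,c}$ encoding the trace-preserving operations $\mathcal{E}_{a,c}$ are AST by construction, $S_\alpha$ and $S_{\mathsf{purge}_{G_1,D}(\alpha)}$ are finite sequential compositions of such programs, and AST is closed under sequential composition. Second, the postcondition $M$ is manifestly split, of the form $Q_1 \otimes I + I \otimes Q_2$ with $Q_1 = M_T$ and $Q_2 = I - M_T$; since $M_T$ is a sum of POVM elements we have $0 \sqsubseteq M_T \sqsubseteq I$, so $Q_1, Q_2$ (and the precondition $I$) are bounded. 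Hence for each $\alpha$, applying \Cref{thm:weakcompleteness} with the logical-variable block $Z_\alpha$ yields
\[
  \vDash Z_\alpha : \rtriple{I}{\cdots}{\cdots}{M} \;\implies\; \vdash Z_\alpha : \rtriple{I}{\cdots}{\cdots}{M}.
\]

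Combining the pieces: $G_1, D : |G_2$ implies validity of the whole family by \Cref{prop:encoding-of-non-interference}, which implies derivability of the whole family by the display above; conversely, derivability implies validity by \Cref{thm:soundness}, which implies $G_1, D : |G_2$ again by \Cref{prop:encoding-of-non-interference}. Thus $\qqRHL$-derivability of this family is equivalent to quantum non-interference, i.e., $\qqRHL$ is complete (and sound) with respect to it. The only care required is bookkeeping---reading off the correct block of logical variables $Z_\alpha$ from the encoding and applying \Cref{thm:weakcompleteness} uniformly over that block rather than for a single instantiation; no genuine obstacle arises, since all the real work is already carried out in \Cref{prop:encoding-of-non-interference} and in completeness for split postconditions. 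One could alternatively route through \Cref{thm:completeness post cond}, but invoking split completeness directly is cleaner here, as the encoding's postcondition is already split and the duality rule is not needed.
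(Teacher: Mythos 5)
Your proposal is correct and follows exactly the route the paper intends: the theorem is an immediate corollary of \Cref{prop:encoding-of-non-interference} together with soundness and completeness for split postconditions, since the encoded postcondition $M_T\otimes I + I\otimes(I-M_T)$ is already split and bounded and the programs involved are AST. Your observation that \Cref{thm:weakcompleteness} suffices and the duality rule is not needed matches the paper's remark that most of these application results follow from split-postcondition completeness alone.
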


   

\subsection{Quantum Differential Privacy}


Finally, with the help of infinite-valued predicates, we can characterize a quantum version of differential privacy.
Differential privacy is a mathematical framework 
about providing statistical properties about datasets
while preserving private information of individual objects.
The core intuition behind differential privacy is that the output distributions of a program should remain nearly indistinguishable when run on two ``neighboring'' inputs, usually differing only by a single element.
This concept has been successfully extended to quantum computing, resulting in the development of related but distinct notions of quantum differential privacy. These notions, as explored in works such as \cite{AR19, ZhouYingDiffPriv}, primarily differ in how they define and characterize ``neighboring inputs''.
In this paper, we adopt the following definition proposed in \cite{AR19}.


\begin{definition}[Quantum Differential Privacy, Definition 3 in~\cite{AR19}]%
\label{def:quantum-differential-privacy}
Let
$\varepsilon, \delta >0$ be constants. 
A quantum operation $\EE$ on an $n$-qubit system
is $(\varepsilon, \delta)$-differentially private,
if for
every POVM $M=\{M_m\}_{m \in \mathsf{Out}(m)}$, every
$S \subseteq \mathsf{Out}(m)$,
and every input $\rho, \sigma$
that differs at most one qubit (i.e., there exists $i\in [n]$ that $\tr_{i}(\rho) = \tr_{i}(\sigma)$),
it holds that
\[
  \Pr[\EE(\rho) \in_M S] \leq \exp(\varepsilon) \cdot \Pr(\EE(\sigma) \in_M S) + \delta,
\]
where $\Pr(\rho \in_M S) = \sum_{m \in S} \tr(M_m \rho)$.
\end{definition}

Let $S_{1}$ be some quantum program
that corresponds to the
quantum operation $\mathcal{E}$.
We could verify whether
$\mathcal{E}$, or equivalent $S_{1}$ is $(\varepsilon, \delta)$-differentially private
in our logic,
as implied by the following proposition and theorem.

\begin{proposition}[Encoding of Differential Privacy]%
\label{prop:encoding-of-differential-privacy}
  The following are equivalent for all AST programs $S_1$ on an $n$-qubit system with:
  \begin{enumerate}
    \item $S_1$ is  $(\varepsilon, \delta)$-differentially private;
    \item $\vDash i\in [n], 0\sqsubseteq M \sqsubseteq I: \rtriple{P_{i,sym} \mid (\exp(\varepsilon)+ \delta) I}{S_1}{S_{1}}{M\otimes I  + \exp(\varepsilon) I\otimes (I-M)}$.
  \end{enumerate}
  Here $P_{i, sym} = P_{sym}[\mathcal{H}_{[n] - i}] \otimes (I_{i\varsinone}\otimes I_{i\varsintwo})$ for $i\in [n]$.
\end{proposition}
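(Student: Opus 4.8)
The plan is to reduce \Cref{prop:encoding-of-differential-privacy} to the already-established encoding of the trace distance, \Cref{prop:encodingoftracedistance}, instantiated appropriately. First I would unpack \Cref{def:quantum-differential-privacy}: $S_1$ is $(\varepsilon,\delta)$-differentially private iff for every POVM $M=\{M_m\}$, every $S\subseteq\mathsf{Out}(m)$, and every pair $\rho,\sigma$ differing in at most one qubit, $\sum_{m\in S}\tr(M_m\sem{S_1}(\rho))\le \exp(\varepsilon)\sum_{m\in S}\tr(M_m\sem{S_1}(\sigma))+\delta$. Collapsing the POVM and the subset $S$ into a single effect $M_T=\sum_{m\in S}M_m$ with $0\sqsubseteq M_T\sqsubseteq I$, and quantifying over the qubit index $i\in[n]$ where the inputs agree (i.e. $\tr_i(\rho)=\tr_i(\sigma)$), this becomes: for all $i\in[n]$, all $0\sqsubseteq M\sqsubseteq I$, and all $\rho,\sigma$ with $\tr_i(\rho)=\tr_i(\sigma)$, $\tr(M\sem{S_1}(\rho))\le\exp(\varepsilon)\tr(M\sem{S_1}(\sigma))+\delta$.

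The key observation is that the condition $\tr_i(\rho)=\tr_i(\sigma)$ is exactly a (non-quantitative) lifting condition: $\rho\,X_i^{\#}\,\sigma$ where $X_i$ is the subspace on $\HH_S\otimes\HH_S$ forcing the two copies to coincide on all qubits except the $i$-th, i.e. $X_i$ corresponds to $P_{i,sym}=P_{sym}[\HH_{[n]-i}]\otimes(I_{i\varsinone}\otimes I_{i\varsintwo})$ in the sense that $\rho\,P_{i,sym}^{\#}\,\sigma$ iff $\tr_i(\rho)=\tr_i(\sigma)$ (recalling that for subspaces $\rho_1(\symeq)^{\#}\rho_2$ iff $\rho_1=\rho_2$, and tensoring with $I$ on the $i$-th slot relaxes exactly that qubit). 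Then I would rewrite the privacy inequality as a (scaled, affine) trace-distance-type bound: $\tr(M\sem{S_1}(\rho))-\exp(\varepsilon)\tr(M\sem{S_1}(\sigma))\le\delta$ for all $0\sqsubseteq M\sqsubseteq I$ is equivalent, after the substitution $P=M$ and $I-P$ playing the role of the complementary effect on the second copy, to the postcondition $P\otimes I+\exp(\varepsilon)\,I\otimes(I-P)$ being validly coupled from the precondition $P_{i,sym}\mid(\exp(\varepsilon)+\delta)I$. Concretely I would apply \Cref{prop:encodingoftracedistance} (or rather a variant of its proof) with $X=P_{i,sym}$, and with $\Phi_1,\Phi_2$ chosen so that $\tr(\Phi_1\rho_1)+\tr(\Phi_2\rho_2)$ absorbs the additive $\delta$ and the multiplicative $\exp(\varepsilon)$ — the constant precondition offset $(\exp(\varepsilon)+\delta)I$ accounts for the term $I+\Phi_1\otimes I+I\otimes\Phi_2$ in \Cref{prop:encodingoftracedistance} once we track how the factor $\exp(\varepsilon)$ rescales the second marginal.

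The main obstacle I anticipate is the asymmetry introduced by the multiplicative factor $\exp(\varepsilon)$: \Cref{prop:encodingoftracedistance} is stated for the plain trace distance (symmetric, additive constants only via $\Phi_1,\Phi_2$), whereas differential privacy mixes a multiplicative and an additive slack and treats the two inputs asymmetrically. I would handle this by going back to \Cref{thm:quantum strassen defect alter} directly rather than black-boxing \Cref{prop:encodingoftracedistance}: the dual condition there, "for all $Y_1,Y_2$ with $X\geqlow Y_1\otimes I-I\otimes Y_2$, $\tr(Y_1\rho_1)\le\tr(Y_2\rho_2)+\epsilon$", is flexible enough to carry the factor $\exp(\varepsilon)$ by taking $Y_1=P$ and $Y_2=\exp(\varepsilon)(I-P)$ (so that $Y_1\otimes I-I\otimes Y_2=P\otimes I-\exp(\varepsilon)I\otimes(I-P)$), and the defect $\epsilon$ to absorb $\delta$. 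Translating this back through \Cref{lem:validity and monotone} and the definition of $\qqRHL$ validity gives precisely statement (2), with the $X\mid(\cdot)I$ notation enforcing both the projective precondition $P_{i,sym}$ and the scalar bound via infinite-valued predicates. The rest is bookkeeping: checking that the postcondition $M\otimes I+\exp(\varepsilon)I\otimes(I-M)$ is the correct split form and that ranging $M$ over $0\sqsubseteq M\sqsubseteq I$ and $i$ over $[n]$ matches the universal quantification in \Cref{def:quantum-differential-privacy}. Finally, \Cref{thm:completeness-quantum-differential-privacy} follows immediately by combining \Cref{prop:encoding-of-differential-privacy} with completeness for split postconditions (\Cref{thm:weakcompleteness}), since the postcondition in (2) is split and bounded and the programs are AST.
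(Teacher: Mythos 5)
Your overall reduction is the right one and its first half matches the paper's proof: collapse the POVM and the subset $S$ into a single effect $0\sqsubseteq M\sqsubseteq I$, observe that the neighbouring condition $\tr_i(\rho)=\tr_i(\sigma)$ is exactly the existence of a coupling supported in $P_{i,sym}$, and use the infinite-valued precondition $P_{i,sym}\mid(\exp(\varepsilon)+\delta)I$ to render the judgment vacuous on non-neighbouring inputs and to contribute the constant $\exp(\varepsilon)+\delta$ on neighbouring ones. Your diagnosis that \Cref{prop:encodingoftracedistance} cannot be applied as a black box because of the multiplicative $\exp(\varepsilon)$ is also correct.

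The last step, however, reaches for the wrong tool. No appeal to \Cref{thm:quantum strassen defect alter} is needed, because the postcondition $M\otimes I+\exp(\varepsilon)\,I\otimes(I-M)$ is split: for \emph{any} coupling $\sigma_0$ of the outputs, $\tr(Q\sigma_0)=\tr(M\,\sem{S_1}(\rho))+\exp(\varepsilon)\bigl(1-\tr(M\,\sem{S_1}(\sigma))\bigr)$ depends only on the (fixed) marginals, using that $S_1$ is AST so $\tr(\sigma_0)=1$. Hence the existential over couplings is vacuous and validity collapses to the scalar inequality $\exp(\varepsilon)+\delta\ge\tr(M\,\sem{S_1}(\rho))+\exp(\varepsilon)-\exp(\varepsilon)\tr(M\,\sem{S_1}(\sigma))$, i.e.\ exactly the DP condition; this is the computation the paper performs. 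As you have set it up, the Strassen route does not close the equivalence: checking the dual inequality for the single pair $Y_1=P$, $Y_2=\exp(\varepsilon)(I-P)$ only yields a necessary condition (the theorem requires the dual inequality for \emph{all} admissible pairs to conclude existence of a coupling), and moreover that particular pair gives $\tr(P\sigma_1)\le\exp(\varepsilon)\tr((I-P)\sigma_2)+\epsilon$, which after expanding $\tr((I-P)\sigma_2)=1-\tr(P\sigma_2)$ has the wrong sign on the $\tr(P\sigma_2)$ term and is not the DP inequality; the pair that does reproduce it is $Y_1=M$, $Y_2=\exp(\varepsilon)M$ with defect $\delta$. Replacing the duality step by the direct marginal computation repairs the argument and gives exactly the paper's proof; your final remark that \Cref{thm:completeness-quantum-differential-privacy} then follows from \Cref{thm:weakcompleteness} is correct as stated.
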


\begin{theorem}[A Complete Characterization of Quantum Differential Privacy]%
\label{thm:completeness-quantum-differential-privacy}
  The $\qqRHL$ is complete with respect to the quantum differential privacy
  property for AST programs.
\end{theorem}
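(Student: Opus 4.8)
The plan is to reduce \Cref{thm:completeness-quantum-differential-privacy} to the already-established \Cref{prop:encoding-of-differential-privacy} together with the completeness result for split postconditions (\Cref{thm:weakcompleteness}), mirroring the pattern used for the diamond norm, Wasserstein, and non-interference completeness theorems. Concretely, ``$\qqRHL$ is complete with respect to quantum differential privacy'' unpacks to: whenever $S_1$ is $(\varepsilon,\delta)$-differentially private, the $\qqRHL$ judgment in item (2) of \Cref{prop:encoding-of-differential-privacy} is \emph{derivable}, not merely valid. By \Cref{prop:encoding-of-differential-privacy}, differential privacy of $S_1$ is equivalent to the \emph{validity} of that judgment, so the work is to upgrade validity to derivability.

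First I would observe that the postcondition in item (2), namely $M\otimes I + \exp(\varepsilon) I\otimes(I-M)$, is already in split form $Q_1\otimes I + I\otimes Q_2$ with $Q_1 = M$ and $Q_2 = \exp(\varepsilon)(I-M)$, and that it is bounded (since $0\sqsubseteq M\sqsubseteq I$ and $\exp(\varepsilon)$ is a finite constant). Second, the precondition $P_{i,sym}\mid(\exp(\varepsilon)+\delta)I$ is an infinite-valued predicate of the form $X\mid A$ with $X = P_{i,sym}\in\cS$ and $A = (\exp(\varepsilon)+\delta)I\in\Pos$; this is exactly the kind of infinite-valued precondition that \Cref{thm:weakcompleteness} is designed to handle, since that theorem is stated for arbitrary $P\in\PosI$ and split postconditions. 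Third, since $S_1$ is AST (by hypothesis) and $S_1\sim S_1$ trivially has both sides AST, the hypotheses of \Cref{thm:weakcompleteness} are met. Therefore, applying \Cref{thm:weakcompleteness} to the valid judgment from \Cref{prop:encoding-of-differential-privacy}(2) directly yields its derivability. Combining the two directions of \Cref{prop:encoding-of-differential-privacy} with this derivability gives the claimed completeness: $S_1$ is $(\varepsilon,\delta)$-differentially private if and only if the judgment is derivable in $\qqRHL$.

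The one point that requires a little care --- and is the main (though modest) obstacle --- is checking that the logical-variable context in \Cref{prop:encoding-of-differential-privacy}(2), namely $Z = \{\,i\in[n],\ 0\sqsubseteq M\sqsubseteq I\,\}$, is correctly threaded through \Cref{thm:weakcompleteness}. The completeness theorem for split postconditions is stated uniformly over a context $Z$, so instantiating it with this particular $Z$ is immediate; but one should note that $Q_1 = M$ and $Q_2 = \exp(\varepsilon)(I-M)$ and the precondition all depend on the logical variables $i$ and $M$, and confirm that the split structure of the postcondition holds pointwise for every $z\in Z$. Since for each fixed $(i,M)$ the postcondition is literally $M\otimes I + I\otimes \exp(\varepsilon)(I-M)$, this is a split postcondition for every $z$, so \Cref{thm:weakcompleteness} applies verbatim.

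\begin{proof}[Proof of \Cref{thm:completeness-quantum-differential-privacy}]
By \Cref{prop:encoding-of-differential-privacy}, $S_1$ is $(\varepsilon,\delta)$-differentially private if and only if
\[
  \vDash i\in[n],\ 0\sqsubseteq M\sqsubseteq I:\ \rtriple{P_{i,sym}\mid(\exp(\varepsilon)+\delta)I}{S_1}{S_1}{M\otimes I + \exp(\varepsilon)I\otimes(I-M)}.
\]
The postcondition is of the split form $Q_1\otimes I + I\otimes Q_2$ with $Q_1 = M$ and $Q_2 = \exp(\varepsilon)(I-M)$, and it is bounded. Since $S_1\in\AST$, \Cref{thm:weakcompleteness} (completeness for split postconditions), applied with the context $Z = \{\,i\in[n],\ 0\sqsubseteq M\sqsubseteq I\,\}$, upgrades the above validity to derivability:
\[
  \vdash i\in[n],\ 0\sqsubseteq M\sqsubseteq I:\ \rtriple{P_{i,sym}\mid(\exp(\varepsilon)+\delta)I}{S_1}{S_1}{M\otimes I + \exp(\varepsilon)I\otimes(I-M)}.
\]
Conversely, derivability implies validity by \Cref{thm:soundness}, which by \Cref{prop:encoding-of-differential-privacy} implies $(\varepsilon,\delta)$-differential privacy of $S_1$. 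Hence $\qqRHL$ is complete for the quantum differential privacy property on AST programs.
\end{proof}
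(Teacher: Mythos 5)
Your proposal is correct and is exactly the argument the paper intends: the paper leaves this theorem without an explicit proof precisely because, as stated in the Applications section, it is a direct consequence of \Cref{prop:encoding-of-differential-privacy} together with completeness for split postconditions (\Cref{thm:weakcompleteness}), with no need for the duality rule. Your observations that the postcondition $M\otimes I + \exp(\varepsilon)I\otimes(I-M)$ is split and bounded, that \Cref{thm:weakcompleteness} accepts an infinite-valued precondition and a parameter context $Z$, and that soundness closes the converse direction, match the paper's reasoning.
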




\section{Probabilistic Duality}
There is a variety of relational program logics for probabilistic
programs~\cite{BartheGB09,Aguirre:POPL:21,GregersenAHTB24,erhl}. Similar to the quantum case, validity of these logics is based on probabilistic
couplings, their completeness has remained an open problem. Recent
work by Avanzini \textit{et al}.~\cite{erhl} defines a quantitative
relational Hoare logic, called $\eRHL$, and shows that it
achieves completeness for non-trivial classes of properties. Their
proof of completeness leverages a notion of split post-condition
similar to ours. However, their work lacks a general completeness
theorem. We show that their logic is in fact complete for bounded postconditions. Completeness follows from the classic Kantorovich-Rubinstein duality. Our phrasing of the theorem is stated w.r.t.\,
positive functions $c_1$ and $c_2$, to match the assumption that assertions in $\eRHL$ take positive values.
\begin{theorem}[Kantorovich-Rubinstein Duality]
\label{prop:Kantorovich-Rubinstein-duality}
Let $\mu,\nu$ be discrete probability distributions over $X$ and $Y$
respectively, and let $c:X\times Y \rightarrow [0,+\infty)$ be a
  bounded function. Then
$$\inf_{\theta \in \Gamma (\mu,\nu)} \E_\theta [c] =
  \sup_{(n,c_1,c_2)\in\mathcal{W}} (\E_\mu[c_1]+\E_\nu[c_2] -n)
  $$ where $\Gamma (\mu,\nu)$ denotes the set of probabilistic
  couplings of $\mu$ and $\nu$ and $(n, c_1,c_2) \in \mathcal{W}$ iff
  for every $x\in X$ and $y\in Y$, we have $0\leq c_1(x),c_2(y)$
  and $c_1(x)+c_2(y)\leq c(x,y)+n$.
\end{theorem}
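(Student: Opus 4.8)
The plan is to prove the Kantorovich--Rubinstein duality (\Cref{prop:Kantorovich-Rubinstein-duality}) by setting up the transportation problem as a linear program and invoking strong LP duality. First I would note that since $\mu,\nu$ are discrete and $c$ is bounded, we may without essential loss of generality reduce to the case where $X$ and $Y$ are finite: the supports of $\mu$ and $\nu$ are at most countable, the primal infimum is attained on couplings supported on $\supp(\mu)\times\supp(\nu)$, and a standard truncation/compactness argument (using boundedness of $c$ to control tails) lets us pass from finite restrictions to the countable case in the limit. On a finite product space, the primal problem $\inf_{\theta\in\Gamma(\mu,\nu)}\E_\theta[c]$ is exactly a finite-dimensional linear program: minimize $\sum_{x,y} c(x,y)\theta(x,y)$ subject to $\theta\geq 0$, $\sum_y \theta(x,y)=\mu(x)$ for all $x$, and $\sum_x \theta(x,y)=\nu(y)$ for all $y$.

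Next I would write down the LP dual. Introducing dual variables $u(x)$ for the $\mu$-marginal constraints and $v(y)$ for the $\nu$-marginal constraints (both free in sign, since these are equality constraints), the dual is to maximize $\sum_x \mu(x)u(x)+\sum_y \nu(y)v(y)$ subject to the constraint coming from the primal variable $\theta(x,y)\geq 0$, namely $u(x)+v(y)\leq c(x,y)$ for every pair $(x,y)$. By strong duality for linear programming (the primal is feasible, since the independent coupling $\mu\otimes\nu$ works, and the objective is bounded below by $0$ since $c\geq 0$), the primal optimum equals the dual optimum: $\inf_{\theta\in\Gamma(\mu,\nu)}\E_\theta[c] = \sup\{\E_\mu[u]+\E_\nu[v] : u(x)+v(y)\leq c(x,y)\}$.

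It then remains to massage the dual into the form stated in the theorem, where the maximization ranges over triples $(n,c_1,c_2)$ with $c_1,c_2\geq 0$ and $c_1(x)+c_2(y)\leq c(x,y)+n$, and the objective is $\E_\mu[c_1]+\E_\nu[c_2]-n$. This is a purely algebraic reparametrization: given an unconstrained feasible pair $(u,v)$ with $u+v\leq c$, choose $n$ large enough that $u+n/2\geq 0$ and $v+n/2\geq 0$ pointwise (possible since $u,v$ are bounded functions on a finite/bounded-range setting, e.g. $n = 2\max(-\inf u, 0) + 2\max(-\inf v, 0)$), and set $c_1 = u + n/2$, $c_2 = v + n/2$; then $c_1,c_2\geq 0$, $c_1(x)+c_2(y)=u(x)+v(y)+n\leq c(x,y)+n$, and $\E_\mu[c_1]+\E_\nu[c_2]-n = \E_\mu[u]+\E_\nu[v]$. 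Conversely, from any feasible triple $(n,c_1,c_2)$, the pair $u = c_1 - n/2$, $v = c_2 - n/2$ satisfies $u+v\leq c$ and yields the same objective value, so the two suprema coincide.

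The main obstacle I anticipate is the passage from the finite case to genuinely countable $X$ and $Y$: strong LP duality is clean in finite dimensions, but for infinite supports one needs care that (i) the primal infimum over couplings is actually attained (or at least approximated) by couplings with controlled mass, (ii) the dual supremum is not strictly larger (no duality gap), and (iii) the dual optimizers $u,v$ can be taken bounded so that the shift by $n/2$ works uniformly. Boundedness of $c$ is the key hypothesis that closes all three gaps: it bounds the primal value, it lets one truncate distributions and pass to the limit with an $\varepsilon$-argument, and it allows the dual potentials to be taken bounded (e.g. via the $c$-transform $u^c(y) = \inf_x(c(x,y)-u(x))$, which inherits the oscillation bound of $c$). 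I would handle this by first proving the finite case via LP duality as above, then extending to the countable case by a standard exhaustion argument: restrict $\mu,\nu$ to finite sets carrying all but $\varepsilon$ of the mass, renormalize, apply the finite result, and take $\varepsilon\to 0$ using boundedness of $c$ to control the discrepancy in both the primal and dual values.
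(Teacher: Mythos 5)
Your proposal is correct in outline, but it takes a genuinely different route from the paper. The paper treats the duality itself as a black box: it invokes the general Kantorovich duality theorem for Polish spaces (Theorem 5.10 in Villani), observes that a countable set with the discrete topology is Polish and that every bounded function on it is continuous, and then performs exactly the same algebraic reparametrization you describe at the end (shifting the unconstrained potentials by a constant $m$ with $n=2m$ to make them nonnegative, and conversely noting that feasibility of $(n,c_1,c_2)$ together with boundedness of $c$ forces $c_1,c_2$ to be bounded). You instead rebuild the duality from first principles: finite-dimensional LP strong duality on finite supports, followed by an exhaustion argument to pass to countable $X,Y$. Your approach buys self-containedness and avoids citing a heavyweight measure-theoretic theorem, at the cost that the finite-to-countable passage --- which you correctly identify as the delicate step --- is only sketched and would need real work to make rigorous (controlling the primal value under truncation of the marginals, extending dual potentials from a finite restriction via the $c$-transform, and verifying there is no duality gap in the limit). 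The paper's approach is shorter and offloads all of that to the cited theorem. Both proofs converge on the same final reparametrization, and your handling of it is sound; one cosmetic remark is that you take $n$ to be a real number where the paper's duality rule elsewhere takes $n\in\NN$, but since enlarging $n$ and shifting $c_1,c_2$ accordingly leaves the objective unchanged, the supremum is the same either way.
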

It follows that every bounded
post-condition is logically equivalent to a universally quantified
split post-condition---where in the probabilistic setting a split
post-condition is simply the addition of two unary assertions on the
first and second state respectively. Therefore $\eRHL$ is
complete for all bounded postconditions.
\begin{proposition}
    \label{prop:completeness-erhl}
 $\eRHL$ is complete for all AST programs and bounded
 postconditions.
\end{proposition}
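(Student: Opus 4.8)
The plan is to reduce $\eRHL$ completeness to the same two-step recipe used for $\qqRHL$: (i) completeness for split postconditions, which \cite{erhl} essentially already establishes, and (ii) a probabilistic duality rule, analogous to our (duality) rule, that reduces an arbitrary bounded postcondition to a universally quantified split postcondition. Step (ii) is exactly what \Cref{prop:Kantorovich-Rubinstein-duality} provides at the level of couplings. So the first thing I would do is make the intended duality rule precise in the $\eRHL$ syntax: from a family of derivable judgments $\{\mathbf{c}_1(x)\}\, S_1\sim S_2\, \{c_1\oplus(n-c_2)\}$ indexed by all triples $(n,c_1,c_2)\in\mathcal{W}$ (with $c$ playing the role of the postcondition and the precondition adjusted by $+n$ exactly as in our quantum rule), derive $\{\mathbf{c}\}\,S_1\sim S_2\,\{c\}$. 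The soundness of this derived rule is the translation of Kantorovich--Rubinstein duality: validity of all the split instances says, for each coupling-theoretic lower bound $(n,c_1,c_2)$, a witness coupling exists with the stated inequality; the duality theorem says the supremum of these bounds equals $\inf_\theta\E_\theta[c]$, so one can extract (by the closedness/compactness of the set of couplings, as in the discrete setting) a single witness coupling achieving the bound for the original postcondition $c$.

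Concretely, the key steps in order: First, state and prove the probabilistic split-postcondition completeness lemma for $\eRHL$ — i.e., if $\models \{P\}\,S_1\sim S_2\,\{c_1\oplus c_2\}$ for AST $S_1,S_2$ then $\vdash \{P\}\,S_1\sim S_2\,\{c_1\oplus c_2\}$. This is the probabilistic mirror of \Cref{thm:weakcompleteness} and should follow by the same structural-induction argument via one-sided weakest preconditions in $\eRHL$; here I would cite/adapt the corresponding development in \cite{erhl}, which already treats split postconditions, possibly re-deriving the one-sided weakest-precondition lemmas in the logical-variable-augmented form. Second, prove that the derived duality rule above is admissible in (or soundly added to) $\eRHL$, using \Cref{prop:Kantorovich-Rubinstein-duality}: given validity of all split instances, invoke the duality equality to conclude validity of the original judgment, then appeal to step one to get derivability of each split instance and close the derivation with the duality rule. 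Third, assemble: for an arbitrary bounded postcondition $c$ and a valid judgment $\models\{P\}\,S_1\sim S_2\,\{c\}$, observe that validity is preserved when we weaken to $\{P+n\}\,S_1\sim S_2\,\{c_1\oplus(n-c_2)\}$ for every $(n,c_1,c_2)\in\mathcal{W}$ (since such a split postcondition is pointwise $\leq c+n$ and validity is monotone under strengthening pre / weakening post, exactly as in the quantum (csq) rule); apply step one to derive each, then step two to conclude $\vdash\{P\}\,S_1\sim S_2\,\{c\}$.

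The main obstacle I expect is bookkeeping rather than deep mathematics: checking that $\eRHL$ as presented in \cite{erhl} actually supports the logical-variable mechanism needed to quantify over the infinite family $\mathcal{W}$ inside a single derivation (our $\qqRHL$ solves this by the explicit $Z$-parameterization, and the analogous move must be legitimate in $\eRHL$), and verifying that the split-postcondition completeness of \cite{erhl} is stated in sufficient generality — in particular for arbitrary (logically parameterized) unary postconditions $c_1,c_2$ and for all AST programs, not just the special cases highlighted in that paper. A second, more technical point is the passage from "a witness coupling exists for every lower bound $(n,c_1,c_2)$" to "a single witness coupling exists for $c$": in the discrete setting this is immediate because the optimal transport infimum is attained, so one simply takes the optimal coupling for $c$ directly; but one must make sure the attained optimum indeed satisfies $\E_\theta[c]\le \E_\mu[\mathbf{c}]$ when the original judgment is valid, which follows by duality since valid split instances bound $\E_\mu[\mathbf{c}]$ from below by every $\E_\mu[c_1]+\E_\nu[c_2]-n$ whose supremum is $\inf_\theta \E_\theta[c]$. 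Once these are pinned down, \Cref{prop:completeness-erhl} follows exactly as \Cref{thm:completeness post cond} does in the quantum case.
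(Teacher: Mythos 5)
Your proposal matches the paper's argument: reduce to split postconditions via the Kantorovich--Rubinstein duality (\Cref{prop:Kantorovich-Rubinstein-duality}) and then invoke the split-postcondition completeness already established for $\eRHL$ in \cite{erhl}. The one point where the paper diverges from your bookkeeping concerns is that no new duality rule needs to be added at all: as the paper notes, $\eRHL$'s very general rule of consequence already permits importing arbitrary coupling-theoretic facts, so the duality step is absorbed there rather than formalized as a separate inference rule.
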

Note that completeness does not require adding a duality rule, due to a
difference of settings. Indeed, $\eRHL$ features a very general
rule of consequence, which allows using arbitrary theorems from the
theory of couplings. 

Interestingly,~\cite{erhl} establishes a completeness theorem for
judgments of the $\mathsf{pRHL}$ logic, using Strassen's
theorem~\cite{strassen1965existence}. This is very similar in spirit to
our use of the duality theorem. In fact, Strassen's theorem can be
seen as a specialized variant of the duality theorem for
boolean-valued cost functions. Furthermore, note that~\cite{erhl} uses the Kantorovich-Rubinstein duality to prove that $\eRHL$ characterizes Kantorovich distance, but fails to establish a link with completeness.

\section{Related work and Discussion}

\subsection{Quantum Relational Hoare Logics}
Our logic can be seen as a generalization of Barthe \textit{et al}.'s $\rqPD$~\cite{barthe_rqpd}. It differs with $\rqPD$ in three ways: we consider possibly infinite-valued positive predicates (instead of finite-valued and subunital ones), we consider an upper-bounding instead of a lower-bounding semantics, and we allow for logical variables. That said, $\rqPD$ can be semantically embedded in our logic as follows:
\begin{proposition}
  \label{prop:semanticembeddingrqPD}
  Let $S_1$, $S_2$ be AST programs and $0\leqlow P, Q \leqlow I$ be predicates. Then 
  $$\vDash_{\rqPD}\rtriple{P}{S_1}{S_2}{Q} \iff \vDash \rtriple{I-P}{S_1}{S_2}{I-Q}.$$
\end{proposition}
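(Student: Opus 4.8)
The plan is to unwind the definitions of validity for both $\rqPD$ and $\qqRHL$ and show they match under the stated affine transformation $P \mapsto I - P$, $Q \mapsto I - Q$. First I recall that $\rqPD$ in \cite{barthe_rqpd} uses a \emph{lower-bounding} semantics: $\vDash_{\rqPD}\rtriple{P}{S_1}{S_2}{Q}$ holds iff for every input state $\rho \in \DD^1(\HH_{S_1}\otimes\HH_{S_2})$ there is a coupling $\sigma : \langle \sem{S_1}(\tr_2(\rho)), \sem{S_2}(\tr_1(\rho))\rangle$ with $\tr(P\rho) \le \tr(Q\sigma)$ (using subunital predicates $0 \leqlow P, Q \leqlow I$). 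By contrast, $\qqRHL$ validity is upper-bounding: $\tr((I-P)\rho) \ge \tr((I-Q)\sigma)$. Since $S_1, S_2$ are AST, by \Cref{lem:validity AST program} (cited just after the definition of $\qqRHL$ validity) every partial coupling witnessing a $\qqRHL$ judgment between AST programs is in fact a (total) coupling, so the two notions range over exactly the same set of witnesses $\sigma$; this removes the only structural discrepancy between the two logics and lets me argue purely at the level of the scalar inequalities.

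The key step is then the arithmetic identity. For a coupling $\sigma : \langle \sem{S_1}(\tr_2(\rho)), \sem{S_2}(\tr_1(\rho))\rangle$ with $\rho \in \DD^1$, \Cref{lemma:traceequiv} gives $\tr(\sigma) = \tr(\sem{S_1}(\tr_2(\rho))) = 1$ because $\sem{S_1}$ is trace-preserving (AST) and $\tr(\tr_2(\rho)) = \tr(\rho) = 1$; likewise $\tr(\rho) = 1$. Hence $\tr((I-P)\rho) = 1 - \tr(P\rho)$ and $\tr((I-Q)\sigma) = 1 - \tr(Q\sigma)$, so
\[
\tr((I-P)\rho) \ge \tr((I-Q)\sigma) \iff 1 - \tr(P\rho) \ge 1 - \tr(Q\sigma) \iff \tr(P\rho) \le \tr(Q\sigma).
\]
Thus, for each fixed $\rho$ and each candidate witness $\sigma$, the $\qqRHL$ inequality with predicates $I-P, I-Q$ holds exactly when the $\rqPD$ inequality with predicates $P, Q$ holds. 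Since the existential quantifier over $\sigma$ (a coupling, in both cases, by the AST remark) and the universal quantifier over $\rho \in \DD^1$ are identical on both sides, the two judgments are equivalent. One must also check that $I - P$ and $I - Q$ are legal $\qqRHL$ predicates: since $0 \leqlow P, Q \leqlow I$ we get $0 \leqlow I - P, I - Q \leqlow I$, so they lie in $\Pos \subseteq \PosI$, and in particular $I - Q$ is bounded, which is all the logic requires.

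I expect the only subtlety — rather than a genuine obstacle — to be making sure the class of witnesses really coincides: $\qqRHL$ is defined via partial couplings, and one must invoke the AST-specialization lemma to collapse partial couplings to couplings, matching $\rqPD$'s definition; without AST the equivalence would fail because partial couplings need not have full trace and the identity $\tr(\sigma) = 1$ would break. A second minor point is handling the logical-variable parameter $Z$: the paper's $\rtriple{\cdot}{\cdot}{\cdot}{\cdot}$ notation on the $\rqPD$ side carries no $Z$, so the statement is implicitly for the trivial (singleton) $Z$, or one quantifies the argument above pointwise over $z \in Z$ — either reading makes the proof go through verbatim since the equivalence is established for each fixed predicate pair.
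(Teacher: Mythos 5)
Your proof is correct and follows essentially the same route as the paper's: unwind both validity definitions, use the AST hypothesis (via \Cref{lem:validity AST program}) to collapse partial couplings to couplings and obtain $\tr(\rho)=\tr(\sigma)$, and then observe that the affine substitution $P\mapsto I-P$, $Q\mapsto I-Q$ flips the inequality. The only cosmetic difference is that the paper argues for arbitrary $\rho\in\DD$ using $\tr(\rho)=\tr(\sigma)$ rather than normalizing to $\DD^1$, but both readings are equivalent by scale-invariance.
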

Moreover, most of our proof rules are adapted from $\rqPD$'s rules, and it turns out that straightforwardly generalizing $\rqPD$'s one-sided rules to our setting is sufficient to achieve completeness without needing two-sided rules with complex semantic conditions like measurement conditions. Nevertheless, the said rules being still sound and potentially more usable, we adapt them and further show that measurement conditions can be reasoned about \textit{within} our logic.

Barthe \textit{et al}.~\cite{barthe_rqpd} also discuss a logic using projective (instead of quantitative) predicates similar to $\pqRHL$, as well as an incomplete embedding of that logic into $\rqPD$. This work achieves a complete embedding with the help of infinite-valued predicates.

Another line of work \cite{qRHL_Unruh_2019,qRHLWithExpectations} is based on separable couplings (instead of general couplings like in \cite{barthe_rqpd} and our work). Unruh~\cite{qRHL_Unruh_2019} defines the first sound relational program logic for quantum programs based on projective predicates and separable couplings. The primary motivation for using separable couplings is that it is possible to prove soundness of a frame rule. Li and Unruh~\cite{qRHLWithExpectations} define an expectation-based variant of~\cite{qRHL_Unruh_2019}. The soundness of their logic is also proved with a notion of validity based on separable couplings. Interestingly, the motivation for using separable couplings in this case is soundness---there is no frame rule in this logic. Neither of these logics are known to be complete. 

Comparing our logic to separable-coupling-based ones is not the focus of our work. As already extensively discussed by \cite{barthe_rqpd} and \cite{qRHLWithExpectations}, while similar sets of proof rules can be sound for both general-coupling-based and separable-coupling-based notions of validity, it is unclear how these notions of validity actually relate to each other. Moreover, it is unclear whether it is possible to adapt ideas in our work (like Strassen's theorem) to obtain similar completeness results.

Finally, in \cite{approx_rel_reasoning_quantum_programs}, Yan \textit{et al}.\ study approximative relational reasoning by giving a logic similar to \cite{barthe_rqpd}'s projective logic, but based on approximate couplings (where a $\rho$ is an $\epsilon$-coupling of $\rho_1$ and $\rho_2$ if both $\TD(\rho_1,\tr_2(\rho))$ and $\TD(\rho_2, \tr_1(\rho))$ are upper-bounded by $\epsilon$, where $\TD$ is the trace distance). In comparsion, while our logic is quantitative and completely characterizes various distance metrics like the Wasserstein semi-distance, we do require couplings to be exact. It would indeed be interesting to explore how expressive our logic is for approximative reasoning compared to their work.



\subsection{Quantum Optimal Transport}
Over the past decades, efforts have been made to generalize the optimal transport problem to the quantum setting. 
Early attempts~\cite{QOT_monge_Zyczkowski_1998} defined the cost between two quantum states using the (probabilistic) Monge distance based on their corresponding Husimi distributions, and then explored the physical consequences including unitary evolution and decoherence~\cite{QOT_monge_Zyczkowski_2001}.
More recent approaches have framed quantum optimal transport in terms of expectations over couplings.
Specifically, given quantum states $\rho_1$ and $\rho_2$, they have studied the expectation $\tr(C\rho)$ over possible "couplings" $\rho$ of $\rho_1$ and $\rho_2$, for both general and specific types of costs:

\cite{QOT_continuous_2016} explored the QOT based on quantum coupling (\Cref{def:coupling} but on continuous space), with the cost specified as energies involving position and momentum operators, primarily for investigating applications in quantum mean-field theory and its classical limits. Related results include such as inequalities involving QOT and related potentials or metrics~\cite{QOT_continuous_2016, QOT_continuous_2022}, Kantorovich type duality theorem~\cite{QOT_continuous_2022a}, and showing that QOT can be cheaper than the classical one~\cite{QOT_continuous_2020}.

Another approach~\cite{QOT_channel_2021, QOT_channel_2024a} examines QOT from a view of changing one state to another, i.e, focusing on the properties of quantum channels $\EE$ that satisfy $\EE(\rho_1)=\rho_2$. It was shown that there is a one-to-one correspondence between such channels and the couplings of $\<\rho_1^T,\rho_2\>$, offering an alternative definition of couplings. They further studied the cost function $\sum_i(R_i\otimes I_2-I_1\otimes R_i^T)^2$, and established related entropic and concentration inequalities~\cite{QOT_channel_2023, QOT_channel_2024}, with an application showing the limitations of Variational Quantum Algorithms in the presence of noise~\cite{QOT_channel_2023}.


In addition to special cost functions, 
prior works have systematically explored the general properties and associated metrics or distances in quantum optimal transport (QOT) \cite{QuantumOptimalTransport_Cole_2023, Friedland2022Quantum, QOT_nogo_2022}. 
For instance, \cite{mullerhermes2022monotonicity} introduced a stable version of QOT, as briefly summarized in \Cref{sec: QOT}. 
This line of research aligns closely with our objectives, as we prioritize the general relational properties of program states over specific costs or physically oriented properties.

\subsection{Reasoning about Program Equivalence}

Outside of quantum relational Hoare logics, reasoning techniques about equivalence between quantum programs/circuits have been widely studied. 

Several lines of works create axiomatizations of program equivalence in the form of equational theories. Similarly to our work, they look at notions of quantum computing with \textit{existing} and well-defined denotational semantics, with respect to which they would prove properties like soundness and completeness. However, they only focus on program equivalence and have a very different goal of enabling equational reasoning. One line of work stems from the categorical quantum mechanics (CQM) programme \cite{abramsky2007categoricalsemanticsquantumprotocols,PQP_Coecke_Kissinger_2017,heunen_vicary_book} and leverages categorical formalisms developed therein to produce string-diagrammatic axiomatizations of quantum theory like the ZX calculus and others \cite{ZX_1_10.1007/978-3-540-70583-3_25,ZH_Calculus_Backens_2019, ZW_calculus_hadzihasanovic2015diagrammaticaxiomatisationqubitentanglement}. Similarly to these calculi, \cite{Xu_Barthe_Zhou_2025} develop a tool for reasoning about expressions formally written in Dirac notation. Other works, including ours, have a specific focus on quantum computing (CP and trace non-decreasing maps), a proper subset of quantum theory (CP maps). In \cite{statonpopl15}, Staton presents an algebraic theory of qubit quantum computing and proved that it completely axiomatizes a standard model of quantum computation. Recently, \cite{Quantum_PI_Carette_2024} complements \cite{statonpopl15} by giving an equational theory for unitary quantum circuits that is complete for various gate sets, by extending $\Pi$, a model of classical reversible computing. 


Other works focus on models of quantum computation (e.g. quantum concurrency) that do not yet have a readily available notion of program equivalence. Quantum process calculi \cite{QPAlg_lalire2004processalgebraicapproachconcurrent,CQP_gay2004communicatingquantumprocesses,lqCCS_10.1145/3632885, qCCS_ying2010algebraquantumprocesses}, for instance, model quantum concurrency by extending classical process calculi with quantum primitives. Several notions of behavioural equivalence were then developed based on bisimulation. However, as \cite{lqCCS_10.1145/3632885} points out, each of these proposals is subtly different from the others, and there is not yet a consensus on which is the right one. Compared to our work, quantum bisimulation handles a richer programming language, but once again only handles equivalence. 




\section{Conclusion}
We have defined a sound and complete relational proof system for quantum programs. Our proof system achieves completeness from the duality theorem of quantum optimal theorem. In addition, with the help of infinite-valued predicates, we have given a complete embedding of projective assertions into our logic.

\bibliographystyle{IEEEtran}
\bibliography{refs}

\newpage 
\onecolumn

\begin{appendices}

\section{Deferred Proofs in ``Notations and Preliminaries'' Section}
\label{app: preliminary}
We first briefly review some basic concepts and propositions
in linear algebra and quantum computing.

\textit{Quantum States}. 
The state space of a 
quantum system is described by a complex
Hilbert space
$\mathcal{H}$, which we ususally assume to be 
finite-dimensional. 
A pure state is a unit (column) vector
in the Hilbert space. For example, 
an $d$-dimensional quantum state 
in $\mathbb{C}^d$
has the form
\[
    v = (v_0, v_1, v_2, \dots, v_{d-1})^{\intercal},
\]
usually denoted as $\ket{v}$ with 
the Dirac symbol $\ket{\cdot}$.
The computational basis of $\mathbb{C}^d$
is denoted as $\{\ket{i}\}_{i=0}^{d-1}$,
where the $j$-th coordinate of $\ket{i}$
is $1$ if $j = i$, and $0$ otherwise.
The inner product between states $\ket{u}$
and $\ket{v}$ is denoted as $\<u|v\>$,
which is the standard inner product
in the Hilbert space,
where $\bra{u}$ stands for the 
conjugate transpose of $\ket{u}$.
Let $\ket{u}\in \mathbb{C}^{d_1}$
and $\ket{v}\in \mathbb{C}^{d_2}$
be two states.
Their tensor product, written as 
$\ket{u}\otimes \ket{v}$, or $\ket{u}\ket{v}$
in short, 
is defined to as the following vector
$\ket{u}\ket{v} = 
(u_0v_0, u_0v_1, u_0v_2, \dots, u_{d_1}v_{d_2})^{\intercal}$.

A quantum bit (or qubit for short),
is a quantum state $\ket{\phi}
= \alpha\ket{0} + \beta \ket{1} \in \mathbb{C}^2$,
with $\abs{\alpha}^2 + \abs{\beta}^2 = 1$.
An $n$-qubit state is in the Hilbert space
of dimension $2^n$.

Let $\HH$ be a $d$-dimensional Hilbert space.
The (linear) operators on $\HH$ are just 
$d\times d$ matrices. For an operator $A$, the trace of $A$, $\tr(A)$
is defined as $\sum_{j=0}^{d-1}\bra{j}A\ket{j}$.
It can be shown that $\tr(AB) = \tr(BA)$
and thus $\tr(UAU^{\dagger}) = \tr(A)$
holds for linear operators $A, B$ and unitary 
operator $U$, meaning that 
the trace of an operator does not rely on
the choice of basis.
A density operator $\rho$ in $\HH$, is a positive semi-definite operator with trace $1$.
Applying the spectral decomposition theorem, 
we could write it as $\rho = \sum_i p_i\ket{\phi_i}\bra{\phi_i}$,
meaning that it can be regarded as a 
distribution over pure states.
We could also regard pure states $\ket{\phi}$
as rank $1$ density matrices $\ket{\phi}\bra{\phi}$.
We call a positive semi-definite operator
with trace no more than $1$ a partial density operator.

\textit{Quantum Operations}. Evolutions on
pure quantum states on $\HH$ 
are described by the unitary operators $U$ on $\HH$ with $UU^{\dagger} = U^{\dagger}U =I$.
Given any state $\ket{\phi}$,
the evolution of $U$ gives $U\ket{\phi}$.
A more general concept of quantum operations
is quantum channel,
which could be seen as completely-positive
and trace-preserving
linear maps from the set of density operators
in a Hilbert space into 
another set of density operators in some Hilbert space.
For a linear map $\EE: \DD(\HH)\to \DD(H)$, we say 
it is completely positive if 
for all $\HH'$,
the map 
$\EE \otimes I$ maps 
positive semi-definite
operators on $\HH\otimes \HH'$ to positive semi-definite operators,
and it is trace-preserving
if $\tr(\EE(\rho)) = \tr(\rho)$
for any density operator $\rho$.

In this work, we only consider projective measurements.
For a projective measurement $\mathcal{M} = \{P_{i}\}$,
we require all the $P_i$'s are projectors,
and $\sum_i P_i = I$.
Given any density operator $\rho$,
after performing the projective measurement $\mathcal{M} = \{P_{i}\}$,
we will observe the event $i$
with probability $\tr(P_i \rho)$,
with the post measured state 
$P_i\rho P_i/\tr(P_i \rho)$.

An observable, or a quantum predicate,
is 
a positive semi-definite operator 
on $\HH$.
For an observable $A$,
its spectral decomposition
can be written as
$A = \sum \lambda P_{\lambda}$,
where $P_{\lambda}$ is 
the projector onto the eigenspace
corresponding to the eigenvalue $\lambda$.
The 
expection of $A$
in a state $\rho$
is given by 
$\sum \lambda \tr(P_{\lambda} \rho) = \tr(A\rho)$.

The notion partial trace is a special quantum 
operation that discard the state of a system.
Formally, 
for
the Hilbert space $\HH_1\otimes \HH_2$,
we define partial trace over $\HH_1$,
written as $\tr_1$
as a mapping from 
$\Pos(\HH_1\otimes \HH_2)$
to 
$\Pos(\HH_2)$,
that satisfies
\[
\tr_1(A) = \sum_i (\bra{i}\otimes I) A (\ket{i}\otimes I)
\]
where $\ket{i}$ ranges over all computational
basis of $\HH_1$.

For more introductions and explanations
of the above notions,
we refer the readers to \cite{Nielsen_Chuang_2010}.

We now introduce some concepts
that are
useful in the study of quantum programs.

The L\"owner order is a partial order of  positive semi-definite
operators, defined as follows:
for $A, B\in \Pos(\HH)$,
$A\sqsubseteq B$ if and only if $B-A$
is positive semi-definite.

\begin{definition}[Support]\label{def:support}
	Let $P\in \Pos(\mathcal{H})$ be a positive semi-definite operator.
  Then, the support of $P$, denoted as $\supp(P)$, is the subspace of $\mathcal{H}$
  that are spanned by the eigenvectors of $P$ associated with non-zero
  eigenvalues.
\end{definition}

\begin{definition}\label{def:vee-wedge-perp}
  Let $\mathcal{H}$ be a Hilbert space, 
  $X$ and $Y$ be its subspaces.
  The join of $X$ and $Y$ is defined as
  $X\vee Y = \overline{\spanv \{X\cup Y\}}$, 
  where $\overline{\cdot}$ represents the operation 
  of taking the topological closure.
  The meet of $X$ and $Y$ is defined as
  $X\wedge Y = X\cap Y$.
  The orthogonal complement of $X$ is
  $X^{\bot} = \{\ket{\psi}\in \mathcal{H}| \ket{\psi} \perp \ket{\phi}, \forall \ket{\phi}\in X \}$.
\end{definition}

Here are some of the properties of the support 
that are needed for proofs in the subsequent parts.
\begin{proposition}[Properties of the Support]\label{prop:support}
  We have the following properties:
  \begin{itemize}
    \item Let $P, Q\in \Pos(\HH)$, then $\supp(P+Q) = \supp(P) \vee \supp(Q)$;
    \item Let $X_1$ and $X_2$ be subspaces of 
    a Hilbert space $\HH$, then
    $\mathcal{E}(X_{1}\vee X_{2}) = \mathcal{E}(X_{1}) \vee \mathcal{E}(X_{2})$ 
    for any CP maps $\mathcal{E}$.
  \end{itemize}
\end{proposition}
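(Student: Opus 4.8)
The plan is to reduce both statements to the elementary identity $\ker(P+Q)=\ker(P)\cap\ker(Q)$ for positive operators. For the first bullet, recall (\Cref{def:support}) that a PSD operator $P$ satisfies $\supp(P)=\ker(P)^{\bot}$, so by De Morgan for subspaces $(\ker P\cap\ker Q)^{\bot}=\ker(P)^{\bot}\vee\ker(Q)^{\bot}$ (see \Cref{def:vee-wedge-perp}); hence it suffices to prove $\ker(P+Q)=\ker(P)\cap\ker(Q)$. The inclusion $\supseteq$ is immediate, and for $\subseteq$ I would use positivity: if $(P+Q)\ket{\psi}=0$ then $\bra{\psi}P\ket{\psi}+\bra{\psi}Q\ket{\psi}=0$ with both summands nonnegative, so each vanishes, and writing $P=R^{\dagger}R$ a vanishing expectation gives $\norm{R\psi}^{2}=0$, i.e. $P\ket{\psi}=0$; likewise $Q\ket{\psi}=0$. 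A routine induction then extends the first bullet to finite sums, $\supp(\sum_{k}P_{k})=\bigvee_{k}\supp(P_{k})$.

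For the second bullet I would first isolate an auxiliary fact: for PSD operators, $\supp(\mathcal{E}(A))$ depends only on $\supp(A)$. This follows from two observations --- (i) if $A\sqsubseteq B$ are PSD then $\supp(A)\subseteq\supp(B)$ (indeed $B\ket{\psi}=0$ forces $0\le\bra{\psi}A\ket{\psi}\le\bra{\psi}B\ket{\psi}=0$, hence $A\ket{\psi}=0$), and (ii) a CP map preserves the L\"owner order --- together with the elementary fact that two PSD operators with the same support are, in finite dimensions, sandwiched between positive multiples of each other. Adopting the standard convention $\mathcal{E}(X)\triangleq\supp(\mathcal{E}(P_{X}))$, where $P_{X}$ is the projector onto $X$ (this coincides with $\bigvee\{\supp(\mathcal{E}(\rho)):\supp(\rho)\subseteq X\}$), I would then chain equalities: by the first bullet $\supp(P_{X_{1}}+P_{X_{2}})=X_{1}\vee X_{2}=\supp(P_{X_{1}\vee X_{2}})$, so by the auxiliary fact $\mathcal{E}(X_{1}\vee X_{2})=\supp(\mathcal{E}(P_{X_{1}\vee X_{2}}))=\supp(\mathcal{E}(P_{X_{1}}+P_{X_{2}}))$; linearity of $\mathcal{E}$ turns this into $\supp(\mathcal{E}(P_{X_{1}})+\mathcal{E}(P_{X_{2}}))$, and one more application of the first bullet rewrites it as $\supp(\mathcal{E}(P_{X_{1}}))\vee\supp(\mathcal{E}(P_{X_{2}}))=\mathcal{E}(X_{1})\vee\mathcal{E}(X_{2})$.

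No step here is deep; the only care goes into pinning down the convention for applying $\mathcal{E}$ to a subspace and into the auxiliary fact that $\supp$ sees only the support of its PSD argument, after which both bullets collapse to $\ker(P+Q)=\ker(P)\cap\ker(Q)$. As an alternative derivation of the second bullet, one may use a Kraus decomposition $\mathcal{E}=\sum_{k}E_{k}\cdot E_{k}^{\dagger}$: the first bullet (for finite sums) together with $\supp(AA^{\dagger})=\mathrm{range}(A)$ gives $\supp(\mathcal{E}(P_{X}))=\bigvee_{k}\mathrm{range}(E_{k}P_{X})=\bigvee_{k}E_{k}(X)$, and the claim then reduces to $E_{k}(X_{1}\vee X_{2})=E_{k}(X_{1})\vee E_{k}(X_{2})$, which holds because a linear map sends $X_{1}+X_{2}$ onto $E_{k}(X_{1})+E_{k}(X_{2})$ in finite dimensions and $\vee$ is associative.
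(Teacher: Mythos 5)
Your proof is correct. The paper itself states \Cref{prop:support} without proof, treating both items as standard facts, so there is no authorial argument to compare against; your write-up supplies exactly the standard derivation one would expect. The reduction of the first bullet to $\ker(P+Q)=\ker(P)\cap\ker(Q)$ via positivity of the quadratic forms is sound, and for the second bullet you correctly identify the one point that needs care --- fixing the convention $\mathcal{E}(X)=\supp(\mathcal{E}(P_X))$ and establishing that $\supp(\mathcal{E}(A))$ depends only on $\supp(A)$ (via the two-sided L\"owner sandwich, valid in finite dimensions) --- after which the claim follows from the first bullet and linearity. The alternative Kraus-operator argument is likewise correct and is arguably the more direct route, since $\supp(E_k P_X E_k^\dagger)=\mathrm{range}(E_k P_X)$ reduces everything to linear algebra of images of subspaces.
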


\noindent\textit{Infinite-valued predicates.} \\[-0.2cm]

In the following, we introduce the basic notions
that are related to the infinite-valued predicates.
We first formally define the notion of
infinite-valued predicates as follows.

\begin{definition}[Infinite-Valued Predicates]%
\label{def:infinite-valued-predicates}
    Given a Hilbert space $\mathcal{H}$,
$A$ is called an infinite-valued 
predicate on $\mathcal{H}$, 
if it has 
a unique spectral decomposition 
$\{(\lambda_i,X_i)\}_i$,
where $\lambda_i\in \mathbb{R}^{+\infty}$ are its eigenvalues, 
$X_i$'s are projections onto eigenspaces 
that are pairwise orthogonal, 
with $\sum_i X_i = I$. 

The set of all infinite-valued 
predicates is denoted as
$\PosI(\HH)$.
\end{definition}

For the arithmetic operations
related to the $+\infty$,
we make the conventions 
that 
$(+\infty) \cdot 0 = 0 \cdot (+\infty) = 0$,
$(+\infty) + a = a + (+\infty) = +\infty$ for $a\in \mathbb{R}^{+\infty}$, 
$0 / 0 = 0$ (this is used in normalization of quantum states, i.e., $\rho = \tr(\rho)(\rho / \tr(\rho))$ even if $\rho = 0$),
and $+\infty\le +\infty$.


The following lemma enables us 
to represent the infinite-valued predicates 
as two parts, 
namely the ``finite'' and ``infinite'' part.
\begin{lemma}%
\label{lem:IVP-rep}
  For any $A\in \PosI(\HH)$, it can be uniquely represented as $(P_A,X_A)$, written $A\triangleq(P_A,X_A)$, where $P_A\in\Pos(\HH)$, $X_A\in \cS(\HH)$ such that $P_A X_A = 0$, and we will write it as $A = P_A + (+\infty\cdot X_A)$.
\end{lemma}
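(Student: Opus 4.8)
For any $A\in \PosI(\HH)$, it can be uniquely represented as $(P_A,X_A)$ where $P_A\in\Pos(\HH)$, $X_A\in \cS(\HH)$, $P_A X_A = 0$, and $A = P_A + (+\infty\cdot X_A)$.

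The plan is to read off the decomposition directly from the unique spectral decomposition of $A$ supplied by \Cref{def:infinite-valued-predicates}. Writing $A$ as $\{(\lambda_i,X_i)\}_i$ with $\lambda_i\in\mathbb{R}^{+\infty}$, the $X_i$ pairwise orthogonal projectors and $\sum_i X_i = I$, I would partition the index set into the finite part $I_{\mathrm f}=\{i:\lambda_i<+\infty\}$ and the infinite part $I_{\infty}=\{i:\lambda_i=+\infty\}$, and then set
\[
  X_A\eqdef\sum_{i\in I_\infty}X_i,\qquad P_A\eqdef\sum_{i\in I_{\mathrm f}}\lambda_i X_i.
\]
Since a sum of pairwise orthogonal projectors is again an orthogonal projector, $X_A\in\cS(\HH)$; since $P_A$ is a nonnegative combination of pairwise orthogonal projectors with finite coefficients, $P_A\in\Pos(\HH)$; and $P_AX_A=0$ because every $X_j$ with $j\in I_{\mathrm f}$ is orthogonal to every $X_k$ with $k\in I_\infty$.

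Next I would check the identity $A=P_A+(+\infty\cdot X_A)$ at the level of the defining inner products. For any $|\psi\rangle$, the definition $\langle\psi|A|\psi\rangle=\sum_i\lambda_i\langle\psi|X_i|\psi\rangle$ splits, using the conventions $(+\infty)\cdot 0=0$ and $(+\infty)+a=+\infty$, as $\sum_{i\in I_{\mathrm f}}\lambda_i\langle\psi|X_i|\psi\rangle+(+\infty)\sum_{i\in I_\infty}\langle\psi|X_i|\psi\rangle=\langle\psi|P_A|\psi\rangle+(+\infty)\langle\psi|X_A|\psi\rangle$, which is exactly the value assigned to $P_A+(+\infty\cdot X_A)$. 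The same computation shows that $P_A+(+\infty\cdot X_A)$, viewed as an infinite-valued predicate, has spectral data $\{(\lambda_i,X_i)\}_i$, so it is a legitimate element of $\PosI(\HH)$ and equals $A$.

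Finally, for uniqueness, suppose $A=P_A'+(+\infty\cdot X_A')$ with $P_A'\in\Pos(\HH)$, $X_A'\in\cS(\HH)$ and $P_A'X_A'=0$. Since $P_A'$ is supported on $(X_A')^\bot$, the finite and infinite contributions decouple: a unit vector $|\psi\rangle$ satisfies $\langle\psi|A|\psi\rangle=+\infty$ exactly when it has nonzero overlap with $X_A'$. Hence $X_A'$ is the $+\infty$-eigenspace of $A$, so $X_A'=X_A$. Restricting to $(X_A)^\bot$, where the infinite part vanishes, yields $P_A'=P_A$ there, and both $P_A'$ and $P_A$ annihilate $X_A$ by their respective disjointness conditions; therefore $P_A'=P_A$, and the representation is unique. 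I do not expect a genuine obstacle here: the content is bookkeeping on the spectral decomposition, the only point needing mild care being the verification that the formal expression $P_A+(+\infty\cdot X_A)$ genuinely denotes an element of $\PosI(\HH)$ with the claimed spectral decomposition, which is immediate from the inner-product definition.
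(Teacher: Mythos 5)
Your proof is correct and follows essentially the same route as the paper's: define $P_A$ and $X_A$ by splitting the spectral decomposition into finite and infinite eigenvalues, then establish uniqueness by showing the $+\infty$ locus pins down $X_A$ and the restriction to $X_A^\bot$ pins down $P_A$. Your uniqueness step is phrased slightly more cleanly (characterising $X_A'$ via the set of vectors with finite value, rather than the paper's contradiction using a vector in $X_A\cap X_A'^{\bot}$), but the content is the same.
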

\begin{proof}
  Suppose $A = \sum_{j} \lambda_{j} X_{j}$.
  We set $P_{A} = \sum_{j: \lambda_{j} < +\infty} \lambda_{j} X_{j}$
  and $X_{A} = \sum_{j:\lambda_{j} = +\infty} X_{j}$.
  It is clear that $P_{A} X_{A} = 0$.
  Now we prove the uniqueness.
  Suppose $A$ could also be written as $P_{A}'+ \infty \cdot X_{A}'$.
  We first prove $X_{A} = X_{A}'$.
  If not, there must exist a non-zero vector $\ket{\psi}\in X_{A}\cap X_{A}^{\prime\bot}$,
  which satisfies $\bra{\psi} A\ket{\psi} = \bra{\psi} P_{A}  \ket{\psi} + \infty  \bra{\psi} X_{A}\ket{\psi} = +\infty $,
  and $\bra{\psi} A\ket{\psi} = \bra{\psi} P_{A}'  \ket{\psi} + \infty  \bra{\psi} X_{A}'\ket{\psi} = \bra{\psi} P_{A}'  \ket{\psi} < +\infty $,
  a contradiction.
  Then, we have $P_{A} = X_{A}^{\bot} A X_{A}^{\bot} = X_{A}^{\prime \bot} A X_{A}^{\prime\bot} = P_{A}'$ as we desired.
\end{proof}

Here we introduces some basic operations for infinite-valued predicates:

\begin{definition}[Operations of Infinite-Valued Predicate]%
\label{def:operations-infinite-valued-pre}
  The basic
  operations of infinite-valued predicates can be defined as follows.
    \begin{itemize}
    \item The addition for two infinite-valued predicates $A_1, A_2$ is:
    \[ A_1 + A_2 \triangleq ( X^{\perp}\left(P_{A_1} + P_{A_2} \right)X^{\perp}, X),
    \] where $X = X_{A_1} \vee X_{A_2}$.
    \item The tensor product for two infinite-valued predicates $A_1, A_2$ 
    is
    \[ A_1 \otimes A_2 \triangleq \left( P_{A_1} \otimes P_{A_2}, X \right),
    \] where
    $X = \left( \supp \left( P_{A_{1}} \right)\otimes X_{A_{2}}\right)\vee \left( X_{A_{1}} \otimes \supp \left( P_{A_{2}} \right) \right)\vee \left(X_{A_{1}}\otimes X_{A_{2}}\right)$.
    \item For any $\ket{\psi} \in \HH$
    and $A\in \PosI(\HH)$
    with spectral decomposition
    $\{(\lambda_i, X_i)\}$, the inner product $\bra{\psi}A\ket{\psi}$ is defined as
\[
  \bra{\psi}A\ket{\psi} \triangleq \sum_i\lambda_i \bra{\psi}X_{i}\ket{\psi}.
\]
    \item For a density operator $\rho \in \DD(\HH)$, its expectation value on
   an infinite-valued predicate $A$ is 
    \[ \tr (A \rho) \triangleq
    \begin{cases} \tr (P_A\rho),\text{ if } X_A\rho = 0 \\ \infty, \text{
otherwise}
    \end{cases}
    \]
    \item For subspace $X$, $X\mid A \triangleq X\cdot A\cdot X + (+\infty\cdot X^\bot)$, or equivalently,
    $X\mid A \triangleq ((X\vee X_A^\bot) P_{A}(X\vee X_A^\bot), X^\bot\vee X_A)$.
\end{itemize}
\end{definition}

\begin{lemma}[Basic Properties of Operations of Infinite-Valued Predicates]
\label{lemma:basic-properties-of-ops-of-ivp}
We have the following properties for $A, A_1, A_2\in \PosI(\HH)$:
\begin{itemize}
  \item Scalar product $cA$ for $c\in \mathbb{R}^{+\infty}$ is defined such that for all $\ket{\psi}$, $\bra{\psi}cA\ket{\psi}= c\bra{\psi}A\ket{\psi}$.
  \item Addition $A_1+A_2$ such that for all $|\psi\>\in \HH$, $\<\psi|(A_1+A_2)|\psi\> = \<\psi|A_1|\psi\>+\<\psi|A_2|\psi\>$.
  \item Tensor product $A_1\otimes A_2$ such that for all $|\psi_1\>, |\psi_2\>$, $(\<\psi_1|\otimes\<\psi_2|)(A_1\otimes A_2)(|\psi_1\>\otimes|\psi_2\>) = (\<\psi_1|A_1|\psi_1\>)\cdot(\<\psi_2|A_2|\psi_2\>)$.
  \item Let $M$ be a linear operator with $\HH$ as its domain, $M^\dagger AM$ can be defined such that for all $|\psi\>$, $\<\psi|(M^\dagger AM)|\psi\> = \<\phi|A|\phi\>$ where $|\phi\> = M|\psi\>$.
  \item For $P\in \Pos$ 
  with decomposition $P = \sum_ia_i|\psi_i\>\<\psi_i|$ ($0\le a_i$), 
  the trace is $\tr(AP) = \sum_i a_i \<\psi_i| A |\psi_i\>$. 
  Note that the value is unique for any decomposition.
  \item For $\EE\in \QO$ (more generally, CP maps) with Kraus operators $\{E_i\}$, $\EE^\dagger(A) = \sum_i E_i^\dagger A E_i$. Note that it is unique for arbitrary Kraus operators.
  \item $A_1 = A_2$ if for all $|\psi\>$, $\bra{\psi}A_1\ket{\psi} = \bra{\psi}A_2\ket{\psi}$.
  \item $A_1\sqsubseteq A_2$ if for all $|\psi\>$, $\<\psi|A_1|\psi\> \le \<\psi|A_2|\psi\>$.
\end{itemize}
\end{lemma}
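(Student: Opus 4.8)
The plan is to reduce every clause to the two-part representation $A = P_A + (+\infty\cdot X_A)$ of \Cref{lem:IVP-rep}, together with the closed form of the quadratic form it induces: $\<\psi|A|\psi\> = \<\psi|P_A|\psi\>$ when $X_A\ket{\psi} = 0$, and $\<\psi|A|\psi\> = +\infty$ otherwise (using $(+\infty)\cdot 0 = 0$ to discard the infinite eigenspace precisely when $\ket{\psi}$ is orthogonal to $X_A$). I would first prove the penultimate clause, that $A$ is determined by its quadratic form, since it underwrites the phrase ``is defined such that'' in the scalar, conjugation, and dual clauses. The argument is that the finiteness locus $\{\ket{\psi} : \<\psi|A|\psi\> < +\infty\}$ is exactly the subspace $X_A^\bot$, which recovers $X_A$; and since $P_A X_A = 0$ forces $\supp(P_A)\subseteq X_A^\bot$, the restriction of the quadratic form to $X_A^\bot$ recovers $P_A$ by the classical polarization identity for PSD operators. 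Antisymmetry of $\sqsubseteq$ is then immediate, and together with reflexivity and transitivity inherited from $\le$ on $[0,+\infty]$ this yields the final (Löwner order) clause.

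Next I would verify the three algebraic operations. For the scalar product $cA$ I would exhibit an explicit witness by cases---$c = 0$ gives the zero operator (even on $X_A$, by $0\cdot(+\infty) = 0$), $0 < c < +\infty$ gives the pair $(cP_A, X_A)$, and $c = +\infty$ gives $(0,\ \supp(P_A)\vee X_A)$---and check that its quadratic form equals $c\<\psi|A|\psi\>$; uniqueness then comes from the first paragraph. For addition I would take the explicit pair from \Cref{def:operations-infinite-valued-pre}, with infinite locus $X = X_{A_1}\vee X_{A_2}$, and split on whether $X\ket{\psi} = 0$: if not, then $\ket{\psi}$ fails to be orthogonal to $X_{A_1}$ or to $X_{A_2}$, so one summand of $\<\psi|A_1|\psi\> + \<\psi|A_2|\psi\>$ is $+\infty$ and both sides are $+\infty$; if $X\ket{\psi} = 0$, then both infinite parts vanish, the conjugation by $X^\bot$ acts trivially on $\ket{\psi}$, and both sides equal $\<\psi|(P_{A_1}+P_{A_2})|\psi\>$. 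For the tensor product I would check the product-vector identity directly, confirming that the three-way join defining $X$ marks exactly those $\ket{\psi_1}\otimes\ket{\psi_2}$ whose value $\<\psi_1|A_1|\psi_1\>\cdot\<\psi_2|A_2|\psi_2\>$ is $+\infty$ under $(+\infty)\cdot 0 = 0$, namely those for which one factor meets an infinite eigenspace while the other is not annihilated by the corresponding predicate.

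Finally I would treat conjugation, trace, and dual. The operator $M^\dagger A M$ I would define by the pulled-back quadratic form $\ket{\psi}\mapsto\<M\psi|A|M\psi\>$; its finiteness locus is $\ker(X_A M)$ and its finite part is the compression of $M^\dagger P_A M$ to that subspace, so it is a bona fide element of $\PosI(\HH)$, unique by the first paragraph. For the trace I would expand $\sum_i a_i\<\psi_i|A|\psi_i\> = \tr(P_A P) + (+\infty)\cdot\tr(X_A P)$ and note that $\tr(X_A P) = 0$ iff $X_A P = 0$ because $P$ is PSD, recovering the case split of \Cref{def:operations-infinite-valued-pre} in a manner manifestly independent of the chosen decomposition of $P$. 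The dual $\EE^\dagger(A) = \sum_i E_i^\dagger A E_i$ is then a finite sum of conjugations (finite-dimensionality bounds the number of Kraus operators), whose quadratic form is $\sum_i\<E_i\psi|A|E_i\psi\>$. I expect the one genuinely delicate point to be independence from the Kraus representation: for finite-valued operators the usual cross-term computation applies, but for infinite-valued $A$ the terms $E_i^\dagger A E_k$ with $i\neq k$ are not individually meaningful, so I would instead split the quadratic form into its finite part $\<\psi|\EE^\dagger(P_A)|\psi\>$ and its infinite part governed by $\<\psi|\EE^\dagger(X_A)|\psi\>$, each of which is Kraus-independent by the standard finite-valued theory, with $(+\infty)\cdot(\,\cdot\,)$ preserving that independence. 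Matching the tensor-product join to the multiplication convention is the only other step where I would slow down; everything else is bookkeeping on the two-part representation.
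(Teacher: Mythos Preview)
Your proposal is correct and follows essentially the same route as the paper: reduce everything to the two-part representation $(P_A,X_A)$ and verify the quadratic form by a case split on whether the relevant vector meets the infinite eigenspace. The only real difference is organizational---you front-load the uniqueness clause (item~7) so that the phrase ``is defined such that'' in the scalar, conjugation, and dual items becomes a genuine well-definedness statement, whereas the paper proves the items in the listed order and leaves that dependency implicit; your ordering is arguably cleaner, and your Kraus-independence argument for $\EE^\dagger(A)$ via the split $\EE^\dagger(P_A) + (+\infty)\cdot\EE^\dagger(X_A)$ is spelled out more explicitly than in the paper, but neither change amounts to a different proof.
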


For readability, we postpone the proof of the above lemma
to Appendix~\ref{sec:postponed-technical-proofs}.

\begin{lemma}[Algebraic Properties]%
\label{lem: IVP algebraic}
  In the following, 
  let $a,b,c\in \mathbb{R}^{+\infty}$, $A,A_1,A_2 \in\PosI$, $M,M_1,M_2,\cdots\in \LL$, and $P,P_1,P_2\cdots\in\Pos$.
  We have the following properties:
  \begin{itemize}
    \item $0A = 0$, $1A = A$, $a(bA) = (ab)A$;
    \item $0+A = A+0 = A$, $A_1 + A_2 = A_2 + A_1$, $A_1 + (A_2 + A_3) = (A_1 + A_2) + A_3$;
    \item $0\otimes A = A\otimes 0 = 0$, $A_1\otimes (A_2 \otimes A_3) = (A_1 \otimes A_2) \otimes A_3$;
    \item $A\otimes (cA_1 + A_2) = c(A\otimes A_1) + (A\otimes A_2)$;
          $(cA_1 + A_2)\otimes A = c(A_1\otimes A) + (A_2\otimes A)$;
    \item $0^\dagger A0 = 0$, $M_2^\dagger (M_1^\dagger AM_1) M_2 = (M_1M_2)^\dagger A (M_1M_2)$; 
          $M^\dagger(cA_1+A_2)M = c(M^\dagger A_1M) + M^\dagger A_2M$;
    \item $(M_1\otimes M_2)^\dagger (A_1\otimes A_2) (M_1\otimes M_2) = 
           (M_1^\dagger A_1M_1)\otimes (M_2^\dagger A_2M_2)$;
    \item $\tr(A(cP_1+P_2)) = c\tr(AP_1) + \tr(AP_2)$;
          $\tr((cA_1 + A_2)P) = c\tr(A_1P) + \tr(A_2P)$;
    \item $\tr((A_1\otimes A_2)(P_1\otimes P_2)) = \tr(A_1P_1)\tr(A_2P_2)$;
          $\tr((M^\dagger AM)P) = \tr(A (MPM^\dagger))$.
    \item $\tr((A\otimes I)P) = \tr(A\tr_2(P))$;
          $\tr((I\otimes A)P) = \tr(A\tr_1(P))$;
    \item $\tr(A\ket{\phi}\bra{\phi}) = \bra{\phi}A\ket{\phi}$.
    \item $A_1=A_2$ iff for all $P\in\Pos$ (or $P\in\DD$) such that $\tr(A_1 P) = \tr(A_2 P)$;
    \item $A_1\sqsubseteq A_2$ iff for all $P\in\Pos$ (or $P\in\DD$) such that $\tr(A_1 P) \le \tr(A_2 P)$;
    \item $A_1\sqsubseteq A_2$ implies $M^\dagger A_1M\sqsubseteq M^\dagger A_2M$; $A_1\sqsubseteq A_2$ and $A_3\sqsubseteq A_4$ implies $cA_1+A_3\sqsubseteq cA_2+A_4$.
  \end{itemize}
  As direct corollaries, for CP map $\EE, \EE_1,\EE_2$,
  \begin{itemize}
    \item $\tr(A\EE(P)) = \tr(\EE^\dagger(A) P)$;
          $A_1\sqsubseteq A_2$ implies $\EE (A_1)\sqsubseteq \EE (A_2)$;
    \item $(c\EE_1 + \EE_2)(A) = c\EE_1(A) + \EE_2(A)$;
          $\EE(cA_1+A_2) = c\EE(A_1) + \EE(A_2)$;
    \item $\EE_2(\EE_1(A)) = (\EE_2\circ\EE_1)(A)$;
          $(\EE_1\otimes\EE_2)(A_1\otimes A_2) = \EE_1(A_1)\otimes \EE_2(A_2)$.
  \end{itemize}
\end{lemma}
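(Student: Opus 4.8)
The plan is to reduce every identity to the \emph{characterising properties} collected in \Cref{lemma:basic-properties-of-ops-of-ivp}: an element $A\in\PosI(\HH)$ is uniquely determined by the numbers $\<\psi|A|\psi\>$, $|\psi\>\in\HH$ (equivalently by $\tr(AP)$, $P\in\Pos(\HH)$, since $\tr(A|\psi\>\<\psi|)=\<\psi|A|\psi\>$ by the rank-one trace clause), and each of scalar multiplication, addition, conjugation $M^\dagger(\cdot)M$, and the trace pairing is \emph{defined} so as to act additively/multiplicatively on these numbers. For all the identities that involve only $\{0,1,\cdot\}$, $\{+\}$, and $\{M^\dagger(\cdot)M\}$ (the first, second and fifth bullets, and the additive distributivity of conjugation), I would fix an arbitrary $|\psi\>$, expand both sides by those defining equations, and check the resulting identity of extended reals using the conventions $(+\infty)\cdot 0=0$, $(+\infty)+a=+\infty$, $a\le+\infty$; e.g.\ $\<\psi|M_2^\dagger(M_1^\dagger AM_1)M_2|\psi\>=\<M_1M_2\psi|A|M_1M_2\psi\>=\<\psi|(M_1M_2)^\dagger A(M_1M_2)|\psi\>$. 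Uniqueness then yields the operator identity. The clauses relating $=$ and $\sqsubseteq$ to traces are immediate: one direction is well-definedness of $\tr(\cdot\,P)$ as a function of the predicate, the other is instantiation at $P=|\psi\>\<\psi|$; the monotonicity implications in the last bullet follow pointwise, e.g.\ $\<\psi|M^\dagger A_1M|\psi\>=\<M\psi|A_1|M\psi\>\le\<M\psi|A_2|M\psi\>=\<\psi|M^\dagger A_2M|\psi\>$.

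For the trace identities I would write $P$ in a rank-one decomposition $P=\sum_i a_i|\psi_i\>\<\psi_i|$ with $a_i\ge 0$ and use $\tr(AP)=\sum_i a_i\<\psi_i|A|\psi_i\>$ (well-defined, independent of the decomposition, by \Cref{lemma:basic-properties-of-ops-of-ivp}), reducing each statement to an already-established inner-product identity plus linearity of the ordinary (finite) partial trace. For instance $\tr((A\otimes I)P)=\sum_i a_i\<\psi_i|A\otimes I|\psi_i\>$, and expanding each $|\psi_i\>$ in a product basis together with $\<u\otimes v|A\otimes I|u\otimes v\>=\<u|A|u\>\,\|v\|^2$ matches $\tr(A\tr_2(P))$; similarly $\tr((M^\dagger AM)P)=\sum_i a_i\<M\psi_i|A|M\psi_i\>=\tr\bigl(A(MPM^\dagger)\bigr)$ since $MPM^\dagger=\sum_i a_i\,M|\psi_i\>\<\psi_i|M^\dagger$.

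The tensor-product identities (the third, fourth and sixth bullets, and their consequences for $\otimes$ of CP maps) are the main obstacle, because the inner-product characterisation only pins down $A_1\otimes A_2$ on \emph{product} vectors, which do not span $\HH_1\otimes\HH_2$. Here I would work from the explicit $(P,X)$-representation of \Cref{def:operations-infinite-valued-pre}: by \Cref{lem:IVP-rep} it suffices to match, separately, the finite parts $P_{(\cdot)}$ and the $+\infty$-eigenspaces $X_{(\cdot)}$ of the two sides. Matching finite parts is ordinary matrix algebra ($P_{A_1}\otimes P_{A_2}$, conjugation, bilinearity). Matching $+\infty$-eigenspaces reduces to subspace-lattice identities: associativity and bilinearity of $\otimes$ over $\vee$, the distributivity $(Y\otimes X_1)\vee(Y\otimes X_2)=Y\otimes(X_1\vee X_2)$, and the fact that for $\EE(\cdot)=M^\dagger(\cdot)M$ one has $\supp(M^\dagger XM)=\EE(X)$ (since $M^\dagger XM=(M^\dagger X)(M^\dagger X)^\dagger$, whose support is the image of $X$ under $M^\dagger$) together with $\EE(X_1\vee X_2)=\EE(X_1)\vee\EE(X_2)$ from \Cref{prop:support}; the tensor case uses $\EE=\EE_1\otimes\EE_2$ with $(\EE_1\otimes\EE_2)(X_1\otimes X_2)=\EE_1(X_1)\otimes\EE_2(X_2)$. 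Once these lattice facts are in place, the corollaries for CP maps follow by fixing a Kraus decomposition $\{E_i\}$ of $\EE$, using $\EE^\dagger(A)=\sum_i E_i^\dagger AE_i$ and $(\EE_1\otimes\EE_2)^\dagger=\EE_1^\dagger\otimes\EE_2^\dagger$ on Kraus operators, and summing the corresponding single-operator identities termwise; linearity and monotonicity of the predicate transformers are then inherited term by term.
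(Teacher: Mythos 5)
Your proposal is correct and follows essentially the same route as the paper's proof: pointwise verification of the quadratic-form identities using the conventions on $\mathbb{R}^{+\infty}$, the explicit $(P_A,X_A)$-representation of \Cref{lem:IVP-rep} together with the support-lattice identities of \Cref{prop:support} for the tensor-product clauses (where, as you rightly note, product vectors alone cannot pin down the operator), and termwise Kraus arguments for the CP-map corollaries. One small repair: for $\tr((A\otimes I)P)=\tr(A\tr_2(P))$ you should not expand the (generally entangled) eigenvectors of $P$ in a product basis, since cross terms of the infinite-valued quadratic form are undefined; instead use the representation $A\otimes I=P_A\otimes I+(+\infty)\cdot(X_A\otimes I)$ you already committed to, which reduces the claim to the finite case plus the equivalence $(X_A\otimes I)P=0\iff X_A\tr_2(P)=0$ --- this is also how the paper argues.
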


For readability, we postpone the proof of the above lemma
to Appendix~\ref{sec:postponed-technical-proofs}.

\section{Deferred Proofs in ``Quantum Coupling'' Section}
\label{app: coupling}


\begin{theorem}[\Cref{thm:quantum strassen defect alter}]
  For any $\rho_1 \in \DD(\HH_1)$ and $\rho_2 \in \DD(\HH_2)$ with $\tr(\rho_1) = \tr(\rho_2)$, for any defect $\epsilon \in \RR^{+\infty}$ and for any $X \in \Pos(\HH_1 \otimes \HH_2)$, the following are equivalent:
  \begin{enumerate}
    \item $\rho_1 X^\#_\epsilon \rho_2$;
    \item For any $Y_1 \in \Pos(\HH_1)$ and $Y_2 \in \Pos(\HH_2)$ such that $X \geqlow Y_1 \otimes I_2 - I_1 \otimes Y_2$, it holds that 
    \[
      \tr(Y_1\rho_1) \leq \tr(Y_2\rho_2) + \epsilon
    \]
  \end{enumerate}
\end{theorem}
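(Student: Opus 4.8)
The plan is to prove this via semidefinite programming duality, following the template of \cite{zhou2018quantumcouplingstrassentheorem,QuantumOptimalTransport_Cole_2023} but adapted to accommodate the quantitative defect $\epsilon$ and the positive (rather than projective) cost $X$. First I would set up the primal problem: for fixed $\rho_1, \rho_2$ with $\tr(\rho_1) = \tr(\rho_2)$ and fixed $X \in \Pos(\HH_1 \otimes \HH_2)$, consider
\[
  p^\ast = \min_{\rho} \tr(X\rho) \quad \text{subject to} \quad \rho \geqlow 0,\ \tr_2(\rho) = \rho_1,\ \tr_1(\rho) = \rho_2.
\]
The feasible set is nonempty (e.g.\ take $\rho = \rho_1 \otimes \rho_2 / \tr(\rho_1)$, suitably normalized, when traces match) and compact, so the minimum is attained; statement~(1), i.e.\ $\rho_1 X^\#_\epsilon \rho_2$, is then exactly the assertion $p^\ast \leq \epsilon$. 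The constraint $\tr_2(\rho) = \rho_1$ already forces $\tr(\rho) = \tr(\rho_1)$, so there is no need for a separate trace constraint.

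Next I would compute the SDP dual. Introducing Hermitian multipliers $Y_1 \in \Herm(\HH_1)$ for $\tr_2(\rho) = \rho_1$ and $Y_2 \in \Herm(\HH_2)$ for $-\tr_1(\rho) = -\rho_2$ (the sign chosen to match the direction in the statement), the Lagrangian is
\[
  \tr(X\rho) - \tr\big(Y_1(\tr_2(\rho) - \rho_1)\big) + \tr\big(Y_2(\tr_1(\rho) - \rho_2)\big)
   = \tr\big((X - Y_1\otimes I_2 + I_1 \otimes Y_2)\rho\big) + \tr(Y_1\rho_1) - \tr(Y_2\rho_2),
\]
using $\tr(Y_1 \tr_2(\rho)) = \tr((Y_1\otimes I_2)\rho)$ and similarly for $Y_2$. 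Minimizing over $\rho \geqlow 0$ gives $-\infty$ unless $X - Y_1 \otimes I_2 + I_1 \otimes Y_2 \geqlow 0$, i.e.\ $X \geqlow Y_1 \otimes I_2 - I_1 \otimes Y_2$, in which case the infimum is $0$. Hence the dual problem is
\[
  d^\ast = \max_{Y_1, Y_2} \big(\tr(Y_1\rho_1) - \tr(Y_2\rho_2)\big) \quad \text{subject to} \quad X \geqlow Y_1\otimes I_2 - I_1 \otimes Y_2.
\]
I would then argue that the $Y_1, Y_2$ can be restricted to the PSD cone without loss of generality: shifting $Y_1 \mapsto Y_1 + cI_1$ and $Y_2 \mapsto Y_2 + cI_2$ leaves both the constraint and the objective invariant (since $\tr(\rho_1) = \tr(\rho_2)$), so by choosing $c$ large enough we may assume $0 \leqlow Y_1$ and $0 \leqlow Y_2$. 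This matches $\mathcal{Y}$ in statement~(2). Thus statement~(2) is precisely $d^\ast \leq \epsilon$.

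It remains to show $p^\ast = d^\ast$, after which $(1) \iff (2)$ is immediate (handling $\epsilon = +\infty$ as a trivial case where both statements hold). Weak duality $d^\ast \leq p^\ast$ is the routine direction and also gives soundness of the duality rule elsewhere. For strong duality I would invoke Slater's condition: the primal has a strictly feasible point in the relative interior — e.g.\ if $\rho_1, \rho_2$ have full support one can take $\rho$ to be a full-rank coupling such as (a normalization of) $\rho_1 \otimes \rho_2$, and the general case follows by restricting to the supports of $\rho_1$ and $\rho_2$ and noting $X$ remains PSD there. Compactness of the primal feasible set then ensures the primal optimum is attained, yielding a witness coupling $\rho$ with $\tr(X\rho) = p^\ast = d^\ast \leq \epsilon$ whenever~(2) holds. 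The main obstacle I anticipate is the careful bookkeeping around degenerate cases: when $\rho_1$ or $\rho_2$ is rank-deficient (or zero) Slater's condition must be applied on the appropriate subspace, and when $X$ has a nontrivial kernel one must check the dual is still attained or at least that the supremum equals $p^\ast$; these are the same technical points handled in \cite{zhou2018quantumcouplingstrassentheorem}, so I would lean on that argument, adapting only the objective to carry the $-\tr(Y_2\rho_2)$ term and the constant $\epsilon$.
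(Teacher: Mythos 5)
Your proposal is correct and follows essentially the same route as the paper: both reduce the statement to strong duality for the marginal-constraint SDP of Zhou et al., and both use the hypothesis $\tr(\rho_1)=\tr(\rho_2)$ to shift the Hermitian dual multipliers into the PSD cone without changing the objective or the constraint. The only cosmetic difference is that you minimize $\tr(X\rho)$ directly while the paper maximizes $\tr((I-X)Z)$ (an affine reparametrization on the feasible set, where $\tr(Z)=\tr(\rho_1)$ is fixed), and you justify the zero duality gap via primal Slater plus support restriction where a strictly feasible dual point (e.g.\ $Y_1=0$, $Y_2=I$) would dispose of the rank-deficient cases more cleanly.
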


\begin{proof}
  If $\epsilon = +\infty$, then both (1) and (2) trivially hold. So we consider the case that $\epsilon\in \RR^+$, i.e., $\epsilon$ is finite.
  \begin{itemize}
      \item  $(1 \Longrightarrow 2)$. Suppose $\rho_1 X^\#_\epsilon \rho_2$, let $\rho : \< \rho_1,\rho_2 \>$ be the witness such that $\tr(X \rho) \leq \epsilon$. Then for any Hermitian $Y_1, Y_2$, if $X \geq Y_1 \otimes I - I \otimes Y_2$, we have $\tr(Y_1\rho_1) = \tr((Y_1 \otimes I)\rho) \leq \tr((X + I\otimes Y_2)\rho) \leq \tr((I \otimes Y_2)\rho)+\epsilon = \tr(Y_2\rho_2) + \epsilon$, where the second last step uses the assumption.
      \item $(2 \Longrightarrow 1)$. In this part of the proof, we write $\langle A, B \rangle$ to mean the Hilbert-Schmidt inner product $\langle A, B \rangle \eqdef \tr(A^\dagger B)$. The original proof \cite{zhou2018quantumcouplingstrassentheorem} considers a semi-definite program $(\Phi, A, B)$. The primal formulation is that of maximising $\langle A, Z\rangle$, subject to $\Phi(Z) = B, Z \in \Pos(\HH_1 \otimes \HH_2)$, and the dual one is that of minimising $\langle B, Y\rangle$ subject to $\Phi^\dagger(Y) \geq A, Y \in \Herm(\HH_1 \oplus \HH_2)$, where:
      \begin{gather*}
        A = I - X, B = \begin{pmatrix}
          \rho_1 & \\ & \rho_2
        \end{pmatrix}\\
        \Phi(Z) = \begin{pmatrix}
          \tr_2(Z) & \\ & \tr_1(Z)
        \end{pmatrix}\\
        \Phi^\dagger (Y) = \Phi^\dagger \begin{pmatrix}
          Y_1 & \cdot \\ \cdot & Y_2
        \end{pmatrix} = Y_1 \otimes I_2 + I_1 \otimes Y_2
      \end{gather*}
      The above formulation can be shown to satisfy strong duality, meaning that the optima for the primal and dual problems exist and are equal.
      
      Then, let us consider for all Hermitians $Y_1\in \Herm(\HH_1)$, $Y_2 \in \Herm(\HH_2)$ satisfying $Y_1 \otimes I_2 + I_1 \otimes Y_2 \geq I - X$, observe that
      \begin{align*}
          \<B,Y\> &= \tr(Y_1\rho_1 + Y_2\rho_2) \\
          &= \tr(Y_2' \rho_2) - \tr( Y_1' \rho_1) + \tr(\rho_1) \\
          &\geq \tr(\rho_1) - \epsilon
      \end{align*}
      where $Y_2' = Y_2 + nI$, $Y_1' = (n+1)I - Y_1$ with sufficiently large $n\in \RR^+$ (e.g., bigger than all singular values of $Y_1$ and $Y_2$) such that $Y_2'$ and $Y_1'$ are both positive. The second line is derived by using the assumption $\tr(\rho_1) = \tr(\rho_2)$. By condition (2) since $Y_1'\otimes I_2 - I_1\otimes Y_2' = I - (Y_1\otimes I_2 + I_1\otimes Y_2) \sqsubseteq I - (I - X) = X$, we have $\tr(Y_2' \rho_2) - \tr( Y_1' \rho_1)\geq -\epsilon$, and this leads to third line. \\
      By strong duality, we have $\<I - X, Z_{\max}\> \ge \tr(\rho_1) - \epsilon$, or equivalently, $\tr(Z_{\max}) - \tr(X Z_{\max}) \ge \tr(\rho_1) - \epsilon$. Since $\tr(Z_{\max}) = \tr(\rho_1)$, 
      we have $\tr(X Z_{\max}) \le \epsilon$,
      which says that $Z_{\max}$ is a witness of the lifting $\rho_1 X^\#_\epsilon \rho_2$.
  \end{itemize}
\end{proof}

\begin{lemma}[\Cref{lemma:traceequiv}]
  Let $\rho: \langle \rho_1, \rho_2 \rangle$. Then, $\tr(\rho) = \tr(\rho_1) = \tr(\rho_2)$.
\end{lemma}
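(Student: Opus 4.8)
The plan is to unfold the definition of a quantum coupling and then invoke the elementary fact that the partial trace preserves the total trace. By \Cref{def:coupling}, the hypothesis $\rho : \langle \rho_1, \rho_2 \rangle$ unpacks to: $\rho \in \DD(\HH_1 \otimes \HH_2)$, $\tr_2(\rho) = \rho_1$, and $\tr_1(\rho) = \rho_2$. Hence it suffices to establish that $\tr(\tr_2(\rho)) = \tr(\rho)$ and $\tr(\tr_1(\rho)) = \tr(\rho)$, from which both claimed equalities follow by substitution.

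For the auxiliary fact, I would use the coordinate definition of the partial trace recalled in the appendix, namely $\tr_1(A) = \sum_i (\bra{i} \otimes I)\, A\, (\ket{i} \otimes I)$ with $\{\ket{i}\}$ an orthonormal basis of $\HH_1$. Taking the trace over $\HH_2$, using linearity, cyclicity of the trace, and the completeness relation $\sum_i \ket{i}\bra{i} = I_{\HH_1}$, one computes $\tr(\tr_1(A)) = \sum_i \tr\big((\ket{i}\bra{i} \otimes I)\, A\big) = \tr\big((I \otimes I)\, A\big) = \tr(A)$. The argument for $\tr_2$ is identical after swapping the roles of the two tensor factors. Equivalently, one may simply cite that the partial trace is a quantum channel and hence trace-preserving.

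I do not anticipate any real obstacle: the only points requiring a moment's care are the use of cyclicity to bring $\ket{i}\bra{i}$ into position and the application of the completeness relation. Specializing the auxiliary fact to $A = \rho$ yields $\tr(\rho_1) = \tr(\tr_2(\rho)) = \tr(\rho) = \tr(\tr_1(\rho)) = \tr(\rho_2)$, which is exactly the statement. Note that nothing here uses the trace-$1$ normalization, so the result holds for arbitrary partial density operators; this is precisely why equality of traces becomes a genuine necessary condition for two partial density operators to be coupled, as the subsequent discussion in the paper exploits.
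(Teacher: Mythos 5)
Your proof is correct and follows essentially the same route as the paper's one-line argument, which likewise just chains $\tr(\rho)=\tr(\tr_1(\rho))=\tr(\rho_2)$ and $\tr(\rho)=\tr(\tr_2(\rho))=\tr(\rho_1)$; you merely spell out the standard fact that the partial trace preserves the total trace, which the paper takes for granted.
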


\begin{proof}
  This is direct by noting
  $\tr(\rho) = \tr(\tr_1(\rho)) = \tr(\rho_1) = \tr(\tr_2(\rho)) = \tr(\rho_2)$.
\end{proof}


Before proving the relationship between $\star$-coupling and partial coupling and its variant of Strassen's theorem, i.e., \Cref{lemma: relation star coupling} and \Cref{thm: strassen partial coupling}, we first introduce some useful definitions:
\begin{itemize}
  \item For any Hilbert space $\HH$, we additionally extend it to $\HH^\star$ with one-dimension denoted by $|\star\>$. Let $P_\star = |\star\>\<\star|$ the projection of $\star$ space and $P_\star^\bot = I_\star - P_\star$ the projection to original space. (To avoid ambiguity, we write $I_\star$ for the identity of $\HH^\star$.)
  \item For any $\rho\in\DD(\HH)$, we define the star-extension $\rho^\star \triangleq (1-\tr(\rho))|\star\>\<\star| + \rho \in \DD^1(\HH^\star)$. Obviously, $P_\star^\bot \rho^\star P_\star^\bot = \rho$.
  \item For $\rho_1\in\DD(\HH_1)$ and $\rho_2\in\DD(\HH_2)$, we say $\rho\in\DD(\HH_1^\star\otimes\HH_2^\star)$ is a $\star$-coupling of $\rho_1$ and $\rho_2$, written $\rho:\<\rho_1,\rho_2\>_\star$, if $\rho:\<\rho_1^\star,\rho_2^\star\>$.
  \item For any $A\in\Pos(\HH_1\otimes\HH_2)$, we define $A^\star \in \Pos(\HH_1^\star\otimes\HH_2^\star)$ as the embedding of $A$.
  \item For $\rho\in\DD(\HH_1^\star\otimes\HH_2^\star)$, we define the projection:
  $$\Pi_{\star}^\bot(\rho) = (P_\star^\bot\otimes P_\star^\bot)\rho(P_\star^\bot\otimes P_\star^\bot) \in \DD(\HH_1\otimes\HH_2).$$
  \item For any $\rho\in\DD(\HH_1\otimes\HH_2)$ such that $\rho : \<\rho_1,\rho_2\>_p$, define 
  \begin{align*}
    \rho^\uparrow = &(1+\tr(\rho) - \tr(\rho_1) - \tr(\rho_2))|\star\star\>\<\star\star|\ + \\ 
    & (\rho_1 - \tr_2(\rho)) \otimes |\star\>\<\star| + |\star\>\<\star|\otimes (\rho_2 - \tr_1(\rho)) + \rho.
  \end{align*}
\end{itemize}

\begin{proposition}[Relation to $\star$-coupling]
  \label{lemma: relation star coupling}
  For given $\rho_1\in\DD(\HH_1)$, $\rho_2\in\DD(\HH_2)$ and $A\in\Pos(\HH_1\otimes\HH_2)$, we claim that:
  \begin{enumerate}
    \item Any $\star$-coupling provide a partial-coupling, i.e., $\Pi_{\star}^\bot(\rho) : \<\rho_1,\rho_2\>_p$ if $\rho : \<\rho_1,\rho_2\>_\star$.
    \item We can construct $\star$-coupling from a partial-coupling, i.e., $\rho^\uparrow : \<\rho_1,\rho_2\>_\star$ if $\rho : \<\rho_1,\rho_2\>_p$. In fact, $\Pi_{\star}^\bot(\rho^\uparrow) = \rho$.
    \item If $\rho : \<\rho_1,\rho_2\>_\star$, $\tr(A^\star\rho) = \tr(A\Pi_\star^\bot(\rho))$. As a consequence, if $\rho : \<\rho_1,\rho_2\>_p$, then $\tr(A^\star\rho^\uparrow) = \tr(A\rho)$.
  \end{enumerate}
\end{proposition}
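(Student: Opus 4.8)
The plan is to prove all three items by splitting every operator on $\HH_i^\star$ into its blocks with respect to the decomposition $I_\star = P_\star + P_\star^\bot$ and computing directly; the main structural point is that $\rho^\uparrow$ is already exhibited as a sum of four terms supported in the four blocks $P_\star^\bot\otimes P_\star^\bot$, $P_\star^\bot\otimes P_\star$, $P_\star\otimes P_\star^\bot$, $P_\star\otimes P_\star$ of $\HH_1^\star\otimes\HH_2^\star$, hence mutually orthogonal. I would dispatch item~3 first, since it is immediate: by construction $A^\star$ is supported in the top-left block, i.e. $A^\star = (P_\star^\bot\otimes P_\star^\bot)A^\star(P_\star^\bot\otimes P_\star^\bot)$, so cyclicity of the trace gives $\tr(A^\star\rho) = \tr\!\big(A^\star(P_\star^\bot\otimes P_\star^\bot)\rho(P_\star^\bot\otimes P_\star^\bot)\big) = \tr(A^\star\,\Pi_\star^\bot(\rho))$; and since $A^\star$ restricts to $A$ on $\HH_1\otimes\HH_2$ while $\Pi_\star^\bot(\rho)\in\DD(\HH_1\otimes\HH_2)$, this equals $\tr(A\,\Pi_\star^\bot(\rho))$. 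The stated consequence then follows by combining this identity with the equality $\Pi_\star^\bot(\rho^\uparrow) = \rho$ proved in item~2.

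For item~2, positivity of $\rho^\uparrow$ reduces, by the orthogonality of the four blocks, to positivity of each summand, which is exactly the content of the three partial-coupling hypotheses ($1 + \tr(\rho) - \tr(\rho_1) - \tr(\rho_2) \ge 0$, $\tr_2(\rho)\sqsubseteq\rho_1$, $\tr_1(\rho)\sqsubseteq\rho_2$) together with $\rho\ge 0$. Summing the traces of the four terms and using $\tr(\tr_2(\rho)) = \tr(\rho) = \tr(\tr_1(\rho))$ telescopes to $\tr(\rho^\uparrow) = 1$. To compute the marginals I would apply $\tr_2$ termwise: the coefficients of $|\star\>\<\star|$ collect to $(1 + \tr(\rho) - \tr(\rho_1) - \tr(\rho_2)) + (\tr(\rho_2) - \tr(\rho)) = 1 - \tr(\rho_1)$, and the remaining part collects to $(\rho_1 - \tr_2(\rho)) + \tr_2(\rho) = \rho_1$, so $\tr_2(\rho^\uparrow) = (1-\tr(\rho_1))|\star\>\<\star| + \rho_1 = \rho_1^\star$; the computation of $\tr_1(\rho^\uparrow) = \rho_2^\star$ is symmetric, so $\rho^\uparrow : \<\rho_1^\star,\rho_2^\star\>$, which is by definition $\rho^\uparrow:\<\rho_1,\rho_2\>_\star$. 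Finally, conjugating $\rho^\uparrow$ by $P_\star^\bot\otimes P_\star^\bot$ annihilates the three terms carrying a $|\star\>$ factor on at least one side and fixes $\rho$, giving $\Pi_\star^\bot(\rho^\uparrow) = \rho$.

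For item~1, put $\sigma = \Pi_\star^\bot(\rho)$ with $\rho:\<\rho_1^\star,\rho_2^\star\>$, so in particular $\tr(\rho) = \tr(\rho_1^\star) = 1$. The two marginal bounds are symmetric, so I focus on $\tr_2(\sigma)\sqsubseteq\rho_1$: I would route through $\tau \eqdef (P_\star^\bot\otimes I_\star)\rho(P_\star^\bot\otimes I_\star) \ge 0$, which satisfies $\tr_2(\tau) = P_\star^\bot\tr_2(\rho)P_\star^\bot = P_\star^\bot\rho_1^\star P_\star^\bot = \rho_1$, and note that $\sigma = (I_\star\otimes P_\star^\bot)\tau(I_\star\otimes P_\star^\bot)$. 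The key point is the small Löwner-order lemma that for any PSD $\tau$ and any projector $P$ on the second tensor factor, $\tr_2\!\big((I\otimes P)\tau(I\otimes P)\big)\sqsubseteq\tr_2(\tau)$: in a basis of the second factor adapted to $P$ the off-diagonal blocks of $\tau$ have vanishing partial trace, so $\tr_2(\tau) = \tr_2\!\big((I\otimes P)\tau(I\otimes P)\big) + \tr_2\!\big((I\otimes P^\bot)\tau(I\otimes P^\bot)\big)$ with the second summand PSD. This is the one step that is not a bare algebraic identity, and I regard it as the main obstacle; the rest is bookkeeping. For the remaining (uncorrelated-mass) condition, expand $P_\star^\bot\otimes P_\star^\bot = I - P_\star\otimes I - I\otimes P_\star + P_\star\otimes P_\star$ and pair with $\rho$: using $\tr(\rho) = 1$, $\tr\!\big((P_\star\otimes I)\rho\big) = \<\star|\rho_1^\star|\star\> = 1 - \tr(\rho_1)$, the symmetric identity for $I\otimes P_\star$, and $\tr\!\big((P_\star\otimes P_\star)\rho\big)\ge 0$, we obtain $\tr(\sigma)\ge \tr(\rho_1) + \tr(\rho_2) - 1$, i.e. $\tr(\rho_1) + \tr(\rho_2) \le 1 + \tr(\sigma)$. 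Together with the two marginal bounds this is exactly $\sigma:\<\rho_1,\rho_2\>_p$.
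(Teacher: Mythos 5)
Your proof is correct and follows essentially the same route as the paper's: item 3 via the support of $A^\star$ and cyclicity, item 2 by direct marginal computation, and item 1 by block-decomposing with respect to $P_\star,P_\star^\bot$, where your ``key lemma'' $\tr_2\big((I\otimes P)\tau(I\otimes P)\big)\sqsubseteq\tr_2(\tau)$ is exactly the identity $\tr_2(\rho)=\tr_2\big((I\otimes P_\star)\rho(I\otimes P_\star)\big)+\tr_2\big((I\otimes P_\star^\bot)\rho(I\otimes P_\star^\bot)\big)$ that the paper invokes, and your expansion of $P_\star^\bot\otimes P_\star^\bot$ reproduces the paper's trace bookkeeping for the uncorrelated-mass condition. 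Your explicit check of positivity of $\rho^\uparrow$ via orthogonality of the four blocks is a small point the paper leaves implicit, but otherwise the two arguments coincide.
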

\begin{proof}
\noindent(1). Suppose $\rho : \<\rho_1,\rho_2\>_\star$.
Compute
\begin{align*}
  \tr_2(\Pi_{\star}^\bot(\rho)) &= \tr_2((P_\star^\bot\otimes P_\star^\bot)\rho(P_\star^\bot\otimes P_\star^\bot)) \\
  &= P_\star^\bot \tr_2((I_\star\otimes P_\star^\bot)\rho(I_\star \otimes P_\star^\bot)) P_\star^\bot \\
  &\sqsubseteq P_\star^\bot \tr_2(\rho) P_\star^\bot \\
  &= P_\star^\bot \rho_1^\star P_\star^\bot\\
  &= \rho_1
\end{align*}
where we use the fact that $\tr_2(\rho) = \tr_2((I_\star \otimes P_\star)\rho (I_\star \otimes P_\star) + (I_\star \otimes P_\star^\bot)\rho (I_\star \otimes P_\star^\bot))$.
Similarly, $\tr_2(\Pi_{\star}^\bot(\rho))\sqsubseteq \rho_2$.
Furthermore, observe that
\begin{align*}
  \tr(\rho) =\, &\tr((P_\star \otimes P_\star)\rho (P_\star \otimes P_\star))
  + \tr((I_\star \otimes P_\star^\bot)\rho (I_\star \otimes P_\star^\bot))\, + \\
  &\tr((P_\star^\bot \otimes I_\star)\rho (P_\star^\bot \otimes I_\star))
  - \tr((P_\star^\bot \otimes P_\star^\bot)\rho (P_\star^\bot \otimes P_\star^\bot))\\
  =\, &\tr((P_\star \otimes P_\star)\rho (P_\star \otimes P_\star)) + \tr(\rho_2) + \tr(\rho_1) - \tr(\Pi_{\star}^\bot(\rho))
\end{align*}
Note that $\tr(\rho) = 1$, $0\le \tr((P_\star \otimes P_\star)\rho (P_\star \otimes P_\star))$, we have $ \tr(\rho_1) + \tr(\rho_2) \le 1 + \tr(\Pi_{\star}^\bot(\rho))$.

All above implies that $\Pi_{\star}^\bot(\rho) : \<\rho_1,\rho_2\>_p$.

\noindent(2). Suppose $\rho : \<\rho_1,\rho_2\>_p$. It is straightforward that :
\begin{align*}
  \tr_2(\rho^\uparrow) =\, &(1+\tr(\rho) - \tr(\rho_1) - \tr(\rho_2))|\star\>\<\star|\, + \\
  &(\rho_1 - \tr_2(\rho)) + \tr(\rho_2 - \tr_1(\rho)) |\star\>\<\star| + \tr_2(\rho) \\
  =\ &(1-\tr(\rho_1))|\star\>\<\star| + \rho_1 \\
  =\ &\rho_1^\star.
\end{align*}
Similarly, $\tr_1(\rho^\uparrow) = \rho_2^\star$. Thus, $\rho^\uparrow : \<\rho_1^\star, \rho_2^\star\>$, or equivalently, $\rho^\uparrow : \<\rho_1, \rho_2\>_\star$.
$\Pi_{\star}^\bot(\rho^\uparrow) = \rho$ is trivial by computation.

\noindent(3). Note that, $A^\star$ is preserved under the projection $P_\star^\bot\otimes P_\star^\bot$, so :
\begin{align*}
  \tr(A^\star \rho) &= 
    \tr(A^\star(P_\star^\bot\otimes P_\star^\bot)\rho(P_\star^\bot\otimes P_\star^\bot))) \\
    &= \tr(A \Pi_\star^\bot(\rho)).
\end{align*}
\end{proof}


\begin{proposition}[(Sub-)convex Combination of Partial Coupling]
  \label{lem: scale of partial coupling}
  Let $\{\lambda_i\}_{i\in I}$ be a subdistribution over index set $I$ (i.e., $0 \le \lambda_i\le 1$ for all $i$, and $\sum_i \lambda_i \le 1$),
  and $\rho_i\in \DD(\HH_1)$, $\sigma_i\in \DD(\HH_2)$, and partial couplings $\delta_i : \<\rho_i,\sigma_i\>_p$ with indices from $I$. Then 
  $$\sum_i\lambda_i\delta_i : \Big\<\sum_i\lambda_i\rho_i, \sum_i\lambda_i\sigma_i\Big\>_p.$$
  As an corollary, for any $\rho\in\DD(\HH_1)$, $\sigma\in\DD(\HH_2)$ and $0\le c \le 1$, if $\delta : \<\rho,\sigma\>_p$, then $c\delta : \<c\rho,c\sigma\>_p$.
\end{proposition}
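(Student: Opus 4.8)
The plan is to unfold the definition of partial coupling and verify its three clauses directly for $\delta \eqdef \sum_{i\in I}\lambda_i\delta_i$ against the candidate marginals $\mu_1 \eqdef \sum_{i\in I}\lambda_i\rho_i$ and $\mu_2 \eqdef \sum_{i\in I}\lambda_i\sigma_i$. The only ingredients needed are linearity of the partial trace, the elementary monotonicity of the L\"owner order under nonnegative scaling and under (possibly infinite) summation, and the sub-normalisation hypothesis $\sum_{i}\lambda_i\le 1$.

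First I would check that the objects have the right type. Each $\delta_i$ is PSD with $\tr(\delta_i)\le 1$ and each $\lambda_i\ge 0$, so $\delta$ is PSD with $\tr(\delta)=\sum_i\lambda_i\tr(\delta_i)\le\sum_i\lambda_i\le 1$, hence $\delta\in\DD(\HH_1\otimes\HH_2)$; the same estimate yields $\mu_1\in\DD(\HH_1)$ and $\mu_2\in\DD(\HH_2)$. (When $I$ is infinite the relevant series converge since their partial sums are dominated by $\sum_i\lambda_i$; in finite dimension this raises no difficulty.) Next, by linearity of $\tr_2$ and the hypothesis $\tr_2(\delta_i)\sqsubseteq\rho_i$, together with the implication $A\sqsubseteq B,\ \lambda\ge 0 \Rightarrow \lambda A\sqsubseteq\lambda B$ and closure of $\sqsubseteq$ under sums, I get $\tr_2(\delta)=\sum_i\lambda_i\tr_2(\delta_i)\sqsubseteq\sum_i\lambda_i\rho_i=\mu_1$, and symmetrically $\tr_1(\delta)\sqsubseteq\mu_2$.

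Finally, for the trace-budget clause I would chain, using $\tr(\rho_i)+\tr(\sigma_i)\le 1+\tr(\delta_i)$ and $\lambda_i\ge 0$,
$$\tr(\mu_1)+\tr(\mu_2)=\sum_i\lambda_i\big(\tr(\rho_i)+\tr(\sigma_i)\big)\le\sum_i\lambda_i\big(1+\tr(\delta_i)\big)=\sum_i\lambda_i+\tr(\delta)\le 1+\tr(\delta),$$
where the very last inequality is exactly where the bound $\sum_i\lambda_i\le 1$ is spent. This establishes $\delta:\langle\mu_1,\mu_2\rangle_p$. The corollary is the instance with $I$ a singleton $\{0\}$, $\lambda_0=c$, $\rho_0=\rho$, $\sigma_0=\sigma$, $\delta_0=\delta$. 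I do not expect a genuine obstacle: the computation is routine, and the only point deserving care is the \emph{direction} of the last inequality, i.e. the observation that it is the nonnegative slack $1-\sum_i\lambda_i$ in the subdistribution hypothesis --- rather than an equality $\sum_i\lambda_i=1$ --- that lets the uncorrelated-mass condition survive the mixing, which is really the conceptual content of the statement.
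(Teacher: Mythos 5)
Your proposal is correct and follows exactly the same route as the paper's proof: verify the two marginal inequalities by linearity of the partial trace and monotonicity of the L\"owner order under nonnegative combinations, then chain $\sum_i\lambda_i(\tr(\rho_i)+\tr(\sigma_i))\le\sum_i\lambda_i(1+\tr(\delta_i))=\sum_i\lambda_i+\tr(\sum_i\lambda_i\delta_i)\le 1+\tr(\sum_i\lambda_i\delta_i)$, spending the subdistribution bound in the final step. The extra type-checking remarks you include are harmless additions; nothing further is needed.
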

\begin{proof}
  First observe: 
  \begin{align*}
    &\tr_2\Big(\sum_i\lambda_i\delta_i\Big) = \sum_i\lambda_i\tr_2(\delta_i) 
    \sqsubseteq \sum_i\lambda_i\rho_i; \\
    &\tr_1\Big(\sum_i\lambda_i\delta_i\Big) = \sum_i\lambda_i\tr_1(\delta_i) 
    \sqsubseteq \sum_i\lambda_i\sigma_i.
  \end{align*}
  Further notice that,
  \begin{align*}
    \tr\Big(\sum_i\lambda_i\rho_i\Big) + \tr\Big(\sum_i\lambda_i\sigma_i\Big)
    &= \sum_i\lambda_i (\tr(\rho_i) + \tr(\sigma_i)) \\
    &\le \sum_i\lambda_i (1 + \tr(\delta_i))
    = \sum_i\lambda_i + \tr\Big(\sum_i\lambda_i\delta_i\Big) \\
    &\le 1 + \tr\Big(\sum_i\lambda_i\delta_i\Big)
  \end{align*}
  as $\{\lambda_i\}_{i\in I}$ is a subdistribution. This completes the proof.

\end{proof}

\begin{theorem}[Quantum Strassen's Theorem for Partial Coupling]
\label{thm: strassen partial coupling}
  For any $\rho_1\in\DD(\HH_1)$, $\rho_2\in\DD(\HH_2)$, $A\in\Pos(\HH_1\otimes\HH_2)$ and $\epsilon \in \mathbb{R}^{+\infty}$, the following are equivalent:
  \begin{enumerate}
    \item There exists partial-coupling $\rho : \<\rho_1,\rho_2\>_p$ such that $\tr(A\rho)\le \epsilon$;
    \item For any $y_1,y_2\in \RR^+$, $Y_1\in \Pos(\HH_1), Y_2\in\Pos(\HH_2)$, such that $y_1\le y_2$, $Y_1\le y_2I_1$, $y_1I_2\le Y_2$ and $A\sqsupseteq Y_1\otimes I_2 - I_1\otimes Y_2$, it holds that:
    $$y_1(1-\tr(\rho_1)) + \tr(Y_1\rho_1) \le y_2(1-\tr(\rho_2)) + \tr(Y_2\rho_2) + \epsilon.$$
  \end{enumerate}
\end{theorem}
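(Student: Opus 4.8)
The plan is to reduce the statement to the already-established Quantum Strassen's Theorem with Defects (\Cref{thm:quantum strassen defect alter}) by passing to the $\star$-extended Hilbert spaces. First I would use \Cref{lemma: relation star coupling} to rephrase statement (1): the maps $\rho\mapsto\rho^\uparrow$ and $\tau\mapsto\Pi_\star^\bot(\tau)$ show that there is a partial coupling $\rho:\<\rho_1,\rho_2\>_p$ with $\tr(A\rho)\le\epsilon$ if and only if there is a genuine coupling $\tau:\<\rho_1^\star,\rho_2^\star\>$ with $\tr(A^\star\tau)\le\epsilon$, i.e.\ if and only if $\rho_1^\star (A^\star)^\#_\epsilon \rho_2^\star$ holds. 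Since $\tr(\rho_1^\star)=\tr(\rho_2^\star)=1$, \Cref{thm:quantum strassen defect alter} applies verbatim and reduces (1) to the condition: for all $W_1\in\Pos(\HH_1^\star)$ and $W_2\in\Pos(\HH_2^\star)$ with $A^\star\sqsupseteq W_1\otimes I_\star-I_\star\otimes W_2$, one has $\tr(W_1\rho_1^\star)\le\tr(W_2\rho_2^\star)+\epsilon$. (The degenerate case $\epsilon=+\infty$ would be handled separately: both (1) and (2) hold trivially, using that the set of partial couplings is non-empty.) It then remains to identify this dual condition with statement (2).

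Next I would argue that in this dual condition the operators $W_1,W_2$ may without loss of generality be taken block-diagonal with respect to the grading $\HH_i^\star=\HH_i\oplus\CC\ket{\star}$. Let $D_i(X)=P_\star^\bot X P_\star^\bot+P_\star X P_\star$ be the dephasing map on $\HH_i^\star$; it is unital and completely positive. Applying $D_1\otimes D_2$ to $A^\star\sqsupseteq W_1\otimes I_\star-I_\star\otimes W_2$, using unitality of $D_2$ (resp.\ $D_1$) and the fact that $A^\star$, being supported on the corner $P_\star^\bot\otimes P_\star^\bot$, is fixed by $D_1\otimes D_2$, yields $A^\star\sqsupseteq D_1(W_1)\otimes I_\star-I_\star\otimes D_2(W_2)$; and since $\rho_i^\star=\rho_i\oplus(1-\tr(\rho_i))$ is block-diagonal, $\tr(D_i(W_i)\rho_i^\star)=\tr(W_i\rho_i^\star)$. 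Hence it suffices to range over $W_1=Y_1\oplus y_1$ and $W_2=Y_2\oplus y_2$ with $Y_i\in\Pos(\HH_i)$ and $y_i\in\RR^+$, for which $\tr(W_1\rho_1^\star)=\tr(Y_1\rho_1)+y_1(1-\tr(\rho_1))$ and $\tr(W_2\rho_2^\star)=\tr(Y_2\rho_2)+y_2(1-\tr(\rho_2))$ --- exactly the two sides of the inequality in (2).

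Finally I would unpack the constraint for such block-diagonal $W_i$. In the four-block decomposition of $\HH_1^\star\otimes\HH_2^\star$ given by the projectors $P_\star^\bot\otimes P_\star^\bot$, $P_\star^\bot\otimes P_\star$, $P_\star\otimes P_\star^\bot$, $P_\star\otimes P_\star$, both $A^\star$ and $W_1\otimes I_\star-I_\star\otimes W_2$ are block-diagonal, so the operator inequality holds iff it holds on each block. A direct computation gives the four blocks of $W_1\otimes I_\star-I_\star\otimes W_2$ as $Y_1\otimes I_2-I_1\otimes Y_2$, $\ Y_1-y_2 I_1$, $\ y_1 I_2-Y_2$, and $y_1-y_2$, while $A^\star$ contributes $A$, $0$, $0$, $0$; thus the constraint is equivalent to $A\sqsupseteq Y_1\otimes I_2-I_1\otimes Y_2$, $\ Y_1\sqsubseteq y_2 I_1$, $\ y_1 I_2\sqsubseteq Y_2$, and $y_1\le y_2$ --- precisely the conditions defining the quantification in (2). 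Assembling the three reductions proves the equivalence.

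The main obstacle is the bookkeeping in the second and third steps: justifying the reduction to block-diagonal dual variables via the dephasing map, and correctly computing the off-diagonal blocks $Y_1-y_2 I_1$ and $y_1 I_2-Y_2$. These off-diagonal blocks are exactly what forces the somewhat unusual side conditions $y_1\le y_2$, $Y_1\sqsubseteq y_2 I_1$, $y_1 I_2\sqsubseteq Y_2$ to appear in the statement, so getting them right is the crux of the argument.
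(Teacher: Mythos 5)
Your proposal is correct and follows essentially the same route as the paper's proof: pass to the $\star$-extensions via \Cref{lemma: relation star coupling}, apply \Cref{thm:quantum strassen defect alter} to $\rho_1^\star,\rho_2^\star$ (both of trace one), and then match the dual condition with statement (2) by decomposing the operator inequality over the four corners $P_\star^{\bot}\otimes P_\star^{\bot}$, $P_\star^{\bot}\otimes P_\star$, $P_\star\otimes P_\star^{\bot}$, $P_\star\otimes P_\star$, which is exactly where the side conditions $y_1\le y_2$, $Y_1\sqsubseteq y_2 I_1$, $y_1 I_2\sqsubseteq Y_2$ come from. Your dephasing-map argument for reducing to block-diagonal dual variables is a slightly cleaner packaging of the paper's step of projecting the Löwner inequality onto the corners and observing that the traces against the block-diagonal $\rho_i^\star$ only see the diagonal blocks, but it is the same computation.
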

\begin{proof}
  If $\epsilon = +\infty$, then both (1) and (2) trivially hold. So we consider the case that $\epsilon\in \RR^+$, i.e., $\epsilon$ is finite.
  We first introduce the following condition:
  \begin{itemize}
    \item[3)] There exists star-coupling $\rho : \<\rho_1,\rho_2\>_\star$, i.e., $\rho:\<\rho_1^\star,\rho_2^\star\>$, such that $\tr(A^\star\rho)\le \epsilon$.
  \end{itemize}

  \noindent$\bullet$ (1 $\Rightarrow$ 3). Let $\rho$ be the witness of (1), then $\rho^\uparrow : \<\rho_1,\rho_2\>_\star$ by Lemma \ref{lemma: relation star coupling}(2). According to Lemma \ref{lemma: relation star coupling}(3), $\tr(A^\star\rho^\uparrow) = \tr(A\rho) \le \epsilon$. 

  \noindent$\bullet$ (3 $\Rightarrow$ 1). Let $\rho$ be the witness of (3), then $\Pi_\star^\bot(\rho) : \<\rho_1,\rho_2\>_p$ by Lemma \ref{lemma: relation star coupling}(1). According to Lemma \ref{lemma: relation star coupling}(3), 
  $\tr(A\Pi_\star^\bot(\rho)) = \tr(A^\star\rho) \le \epsilon$. 

  Thus, (1) is equivalent to (3). (3) says that, $\rho_1^\star A^{\star\#}_\epsilon \rho_2^\star$, then by Theorem \ref{thm:quantum strassen defect alter}, it is then equivalent to:
  \begin{itemize}
    \item[4)] For any $Z_1 \in \Pos(\HH_1^\star)$ and $Z_2 \in \Pos(\HH_2^\star)$ such that $A^\star \geqlow Z_1 \otimes I_{2^\star} - I_{1^\star} \otimes Z_2$, it holds that 
    \[
      \tr(Z_1\rho_1^\star) \leq \tr(Z_2\rho_2^\star) + \epsilon.
    \]
  \end{itemize}
  What remaining to be shown is (2) equivalent to (4).

  \noindent$\bullet$ (4 $\Rightarrow$ 2). Set $Z_1 = y_1|\star\>\<\star| + Y_1$ and $Z_2 = y_2|\star\>\<\star| + Y_2$. Obviously, $Z_1\in\Pos(\HH_1^\star)$ and $Z_2\in\Pos(\HH_2^\star)$. Observe that: 
  \begin{align*}
    &A^\star - (Z_1 \otimes I_{2^\star} - I_{1^\star} \otimes Z_2) \\
    =\ & A - (y_1|\star\>\<\star| + Y_1) \otimes (|\star\>\<\star| + I_2) \\ 
    & + (|\star\>\<\star| + I_1) \otimes (y_2|\star\>\<\star| + Y_2) \\ 
    =\ &(A - (Y_1\otimes I_2 - I_1\otimes Y_2)) + (y_2-y_1)|\star\star\>\<\star\star| \\
    &+ (y_2I_1 - Y_1)\otimes|\star\>\<\star| + |\star\>\<\star|\otimes(Y_2 - y_1I_2) \\
    \sqsupseteq\  &0.
  \end{align*}
  So, $\tr(Z_1\rho_1^\star) \leq \tr(Z_2\rho_2^\star) + \epsilon$, or equivalently,
  $$ y_1(1-\tr(\rho_1)) + \tr(Y_1\rho_1) \le y_2(1-\tr(\rho_2)) + \tr(Y_2\rho_2) + \epsilon.$$

  \noindent$\bullet$ (2 $\Rightarrow$ 4). For any $Z_1 \in \Pos(\HH_1^\star)$ and $Z_2 \in \Pos(\HH_2^\star)$ such that $A^\star \geqlow Z_1 \otimes I_{2^\star} - I_{1^\star} \otimes Z_2$, by projecting it to $P_\star\otimes P_\star$, $P_\star^\bot\otimes P_\star$, $P_\star\otimes P_\star^\bot$ and $P_\star^\bot\otimes P_\star^\bot$, the L\"{o}wner preserves, and thus:
  \begin{align*}
    y_1 - y_2\le 0 &\qquad y_1I_2 - Y_2\sqsubseteq 0\qquad 
    Y_1 - y_2I_1\sqsubseteq 0\\
    &Y_1\otimes I_2 - I_1\otimes Y_2\sqsubseteq A
  \end{align*}
  where $Z_1 = \left(\begin{smallmatrix} y_1 & \cdot \\ \cdot & Y_1 \end{smallmatrix}\right)$ and $Z_2 = \left(\begin{smallmatrix} y_2 & \cdot \\ \cdot & Y_2 \end{smallmatrix}\right)$. Furthermore, observe that
  \begin{align*}
    &\tr(Z_1\rho_1^\star) - \tr(Z_2\rho_2^\star) \\
    =\ &y_1(1-\tr(\rho_1)) + \tr(Y_1\rho_1) - (y_2(1-\tr(\rho_2)) + \tr(Y_2\rho_2)) \\
    \le\ &\epsilon
  \end{align*}
  by employing (2), and this completes the proof.
\end{proof}

\section{Deferred Proofs in ``Quantum Optimal Transport'' Section}
\label{app: QOT}

\begin{proposition}\label{prop:partial-coupling}
    Given $\rho_1 \in \mathcal{D}(\mathcal{H}_1)$
    and $\rho_2 \in \mathcal{D}(\mathcal{H}_2)$
    where $\mathcal{H}_1$ and $\mathcal{H}_2$
    are finite-dimensional Hilbert spaces,
    the set of partial couplings
    $S = \{\rho \mid \rho:\langle \rho_1, \rho_2\rangle_p\}$
    is a non-empty, closed and convex set.    
\end{proposition}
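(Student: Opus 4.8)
The plan is to verify the three claimed properties one at a time, working in the finite-dimensional real vector space $\Herm(\HH_1\otimes\HH_2)$ of Hermitian operators, inside which $S$ sits as a subset of the bounded set $\DD(\HH_1\otimes\HH_2)$.

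\emph{Non-emptiness.} I would exhibit the product operator $\rho_1\otimes\rho_2$ as an explicit partial coupling. Its partial traces are $\tr_2(\rho_1\otimes\rho_2)=\tr(\rho_2)\,\rho_1$ and $\tr_1(\rho_1\otimes\rho_2)=\tr(\rho_1)\,\rho_2$, which satisfy $\tr_2(\rho_1\otimes\rho_2)\sqsubseteq\rho_1$ and $\tr_1(\rho_1\otimes\rho_2)\sqsubseteq\rho_2$ because $\tr(\rho_1),\tr(\rho_2)\le 1$. The third defining inequality, $\tr(\rho_1)+\tr(\rho_2)\le 1+\tr(\rho_1\otimes\rho_2)=1+\tr(\rho_1)\tr(\rho_2)$, is equivalent to $(1-\tr(\rho_1))(1-\tr(\rho_2))\ge 0$, which holds since both factors are nonnegative; and $\tr(\rho_1\otimes\rho_2)=\tr(\rho_1)\tr(\rho_2)\le 1$, so $\rho_1\otimes\rho_2\in\DD(\HH_1\otimes\HH_2)$. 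Alternatively, one can route through \Cref{lemma: relation star coupling}: the product $\rho_1^\star\otimes\rho_2^\star$ of the genuine density operators $\rho_1^\star,\rho_2^\star$ is a coupling, hence a $\star$-coupling of $\rho_1,\rho_2$, and its image under $\Pi_\star^\bot$ is a partial coupling.

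\emph{Convexity.} Given $\rho,\rho'\in S$ and $\lambda\in[0,1]$, set $\sigma=\lambda\rho+(1-\lambda)\rho'$. Then $0\sqsubseteq\sigma$ and $\tr(\sigma)=\lambda\tr(\rho)+(1-\lambda)\tr(\rho')\le 1$, so $\sigma\in\DD(\HH_1\otimes\HH_2)$. By linearity of the partial trace and the fact that the Löwner order is preserved under convex combinations, $\tr_2(\sigma)=\lambda\tr_2(\rho)+(1-\lambda)\tr_2(\rho')\sqsubseteq\rho_1$, and symmetrically $\tr_1(\sigma)\sqsubseteq\rho_2$. Taking the same convex combination of the two instances of the third inequality gives $\tr(\rho_1)+\tr(\rho_2)\le 1+\lambda\tr(\rho)+(1-\lambda)\tr(\rho')=1+\tr(\sigma)$. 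This is exactly \Cref{lem: scale of partial coupling} specialised to two summands.

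\emph{Closedness.} The set $S$ is cut out inside $\Herm(\HH_1\otimes\HH_2)$ by the conditions $0\sqsubseteq\rho$, $\tr(\rho)\le 1$, $0\sqsubseteq\rho_1-\tr_2(\rho)$, $0\sqsubseteq\rho_2-\tr_1(\rho)$, and $\tr(\rho)\ge\tr(\rho_1)+\tr(\rho_2)-1$. The maps $\rho\mapsto\rho$, $\rho\mapsto\rho_1-\tr_2(\rho)$ and $\rho\mapsto\rho_2-\tr_1(\rho)$ are continuous (indeed affine) and the PSD cone is closed, so the first, third and fourth conditions define closed sets; the trace is continuous, so the remaining two conditions define closed half-spaces. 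Hence $S$ is a finite intersection of closed sets, therefore closed; being contained in the bounded set $\DD(\HH_1\otimes\HH_2)$ it is in fact compact, which is what justifies attainment of the minimum in the definition of $T_C$. The only step requiring any care is the choice of witness for non-emptiness: the naive candidate $\rho=0$ already fails the third condition whenever $\tr(\rho_1)+\tr(\rho_2)>1$, so one genuinely needs the product state (or the $\star$-coupling detour); convexity and closedness are routine linearity and continuity arguments.
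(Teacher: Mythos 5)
Your proof is correct and follows essentially the same route as the paper's: the product operator $\rho_1\otimes\rho_2$ with the $(1-\tr\rho_1)(1-\tr\rho_2)\ge 0$ trick for non-emptiness, linearity of partial trace and trace for convexity, and continuity for closedness (the paper phrases closedness via convergent sequences and test functionals $\tr((\sigma\otimes I)\,\cdot\,)$, whereas you present $S$ as a finite intersection of closed preimages, which is the same argument in topological rather than sequential form). The added remarks on compactness and on why $\rho=0$ fails as a witness are correct and harmless.
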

\begin{proof}
    For non-emptiness, 
    given $\rho_1 \in \mathcal{D}(\mathcal{H})_1$
    and $\rho_2 \in \mathcal{D(\mathcal{H}_2)}$.
    We claim $\rho = \rho_1\otimes \rho_2 \in S$.
    It is direct to see that 
    $\tr_2(\rho)\sqsubseteq \rho_1$
    and 
    $\tr_1(\rho)\sqsubseteq \rho_2$.
    For the trace constraint,
    notice that
    $\tr(\rho) = \tr(\rho_1) \tr(\rho_2)$,
    $\tr(\rho_1) \le 1$, and $\tr(\rho_2)\le 1$,
    we have
    $(1-\tr(\rho_1))(1-\tr(\rho_2))\ge$,
    meaning that
    $\tr(\rho_1) + \tr(\rho_2) \le 1 + \tr(\rho)$
    as we want.

    For closeness, 
    given $\rho^i \to \rho$ with $\rho^i \in S$,
    we show that $\rho \in S$.
    By definition, 
    we know
    $\tr_2(\rho^i) \sqsubseteq \rho_1$,
    or equivalently, 
    for any $\sigma \in \mathcal{D}(\HH_1)$,
    $\tr(\rho_i \sigma\otimes I) \le \tr(\rho_1 \sigma)$.
    Fixed $\sigma$, 
    the function $\tr((\sigma \otimes I) \cdot)$
    is linear and continuous.
    Therefore, 
    by $\rho^i\to \rho$ we know
    $\tr(\tr_2(\rho)\sigma) \le \tr(\rho_1\sigma)$.
    Since the above inequality holds for any $\sigma \in \mathcal{D}(\mathcal{H})$,
    we conclude that
    $\tr_2(\rho)\sqsubseteq \rho_1$.
    Similarly we can prove 
    $\tr_1(\rho)\sqsubseteq \rho_2$.
    By the continuity of the trace function,
    we can also conclude
    $\tr(\rho_1)+\tr(\rho_2)\le 1+ \tr(\rho)$.
    Therefore we have $\rho\in S$.

    For convexity,
    suppose $\sigma_1, \sigma_2\in S$
    and $\lambda\in (0,1)$.
    Consider $\rho = \lambda \sigma_1 + (1-\lambda)\sigma_2$.
    From $\tr_2(\sigma_1)\sqsubseteq \rho_1$
    and $\tr_2(\sigma_2)\sqsubseteq \rho_1$,
    we know
    $\lambda \tr_2(\sigma_1) + (1-\lambda)
    \tr_2(\sigma_2) \sqsubseteq \lambda \rho_1 
    + (1-\lambda)\rho_1$.
    Simplifying above, 
    we get
    $\tr_2(\rho)\sqsubseteq \rho_1$.
    Similarly we have
    $\tr_1(\rho) \sqsubseteq \rho_2$.
    From 
    $\tr(\rho_1) + \tr(\rho_2) 
    \le 1+ \tr(\sigma_1)$
    and $\tr(\rho_1) + \tr(\rho_2) 
    \le 1+ \tr(\sigma_2)$,
    we get
    $\tr(\rho_1) + \tr(\rho_2) 
    \le 1+ \lambda\tr(\sigma_1)
    +(1-\lambda)\tr(\sigma_2)$,
    meaning
    $\tr(\rho_1) + \tr(\rho_2) 
    \le 1+ \tr(\rho)$
    and $\rho \in S$ as we desired.
\end{proof}

\begin{proposition}[Jointly Convexity of QOT]
  \label{lem:QOT convexity}
  Let $\{\lambda_i\}_{i\in I}$ be a subdistribution over index set $I$, and $\rho_i\in \DD(\HH_1)$, $\sigma_i\in \DD(\HH_2)$ with indices from $I$. For any cost function $C$, It holds that:
  $$ T_C\Big(\sum_i\lambda_i\rho_i, \sum_i\lambda_i\sigma_i\Big)\le \sum_i\lambda_iT_C(\rho_i, \sigma_i).$$

  As a corollary, if $\{\lambda_i\}_{i\in I}$ is a distribution and $\sigma\in \DD(\HH_2)$, then  $ T_C\big(\sum_i\lambda_i\rho_i, \sigma\big)\le \sum_i\lambda_iT_C(\rho_i, \sigma).$
\end{proposition}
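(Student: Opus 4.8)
The plan is to reduce the statement to two facts already available: the additivity of $\tr(C\,\cdot\,)$, and the closure of partial couplings under sub-convex combinations established in \Cref{lem: scale of partial coupling}. First I would dispose of the trivial cases. If $T_C(\rho_i,\sigma_i) = +\infty$ for some $i$ with $\lambda_i>0$, then the right-hand side is $+\infty$ and the inequality holds vacuously; and any index with $\lambda_i = 0$ contributes nothing on either side by the convention $0\cdot(+\infty) = 0$, so we may discard it. Hence we may assume $\lambda_i T_C(\rho_i,\sigma_i) < +\infty$ for every $i$.

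Next, for each $i$ I invoke \Cref{prop:partial-coupling}: the set of partial couplings of $\rho_i$ and $\sigma_i$ is nonempty, closed, convex, and bounded (every partial coupling has trace $\le 1$) inside the finite-dimensional space of operators on $\HH_1\otimes\HH_2$, hence compact. Therefore the infimum defining $T_C(\rho_i,\sigma_i)$ is attained by some $\delta_i : \<\rho_i,\sigma_i\>_p$ with $\tr(C\delta_i) = T_C(\rho_i,\sigma_i)$. Put $\delta \triangleq \sum_i \lambda_i \delta_i$; this sum converges even when $I$ is countably infinite, since $\tr(\delta_i)\le 1$ and $\sum_i\lambda_i\le 1$ give absolute convergence in operator norm. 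By \Cref{lem: scale of partial coupling}, $\delta : \big\<\sum_i\lambda_i\rho_i,\ \sum_i\lambda_i\sigma_i\big\>_p$, so $\delta$ is feasible for the optimization defining $T_C(\sum_i\lambda_i\rho_i,\sum_i\lambda_i\sigma_i)$. Using linearity of $\tr(C\,\cdot\,)$ on (partial) density operators --- valid for possibly infinite-valued $C$ by \Cref{lem: IVP algebraic} --- we get
\[
  T_C\Big(\sum_i\lambda_i\rho_i,\ \sum_i\lambda_i\sigma_i\Big)\ \le\ \tr(C\delta)\ =\ \sum_i\lambda_i\tr(C\delta_i)\ =\ \sum_i\lambda_i T_C(\rho_i,\sigma_i),
\]
which is the claim. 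The corollary is immediate: if $\{\lambda_i\}_{i\in I}$ is a probability distribution and $\sigma_i=\sigma$ for all $i$, then $\sum_i\lambda_i\sigma_i=\sigma$, so $T_C(\sum_i\lambda_i\rho_i,\sigma)\le\sum_i\lambda_iT_C(\rho_i,\sigma)$.

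The only mild subtlety --- hardly an obstacle --- is the infinite-index case: one must check that $\delta$ is well-defined and that the infinite sum commutes with the trace against a possibly infinite-valued $C$. Both follow from the uniform trace bound $\tr(\delta_i)\le 1$ together with the eigenvalue-sum definition of $\tr(C\,\cdot\,)$ for $C\in\PosI$ (nonnegative terms, so Tonelli-type interchange applies). Everything else is a direct application of \Cref{prop:partial-coupling} and \Cref{lem: scale of partial coupling}.
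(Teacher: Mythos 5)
Your proof is correct and follows essentially the same route as the paper's: select optimal partial couplings $\delta_i$ (attainment justified via \Cref{prop:partial-coupling}), form $\sum_i\lambda_i\delta_i$, apply \Cref{lem: scale of partial coupling}, and conclude by linearity of $\tr(C\,\cdot\,)$. The extra care you take with infinite values, attainment of the minimum, and countably infinite index sets is a welcome refinement of the same argument rather than a different approach.
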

\begin{proof}
  Select $\delta_i : \<\rho_i, \sigma_i\>_p$ such that $T_C(\rho_i, \sigma_i) = \tr(C\delta_i)$. By \Cref{lem: scale of partial coupling}, $\sum_i\lambda_i\delta_i : \big\<\sum_i\lambda_i\rho_i, \sum_i\lambda_i\sigma_i\big\>_p$, so
  \begin{align*}
    T_C\Big(\sum_i\lambda_i\rho_i, \sum_i\lambda_i\sigma_i\Big) \le \tr\Big(C\sum_i\lambda_i\delta_i\Big) = \sum_i\lambda_i\tr(C\delta_i) = \sum_i\lambda_iT_C(\rho_i, \sigma).
  \end{align*}
\end{proof}

The following lemma demonstrates that for monotonicity it suffices to
consider only density operators, which appears useful in the following proofs.
\begin{lemma}[Alternative Characterization]
\label{lem:validity quantum operation alter}
    Given two quantum operations $\EE_1\in\QO(\HH_1),\EE_2\in\QO(\HH_2)$, input and output costs , the following statement are equivalent:
    \begin{enumerate}
        \item $(\EE_1,\EE_2)$ is monotone w.r.t. $C_i$ and $C_o$;
        \item For all $\rho_1\in\DD^1(\HH_1)$ and $\rho_2\in\DD^1(\HH_2)$, 
        $$T_{C_o}(\EE_1(\rho_1),\EE_2(\rho_2)) \le T_{C_i}(\rho_1,\rho_2).$$
    \end{enumerate}
\end{lemma}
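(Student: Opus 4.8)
The plan is to prove the two implications, the reverse one carrying all the weight. For $(1)\Rightarrow(2)$ there is nothing to do: by definition, monotonicity w.r.t.\ $C_i$ and $C_o$ asserts $T_{C_o}(\EE_1(\rho_1),\EE_2(\rho_2))\le T_{C_i}(\rho_1,\rho_2)$ for \emph{all} $\rho_1\in\DD(\HH_1),\rho_2\in\DD(\HH_2)$, and $\DD^1(\HH_i)\subseteq\DD(\HH_i)$, so (2) is the evident special case.

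For $(2)\Rightarrow(1)$ I would fix arbitrary $\rho_1\in\DD(\HH_1),\rho_2\in\DD(\HH_2)$ of traces $p_1,p_2$ and an arbitrary partial coupling $\rho:\<\rho_1,\rho_2\>_p$, and aim to produce a partial coupling $\delta$ of $\EE_1(\rho_1),\EE_2(\rho_2)$ with $\tr(C_o\delta)\le\tr(C_i\rho)$; minimizing over $\rho$ then gives the claim (if $\tr(C_i\rho)=+\infty$ there is nothing to prove). Write $t=\tr(\rho)$, $\rho_1'=\tr_2(\rho)$, $\rho_2'=\tr_1(\rho)$; by construction $\rho$ is in fact a genuine coupling $\rho:\<\rho_1',\rho_2'\>$ in the sense of \Cref{def:coupling}, with $\tr(\rho_1')=\tr(\rho_2')=t$ and $\rho_i'\sqsubseteq\rho_i$. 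If $t=0$ then $\rho=0$, so $\tr(C_i\rho)=0$ while $\tr(\EE_1\rho_1)+\tr(\EE_2\rho_2)\le p_1+p_2\le 1+t=1$ (using the trace clause of the partial coupling $\rho$), and $\delta=0$ works. So assume $t>0$ and normalize: $\rho_1'/t,\rho_2'/t\in\DD^1$ and $\rho/t:\<\rho_1'/t,\rho_2'/t\>$, so hypothesis (2) yields $T_{C_o}(\EE_1(\rho_1'/t),\EE_2(\rho_2'/t))\le T_{C_i}(\rho_1'/t,\rho_2'/t)\le\tr(C_i\rho)/t$, the last step because $\rho/t$ is one competitor in the $\min$ defining $T_{C_i}(\rho_1'/t,\rho_2'/t)$. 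I would then pick a partial coupling $\hat\delta:\<\EE_1(\rho_1'/t),\EE_2(\rho_2'/t)\>_p$ with $\tr(C_o\hat\delta)\le\tr(C_i\rho)/t$ and set $\delta=t\hat\delta$. By \Cref{lem: scale of partial coupling} (scaling a partial coupling by $t\in[0,1]$), $\delta:\<\EE_1(\rho_1'),\EE_2(\rho_2')\>_p$ and $\tr(C_o\delta)=t\tr(C_o\hat\delta)\le\tr(C_i\rho)$.

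The remaining and only delicate point is that this very $\delta$ is \emph{also} a partial coupling of the larger pair $\EE_1(\rho_1),\EE_2(\rho_2)$. The two L\"owner inequalities are immediate, since $\tr_2(\delta)\sqsubseteq\EE_1(\rho_1')\sqsubseteq\EE_1(\rho_1)$ and symmetrically, using $\rho_i'\sqsubseteq\rho_i$ and monotonicity of completely positive maps under $\sqsubseteq$. The trace clause $\tr(\EE_1\rho_1)+\tr(\EE_2\rho_2)\le 1+\tr(\delta)$ is where the argument genuinely exploits the structure of partial couplings: on one side, the trace clause of $\hat\delta$ gives $\tr(\hat\delta)\ge\frac1t\big(\tr(\EE_1\rho_1')+\tr(\EE_2\rho_2')\big)-1$, hence $\tr(\delta)\ge\tr(\EE_1\rho_1')+\tr(\EE_2\rho_2')-t$; on the other side, $\EE_i$ being trace non-increasing gives $\tr(\EE_i(\rho_i-\rho_i'))\le p_i-t$, and the trace clause of the original $\rho$ is exactly $(p_1-t)+(p_2-t)\le 1-t$, so $\tr(\EE_1\rho_1)+\tr(\EE_2\rho_2)\le\tr(\EE_1\rho_1')+\tr(\EE_2\rho_2')+(1-t)\le 1+\tr(\delta)$. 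I expect this final numerical check --- balancing the ``uncorrelated masses'' $p_i-t$ against the non-termination deficits $1-p_j$ --- to be the crux, and it is precisely what makes partial couplings, rather than ordinary couplings, the right vehicle for reducing monotonicity to the density-operator case.
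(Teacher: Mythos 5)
Your proof is correct and follows essentially the same route as the paper's: reduce to producing, for each partial coupling $\rho$ of $(\rho_1,\rho_2)$, a partial coupling of $(\EE_1(\rho_1),\EE_2(\rho_2))$ by normalizing the marginals of $\rho$, invoking hypothesis (2), rescaling via \Cref{lem: scale of partial coupling}, and then verifying the trace clause by balancing the uncorrelated masses $\tr(\rho_i)-\tr(\rho)$ against the trace loss of the $\EE_i$ --- the two inequalities in your final check are exactly the ones in the paper's computation. No gaps.
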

\begin{proof}
(1 $\Rightarrow$ 2) is trivial. For (2 $\Rightarrow$ 1), by definition, it is sufficient to show that, for all $\rho_1\in\DD(\HH_1)$, $\rho_2\in\DD(\HH_2)$ and $\rho : \<\rho_1,\rho_2\>_p$, there exists $\sigma : \<\EE_1(\rho_1),\EE_2(\rho_2)\>_p$ such that :
$$\tr(C_o\sigma)\le\tr(C_i\sigma).$$

Since $\rho$ is a partial coupling, then $\rho_1'\triangleq\tr_2(\rho)\sqsubseteq \rho_1$, $\rho_2'\triangleq\tr_1(\rho)\sqsubseteq \rho_2$, $1 + \tr(\rho)\ge\tr(\rho_1) + \tr(\rho_2)$. Set $\tr(\rho) = c$. If $c = 0$, then $\tr(\EE_1(\rho_1)) + \tr(\EE_2(\rho_2))\le \tr(\rho_1) + \tr(\rho_2)\le 1$, thus $0 : \<\EE_1(\rho_1), \EE_2(\rho_2)\>_p$, and obviously $\tr(C_i\rho) = \tr(C_o0) = 0$. If $c > 0$, by taking $\rho_1'/c$ and $\rho_2'/c$ in (2), there must exist a partial coupling 
$$\sigma : \< \EE_1(\rho_1'/c), \EE_1(\rho_2'/c)\>_p$$
such that $\tr(C_i(\rho/c))\ge\tr(C_o\sigma)$. By \Cref{lem: scale of partial coupling}, $c\sigma : \<\rho_1',\rho_2'\>_p$, and $\tr(C_i\rho)\ge \tr(C_o(c\sigma))$, so it is sufficient to show $c\sigma : \<\EE_1(\rho_1), \EE_2(\rho_2)\>_p$.
First observe that, 
$$\tr_2(c\sigma) = c\tr(\sigma)\sqsubseteq c \EE_1(\rho_1'/c) =\EE_1(\rho_1')\sqsubseteq \EE_1(\rho_1) $$
and similarly, $\tr_1(c\sigma)\sqsubseteq \EE_2(\rho_2)$. On the other hand,
\begin{align*}
    1+\tr(c\sigma) &= 1-c+c(1+\tr(\sigma)) \\
    &\ge 1-c+c(\tr(\EE_1(\rho_1'/c)) + \tr(\EE_2(\rho_2'/c))) \\
    &= 1 - c + (\tr(\EE_1(\rho_1')) + \tr(\EE_2(\rho_2'))).
\end{align*}
Notice that $\tr(\rho_1-\rho_1')\ge\tr(\EE_1(\rho_1-\rho_1'))$ since $\rho_1'\sqsubseteq\rho_1$ and $\EE_1$ is a quantum operation, and similarly holds for $\rho_2, \rho_2'$, we get:
\begin{align*}
    &\tr(\EE_1(\rho_1')) + \tr(\EE_2(\rho_2')) \\
    \ge\ &\tr(\EE_1(\rho_1)) + \tr(\EE_2(\rho_2)) - (\tr(\rho_1) + \tr(\rho_2)) \\
    &+ \tr(\rho_1') + \tr(\rho_2') \\
    \ge\ &\tr(\EE_1(\rho_1)) + \tr(\EE_2(\rho_2)) - (1+\tr(\rho)) + 2\tr(\rho) \\
    =\ &\tr(\EE_1(\rho_1)) + \tr(\EE_2(\rho_2)) - 1 + c
\end{align*}
Combine these two inequalities, we obtain:
$$ 1+\tr(c\sigma) \ge \tr(\EE_1(\rho_1)) + \tr(\EE_2(\rho_2)),$$
which completes the proof.
\end{proof}

\begin{lemma}[Monotonicity for Quantum Channels]
\label{lem:validity quantum channel}
  Suppose $\EE_1\in\QC(\HH_1),\EE_2\in\QC(\HH_2)$ are two quantum channels.
  Then $(\EE_1, \EE_2)$ is monotone w.r.t. $C_i$ and $C_o$ if and only if for every $\rho \in \DD(\HH_1 \otimes \HH_2)$, there exists a coupling $\sigma : \langle \EE_1(\tr_2(\rho)), \EE_2(\tr_1(\rho)) \rangle$ such that
  \[
  \tr(C_i \rho) \geq \tr(C_o \sigma).
  \]
\end{lemma}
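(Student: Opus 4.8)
The plan is to deduce the lemma from the alternative characterisation of monotonicity (\Cref{lem:validity quantum operation alter}), which already reduces monotonicity to inputs in $\DD^1$, together with the single fact that makes channels special here: a \emph{channel} is trace-preserving, hence sends $\DD^1$ to $\DD^1$, and (by the remark that a partial coupling of two genuine density operators is automatically a coupling) therefore turns partial couplings of the outputs into actual couplings.

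For the forward direction I would fix $\rho\in\DD(\HH_1\otimes\HH_2)$, set $\rho_1\eqdef\tr_2(\rho)$, $\rho_2\eqdef\tr_1(\rho)$, and note $\tr(\rho_1)=\tr(\rho_2)=\tr(\rho)\eqdef c$. If $c=0$ then $\rho=0$ and $\sigma\eqdef 0$ works, since $\tr(C_i\cdot 0)=\tr(C_o\cdot 0)=0$. For $c>0$ I would normalise: $\rho/c$ is a coupling, hence a partial coupling, of $\rho_1/c,\rho_2/c\in\DD^1$, so monotonicity and the definition of $T_{C_i}$ give
\[
  T_{C_o}\big(\EE_1(\rho_1/c),\,\EE_2(\rho_2/c)\big)\ \le\ T_{C_i}(\rho_1/c,\rho_2/c)\ \le\ \tr\!\big(C_i(\rho/c)\big).
\]
Since $\EE_1,\EE_2$ are channels, $\EE_1(\rho_1/c),\EE_2(\rho_2/c)\in\DD^1$, so the partial coupling attaining the left-hand minimum is in fact a coupling $\sigma'$; I would then take $\sigma\eqdef c\,\sigma'$. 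By linearity of the channels and of the partial traces, $\tr_2(\sigma)=\EE_1(\tr_2\rho)$ and $\tr_1(\sigma)=\EE_2(\tr_1\rho)$, so $\sigma$ is the required coupling, and $\tr(C_o\sigma)=c\,\tr(C_o\sigma')\le c\,\tr(C_i(\rho/c))=\tr(C_i\rho)$ by the scaling identities of \Cref{lem: IVP algebraic}.

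For the converse, by \Cref{lem:validity quantum operation alter} it suffices to check $T_{C_o}(\EE_1(\rho_1),\EE_2(\rho_2))\le T_{C_i}(\rho_1,\rho_2)$ for $\rho_1\in\DD^1(\HH_1)$, $\rho_2\in\DD^1(\HH_2)$. Here partial couplings of $\rho_1,\rho_2$ coincide with couplings, so I would pick a coupling $\delta:\langle\rho_1,\rho_2\rangle$ attaining $T_{C_i}(\rho_1,\rho_2)=\tr(C_i\delta)$ and feed $\rho\eqdef\delta\in\DD(\HH_1\otimes\HH_2)$ into the hypothesis. This yields a coupling $\sigma:\langle\EE_1(\tr_2\delta),\EE_2(\tr_1\delta)\rangle=\langle\EE_1(\rho_1),\EE_2(\rho_2)\rangle$ with $\tr(C_o\sigma)\le\tr(C_i\delta)$, and since $\sigma$ is in particular a partial coupling, $T_{C_o}(\EE_1(\rho_1),\EE_2(\rho_2))\le\tr(C_o\sigma)\le T_{C_i}(\rho_1,\rho_2)$, as needed. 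The only delicate point, and the one place where being a \emph{channel} rather than a general quantum operation is used, is reconciling the ``coupling'' in the statement with the ``partial coupling'' used throughout the QOT development; trace preservation together with the already-established fact about density operators is exactly what bridges the two notions, while everything else reduces to the elementary trace/partial-trace/scalar identities recorded in \Cref{lem: IVP algebraic} and \Cref{lem: scale of partial coupling}.
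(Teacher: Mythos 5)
Your proposal is correct and follows essentially the same route as the paper: both directions reduce to the $\DD^1$ characterisation of monotonicity (\Cref{lem:validity quantum operation alter}), use the fact that trace preservation turns the minimizing partial coupling of the outputs into a genuine coupling, and handle general $\rho\in\DD$ by scaling. The paper's proof is merely terser about the normalisation step that you spell out explicitly.
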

\begin{proof}
  (if) part. For all $\rho_1\in\DD^1(\HH_1)$ and $\rho_2\in\DD^1(\HH_2)$, set $\rho : \<\rho_1,\rho_2\>$ which obtains $\tr(C_i\rho) = T_{C_i}(\rho_1,\rho_2)$. Then by assumption, there exists a coupling $\sigma : \langle \EE_1(\tr_2(\rho)), \EE_2(\tr_1(\rho)) \rangle$ (i.e., $\sigma : \langle \EE_1(\rho_1), \EE_2(\rho_2) \rangle$) such that:
  $$T_{C_o}(\EE_1(\rho_1),\EE_2(\rho_2)) \le \tr(C_o\sigma) \le \tr(C_i\rho) = T_{C_i}(\rho_1,\rho_2).$$
  Then by \Cref{lem:validity quantum operation alter} we finish this part.

  (only if) part. Since coupling is preserved under scaling, and by \Cref{lem: IVP algebraic}, we only need to focus on $\rho\in\DD^1(\HH_1\otimes\HH_2)$.
  Choose $\sigma : \langle \EE_1(\tr_2(\rho)), \EE_2(\tr_1(\rho)) \rangle$ which $\tr(C_o\sigma) = T_{C_o}(\EE_1(\tr_2(\rho)),\EE_2(\tr_1(\rho)))$.
  By assumption, we have:
  $$\tr(C_o\sigma) = T_{C_o}(\EE_1(\tr_2(\rho)),\EE_2(\tr_1(\rho)))\le T_{C_i}(\tr_2(\rho), \tr_1(\rho))\le\tr(C_i\rho).$$
\end{proof}

\begin{proposition}[\Cref{prop:monotone basic}]
The monotonicity satisfies several desired properties for data processing:
\begin{enumerate}
  \item 
  \emph{Backward}.
  $(\EE_1,\EE_2)$ is monotone w.r.t. $(\EE_1^\dag\otimes \EE_2^\dag)(C)$ and $C$. Here, $\EE^\dag$ is the dual of $\EE$, which satisfies $\tr(A\EE(B)) = \tr(\EE^\dag(A)B)$ for all linear operator $A,B$.
  \item
  \emph{Consequence}.
  Suppose $(\EE_1,\EE_2)$ is monotone w.r.t. $C_i'$ and $C_o'$, and $C_i'\sqsubseteq C_i$, $C_o\sqsubseteq C_o'$, then $(\EE_1,\EE_2)$ is monotone w.r.t. $C_i$ and $C_o$.
  \item 
  \emph{Sequential composition}.
  Suppose $(\EE_1,\EE_1')$ is monotone w.r.t. $C_i$ and $C_m$, and $(\EE_2,\EE_2')$ is monotone w.r.t. $C_m$ and $C_o$, then $(\EE_2\circ \EE_1,\EE_2'\circ \EE_1')$ is monotone w.r.t. $C_i$ and $C_o$.
  \end{enumerate}
  For any super-operator $\EE$, its dual $\EE^\dag$ is another super-operator. Whenever $\EE$ is a quantum operation with Kraus operator $\{E_i\}$, then $\EE^\dag$ has Kraus representation $\{E_i^\dag\}$.
  $\circ$ is the composition of two quantum operations, i.e., for all $\rho$, $(\EE_1\circ \EE_2)(\rho) \triangleq \EE_1(\EE_2(\rho))$.
\end{proposition}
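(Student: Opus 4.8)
The plan is to deduce all three items directly from the definition of monotonicity in terms of the quantum optimal transport cost $T_C$, combined with \Cref{lem:validity quantum operation alter} (it suffices to test monotonicity on normalised inputs) and the algebraic identities for CP maps and infinite-valued predicates collected in \Cref{lem: IVP algebraic}. Items (2) and (3) are bookkeeping. For (Consequence) I would first note that $A\sqsubseteq B$ implies $\tr(A\tau)\le\tr(B\tau)$ for every $\tau\in\Pos$, so $C\mapsto T_C(\cdot,\cdot)$ is monotone in the cost; then $C_o\sqsubseteq C_o'$ and $C_i'\sqsubseteq C_i$ give $T_{C_o}\le T_{C_o'}$ and $T_{C_i'}\le T_{C_i}$ pointwise, and chaining with the hypothesis $T_{C_o'}(\EE_1(\rho_1),\EE_2(\rho_2))\le T_{C_i'}(\rho_1,\rho_2)$ finishes it. For (Sequential composition) I would fix inputs $\rho_1,\rho_2$, observe that $\EE_1(\rho_1),\EE_1'(\rho_2)\in\DD$ since $\EE_1,\EE_1'$ are quantum operations, apply the hypothesis on $(\EE_2,\EE_2')$ to these states, and compose the two inequalities
\[
T_{C_o}\big(\EE_2(\EE_1(\rho_1)),\EE_2'(\EE_1'(\rho_2))\big)\le T_{C_m}\big(\EE_1(\rho_1),\EE_1'(\rho_2)\big)\le T_{C_i}(\rho_1,\rho_2).
\]

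The real content is in (Backward). By \Cref{lem:validity quantum operation alter} it is enough to prove $T_C(\EE_1(\rho_1),\EE_2(\rho_2))\le T_{(\EE_1^\dagger\otimes\EE_2^\dagger)(C)}(\rho_1,\rho_2)$ for $\rho_1\in\DD^1(\HH_1)$ and $\rho_2\in\DD^1(\HH_2)$. For such inputs every partial coupling is a coupling, so I would pick a coupling $\rho:\<\rho_1,\rho_2\>$ attaining the right-hand side, i.e.\ $T_{(\EE_1^\dagger\otimes\EE_2^\dagger)(C)}(\rho_1,\rho_2)=\tr\big((\EE_1^\dagger\otimes\EE_2^\dagger)(C)\,\rho\big)$, and push it forward to $\sigma\eqdef(\EE_1\otimes\EE_2)(\rho)$. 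Using the Kraus form $\{E_i\otimes F_j\}$ of $\EE_1\otimes\EE_2$ we get $(\EE_1\otimes\EE_2)^\dagger=\EE_1^\dagger\otimes\EE_2^\dagger$, hence $\tr(C\sigma)=\tr\big((\EE_1^\dagger\otimes\EE_2^\dagger)(C)\,\rho\big)=T_{(\EE_1^\dagger\otimes\EE_2^\dagger)(C)}(\rho_1,\rho_2)$; so the claim reduces to showing that $\sigma$ is a \emph{partial} coupling of $\EE_1(\rho_1)$ and $\EE_2(\rho_2)$, after which $T_C(\EE_1(\rho_1),\EE_2(\rho_2))\le\tr(C\sigma)$ is immediate.

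To verify $\sigma:\<\EE_1(\rho_1),\EE_2(\rho_2)\>_p$ I would proceed as follows. For the two marginal inequalities, complete $\EE_2$ to a genuine channel $\widetilde\EE_2$ by adjoining the Kraus operator $\sqrt{I-\EE_2^\dagger(I)}$; then $\tr_2\circ(\EE_1\otimes\EE_2)=\EE_1\circ\tr_2\circ(I\otimes\EE_2)$ and $(\tr_2\circ(I\otimes\EE_2))(\rho)\sqsubseteq(\tr_2\circ(I\otimes\widetilde\EE_2))(\rho)=\tr_2(\rho)=\rho_1$, so $\tr_2(\sigma)\sqsubseteq\EE_1(\rho_1)$ since $\EE_1$ preserves $\sqsubseteq$, and symmetrically $\tr_1(\sigma)\sqsubseteq\EE_2(\rho_2)$. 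For the trace inequality, set $A\eqdef\EE_1^\dagger(I)\sqsubseteq I$ and $B\eqdef\EE_2^\dagger(I)\sqsubseteq I$; a short computation with the adjoint/tensor identities gives $\tr(\EE_1(\rho_1))=\tr((A\otimes I)\rho)$, $\tr(\EE_2(\rho_2))=\tr((I\otimes B)\rho)$ and $\tr(\sigma)=\tr((A\otimes B)\rho)$, so using $\tr(\rho)=1$ and the factorisation $A\otimes I+I\otimes B-A\otimes B-I=-(I-A)\otimes(I-B)\sqsubseteq 0$ we obtain $\tr(\EE_1(\rho_1))+\tr(\EE_2(\rho_2))\le 1+\tr(\sigma)$. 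The step I expect to be the main obstacle is precisely this trace bound: the naive estimate $\tr(\EE_i(\rho_i))\le\tr(\rho_i)$ is too lossy to reestablish the partial-coupling constraint, and one must instead use the exact identity above; the remainder is routine bookkeeping with the tensor and adjoint rules from \Cref{lem: IVP algebraic}.
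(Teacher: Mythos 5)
Your proof is correct and follows essentially the same route as the paper's: the Backward case is handled by pushing an optimal coupling forward through $\EE_1\otimes\EE_2$ and verifying the partial-coupling conditions, with the trace bound coming from exactly the factorisation $(I-\EE_1^\dagger(I))\otimes(I-\EE_2^\dagger(I))\sqsupseteq 0$ that the paper uses, while Consequence and Sequential composition are the same routine chaining (phrased via $T_C$ rather than explicit witness couplings). Your channel-completion argument for the marginal inequality $\tr_2((\EE_1\otimes\EE_2)(\rho))\sqsubseteq\EE_1(\tr_2(\rho))$ is a valid justification of a step the paper simply asserts.
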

\begin{proof}
  (1) 
  By \Cref{lem:validity quantum operation alter}, for any $\rho\in\DD^1(\HH_1\otimes\HH_2)$, set $\sigma = (\EE_1\otimes\EE_2)(\rho)$. By the property of dual map, we have: 
  $$\tr((\EE_1^\dag\otimes \EE_2^\dag)(C)\rho) = \tr(C(\EE_1\otimes \EE_2)(\rho)) = \tr(C\sigma).$$
  It is then sufficient to show that $\sigma : \< \EE_1(\tr_2(\rho)),\EE_2(\tr_1(\rho)) \>_p$, which is completed by noticing that:
  \begin{align*}
    \tr_2(\sigma) &= \tr_2((\EE_1\otimes\EE_2)(\rho)) \sqsubseteq \EE_1(\tr_2(\rho)), \\
    \tr_1(\sigma) &= \tr_1((\EE_1\otimes\EE_2)(\rho)) \sqsubseteq \EE_2(\tr_1(\rho)), \\
    1 + \tr(\sigma) &= \tr(\rho) + \tr((I\otimes I)(\EE_1\otimes\EE_2)(\rho)) \\
    &= \tr((I\otimes I + \EE_1^\dag(I) \otimes \EE_2^\dag(I))\rho) \\
    &\ge \tr((\EE_1^\dag(I)\otimes I + I \otimes \EE_2^\dag(I))\rho) \\
    &= \tr(\EE_1^\dag(I)\tr_2(\rho)) + \tr(\EE_2^\dag(I)\tr_1(\rho)) \\
    &= \tr(\EE_1(\tr_2(\rho))) + \tr(\EE_2(\tr_1(\rho))).
  \end{align*}
  First two hold since $\EE_1$ and $\EE_2$ are quantum operations, and furthermore, $\EE_1^\dag(I)\sqsubseteq I$ and $\EE_2^\dag(I)\sqsubseteq I$, thus, $0\sqsubseteq (I - \EE_1^\dag(I))\otimes (I - \EE_2^\dag(I))$ which then leads to the inequality of fifth line.
  
  (2)
  By \Cref{lem:validity quantum operation alter}, for any $\rho\in\DD^1(\HH_1\otimes\HH_2)$, by assumption, there exists $\sigma : \< \EE_1(\tr_2(\rho)), \EE_2(\tr_1(\rho))\>_p$ such that $\tr(C_i'\rho)\ge\tr(C_o'\sigma)$. Thus,
  $$ \tr(C_i\rho) \ge \tr(C_i'\rho)\ge \tr(C_o'\sigma) \ge \tr(C_o\sigma).$$

  (3)
  For any $\rho_1$ and $\rho_1'$, and any partial coupling $\rho : \<\rho_1,\rho_1'\>_p$, by the first assumption, there exists a partial coupling $\sigma : \<\EE_1(\rho_1),\EE_1'(\rho_1')\>_p$ such that $\tr(C_i\rho) \geq \tr(C_m\sigma)$. 
  From the second assumption there is a partial coupling $\sigma': \< \EE_2(\EE_1(\rho_1)), \EE_2'(\EE_1'(\rho_1')) \>_p$, or equivalently, $\sigma': \< (\EE_1\circ\EE_2)(\rho_1), (\EE_1'\circ\EE_2')(\rho_1') \>_p$ such that $\tr(C_m\sigma) \geq \tr(C_o\sigma')$, which concludes the proof by noticing $\tr(C_i\rho) \geq \tr(C_o\sigma')$.
\end{proof}

\begin{proposition}[Split Cost]
\label{lem:judgment two side wlp}
The monotonicity can further be checked via splitting output cost. Formally,
  Suppose $\EE_1, \EE_2$ are quantum channels. 
  $(\EE_1,\EE_2)$ is monotone w.r.t. $C$ and $Q_1\otimes I + I\otimes Q_2$ if and only if 
  $$C\sqsupseteq \EE_1^\dag(Q_1)\otimes I + I \otimes\EE_2^\dag(Q_2).$$
\end{proposition}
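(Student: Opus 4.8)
The plan is to derive both implications from the (Backward) and (Consequence) clauses of \Cref{prop:monotone basic}, together with the coupling reformulation of channel monotonicity given by \Cref{lem:validity quantum channel}. The key preliminary observation is that, since $\EE_1$ and $\EE_2$ are channels, $\EE_1^\dagger(I) = \EE_2^\dagger(I) = I$, so by the tensor and linearity rules of \Cref{lem: IVP algebraic},
\[
(\EE_1^\dagger\otimes\EE_2^\dagger)(Q_1\otimes I + I\otimes Q_2) = \EE_1^\dagger(Q_1)\otimes \EE_2^\dagger(I) + \EE_1^\dagger(I)\otimes \EE_2^\dagger(Q_2) = \EE_1^\dagger(Q_1)\otimes I + I\otimes \EE_2^\dagger(Q_2).
\]
Write $D$ for this operator, so that the claim reads: $(\EE_1,\EE_2)$ is monotone w.r.t.\ $C$ and $Q_1\otimes I + I\otimes Q_2$ iff $C\sqsupseteq D$.

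For the ``if'' direction I would argue as follows: assume $C\sqsupseteq D$; by the (Backward) property $(\EE_1,\EE_2)$ is monotone w.r.t.\ $(\EE_1^\dagger\otimes\EE_2^\dagger)(Q_1\otimes I + I\otimes Q_2) = D$ and $Q_1\otimes I + I\otimes Q_2$; since $D\sqsubseteq C$, applying (Consequence) to strengthen the input cost yields that $(\EE_1,\EE_2)$ is monotone w.r.t.\ $C$ and $Q_1\otimes I + I\otimes Q_2$, as required.

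For the ``only if'' direction, assume $(\EE_1,\EE_2)$ is monotone w.r.t.\ $C$ and $Q_1\otimes I + I\otimes Q_2$. By the order characterization in \Cref{lem: IVP algebraic} it suffices to show $\tr(C\rho)\ge\tr(D\rho)$ for every $\rho\in\DD^1(\HH_1\otimes\HH_2)$ (the general case following by homogeneity). Fix such a $\rho$. By \Cref{lem:validity quantum channel} there is a coupling $\sigma:\langle\EE_1(\tr_2(\rho)),\EE_2(\tr_1(\rho))\rangle$ with $\tr(C\rho)\ge\tr((Q_1\otimes I + I\otimes Q_2)\sigma)$; this is a genuine coupling with exact marginals $\tr_2(\sigma)=\EE_1(\tr_2(\rho))$ and $\tr_1(\sigma)=\EE_2(\tr_1(\rho))$ because $\tr_2(\rho),\tr_1(\rho)$ are density operators and $\EE_1,\EE_2$ are trace preserving. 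Using the partial-trace rules $\tr((A\otimes I)P)=\tr(A\tr_2(P))$, $\tr((I\otimes A)P)=\tr(A\tr_1(P))$ and the dual-map identity $\tr(A\EE(P))=\tr(\EE^\dagger(A)P)$ from \Cref{lem: IVP algebraic}, I compute
\[
\begin{aligned}
\tr\big((Q_1\otimes I + I\otimes Q_2)\sigma\big) &= \tr(Q_1\,\EE_1(\tr_2(\rho))) + \tr(Q_2\,\EE_2(\tr_1(\rho))) \\
&= \tr(\EE_1^\dagger(Q_1)\tr_2(\rho)) + \tr(\EE_2^\dagger(Q_2)\tr_1(\rho)) \\
&= \tr\big((\EE_1^\dagger(Q_1)\otimes I)\rho\big) + \tr\big((I\otimes\EE_2^\dagger(Q_2))\rho\big) = \tr(D\rho),
\end{aligned}
\]
and combining with the previous inequality gives $\tr(C\rho)\ge\tr(D\rho)$.

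The argument is essentially bookkeeping, so I do not anticipate a genuine obstacle. The only points needing care are (i) that the identity for $(\EE_1^\dagger\otimes\EE_2^\dagger)(\cdot)$ and all the trace manipulations must be carried out at the level of possibly infinite-valued predicates, for which one invokes \Cref{lem: IVP algebraic} rather than ordinary matrix algebra; and (ii) that for channels the partial couplings appearing in the definition of monotonicity are automatically couplings, which is precisely what lets the split postcondition collapse onto the tensor-decomposed relational preimage.
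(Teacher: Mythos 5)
Your proof is correct and follows essentially the same route as the paper's: the ``if'' direction via the (Backward) and (Consequence) clauses of \Cref{prop:monotone basic}, and the ``only if'' direction via \Cref{lem:validity quantum channel} together with the partial-trace/dual-map identities and the trace characterization of the L\"owner order. The only difference is that you spell out the unitality computation $(\EE_1^\dagger\otimes\EE_2^\dagger)(Q_1\otimes I + I\otimes Q_2) = \EE_1^\dagger(Q_1)\otimes I + I\otimes \EE_2^\dagger(Q_2)$ explicitly, which the paper leaves implicit.
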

\begin{proof}
  The if part can be proved by combining \Cref{prop:monotone basic} (\ref{lem:judgment two side}) and (\ref{lem:judgment csq}). For the only if part, by employing \Cref{lem:validity quantum channel}, we have for all $\rho\in\DD(\HH_1\otimes\HH_2)$, there exists $\sigma : \< \EE_1(\tr_2(\rho)), \EE_2(\tr_1(\rho)) \>$ such that 
  \begin{align*}
    \tr(C\rho) &\ge \tr((Q_1\otimes I + I\otimes Q_2)\sigma) \\
    &= \tr(Q_1\tr_2(\sigma)) + \tr(Q_2\tr_1(\sigma)) \\
    &= \tr(Q_1\EE_1(\tr_2(\rho))) + \tr(Q_2\EE_2(\tr_1(\rho)) \\
    &= \tr((\EE_1^\dag(Q_1)\otimes I + I\otimes\EE_2^\dag(Q_2))\rho).
  \end{align*}
  Since this holds for all $\rho$, we must have: $C\sqsupseteq \EE_1^\dag(Q_1)\otimes I + I \otimes\EE_2^\dag(Q_2)$.
\end{proof}

\begin{theorem}[\Cref{lem:judgment strassen}]
  Suppose $\EE_1$ and $\EE_2$ are quantum channels and costs $C_i\in\PosI$ and $C_o\in \Pos$ (i.e., $C_o$ is finite). Then the following statements are equivalent:
  \begin{enumerate}
    \item $(\EE_1,\EE_2)$ is monotone w.r.t. $C_i$ and $C_o$
    \item for all $(Y_1,Y_2,n)\in\mathcal{Y}$, $(\EE_1,\EE_2)$ is monotone w.r.t. $C_i + nI$ and $Y_1\otimes I + I\otimes (nI - Y_2)$,
    where $\mathcal{Y} \eqdef \{(Y_1, Y_2, n) \ | \ n \in \NN; 0\leqlow Y_1; 0\leqlow Y_2 \leqlow nI; C_o \geqlow Y_1 \otimes I - I \otimes Y_2\}$.
  \end{enumerate}
\end{theorem}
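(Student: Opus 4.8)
The plan is to deduce both implications from the quantum Strassen theorem with defects (\Cref{thm:quantum strassen defect alter}), applied to the output states $\EE_1(\rho_1)$ and $\EE_2(\rho_2)$, together with the reduction of monotonicity to total density operators (\Cref{lem:validity quantum operation alter}) and the (Consequence) property (\Cref{prop:monotone basic}). Two elementary observations will be used repeatedly. First, for $\rho_1,\rho_2\in\DD^1$ every partial coupling is a coupling, so $\tr((C+nI)\rho)=\tr(C\rho)+n$ for every such coupling $\rho$, whence $T_{C+nI}(\rho_1,\rho_2)=T_C(\rho_1,\rho_2)+n$. Second, for a split cost $Q_1\otimes I+I\otimes Q_2$ the value $\tr((Q_1\otimes I+I\otimes Q_2)\sigma)=\tr(Q_1\tr_2(\sigma))+\tr(Q_2\tr_1(\sigma))$ does not depend on the coupling $\sigma$, so $T_{Q_1\otimes I+I\otimes Q_2}(\sigma_1,\sigma_2)=\tr(Q_1\sigma_1)+\tr(Q_2\sigma_2)$ for $\sigma_1,\sigma_2\in\DD^1$.

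For $(1)\Rightarrow(2)$: fix $(Y_1,Y_2,n)\in\mathcal{Y}$. By \Cref{lem:validity quantum operation alter} and the first observation, monotonicity of $(\EE_1,\EE_2)$ w.r.t.\ $C_i$ and $C_o$ gives monotonicity w.r.t.\ $C_i+nI$ and $C_o+nI$ (both sides of the defining inequality shift by $n$ on $\DD^1$). Since $C_o\geqlow Y_1\otimes I-I\otimes Y_2$, we have $Y_1\otimes I+I\otimes(nI-Y_2)=(Y_1\otimes I-I\otimes Y_2)+nI\leqlow C_o+nI$, so weakening the output cost via (Consequence) (\Cref{prop:monotone basic}) yields monotonicity w.r.t.\ $C_i+nI$ and $Y_1\otimes I+I\otimes(nI-Y_2)$, as required.

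For $(2)\Rightarrow(1)$: by \Cref{lem:validity quantum operation alter} it suffices to fix $\rho_1,\rho_2\in\DD^1$ and prove $T_{C_o}(\EE_1(\rho_1),\EE_2(\rho_2))\le\epsilon$, where $\epsilon\eqdef T_{C_i}(\rho_1,\rho_2)$; the case $\epsilon=+\infty$ is trivial, so assume $\epsilon$ finite. Put $\sigma_j=\EE_j(\rho_j)\in\DD^1(\HH_j)$. Since $C_o\in\Pos$ and $\tr(\sigma_1)=\tr(\sigma_2)$, \Cref{thm:quantum strassen defect alter} tells us that $T_{C_o}(\sigma_1,\sigma_2)\le\epsilon$ (equivalently, $\sigma_1\,(C_o)^\#_\epsilon\,\sigma_2$) holds iff $\tr(Y_1\sigma_1)\le\tr(Y_2\sigma_2)+\epsilon$ for all $Y_1\in\Pos(\HH_1)$, $Y_2\in\Pos(\HH_2)$ with $C_o\geqlow Y_1\otimes I-I\otimes Y_2$. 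Given such a pair, choose $n\in\NN$ with $Y_2\leqlow nI$, so $(Y_1,Y_2,n)\in\mathcal{Y}$; applying hypothesis (2) at inputs $\rho_1,\rho_2$ and then the two observations gives
$$\tr(Y_1\sigma_1)+n-\tr(Y_2\sigma_2)=T_{Y_1\otimes I+I\otimes(nI-Y_2)}(\sigma_1,\sigma_2)\le T_{C_i+nI}(\rho_1,\rho_2)=\epsilon+n,$$
which is exactly the required inequality. Hence Strassen produces a coupling $\sigma:\langle\sigma_1,\sigma_2\rangle$ with $\tr(C_o\sigma)\le\epsilon$, so $T_{C_o}(\EE_1(\rho_1),\EE_2(\rho_2))\le T_{C_i}(\rho_1,\rho_2)$, and $(\EE_1,\EE_2)$ is monotone w.r.t.\ $C_i$ and $C_o$.

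The main obstacle is the $(2)\Rightarrow(1)$ direction: one must set up the Strassen dual condition on the channel outputs and then recognize that, after picking $n$ large enough to bound $Y_2$ and so land in $\mathcal{Y}$, the $+nI$ shifts introduced on both the input cost ($C_i+nI$) and the output split cost ($I\otimes(nI-Y_2)$) cancel precisely, leaving the Strassen inequality. The hypothesis that $C_o$ is finite ($C_o\in\Pos$) is essential, since \Cref{thm:quantum strassen defect alter} is applied with $X=C_o$; and restricting to quantum channels (AST programs) is what makes each $\EE_j(\rho_j)$ a density operator and lets us invoke \Cref{lem:validity quantum operation alter}.
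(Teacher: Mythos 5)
Your proof is correct and follows essentially the same route as the paper: both reduce to density-operator inputs, apply the quantum Strassen theorem with defects to the channel outputs with $X=C_o$ and defect $\epsilon=T_{C_i}(\rho_1,\rho_2)$, and identify condition (2) with the dual side after the $+nI$ shifts cancel. The only cosmetic difference is that you dispatch $(1)\Rightarrow(2)$ via the (Consequence) property and the shift observation rather than reading it off the easy direction of the Strassen equivalence, which amounts to the same argument.
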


\begin{proof}
  As $\EE_1$ and $\EE_2$ are quantum channels, we employ \Cref{lem:validity quantum channel} to interpret monotonicity.

  (1) says that for all $\rho \in \DD({\HH_1 \otimes \HH_2})$ there is a $\sigma: \langle \EE_1(\tr_2(\rho)), \EE_2(\tr_1)(\rho) \rangle$ such that $\tr(C_o\sigma) \leq \tr(C_i \rho)$, or equivalently, 
  \[\EE_1(\tr_2(\rho)) {C_o}^\#_\epsilon \EE_2(\tr_1)(\rho)\]
  where $\epsilon = \tr(C_i \rho)$.

  (2) says that for all $\rho \in \DD({\HH_1 \otimes \HH_2})$, positive Hermitians $Y_1, Y_2$ and $n$ such that $Y_2 \leqlow nI$ (where such an $n$ always exists for any $Y_2$) and $Y_1\otimes I - I\otimes Y_2\sqsubseteq C_o$, there exists a coupling $\sigma: \langle \EE_1(\tr_2(\rho)), \EE_2(\tr_1)(\rho) \rangle$ such that 
  \[
    \tr(C_i \rho) + n\tr(\rho) \geq \tr((Y_1 \otimes I)\sigma) + n\tr(\sigma) - \tr((I \otimes Y_2)\sigma)
  \]
  which, by noticing $\tr(\sigma) = \tr(\rho)$ since $\EE_1$ and $\EE_2$ are quantum channels, is equivalent to 
  \[
  \tr(Y_1(\EE_1\tr_2(\rho))) \leq \tr(Y_2(\EE_2 \tr_1(\rho))) + \tr(C_i \rho).
  \]
  The equivalence then follows from \Cref{thm:quantum strassen defect alter}.
\end{proof}

Symmetric space is useful for describing equivalence of states, as studied in \cite{zhou2018quantumcouplingstrassentheorem}:
\begin{lemma}[State Equivalence (e.g. \cite{barthe_rqpd} prop 3.2)]
  \label{lemma:stateequiv}
  Let $\symeq$ be the projector $\frac{1}{2}(I + \swap)$. Then, for any $\rho_1, \rho_2 \in \DD(\HH)$,
  \[
  \rho_1 = \rho_2 \iff \rho_1 (\symeq)^\# \rho_2.
  \]
\end{lemma}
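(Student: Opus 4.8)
The plan is to prove the two implications separately, each by an elementary construction or computation; the duality theorems are not needed, although the backward direction could alternatively be obtained from \Cref{thm:quantum strassen defect alter} instantiated at $X=P_{sym}^\bot$ and $\epsilon=0$.

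For the forward direction, suppose $\rho_1=\rho_2=\sigma$ and fix a spectral decomposition $\sigma=\sum_i\lambda_i\ket{e_i}\bra{e_i}$ with $\{\ket{e_i}\}$ orthonormal and each $\lambda_i\ge 0$. I would take $\rho\eqdef\sum_i\lambda_i\ket{e_ie_i}\bra{e_ie_i}\in\DD(\HH\otimes\HH)$. Each vector $\ket{e_ie_i}$ is fixed by $\swap$ and hence lies in the symmetric subspace $\symeq=\frac12(I+\swap)$, so $\supp(\rho)\subseteq\symeq$; and $\tr_2(\rho)=\tr_1(\rho)=\sum_i\lambda_i\ket{e_i}\bra{e_i}=\sigma$, so $\rho$ is a coupling of $\langle\rho_1,\rho_2\rangle$. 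Thus $\rho$ witnesses $\rho_1(\symeq)^\#\rho_2$.

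For the backward direction, let $\rho:\langle\rho_1,\rho_2\rangle$ with $\supp(\rho)\subseteq\symeq$. Since $\symeq$ is precisely the projector onto the $+1$-eigenspace of the self-adjoint unitary $\swap$, we have $\swap P_{sym}=P_{sym}$, and the support hypothesis gives $P_{sym}\rho P_{sym}=\rho$; combining these yields $\swap\rho\,\swap=\rho$. Because conjugation by $\swap$ exchanges the two tensor factors, it exchanges the partial traces: $\tr_2(\swap\rho\,\swap)=\tr_1(\rho)$. Hence $\rho_1=\tr_2(\rho)=\tr_2(\swap\rho\,\swap)=\tr_1(\rho)=\rho_2$.

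There is no genuinely hard step; the only points needing a little care are deducing $\swap\rho\,\swap=\rho$ from $\supp(\rho)\subseteq\symeq$ (via $\supp(\rho)\subseteq\symeq\iff\rho=P_{sym}\rho P_{sym}$ together with $\swap P_{sym}=P_{sym}$ and self-adjointness of both operators) and correctly tracking which factor is discarded after conjugating by $\swap$. Both are standard facts about the symmetric subspace, and the lemma follows at once.
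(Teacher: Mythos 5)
Your proof is correct: the diagonal coupling $\sum_i\lambda_i\ket{e_ie_i}\bra{e_ie_i}$ for the forward direction and the $\swap\rho\,\swap=\rho$ argument (via $\swap P_{sym}=P_{sym}\swap=P_{sym}$ and the fact that conjugation by $\swap$ exchanges the two partial traces) for the backward direction are exactly the standard argument. The paper itself does not reprove this lemma but defers to \cite{barthe_rqpd} (Prop.~3.2), whose proof proceeds along the same lines, so there is nothing to flag.
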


\begin{proposition}[\Cref{lem: channel equivalence}]
    Two quantum channels $\EE_1$ and $\EE_2$ are equivalent if and only if for all density operators $\rho_1,\rho_2$, $T_s(\EE_1(\rho_1),\EE_2(\rho_2))\le T_s(\rho_1,\rho_2).$
\end{proposition}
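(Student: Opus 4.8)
The plan is to derive both implications from two properties of the stabilized optimal transport cost $T_s$ recalled just above the statement: \emph{monotonicity under data processing}, $T_s(\EE(\rho),\EE(\sigma))\le T_s(\rho,\sigma)$ for every $\EE\in\QC$; and \emph{faithfulness on density operators}, $T_s(\rho,\sigma)=0$ iff $\rho=\sigma$. The forward implication is then immediate, and the converse follows by instantiating the hypothesis at equal inputs and using that a linear map on operators is determined by its values on density operators.

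For the ($\Rightarrow$) direction, suppose $\EE_1=\EE_2=:\EE$ are equivalent as superoperators. For any density operators $\rho_1,\rho_2$, the inequality $T_s(\EE(\rho_1),\EE(\rho_2))\le T_s(\rho_1,\rho_2)$ is exactly the monotonicity property of $T_s$ instantiated with $\rho:=\rho_1$, $\sigma:=\rho_2$, so there is nothing more to prove. Before the converse I would record faithfulness. One direction is clear: for every auxiliary state $\gamma$ we have $T(\rho\otimes\gamma,\rho\otimes\gamma)=0$ by \Cref{lemma:stateequiv} (equality of the two states yields a coupling supported in the symmetric subspace, i.e.\ one with zero $P_{sym}^\bot$-cost, which is the minimum as $P_{sym}^\bot\geqlow 0$, using that for total density operators partial couplings coincide with couplings), hence $T_s(\rho,\rho)=\inf_\gamma T(\rho\otimes\gamma,\rho\otimes\gamma)=0$. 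Conversely, if $T_s(\rho,\sigma)=0$ then by the identity $T_s(\rho,\sigma)=T(\rho\otimes\halfI,\sigma\otimes\halfI)$ recalled above we get $T(\rho\otimes\halfI,\sigma\otimes\halfI)=0$, hence there is a coupling of $\rho\otimes\halfI$ and $\sigma\otimes\halfI$ whose support lies in the symmetric subspace; \Cref{lemma:stateequiv} then gives $\rho\otimes\halfI=\sigma\otimes\halfI$, and therefore $\rho=\sigma$.

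For the ($\Leftarrow$) direction, assume $T_s(\EE_1(\rho_1),\EE_2(\rho_2))\le T_s(\rho_1,\rho_2)$ for all density operators $\rho_1,\rho_2$. Taking $\rho_1=\rho_2=\rho$ gives $T_s(\EE_1(\rho),\EE_2(\rho))\le T_s(\rho,\rho)=0$, and since $T_s\ge 0$ this forces $T_s(\EE_1(\rho),\EE_2(\rho))=0$. As $\EE_1,\EE_2$ are channels, $\EE_1(\rho)$ and $\EE_2(\rho)$ are density operators, so faithfulness yields $\EE_1(\rho)=\EE_2(\rho)$. Since $\rho$ ranges over all density operators and a linear map on operators is determined by its action on density operators, $\EE_1=\EE_2$, i.e.\ the channels are equivalent. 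The argument is short and computation-free; the only place requiring care is invoking the correct prior results for faithfulness of $T_s$ — the state-equivalence lemma and the identification of $T_s$ with the $\halfI$-stabilized QOT cost — both of which are already established in the development, so I expect no real obstacle.
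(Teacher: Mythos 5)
Your proof is correct and follows essentially the same route as the paper: the forward direction is exactly the monotonicity of $T_s$ under data processing, and the converse instantiates the hypothesis at $\rho_1=\rho_2=\rho$, uses $T_s(\rho,\rho)=0$ (the paper bounds it by $T(\rho,\rho)=0$ rather than by $T(\rho\otimes\gamma,\rho\otimes\gamma)$, an immaterial difference), and then applies the identity $T_s(\cdot,\cdot)=T(\cdot\otimes\halfI,\cdot\otimes\halfI)$ together with the state-equivalence lemma to conclude $\EE_1(\rho)=\EE_2(\rho)$ for all $\rho$, hence $\EE_1=\EE_2$.
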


\begin{proof}
  The ``only if'' part holds directly by the monotonicity of stable QOT under data processing~\cite{mullerhermes2022monotonicity}.
  
  For the ``if'' part, 
  for any $\rho\in \DD^1(\HH)$, 
  from the assumption, we have 
  $0\le T_s(\rho,\rho)\le T(\rho,\rho) = 0$.
  Thus, $T_s(\EE_1(\rho),\EE_2(\rho)) = 0$, 
  or equivalently, $T(\EE_1(\rho)\otimes \halfI,\EE_2(\rho)\otimes\halfI) = 0$, indicating $\EE_1(\rho)\otimes \halfI = \EE_2(\rho)\otimes \halfI$. 
  Then,
  we know 
  $\EE_1(\rho) = \EE_2(\rho)$ for all $\rho$, 
  meaning that $\EE_1=\EE_2$.
\end{proof}

\begin{proposition}[\Cref{thm:strassen QOT}]
  Given $\rho_1,\rho_2\in\DD^1(\HH)$ and $\epsilon\in \mathbb{R}^+$, the following are equivalent:
  \begin{enumerate}
    \item $T_s(\rho_1,\rho_2)\le \epsilon$;
    \item For all $Y_1, Y_2\in \Pos(\HH\otimes\HH_2)$ such that $P_{sym}^\bot[\HH\otimes\HH_2] \ge 2(Y_1\otimes I - I\otimes Y_2)$, it holds that:
    $$\tr(\tr_2(Y_1)\rho_1) \le \tr(\tr_2(Y_2))\rho_2) + \epsilon.$$
  \end{enumerate}
\end{proposition}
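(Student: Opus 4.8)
The plan is to derive the equivalence from the quantum Strassen theorem with defects (\Cref{thm:quantum strassen defect alter}), applied to the stabilized states. The key external input is the identity $T_s(\rho_1,\rho_2) = T\big(\rho_1\otimes\halfI,\rho_2\otimes\halfI\big)$ established in \Cref{sec:stablized-quantum-ot-cost}, where $\halfI$ is the maximally mixed state on the auxiliary qubit space $\HH_2$, so that $\halfI = \tfrac12 I_{\HH_2}$ and $\rho_i\otimes\halfI\in\DD^1(\HH\otimes\HH_2)$; I will assume this fact rather than reprove it.

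First I would note that since $\rho_1\otimes\halfI$ and $\rho_2\otimes\halfI$ are total density operators on $\HH\otimes\HH_2$, every partial coupling of them is a coupling, so $T\big(\rho_1\otimes\halfI,\rho_2\otimes\halfI\big) = \min_{\tau:\<\rho_1\otimes\halfI,\rho_2\otimes\halfI\>}\tr\big(P_{sym}^\bot[\HH\otimes\HH_2]\,\tau\big)$, with the minimum attained (\Cref{prop:partial-coupling}). Hence statement (1), $T_s(\rho_1,\rho_2)\le\epsilon$, is equivalent to the existence of a witness for the lifting $(\rho_1\otimes\halfI)\,\big(P_{sym}^\bot[\HH\otimes\HH_2]\big)^{\#}_{\epsilon}\,(\rho_2\otimes\halfI)$ in the sense of \Cref{def:quantum lifting alter}.

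Next I would apply \Cref{thm:quantum strassen defect alter} with both underlying Hilbert spaces taken to be $\HH\otimes\HH_2$, the two marginals being $\rho_1\otimes\halfI$ and $\rho_2\otimes\halfI$ (both of trace $1$, so the hypothesis that the marginals have equal trace holds), cost $X := P_{sym}^\bot[\HH\otimes\HH_2]\in\Pos$, and defect $\epsilon$. This gives: (1) holds iff for all $\tilde Y_1,\tilde Y_2\in\Pos(\HH\otimes\HH_2)$ with $P_{sym}^\bot[\HH\otimes\HH_2]\sqsupseteq \tilde Y_1\otimes I - I\otimes\tilde Y_2$, one has $\tr\big(\tilde Y_1(\rho_1\otimes\halfI)\big)\le\tr\big(\tilde Y_2(\rho_2\otimes\halfI)\big)+\epsilon$. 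Using $\halfI = \tfrac12 I_{\HH_2}$ and the partial-trace identity $\tr\big((A\otimes I)P\big)=\tr\big(A\,\tr_2(P)\big)$ (\Cref{lem: IVP algebraic}), each side rewrites as $\tr\big(\tilde Y_i(\rho_i\otimes\halfI)\big) = \tfrac12\tr\big(\tr_2(\tilde Y_i)\,\rho_i\big)$, the partial trace being over $\HH_2$. Performing the substitution $\tilde Y_i = 2Y_i$, a bijection of $\Pos(\HH\otimes\HH_2)$ onto itself, turns the constraint into $P_{sym}^\bot[\HH\otimes\HH_2]\sqsupseteq 2(Y_1\otimes I - I\otimes Y_2)$ and the inequality into $\tr\big(\tr_2(Y_1)\,\rho_1\big)\le\tr\big(\tr_2(Y_2)\,\rho_2\big)+\epsilon$, which is precisely statement (2), completing the equivalence.

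I do not anticipate a genuine obstacle: the substantive content is entirely packaged in the appendix identity $T_s(\rho_1,\rho_2) = T(\rho_1\otimes\halfI,\rho_2\otimes\halfI)$. The only point demanding care is bookkeeping the factor $\tfrac12$ introduced by the maximally mixed qubit, so that the defect in statement (2) remains $\epsilon$ and not $2\epsilon$; the rescaling $\tilde Y_i = 2Y_i$ absorbs it cleanly.
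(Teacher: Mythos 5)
Your proposal is correct and follows essentially the same route as the paper's proof: reduce (1) to the lifting $(\rho_1\otimes\halfI)\,(P_{sym}^\bot)^{\#}_{\epsilon}\,(\rho_2\otimes\halfI)$ via the identity $T_s(\rho_1,\rho_2)=T(\rho_1\otimes\halfI,\rho_2\otimes\halfI)$, apply \Cref{thm:quantum strassen defect alter}, and absorb the factor $\tfrac12$ from the maximally mixed qubit by rescaling the dual variables (the paper writes $Y_i'=Y_i/2$ where you write $\tilde Y_i=2Y_i$, which is the same substitution read in the opposite direction). No gaps.
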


\begin{proof}
    By the property of $T_s$, we first observe:
    $$T_s(\rho_1,\rho_2) = \inf_{\tau:\<\rho_1\otimes \halfI, \rho_2\otimes\halfI\>}(P_{sym}^\bot[\HH\otimes\HH_2]\tau),$$
    which implies that, (1) is equivalent to $(\rho_1\otimes \halfI)\big(P_{sym}^\bot[\HH\otimes\HH_2]\big)^\#_\epsilon(\rho_2\otimes \halfI)$. Employing Theorem \ref{thm:quantum strassen defect alter}, (1) is further equivalent to:
    \begin{enumerate}
        \item[3)] for all $Y_1, Y_2\in \Pos(\HH\otimes\HH_2)$ such that $P_{sym}^\bot[\HH\otimes\HH_2] \ge Y_1\otimes I - I\otimes Y_2$, it holds that:
            $$\tr\Big(Y_1\big(\rho_1\otimes \halfI\big)\Big) \le \tr\Big(Y_2\big(\rho_2\otimes \halfI\big)\Big) + \epsilon.$$
    \end{enumerate}
    Notice that, $\tr(Y_i(\rho_i\otimes \halfI)) = \tr(\tr_2(\frac{Y_i}{2})\rho_i)$ for $i=1,2$, direct substitutions of $Y_i' = \frac{Y_i}{2}$ translate (3) to (2).
\end{proof}

\section{Deferred Proofs in ``A Quantum Relational Hoare Logic'' Section}

\begin{lemma}[\Cref{lem:validity and monotone}]
  $\vDash Z : \rtriple{P}{S_1}{S_2}{Q}$ if and only if for all $z\in Z$, $(\sem{S_1},\sem{S_2})$ is monotone w.r.t. $P$ and $Q$.
\end{lemma}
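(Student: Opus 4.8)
The plan is to fix $z \in Z$ and prove, separately for each $z$, that the $z$-instance of the validity clause for $\vDash Z : \rtriple{P}{S_1}{S_2}{Q}$ is equivalent to monotonicity of $(\sem{S_1},\sem{S_2})$ w.r.t.\ $P_z$ and $Q_z$; since both notions are stated uniformly over $z \in Z$, the lemma follows by quantifying over $z$. The first step is to discharge the existential quantifier over the witness $\sigma$ in the definition of validity. Because $T_{Q_z}$ is \emph{by definition} the minimum of $\tr(Q_z\,\cdot\,)$ over partial couplings, and this minimum is attained (\Cref{prop:partial-coupling}), the clause ``there exists a partial coupling $\sigma : \<\sem{S_1}(\tr_2\rho), \sem{S_2}(\tr_1\rho)\>_p$ with $\tr(P_z\rho)\ge\tr(Q_z\sigma)$'' is equivalent to ``$\tr(P_z\rho) \ge T_{Q_z}\bigl(\sem{S_1}(\tr_2\rho), \sem{S_2}(\tr_1\rho)\bigr)$''. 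Hence validity for $z$ reduces to the assertion that $\tr(P_z\rho) \ge T_{Q_z}\bigl(\sem{S_1}(\tr_2\rho),\sem{S_2}(\tr_1\rho)\bigr)$ for every $\rho \in \DD^1(\HH_{S_1}\otimes\HH_{S_2})$.

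Next I would recast the left-hand side in terms of $T_{P_z}$ by a bijective bookkeeping between input states $\rho \in \DD^1(\HH_{S_1}\otimes\HH_{S_2})$ and pairs-with-coupling. Every such $\rho$ has marginals $\tr_2\rho$ and $\tr_1\rho$ that are themselves full density operators (by \Cref{lemma:traceequiv}); conversely, every pair $\rho_1,\rho_2 \in \DD^1$ is realized by at least one such $\rho$, e.g.\ $\rho_1\otimes\rho_2$; and for a fixed such pair the operators $\rho \in \DD^1(\HH_{S_1}\otimes\HH_{S_2})$ with those marginals are exactly the couplings $\rho : \<\rho_1,\rho_2\>$ — which, for total $\rho_1,\rho_2$, coincide with the partial couplings $\rho : \<\rho_1,\rho_2\>_p$. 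Minimizing $\tr(P_z\rho)$ over all $\rho$ with fixed marginals $\rho_1,\rho_2$ therefore yields precisely $T_{P_z}(\rho_1,\rho_2)$, so grouping the universal quantifier over $\rho$ by its marginal pair turns validity for $z$ into: $T_{P_z}(\rho_1,\rho_2) \ge T_{Q_z}\bigl(\sem{S_1}(\rho_1), \sem{S_2}(\rho_2)\bigr)$ for all $\rho_1,\rho_2 \in \DD^1$.

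Finally I would invoke \Cref{lem:validity quantum operation alter}, which states that this inequality over all \emph{full} density operators $\rho_1,\rho_2$ is equivalent to monotonicity of $(\sem{S_1},\sem{S_2})$ w.r.t.\ $P_z$ and $Q_z$ — that is, the same inequality quantified over all partial density operators — thereby closing the equivalence. No step involves a nontrivial computation; the only place demanding care, and the closest thing to an obstacle, is the bookkeeping in the second step: one must keep straight that the ``input'' side of a $\qqRHL$ judgment ranges over \emph{total} couplings (so its marginals are total, and the partial-coupling minimum defining $T_{P_z}$ collapses to the ordinary coupling minimum), whereas the ``output'' side is a genuine partial coupling. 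This asymmetry is exactly what makes $T_{P_z}$ and $T_{Q_z}$ appear naturally on the two sides, and it is precisely what \Cref{lem:validity quantum operation alter} is needed to absorb when passing from $\DD^1$ back to $\DD$.
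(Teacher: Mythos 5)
Your proposal is correct and follows essentially the same route as the paper's proof: both discharge the existential over the output witness via attainment of the minimum defining $T_{Q}$, identify the input quantification over $\rho\in\DD^1(\HH_{S_1}\otimes\HH_{S_2})$ with minimization over couplings of its marginals (using that couplings and partial couplings coincide for total states), and close the loop with \Cref{lem:validity quantum operation alter}. The only difference is presentational — you phrase it as a single chain of equivalences where the paper argues the two implications separately.
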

\begin{proof}
  We employ \Cref{lem:validity quantum operation alter} to interpret monotone. 

  (if) part. For all $z\in Z$, and $\rho \in \DD^1(\HH_{S_1} \otimes \HH_{S_2})$, by assumption we have:
  $$T_{Q}(\sem{S_1}(\tr_2(\rho)),\sem{S_2}(\tr_1(\rho))) \le T_{P}(\tr_2(\rho),\tr_1(\rho))\le \tr(P\rho).$$
  Set $\sigma : \langle \sem{S_1}(\tr_2(\rho)), \sem{S_2}(\tr_1(\rho)) \rangle_p$ such that $T_{Q}(\sem{S_1}(\tr_2(\rho)),\sem{S_2}(\tr_1(\rho))) = \tr(Q\sigma)$. Then $\tr(Q\sigma)\le \tr(P\rho)$.

  (only if) part. For all $z\in Z$, $\rho_1\in\DD^1(\HH_1)$ and $\rho_2\in\DD^1(\HH_2)$, set $\rho : \<\rho_1,\rho_2\>$ such that $T_{P}(\rho_1,\rho_2) = \tr(P\rho)$. By assumption, there exists $\sigma : \langle \sem{S_1}(\tr_2(\rho)), \sem{S_2}(\tr_1(\rho)) \rangle_p$ (i.e., $\sigma : \langle \sem{S_1}(\rho_1), \sem{S_2}(\rho_2) \rangle_p$) such that:
  $$ T_{P}(\rho_1,\rho_2) = \tr(P \rho) \geq \tr(Q \sigma)\ge T_{Q}(\sem{S_1}(\rho_1),\sem{S_2}(\rho_2)).$$
\end{proof}

\begin{lemma}[$\qqRHL$ Validity Alternative]
  \label{lem:validity alter}
    $\vDash Z: \rtriple{P}{S_1}{S_2}{Q}$ if and only if for every $z \in Z$, $\rho_1 \in \DD(\HH_{S_1})$, $\rho_2 \in \DD(\HH_{S_2})$ and partial coupling $\rho : \<\rho_1,\rho_2\>_p$, there exists a partial coupling $\sigma : \langle \sem{S_1}(\rho_1), \sem{S_2}(\rho_2) \rangle_p$ such that
    \[
    \tr(P \rho) \geq \tr(Q \sigma).
    \]
\end{lemma}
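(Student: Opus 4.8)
The plan is to obtain this lemma as a routine reformulation of \Cref{lem:validity and monotone}. By that lemma, $\vDash Z: \rtriple{P}{S_1}{S_2}{Q}$ is equivalent to: for every $z \in Z$, $(\sem{S_1},\sem{S_2})$ is monotone w.r.t. $P$ and $Q$, which unfolds by definition to the scalar inequality $T_Q(\sem{S_1}(\rho_1),\sem{S_2}(\rho_2)) \le T_P(\rho_1,\rho_2)$ for all $\rho_1 \in \DD(\HH_{S_1})$ and $\rho_2 \in \DD(\HH_{S_2})$. Hence it suffices to show, for each fixed $z$, $\rho_1$, $\rho_2$, that this inequality holds if and only if every partial coupling $\rho : \<\rho_1,\rho_2\>_p$ admits a partial coupling $\sigma : \<\sem{S_1}(\rho_1),\sem{S_2}(\rho_2)\>_p$ with $\tr(P\rho) \ge \tr(Q\sigma)$.

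For this I would use that the optimal transport quantities are attained: by \Cref{prop:partial-coupling} the sets of partial couplings of $\rho_1,\rho_2$ and of $\sem{S_1}(\rho_1),\sem{S_2}(\rho_2)$ are non-empty, closed and convex, so the minima defining $T_P(\rho_1,\rho_2)$ and $T_Q(\sem{S_1}(\rho_1),\sem{S_2}(\rho_2))$ are achieved (the locus where $\tr(C\cdot)$ equals $+\infty$ for an infinite-valued $C$ is the complement of a closed support condition, hence open, so $\tr(C\cdot)$ is still lower semicontinuous). For the forward direction, given any partial coupling $\rho$ take $\sigma$ achieving $T_Q$; then $\tr(Q\sigma) = T_Q(\sem{S_1}(\rho_1),\sem{S_2}(\rho_2)) \le T_P(\rho_1,\rho_2) \le \tr(P\rho)$. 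For the backward direction, take $\rho$ achieving $T_P$ and the $\sigma$ supplied by the hypothesis; then $T_Q(\sem{S_1}(\rho_1),\sem{S_2}(\rho_2)) \le \tr(Q\sigma) \le \tr(P\rho) = T_P(\rho_1,\rho_2)$. Chaining these equivalences gives the lemma.

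If instead a self-contained argument straight from the definition of $\qqRHL$ validity is wanted, I would mirror the proof of \Cref{lem:validity quantum operation alter}. The ``$\Leftarrow$'' direction is immediate: for $\rho \in \DD^1(\HH_{S_1}\otimes\HH_{S_2})$ apply the hypothesis to $\rho_1 = \tr_2(\rho)$, $\rho_2 = \tr_1(\rho)$ and $\rho$ itself. For ``$\Rightarrow$'', given subnormalized $\rho_1,\rho_2$ and $\rho : \<\rho_1,\rho_2\>_p$, set $\rho_1' = \tr_2(\rho) \sqsubseteq \rho_1$, $\rho_2' = \tr_1(\rho) \sqsubseteq \rho_2$ and $c = \tr(\rho)$. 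If $c=0$ then $\rho=0$ and $\sigma = 0$ works, using trace-nonincrease of the $\sem{S_i}$ and $\tr(\rho_1)+\tr(\rho_2)\le 1$ for the trace constraint; if $c>0$, apply validity to $\rho/c \in \DD^1$ to get $\sigma_0 : \<\sem{S_1}(\rho_1'/c),\sem{S_2}(\rho_2'/c)\>_p$ with $\tr(P(\rho/c)) \ge \tr(Q\sigma_0)$, and put $\sigma = c\,\sigma_0$. Then $\tr(P\rho) = c\,\tr(P(\rho/c)) \ge c\,\tr(Q\sigma_0) = \tr(Q\sigma)$, and \Cref{lem: scale of partial coupling} gives $\sigma : \<\sem{S_1}(\rho_1'),\sem{S_2}(\rho_2')\>_p$.

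The step I expect to be the real obstacle, in this second route, is upgrading $\sigma$ from a partial coupling of $\sem{S_1}(\rho_1'),\sem{S_2}(\rho_2')$ to one of the larger states $\sem{S_1}(\rho_1),\sem{S_2}(\rho_2)$. The two marginal inequalities follow at once from $0 \sqsubseteq \rho_i - \rho_i'$ together with completely positive (hence monotone) $\sem{S_i}$. The trace inequality requires the bookkeeping $1 + \tr(\sigma) = (1-c) + c(1+\tr(\sigma_0)) \ge (1-c) + \tr(\sem{S_1}(\rho_1')) + \tr(\sem{S_2}(\rho_2'))$, combined with $\tr(\sem{S_i}(\rho_i')) \ge \tr(\sem{S_i}(\rho_i)) - \tr(\rho_i) + \tr(\rho_i')$ (trace-nonincrease applied to $\rho_i - \rho_i'$), $\tr(\rho_1') = \tr(\rho_2') = c$, and the partial-coupling bound $\tr(\rho_1) + \tr(\rho_2) \le 1 + c$; after cancellation one obtains $1 + \tr(\sigma) \ge \tr(\sem{S_1}(\rho_1)) + \tr(\sem{S_2}(\rho_2))$. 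Since this is exactly the computation already carried out for \Cref{lem:validity quantum operation alter}, I would cite it rather than repeat it, which is why I would actually prefer the first, monotonicity-based route as the clean proof.
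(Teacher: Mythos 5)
Your first, monotonicity-based route is exactly the paper's own proof: both directions reduce to \Cref{lem:validity and monotone} plus attainment of the minima defining $T_P$ and $T_Q$ (justified via \Cref{prop:partial-coupling}), with the same two chains of inequalities. The second, self-contained route you sketch is a valid alternative but unnecessary, as you yourself conclude.
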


\begin{proof}
  (if) part. For every $z \in Z$, $\rho_1 \in \DD(\HH_{S_1})$, $\rho_2 \in \DD(\HH_{S_2})$, select partial coupling $\rho : \<\rho_1,\rho_2\>_p$ such that $T_P(\rho_1,\rho_2) = \tr(P\rho)$, by assumption there exists a partial coupling $\sigma : \langle \sem{S_1}(\rho_1), \sem{S_2}(\rho_2) \rangle_p$ such that
  $$T_Q(\sem{S_1}(\rho_1), \sem{S_2}(\rho_2)) \le \tr(Q\sigma) \le \tr(P\rho) = T_P(\rho_1,\rho_2),$$
  so $(\sem{S_1}, \sem{S_2})$ is monotone w.r.t. $P$ and $Q$, then by \Cref{lem:validity and monotone}.

  (only if) part. For every $z \in Z$, by \Cref{lem:validity and monotone}, $(\sem{S_1}, \sem{S_2})$ is monotone w.r.t. $P$ and $Q$. For $\rho_1 \in \DD(\HH_{S_1})$, $\rho_2 \in \DD(\HH_{S_2})$ and partial coupling $\rho : \<\rho_1,\rho_2\>_p$, select partial coupling $\sigma : \<\sem{S_1}(\rho_1), \sem{S_2}(\rho_2)\>_p$ such that $T_Q(\sem{S_1}(\rho_1), \sem{S_2}(\rho_2)) = \tr(Q\sigma)$, then:
  $$ \tr(Q\sigma) = T_Q(\sem{S_1}(\rho_1), \sem{S_2}(\rho_2)) \le T_P(\rho_1,\rho_2) \le \tr(P\rho). $$
\end{proof}

\begin{lemma}[$\qqRHL$ Validity for AST Programs]
\label{lem:validity AST program}
  Suppose $S_1$ and $S_2$ are AST programs. Then $\vDash Z: \rtriple{P}{S_1}{S_2}{Q}$ if and only if for every $z \in Z$, and $\rho \in \DD(\HH_{S_1} \otimes \HH_{S_2})$, there exists a coupling $\sigma : \langle \sem{S_1}(\tr_2(\rho)), \sem{S_2}(\tr_1(\rho)) \rangle$ such that
  \[
  \tr(P_z \rho) \geq \tr(Q_z \sigma).
  \]
\end{lemma}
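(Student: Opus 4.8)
The plan is to deduce this from the monotonicity reformulation of validity together with \Cref{lem:validity quantum channel}. First I would apply \Cref{lem:validity and monotone}: $\vDash Z: \rtriple{P}{S_1}{S_2}{Q}$ holds if and only if for every $z \in Z$ the pair $(\sem{S_1},\sem{S_2})$ is monotone with respect to $P_z$ and $Q_z$. Since $S_1$ and $S_2$ are AST, by definition $\sem{S_1}\in\QC(\HH_{S_1})$ and $\sem{S_2}\in\QC(\HH_{S_2})$ are trace-preserving quantum channels, so \Cref{lem:validity quantum channel} is applicable with $\EE_1=\sem{S_1}$, $\EE_2=\sem{S_2}$, $C_i=P_z$ and $C_o=Q_z$.

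Second, \Cref{lem:validity quantum channel} states precisely that $(\sem{S_1},\sem{S_2})$ is monotone with respect to $P_z$ and $Q_z$ if and only if for every $\rho\in\DD(\HH_{S_1}\otimes\HH_{S_2})$ there exists a (total) coupling $\sigma:\langle\sem{S_1}(\tr_2(\rho)),\sem{S_2}(\tr_1(\rho))\rangle$ with $\tr(P_z\rho)\geq\tr(Q_z\sigma)$. Taking the conjunction of this equivalence over all $z\in Z$ yields exactly the right-hand side of the lemma, which finishes the proof. The boundary cases $\tr(\rho)<1$ (including $\rho=0$) require no extra work: couplings are preserved under nonnegative scaling, which is how \Cref{lem:validity quantum channel} already reduces to density operators.

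The genuine content --- and the only place any care is needed --- is the collapse of the partial-coupling semantics used in $\qqRHL$ validity into the total-coupling semantics stated here. One could try to prove this directly from \Cref{lem:validity alter}: given $\rho\in\DD$ set $\rho_1=\tr_2(\rho)$, $\rho_2=\tr_1(\rho)$, obtain a partial coupling $\sigma:\langle\sem{S_1}(\rho_1),\sem{S_2}(\rho_2)\rangle_p$, and then try to upgrade it to a total coupling. But a partial coupling of two equal-trace states need not be total when that common trace is below $1$ (e.g.\ $\sigma=0$ when the marginals have trace $1/2$), and completing it without increasing $\tr(Q_z\sigma)$ is not obvious when $Q_z$ is an arbitrary positive operator. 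Routing through monotonicity sidesteps this: \Cref{lem:validity quantum channel} builds the required total coupling as a minimizer of the optimal-transport cost $T_{Q_z}$, using trace-preservation of the channels (cf.\ \Cref{lemma:traceequiv}) to guarantee the marginals have equal trace, so no separate completion argument is needed. Hence I expect the proof to be a short assembly of the two earlier lemmas, with the trace-preservation of AST programs as the load-bearing hypothesis.
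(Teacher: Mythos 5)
Your proposal is correct and follows exactly the paper's own argument: the paper proves this lemma by citing \Cref{lem:validity and monotone} and \Cref{lem:validity quantum channel}, which is precisely your chain of reasoning (AST implies the semantics are quantum channels, so the channel characterization of monotonicity via total couplings applies). Your closing remarks on why one cannot naively upgrade a partial coupling to a total one are a sensible justification for routing through these lemmas, but they are not needed beyond the two citations.
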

\begin{proof}
  Direct by \Cref{lem:validity and monotone} and \Cref{lem:validity quantum channel}.
\end{proof}

\begin{theorem}[\Cref{thm:soundness}]
If $\vdash Z: \rtriple{P}{S_1}{S_2}{Q}$ then 
  $\vDash Z: \rtriple{P}{S_1}{S_2}{Q}$.
\end{theorem}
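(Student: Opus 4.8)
The plan is to proceed by structural induction on the derivation of $\vdash Z:\rtriple{P}{S_1}{S_2}{Q}$, showing that each rule of Fig.~\ref{fig:qqrhlrules} preserves validity. It is convenient to switch freely between the three equivalent presentations of validity: the definitional one (partial couplings on $\DD^1$ inputs), the version over arbitrary partial-coupling inputs and arbitrary $\rho_1\in\DD(\HH_{S_1}),\rho_2\in\DD(\HH_{S_2})$ (\Cref{lem:validity alter}), and the monotonicity reformulation, namely that $(\sem{S_1},\sem{S_2})$ is monotone w.r.t.\ $P$ and $Q$ (\Cref{lem:validity and monotone}); the latter is the cleanest tool for the structural and one-sided unitary/assignment rules, while the combination arguments for conditionals and loops are most natural in terms of partial couplings.

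The easy rules are immediate consequences of already-available facts. For (skip), the identity channel is monotone w.r.t.\ any $P$ and $P$ (take $\sigma=\rho$). For (seq), use $\sem{S_1;S_2}=\sem{S_2}\circ\sem{S_1}$ and the \emph{Sequential composition} clause of \Cref{prop:monotone basic}. For (csq), apply the \emph{Consequence} clause of \Cref{prop:monotone basic}. For (apply-L) and (assign-L), apply the \emph{Backward} clause of \Cref{prop:monotone basic}, after checking that the stated precondition is exactly the relational pre-image $(\sem{S}^\dagger\otimes\sem{\skp}^\dagger)(P)$ of $P$, using $\sem{\bar q:=U[\bar q]}^\dagger(A)=(U\otimes I)^\dagger A(U\otimes I)$, $\sem{q:=\ket 0}^\dagger(A)=\sum_n(\ket n_q\bra 0)A(\ket 0_q\bra n)$, and that $\sem{\skp}^\dagger$ is the identity. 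For (duality), the side conditions $S_1,S_2\in\AST$ and $Q\in\Pos$ match the hypotheses of \Cref{lem:judgment strassen}; by \Cref{lem:validity and monotone} the premise says precisely that $(\sem{S_1},\sem{S_2})$ is monotone w.r.t.\ $P+nI$ and $Y_1\otimes I+I\otimes(nI-Y_2)$ for all $(Y_1,Y_2,n)\in\mathcal Y$, which by \Cref{lem:judgment strassen} (with $C_i=P$, $C_o=Q$) is equivalent to monotonicity w.r.t.\ $P$ and $Q$, i.e.\ to $\vDash Z:\rtriple{P}{S_1}{S_2}{Q}$.

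The rule (if-L) needs a short combination argument. Given an input partial coupling $\rho:\langle\rho_1,\rho_2\rangle_p$, set $\rho_m=(M_m\otimes I)\rho(M_m\otimes I)^\dagger$; these are partial density operators with $\sum_m\tr(\rho_m)=\tr(\rho)$, with $\tr_2(\rho_m)=M_m\tr_2(\rho)M_m^\dagger$, and with $\tr\bigl((M_m\otimes I)^\dagger P_m(M_m\otimes I)\rho\bigr)=\tr(P_m\rho_m)$. Normalising each $\rho_m$, applying the induction hypothesis for $S_m\sim\skp$ via \Cref{lem:validity alter}, and rescaling back (using the scalar-multiplication corollary of \Cref{lem: scale of partial coupling}) yields partial couplings $\sigma_m:\langle\sem{S_m}(\tr_2(\rho_m)),\tr_1(\rho_m)\rangle_p$ with $\tr(P_m\rho_m)\ge\tr(Q\sigma_m)$. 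Then \Cref{lem: scale of partial coupling} gives that $\sigma=\sum_m\sigma_m$ is a partial coupling of $\langle\sum_m\sem{S_m}(\tr_2(\rho_m)),\sum_m\tr_1(\rho_m)\rangle_p=\langle\sem{\ifmeasure}(\tr_2(\rho)),\tr_1(\rho)\rangle_p$, where we use $\sum_m\tr_1(\rho_m)=\tr_1(\rho)$ because $\tau\mapsto\sum_mM_m\tau M_m^\dagger$ is trace preserving; and $\tr(P\rho)=\sum_m\tr(P_m\rho_m)\ge\sum_m\tr(Q\sigma_m)=\tr(Q\sigma)$, which is what validity demands.

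The main obstacle is (while-L), since $\sem{\whileloop}$ is only defined as a L\"owner supremum of the approximants $\while^{(k)}[M,S]$. I would first prove by induction on $k$ that $\vDash Z:\rtriple{P'}{\while^{(k)}[M,S]}{\skp}{P}$, where $P'=(M_0\otimes I)^\dagger P(M_0\otimes I)+(M_1\otimes I)^\dagger Q(M_1\otimes I)$: the base case $\while^{(0)}[M,S]\equiv\textbf{abort}$ is witnessed by $\sigma=0$, and the inductive step writes $\while^{(k+1)}[M,S]$ as an $\ifb$ with branches $\skp$ and $S;\while^{(k)}[M,S]$, so it follows from the already-established soundness of (if-L), (seq) and (skip) together with the rule's premise $\vDash Z:\rtriple{Q}{S}{\skp}{P'}$ and the hypothesis for $k$ (using $\sem{\skp;\skp}=\sem{\skp}$). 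It then remains to pass to the limit: fixing $z$ and $\rho\in\DD^1$, the witnesses $\sigma_k$ lie in the set of partial couplings of bounded-trace partial density operators, which is compact (\Cref{prop:partial-coupling}), so some subsequence $\sigma_{k_j}\to\sigma$; one checks $\tr_2(\sigma)\sqsubseteq\sem{\whileloop}(\tr_2(\rho))$ and $\tr_1(\sigma)\sqsubseteq\tr_1(\rho)$ (closedness of the L\"owner cone, together with $\sem{\while^{(k)}[M,S]}(\tr_2(\rho))\nearrow\sem{\whileloop}(\tr_2(\rho))$), that the trace inequality defining a partial coupling passes to the limit by continuity of $\tr$, and that $\tr(P\rho)\ge\tr(Q\sigma)$ by lower semicontinuity of $\tau\mapsto\tr(Q\tau)$ for $Q\in\PosI$ (truncating the possibly infinite eigenvalues of $Q$). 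The delicate steps are exactly these limiting manipulations, in particular replacing continuity by lower semicontinuity to accommodate an infinite-valued postcondition.
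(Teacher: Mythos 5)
Your proof is correct, and for most rules it coincides with the paper's argument: (skip), (seq), (csq), (assign-L), (apply-L) and (duality) are all discharged, as in the paper, by combining \Cref{lem:validity and monotone} with \Cref{prop:monotone basic} and \Cref{lem:judgment strassen}, and your (if-L) case is the paper's argument rephrased---you split the input coupling into the branches $\rho_m$, invoke the induction hypothesis on each normalized piece, and recombine via \Cref{lem: scale of partial coupling}, which is exactly the joint-convexity argument (\Cref{lem:QOT convexity}) written in coupling language. The genuine divergence is (while-L). The paper constructs the witness coupling for the loop \emph{directly}: it iterates the induction hypothesis on the loop body to build a sequence $\sigma_n$ of partial couplings tracking the $n$-th iteration, sets $\sigma=\sum_n\EE_0(\sigma_n)$, and verifies convergence of the series, the inequality $\tr(P'\rho)\ge\tr(P\sigma)$ by telescoping, and the partial-coupling conditions by an explicit (and somewhat delicate) trace computation. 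You instead prove validity for each syntactic approximant $\while^{(k)}[M,S]$ by an inner induction on $k$, reusing the semantic closure properties already established for (if-L), (seq) and (skip), and then pass to the limit by extracting a convergent subsequence of witnesses and invoking closedness of the L\"owner cone, continuity of the trace, and lower semicontinuity of $\tau\mapsto\tr(Q\tau)$ for $Q\in\PosI$. Your route is more modular and cleanly isolates the only analytic content (compactness of the state space in finite dimension plus semicontinuity for infinite-valued postconditions), at the cost of a nonconstructive witness; the paper's route is constructive and yields an explicit coupling, at the cost of a longer hands-on verification that the infinite sum is a partial coupling of the loop's output. One small point to make explicit in your write-up: the inner induction on $k$ uses the \emph{semantic} versions of (if-L), (seq) and (skip) (validity of premises implies validity of the conclusion), which is indeed what the soundness proofs of those rules establish, but is formally stronger than the statement ``derivable implies valid.''
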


\begin{proof}
  By induction on the structural of the program, analogously to the soundness proof in \cite{barthe_rqpd}.

  \textbf{(skip), (assign-L), (apply-L)}: By employing \Cref{lem:validity and monotone} and \Cref{prop:monotone basic}(\ref{lem:judgment two side}) and the denotational semantics.

  \textbf{(seq)}: By employing \Cref{lem:validity and monotone} and \Cref{prop:monotone basic}(\ref{lem:judgment seq}) and the denotational semantics.  

  \textbf{(if-L)}: By employing \Cref{lem:validity and monotone} and \Cref{lem:validity quantum operation alter}, it remains to be shown that for all $\rho\in\DD^1$ and $\sigma\in\DD^1$,
  $$ T_Q(\sem{\ifb}(\rho),\sigma) \le T_{\sum_m (M_m \otimes I)^\dagger_m P_m (M_m \otimes I)}(\rho,\sigma).$$
  Set $\rho_m = \EE_m(\rho) = M_m\rho M_m^\dagger$ and $p_m = \tr(\rho_m)$, so $\sem{\ifb}(\rho) = \sum_mp_m\sem{S_m}(\rho_m/p_m)$.
  By assumption and \Cref{lem:QOT convexity}, we have:
  \begin{align*}
    T_Q(\sem{\ifb}(\rho),\sigma) = T_Q(\sum_mp_m\sem{S_m}(\rho_m/p_m),\sigma) 
    \le \sum_mp_mT_Q(\sem{S_m}(\rho_m/p_m),\sigma) \le\sum_mp_mT_{P_m}(\rho_m/p_m,\sigma)
  \end{align*}
  On the other hand, select $\delta : \< \rho, \sigma \>$ such that 
  \begin{align*}
    T_{\sum_m (M_m \otimes I)^\dagger_m P_m (M_m \otimes I)}(\rho,\sigma) &= \tr((\sum_m (M_m \otimes I)^\dagger_m P_m (M_m \otimes I))\delta) \\
    &= \sum_m p_m\tr(P_m ((M_m \otimes I)\delta(M_m \otimes I)^\dagger_m /p_m)).
  \end{align*}
  Notice that, $(M_m \otimes I)\delta(M_m \otimes I)^\dagger_m /p_m$ is in fact a coupling of $\<\rho_m/p_m,\sigma\>$, so 
  $$\tr(P_m ((M_m \otimes I)\delta(M_m \otimes I)^\dagger_m /p_m))\ge T_{P_m}(\rho_m/p_m,\sigma),$$
  and this complete the proof.

  \textbf{(while-L)}: 
  Let $z\in Z$ and $\rho$ be the initial joint state.
  Set $\sigma_0 = \rho$, and inductively define 
  $$\sigma_{n+1}' : \<\sem{S}(\tr_2(\EE_1(\sigma_n)/\tr(\EE_1(\sigma_n)))), \tr_1(\EE_1(\sigma_n)/\tr(\EE_1(\sigma_n)))\>_p$$ 
  as the partial coupling obtained by applying (IH) on initial state $\EE_1(\sigma_n)/\tr(\EE_1(\sigma_n))$, and $\sigma_{n+1} = \tr(\EE_1(\sigma_n))\sigma_{n+1}$. 
  For simplicity, we write 
  $$R =((M_0 \otimes I)^\dagger P (M_0 \otimes I) + (M_1 \otimes I)^\dagger Q (M_1 \otimes I)).$$
  By (IH), we have 
  $\tr(Q \EE_1(\sigma_n)/\tr(\EE_1(\sigma_n))) \ge \tr(R\sigma_{n+1}')$, or equivalently, $\tr(Q \EE_1(\sigma_n)) \ge \tr(R\sigma_{n+1})$.

  We first check $\tr_2(\sigma_n) = (\sem{S} \circ \EE_1)^n (\tr_2(\rho))$ by induction on $n$. The base case $n = 0$ is trivial. For $n+1$, observe that:
  \begin{align*}
    \tr_2(\sigma_{n+1}) &= \tr(\EE_1(\sigma_n)) \tr_2(\sigma_{n+1}') \\
    &\sqsubseteq \tr(\EE_1(\sigma_n))\sem{S}(\tr_2(\EE_1(\sigma_n)/\tr(\EE_1(\sigma_n)))) \\
    &= \sem{S}(\tr_2(\EE_1(\sigma_n))) \\
    &= (\sem{S} \circ \EE_1) (\tr_2(\sigma_n)).
  \end{align*}
  On the other hand, 
  \begin{align*}
    1 + \tr(\sigma_{n+1}) &= 1 + \tr(\EE_1(\sigma_n))\tr(\sigma_{n+1}') \\
    &\ge 1 + \tr(\EE_1(\sigma_n))(\tr(\sem{S}(\tr_2(\EE_1(\sigma_n)/\tr(\EE_1(\sigma_n))))) \\
    &\quad\qquad\qquad + \tr(\tr_1(\EE_1(\sigma_n)/\tr(\EE_1(\sigma_n)))) - 1) \\
    &= 1 + \tr(\sem{S}(\tr_2(\EE_1(\sigma_n)))) \\
    &= 1 + \tr((\sem{S} \circ \EE_1) (\tr_2(\sigma_n)))
  \end{align*}
  These two together imply $\tr_2(\sigma_{n+1}) = (\sem{S} \circ \EE_1) (\tr_2(\sigma_n)) = (\sem{S} \circ \EE_1)^{n+1} (\tr_2(\rho))$.
  Furthermore, $\tr_1(\sigma_{n+1}')\sqsubseteq \tr_1(\EE_1(\sigma_n)/\tr(\EE_1(\sigma_n)))$, or equivalently, $\tr_1(\sigma_{n+1})\sqsubseteq \tr_1(\EE_1(\sigma_n))$.

  Set $\sigma = \sum_n \EE_0(\sigma_n)$. Its convergence is ensured by realizing that 
  \begin{align*}
    \tr(\sigma) &= \sum_n \tr(\EE_0(\tr_2(\sigma_n))) = \sum_n \tr(\EE_0((\sem{S} \circ \EE_1)^n (\tr_2(\rho)))) \\
    &= \tr(\sem{\while}(\tr_2(\rho))) \le \tr(\rho).
  \end{align*}

  Next, we check the inequality :
  \begin{align*}
    \tr(R \rho)
    =\ &\tr(P\EE_0(\sigma_0) + Q\EE_1(\sigma_0)) \\
    \ge\ &\tr(P\EE_0(\sigma_0)) + \tr(R\sigma_1) \\
    =\ &\tr(P\EE_0(\sigma_0)) + \tr(P\EE_0(\sigma_1) + Q\EE_1(\sigma_1)) \\
    =\ &\sum_{n=0}^k\tr(P\EE_0(\sigma_n)) +  \tr(Q\EE_1(\sigma_k)) \\
    \ge\ &\tr(P\sigma).
  \end{align*}
  Thus, it is sufficient to check $\sigma : \< \sem{\while}(\tr_2(\rho)), \tr_1(\rho)\>_p$ as follows:
  \begin{align*}
    \tr_2(\sigma) &= \sum_n \EE_0(\tr_2(\sigma_n)) = \sum_n\EE_0(\sem{S} \circ \EE_1)^n (\tr_2(\rho)) \\
    &= \sem{\while}(\tr_2(\rho)), \\
    \tr_1(\rho) &= \tr_1(\EE_0(\sigma_0) + \EE_1(\sigma_0)) \\
    &\sqsupseteq \tr_1(\EE_0(\sigma_0)) + \tr_1(\sigma_1) \\
    &= \tr_1(\EE_0(\sigma_0)) + \tr_1(\EE_0(\sigma_1) + \EE_1(\sigma_1))\\
    &\sqsupseteq \sum_{n=0}^k \tr_1(\EE_0(\sigma_n)) + \tr_1(\EE_1(\sigma_n))\\
    &\sqsupseteq \tr_1(\sigma) \\
    1 + \tr(\sigma) &= \tr(\tr_1(\rho)) +  \tr(\sem{\while}(\tr_2(\rho))).
  \end{align*}

  \textbf{(csq)}: By employing \Cref{lem:validity and monotone} and \Cref{lem:judgment csq}.  

  \textbf{(Strassen)}: By employing \Cref{lem:validity and monotone} and \Cref{lem:judgment strassen}. Note that $Q$ should be finite which is inherited from \Cref{thm:quantum strassen defect alter}.


\end{proof}

\begin{lemma}[\Cref{lemma:onesidedweakestpre}]
  For every AST program $S$, we have
  \[
  \vdash Z: \rtriple{(\sem{S}^\dagger \otimes I)(Q)}{S}{\skp}{Q}.
  \]
\end{lemma}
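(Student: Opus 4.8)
The plan is to prove \Cref{lemma:onesidedweakestpre} by structural induction on $S$, applying in each case the corresponding one-sided rule from \Cref{fig:qqrhlrules} and then recognizing the syntactic precondition the rule produces as $(\sem{S}^\dagger\otimes I)(Q)$ via the bookkeeping identities for (possibly infinite-valued) operators recorded in \Cref{lem: IVP algebraic}; concretely, we use that $\EE^\dagger$ commutes with $\cdot\otimes I$, with composition, and with sums, and that the dual of $\rho\mapsto M\rho M^\dagger$ is $A\mapsto M^\dagger A M$. Since the induction hypothesis is invoked with auxiliary postconditions, we read it as quantified over all $Q\in\PosI$ (and in fact over all programs; the AST assumption is only needed to give $\sem{\while}^\dagger$ a clean closed form, so it may as well be dropped).

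The base cases are immediate. For $S=\skp$, $\sem{\skp}^\dagger=\mathrm{id}$, so $(\sem{\skp}^\dagger\otimes I)(Q)=Q$ and the claim is the (skip) rule. For $S=\oq:=U[\oq]$, $\sem{S}^\dagger(A)=U^\dagger A U$, hence $(\sem{S}^\dagger\otimes I)(Q)=(U\otimes I)^\dagger Q(U\otimes I)$, the precondition of (apply-L). For $S=q:=\ket{0}$, the dual of $\rho\mapsto\sum_n\ket{0}_q\bra{n}\rho\ket{n}_q\bra{0}$ is $A\mapsto\sum_n\ket{n}_q\bra{0}A\ket{0}_q\bra{n}$, so $(\sem{S}^\dagger\otimes I)(Q)=\sum_i(\ket{i}_{q\varsinone}\bra{0})Q(\ket{0}_{q\varsinone}\bra{i})$, matching (assign-L). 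If $S=S_1;S_2$, put $R=(\sem{S_2}^\dagger\otimes I)(Q)$; the induction hypothesis on $S_2$ (postcondition $Q$) and on $S_1$ (postcondition $R$) gives $\vdash Z:\rtriple{R}{S_2}{\skp}{Q}$ and $\vdash Z:\rtriple{(\sem{S_1}^\dagger\otimes I)(R)}{S_1}{\skp}{R}$, which (seq) combines into $\vdash Z:\rtriple{(\sem{S_1}^\dagger\otimes I)(R)}{S_1;S_2}{\skp;\skp}{Q}$ (identifying $\skp;\skp$ with $\skp$), and $(\sem{S_1}^\dagger\otimes I)(R)=\big((\sem{S_2}\circ\sem{S_1})^\dagger\otimes I\big)(Q)=(\sem{S_1;S_2}^\dagger\otimes I)(Q)$. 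If $S=\ifmeasure$, apply the induction hypothesis to each $S_m$ with postcondition $Q$ to get $\vdash Z:\rtriple{P_m}{S_m}{\skp}{Q}$ with $P_m=(\sem{S_m}^\dagger\otimes I)(Q)$, and conclude by (if-L); since $\sem{S}^\dagger(A)=\sum_m M_m^\dagger\sem{S_m}^\dagger(A)M_m$, the precondition $\sum_m(M_m\otimes I)^\dagger P_m(M_m\otimes I)$ delivered by the rule is exactly $(\sem{S}^\dagger\otimes I)(Q)$.

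The one genuinely delicate case is the loop $S=\while\ M[\oq]=1\ \wdo\ S_b\ \wod$. The crux is the one-step unfolding identity for the predicate transformer: for every $A\in\PosI$,
$$\sem{S}^\dagger(A)=M_0^\dagger A M_0+M_1^\dagger\,\sem{S_b}^\dagger\big(\sem{S}^\dagger(A)\big)\,M_1,$$
obtained by first passing to the least upper bound in the recursion $\sem{\while^{(k+1)}}(\rho)=M_0\rho M_0^\dagger+\sem{\while^{(k)}}(\sem{S_b}(M_1\rho M_1^\dagger))$ to get $\sem{S}(\rho)=M_0\rho M_0^\dagger+\sem{S}(\sem{S_b}(M_1\rho M_1^\dagger))$, then applying $\tr(A\,\cdot\,)$ to both sides, using additivity and $\tr(A\EE(\tau))=\tr(\EE^\dagger(A)\tau)$, and finally invoking the ``$A_1=A_2$ iff $\tr(A_1\rho)=\tr(A_2\rho)$ for all $\rho\in\DD$'' criterion of \Cref{lem: IVP algebraic}, the conventions of \Cref{def:operations-infinite-valued-pre} handling an eigenvalue $+\infty$ of $A$. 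Granting this, take the loop postcondition to be $Q$ and the loop invariant (precondition of the body) to be $R:=\big(\sem{S_b}^\dagger\otimes I\big)\big((\sem{S}^\dagger\otimes I)(Q)\big)$. Applying the induction hypothesis to $S_b$ with postcondition $(\sem{S}^\dagger\otimes I)(Q)$ and rewriting that postcondition by the unfolding identity yields precisely the premise of (while-L), namely $\vdash Z:\rtriple{R}{S_b}{\skp}{(M_0\otimes I)^\dagger Q(M_0\otimes I)+(M_1\otimes I)^\dagger R(M_1\otimes I)}$; the rule then produces $\vdash Z:\rtriple{(M_0\otimes I)^\dagger Q(M_0\otimes I)+(M_1\otimes I)^\dagger R(M_1\otimes I)}{S}{\skp}{Q}$, whose precondition is $(\sem{S}^\dagger\otimes I)(Q)$ by the unfolding identity once more. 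This closes the induction.

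The main obstacle is thus the loop case: establishing the unfolding identity for $\sem{S}^\dagger$ (and checking that it survives $+\infty$-eigenvalues in $Q$) and choosing the invariant $R$ so that the premise of (while-L) matches an instance of the induction hypothesis. Every remaining case is a direct application of the relevant one-sided (or (seq)) rule followed by the routine operator algebra of \Cref{lem: IVP algebraic}.
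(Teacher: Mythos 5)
Your proposal is correct and follows essentially the same route as the paper: structural induction on $S$, matching each one-sided rule's syntactic precondition to $(\sem{S}^\dagger\otimes I)(Q)$ via the operator identities of \Cref{lem: IVP algebraic}, and handling the loop by the fixed-point unfolding $\sem{\while}^\dagger = \EE_0^\dagger + \EE_1^\dagger\circ\sem{S_b}^\dagger\circ\sem{\while}^\dagger$ with the invariant $(\sem{S_b}^\dagger\otimes I)\big((\sem{\while}^\dagger\otimes I)(Q)\big)$, exactly as in the paper's proof. Your extra care in justifying the unfolding identity for infinite-valued $Q$ is a welcome elaboration of a step the paper takes for granted.
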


\begin{proof}
  By induction. 

  Case $\skp$: direct from the definition and (skip).

  Case $q:= \ket{0}$: direct from the definition and (assign-L).

  Case $\bar{q} := U[\bar{q}]$: direct from the definition and (apply-L).

  Case $S_1;S_2$: Assume that for any $Q$ we have $\vdash Z: \rtriple{(\sem{S_i}^\dagger \otimes I)(Q)}{S_i}{\skp}{Q}$ for $i=1,2$, we directly get from (seq) since $((\sem{S_1;S_2}^\dagger) \otimes I)(Q) = (\sem{S_1}^\dagger\otimes I)[(\sem{S_2}^\dagger \otimes I)(Q)]$ by \Cref{lem: IVP algebraic}:
  \[
  \begin{prooftree}
    \hypo{\begin{array}{c}
      \vdash Z: \rtriple{(\sem{S_1}^\dagger\otimes I)[(\sem{S_2}^\dagger \otimes I)(Q)]}{S_1}{\skp}{(\sem{S_2}^\dagger \otimes I)(Q)}
      \\
      \vdash Z: \rtriple{(\sem{S_2}^\dagger \otimes I)(Q)}{S_2}{\skp}{Q}
    \end{array}}
    \infer1{\vdash Z: \rtriple{((\sem{S_1;S_2}^\dagger) \otimes I)(Q)}{S_1;S_2}{\skp}{Q}}
  \end{prooftree}
  \]

  Case $\ifb$: assume that for any $Q$ and $m$ we have $\vdash Z: \rtriple{\sem{S_m}^\dagger \otimes I (Q)}{S_m}{\skp}{Q}$. Then, it follows directly from (if-L) that 
  \[
  \vdash Z: \rtriple{\sum_m (M_m\otimes I)^\dagger[(\sem{S_m}^\dagger\otimes I)(Q)](M_m\otimes I)}{\ifb}{\skp}{Q}.
  \]
  By \Cref{lem: IVP algebraic}, knowing that 
  \begin{align*}
    &\sum_m (M_m\otimes I)^\dagger[(\sem{S_m}^\dagger\otimes I)(Q)](M_m\otimes I) \\
    =\ &\sum_m (\EE_m^\dagger \otimes I)[(\sem{S_m}^\dagger\otimes I)(Q)] \\
    =\ & ((\sum_m \EE_m^\dagger \circ \sem{S_m}^\dagger)\otimes I)(Q) \\
    =\ & ((\sum_m \sem{S_m} \circ \EE_m)^\dagger \otimes I)(Q)\\
    =\ & (\sem{\ifb}^\dagger \otimes I)(Q)
  \end{align*}
  then allows us to conclude that
  \[
  \vdash Z: \rtriple{(\sem{\ifb}^\dagger \otimes I)(Q)}{\ifb}{\skp}{Q}
  \]
  as required.

  Case $\while$: assume that for any $Q$ we have $\vdash Z: \rtriple{(\sem{S}^\dagger\otimes I)(Q)}{S}{\skp}{Q}$. Then, we get that 
  \[
    \vdash Z: \rtriple{ (\sem{S}^\dagger\otimes I)[(\sem{\while}^\dagger\otimes I)(Q)]}{S}{\skp}{(\sem{\while}^\dagger \otimes I) (Q)}
  \]
  Let $P = (\sem{S}^\dagger\otimes I)[(\sem{\while}^\dagger\otimes I)(Q)]$.
  Then, by the (least) fixed point property of $\sem{\while}$, i.e., $\sem{\while} = \EE_0 + \sem{\while} \circ (\sem{S}\circ\EE_1)$, we obtain that
  \begin{align*}
    (\sem{\while}^\dagger \otimes I) (Q) &= 
    [(\EE_0^\dagger + (\EE_1^\dagger \circ \sem{S}^\dagger) \circ \sem{\while}^\dagger) \otimes I] (Q) \\
    &= (\EE_0^\dagger \otimes I) (Q) + (\EE_1^\dagger \otimes I) (P)
  \end{align*}
  by \Cref{lem: IVP algebraic}.
  We thus obtain by (while-L) that 
  \[
  \begin{prooftree}
    \hypo{\vdash Z: \rtriple{P}{S}{\skp}{(\EE_0^\dagger \otimes I) (Q) + (\EE_1^\dagger \otimes I) (P)}}
    \infer1{\vdash Z: \rtriple{(\EE_0^\dagger \otimes I) (Q) + (\EE_1^\dagger \otimes I) (P)}{\while}{\skp}{Q}}
  \end{prooftree}
  \]
  which is the same as 
  \[
    \vdash Z: \rtriple{(\sem{\while}^\dagger \otimes I) (Q)}{\while}{\skp}{Q}
  \]
  as required.
\end{proof}

\begin{lemma}[\Cref{lemma:twosidedweakestpre}]
  For every AST programs $S_1, S_2$, we have
  \[
    \vdash Z: \rtriple{(\sem{S_1}^\dagger \otimes \sem{S_2}^\dagger) (Q)}{S_1}{S_2}{Q}
  \]
\end{lemma}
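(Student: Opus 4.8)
The plan is to derive the two-sided weakest precondition by gluing two instances of the one-sided weakest precondition lemma (\Cref{lemma:onesidedweakestpre}) together with the (seq) rule, exactly in the spirit of the sequential case of that lemma's proof. First I would record the obvious ``right'' analogue of \Cref{lemma:onesidedweakestpre}: for every AST program $S$,
\[
\vdash Z: \rtriple{(I \otimes \sem{S}^\dagger)(Q)}{\skp}{S}{Q},
\]
which is proved by the same structural induction on $S$, using the right counterparts of (assign-L), (apply-L), (if-L), (while-L) together with (skip) and (seq).

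Next I would instantiate \Cref{lemma:onesidedweakestpre} on $S_1$ with postcondition $(I \otimes \sem{S_2}^\dagger)(Q)$, obtaining
\[
\vdash Z: \rtriple{(\sem{S_1}^\dagger \otimes I)\big((I \otimes \sem{S_2}^\dagger)(Q)\big)}{S_1}{\skp}{(I \otimes \sem{S_2}^\dagger)(Q)},
\]
and the right version above on $S_2$ with postcondition $Q$, obtaining $\vdash Z: \rtriple{(I \otimes \sem{S_2}^\dagger)(Q)}{\skp}{S_2}{Q}$. Applying (seq) to these two derivations (with left components $S_1,\skp$ and right components $\skp,S_2$) yields
\[
\vdash Z: \rtriple{(\sem{S_1}^\dagger \otimes I)\big((I \otimes \sem{S_2}^\dagger)(Q)\big)}{S_1;\skp}{\skp;S_2}{Q}.
\]
It then remains to simplify. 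On the predicate side, by the algebraic identities of \Cref{lem: IVP algebraic} (the interchange law $(\EE_1 \otimes \EE_2) = (\EE_1 \otimes I)\circ(I \otimes \EE_2)$ applied to $\EE_i = \sem{S_i}^\dagger$) we get $(\sem{S_1}^\dagger \otimes I)\big((I \otimes \sem{S_2}^\dagger)(Q)\big) = (\sem{S_1}^\dagger \otimes \sem{S_2}^\dagger)(Q)$. On the program side, $\sem{S_1;\skp} = \sem{S_1}$ and $\sem{\skp;S_2} = \sem{S_2}$, so identifying programs up to the monoid laws for $;$ --- exactly as the proof of \Cref{lemma:onesidedweakestpre} already does when it replaces $\skp;\skp$ by $\skp$ --- the displayed judgment is the desired one.

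The only delicate point is this last identification: the literal conclusion of (seq) mentions $S_1;\skp$ and $\skp;S_2$ rather than $S_1$ and $S_2$. I expect this to be harmless, since the rule system is already used modulo the unit laws of sequential composition. If one wants to avoid relying on that convention, the alternative is a direct structural induction on $S_1$ (with $S_2$ and the running postcondition $(I \otimes \sem{S_2}^\dagger)(Q)$ fixed), peeling off the head construct of $S_1$ with the corresponding left rule at each step and closing the $\skp$ base case with the right one-sided lemma applied to $S_2$; this reproves the statement without ever introducing a trailing $\skp$. Either route is routine once the one-sided lemmas are available, so I do not expect any substantive obstacle here.
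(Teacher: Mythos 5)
Your proof is correct and follows essentially the same route as the paper: the paper likewise combines \Cref{lemma:onesidedweakestpre} applied to $S_1$ with postcondition $(I\otimes\sem{S_2}^\dagger)(Q)$, its symmetric right-sided version applied to $S_2$, the (seq) rule, and the algebraic identity $(\sem{S_1}^\dagger\otimes I)\circ(I\otimes\sem{S_2}^\dagger)=\sem{S_1}^\dagger\otimes\sem{S_2}^\dagger$ from \Cref{lem: IVP algebraic}. The trailing-$\skp$ issue you flag is real but is silently absorbed by the paper in exactly the way you suggest, so there is no substantive difference.
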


\begin{proof}
  By using \Cref{lemma:onesidedweakestpre}, its symmetric version and the rule (seq) we get:
  \[
    \begin{prooftree}
      \hypo{
        \begin{array}{c} 
          \vdash Z: \rtriple{(\sem{S_1}^\dagger \otimes I)[(I\otimes \sem{S_2}^\dagger) (Q)]}{S_1}{\skp}{(I \otimes \sem{S_2}^\dagger) (Q)}\\ \vdash Z: \rtriple{(I \otimes \sem{S_2}^\dagger) (Q)}{\skp}{S_2}{Q}
        \end{array}
      }
      \infer1[(seq)]{\vdash Z: \rtriple{(\sem{S_1}^\dagger \otimes \sem{S_2}^\dagger) (Q)}{S_1}{S_2}{Q}}
    \end{prooftree}
  \]
  as required by \Cref{lem: IVP algebraic}.
\end{proof}

\begin{theorem}[\Cref{thm:weakcompleteness}]
  For every AST programs $S_1, S_2$, we have:
  \[
    \vDash Z: \rtriple{P}{S_1}{S_2}{Q_1 \otimes I + I \otimes Q_2}
  \]
  implies
  \[
   \vdash Z: \rtriple{P}{S_1}{S_2}{Q_1 \otimes I + I \otimes Q_2}
  \]
\end{theorem}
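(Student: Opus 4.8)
The plan is to reduce the claim to the two-sided weakest precondition lemma (\Cref{lemma:twosidedweakestpre}) together with the rule of consequence (csq). Write $Q \eqdef Q_1\otimes I + I\otimes Q_2$ for the split postcondition. \Cref{lemma:twosidedweakestpre} already supplies the derivable judgment $\vdash Z: \rtriple{(\sem{S_1}^\dagger \otimes \sem{S_2}^\dagger)(Q)}{S_1}{S_2}{Q}$, so it suffices to establish the single L\"owner inequality $P \sqsupseteq (\sem{S_1}^\dagger \otimes \sem{S_2}^\dagger)(Q)$; one application of (csq), with trivial right inequality $Q\sqsupseteq Q$, then yields $\vdash Z: \rtriple{P}{S_1}{S_2}{Q}$.

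To prove that inequality I would first recast validity as monotonicity via \Cref{lem:validity and monotone}: the hypothesis $\vDash Z: \rtriple{P}{S_1}{S_2}{Q}$ says that for every $z\in Z$ the pair $(\sem{S_1},\sem{S_2})$ is monotone with respect to $P$ and $Q$. Since $S_1,S_2\in\AST$, both $\sem{S_1}$ and $\sem{S_2}$ are quantum channels, so the Split Cost proposition (\Cref{lem:judgment two side wlp}) applies and this monotonicity is equivalent to
\[
  P \sqsupseteq \sem{S_1}^\dagger(Q_1)\otimes I + I\otimes \sem{S_2}^\dagger(Q_2).
\]
It then remains to observe that the right-hand side equals $(\sem{S_1}^\dagger \otimes \sem{S_2}^\dagger)(Q)$: expanding $(\sem{S_1}^\dagger \otimes \sem{S_2}^\dagger)(Q_1\otimes I + I\otimes Q_2)$ by the algebraic laws of \Cref{lem: IVP algebraic} gives $\sem{S_1}^\dagger(Q_1)\otimes\sem{S_2}^\dagger(I) + \sem{S_1}^\dagger(I)\otimes\sem{S_2}^\dagger(Q_2)$, and since the dual of a quantum channel is unital we have $\sem{S_1}^\dagger(I) = \sem{S_2}^\dagger(I) = I$. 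Thus the two-sided weakest precondition of a split postcondition is again split, and the two expressions coincide, completing the proof.

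Most of the substantive work has already been carried out elsewhere: the structural induction on programs is inside \Cref{lemma:onesidedweakestpre} and \Cref{lemma:twosidedweakestpre}, while the translation between the existential, coupling-based notion of validity and the universally quantified L\"owner inequality is exactly the content of the Split Cost proposition. The only point that genuinely needs attention is that the whole argument must go through for possibly infinite-valued predicates $Q_1,Q_2\in\PosI$ --- in particular that \Cref{lem:judgment two side wlp} and the unitality identity $\sem{S_i}^\dagger(I)=I$ remain valid in $\PosI(\cdot)$ --- which follows from the algebraic properties collected in \Cref{lem: IVP algebraic}. I do not expect any serious obstacle beyond this bookkeeping.
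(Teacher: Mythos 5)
Your proposal is correct and follows essentially the same route as the paper's own proof: both recast the hypothesis as monotonicity via \Cref{lem:validity and monotone}, invoke the Split Cost proposition (\Cref{lem:judgment two side wlp}) to obtain $P \sqsupseteq \sem{S_1}^\dagger(Q_1)\otimes I + I\otimes \sem{S_2}^\dagger(Q_2)$, identify this with $(\sem{S_1}^\dagger\otimes\sem{S_2}^\dagger)(Q)$ using unitality of the duals of AST programs and the algebraic laws of \Cref{lem: IVP algebraic}, and conclude by \Cref{lemma:twosidedweakestpre} plus one application of (csq).
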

\begin{proof}
  According to \Cref{lem:validity and monotone} and \Cref{lem:judgment two side wlp}, by assumption, it holds that
  \begin{align*}
    P &\geqlow (\sem{S_1}^\dagger (Q_1)) \otimes I + I \otimes (\sem{S_2}^\dagger (Q_2)) 
    \\ &= (\sem{S_1} \otimes \sem{S_2})^\dagger (Q_1 \otimes I + I \otimes Q_2)
  \end{align*}
  by employing \Cref{lem: IVP algebraic}, and since $S_1, S_2$ are AST, so $\sem{S_1}^\dagger$ and $\sem{S_2}^\dagger$ are unital maps, i.e., $\sem{S_1}^\dagger(I) = I$ and $\sem{S_2}^\dagger(I) = I$.

  Then, by \Cref{lemma:twosidedweakestpre} and the (csq) rule, we can derive
  \[
    \begin{prooftree}
      \hypo{\begin{array}{c}
        P \geqlow (\sem{S_1} \otimes \sem{S_2})^\dagger (Q)\\
        \vdash Z: \rtriple{(\sem{S_1} \otimes \sem{S_2})^\dagger (Q)}{S_1}{S_2}{Q}
      \end{array}}
      \infer1{\vdash Z: \rtriple{P}{S_1}{S_2}{Q}}
    \end{prooftree}
  \]
  where $Q = Q_1 \otimes I + I \otimes Q_2$, as required.
\end{proof}


\begin{theorem}[\Cref{thm:completeness post cond}]
  For every AST $S_1, S_2$ programs and bounded predicate $Q\in
  \Pos(\HH_{S_1} \otimes \HH_{S_2})$, we have
  \[\vDash Z:
  \rtriple{P}{S_1}{S_2}{Q}\] implies
  \[\vdash Z:
  \rtriple{P}{S_1}{S_2}{Q}\]
\end{theorem}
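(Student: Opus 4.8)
The plan is to obtain the conclusion from a single application of the (duality) rule of Fig.~\ref{fig:qqrhlrules}, whose side conditions are precisely the hypotheses of the theorem: $S_1, S_2 \in \AST$ and $Q \in \Pos$ (i.e.\ $Q$ is finite). By that rule it suffices to derive the premise
\[
\vdash Z, (Y_1, Y_2, n) \in \mathcal{Y} : \rtriple{P + nI}{S_1}{S_2}{Y_1 \otimes I + I \otimes (nI - Y_2)},
\]
where $\mathcal{Y} = \{(Y_1,Y_2,n) \mid n \in \NN;\ 0 \sqsubseteq Y_1;\ 0 \sqsubseteq Y_2 \sqsubseteq nI;\ Q \sqsupseteq Y_1 \otimes I - I \otimes Y_2\}$, treating $(Y_1, Y_2, n)$ as fresh logical variables appended to the context $Z$.

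First I would note that the postcondition above is split: it has the shape $Q_1 \otimes I + I \otimes Q_2$ with $Q_1 = Y_1$ and $Q_2 = nI - Y_2$, and both are genuine finite PSD unary assertions, since $Y_1 \sqsupseteq 0$ and $Y_2 \sqsubseteq nI$ give $nI - Y_2 \sqsupseteq 0$, by the definition of $\mathcal{Y}$. Hence, by completeness for split postconditions (\Cref{thm:weakcompleteness}), applied with the enlarged logical-variable context, it is enough to prove the corresponding \emph{validity}
\[
\vDash Z, (Y_1, Y_2, n) \in \mathcal{Y} : \rtriple{P + nI}{S_1}{S_2}{Y_1 \otimes I + I \otimes (nI - Y_2)}.
\]

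To establish this validity I would pass through monotonicity via \Cref{lem:validity and monotone}. The hypothesis $\vDash Z : \rtriple{P}{S_1}{S_2}{Q}$ says exactly that, for each $z \in Z$, $(\sem{S_1},\sem{S_2})$ is monotone w.r.t.\ $P$ and $Q$; the target validity says that, for each $z$ and each $(Y_1,Y_2,n)\in\mathcal{Y}$, $(\sem{S_1},\sem{S_2})$ is monotone w.r.t.\ $P + nI$ and $Y_1 \otimes I + I \otimes (nI - Y_2)$. These two monotonicity statements are equivalent by the Duality under Data Processing theorem (\Cref{lem:judgment strassen}) instantiated with $C_i = P \in \PosI$ and $C_o = Q \in \Pos$, whose hypotheses hold because $S_1, S_2 \in \AST$ (so $\sem{S_1}, \sem{S_2}$ are quantum channels) and $C_o$ is finite. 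Applying \Cref{lem:validity and monotone} once more yields the desired validity, and the chain $\text{hypothesis} \Rightarrow \text{validity of split judgment} \Rightarrow \text{derivability of split judgment} \Rightarrow$ (duality) rule $\Rightarrow$ conclusion closes the proof. I do not expect a genuine obstacle: the substantive content — the SDP-based duality theorem and the structural-induction completeness for split postconditions — is already in place, so the only care needed is bookkeeping, namely matching $\mathcal{Y}$ from the (duality) rule verbatim with the one in \Cref{lem:judgment strassen} under the identification $C_i = P$, $C_o = Q$, verifying $nI - Y_2 \sqsupseteq 0$ so that \Cref{thm:weakcompleteness} applies, and handling the extra parameters uniformly as logical variables.
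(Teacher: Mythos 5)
Your proposal is correct and follows essentially the same route as the paper's proof: reduce via the (duality) rule to the universally quantified split postcondition, establish its validity from the hypothesis using \Cref{lem:validity and monotone} together with \Cref{lem:judgment strassen} (with $C_i = P$, $C_o = Q$), and then invoke \Cref{thm:weakcompleteness} to get derivability. The extra bookkeeping you flag (e.g.\ $nI - Y_2 \sqsupseteq 0$ so the postcondition is genuinely split) is correct and left implicit in the paper.
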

\begin{proof}
By \Cref{lem:validity and monotone} and \Cref{lem:judgment strassen}, we know that
$$\vDash Z, (Y_1, Y_2, n) \in \mathcal{Y}: \rtriple{P + nI}{\\ S_1}{S_2}{Y_1 \otimes I + I \otimes (nI - Y_2)}$$
where $\mathcal{Y}$ is defined as in \Cref{lem:judgment strassen}
with $C_i = P$ and $C_o = Q$.

Follows from  \Cref{thm:weakcompleteness} the completeness for split post-conditions, we know that:
$$\vdash Z, (Y_1, Y_2, n) \in \mathcal{Y}: \rtriple{P + nI}{\\ S_1}{S_2}{Y_1 \otimes I + I \otimes (nI - Y_2)}.$$
Finally by applying rule (duality), we have 
$$\vdash Z: \rtriple{P}{S_1}{S_2}{Q}.$$
\end{proof}

\begin{proposition}[\Cref{prop:semanticembeddingrqPD}]
  Let $S_1$, $S_2$ be AST programs and $0\leqlow P, Q \leqlow I$ be predicates. Then 
  $$\vDash_{\rqPD}\rtriple{P}{S_1}{S_2}{Q} \iff\ \vDash \rtriple{I-P}{S_1}{S_2}{I-Q}.$$
\end{proposition}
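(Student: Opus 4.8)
The plan is to unfold both notions of validity and observe that the map $A \mapsto I - A$, which is an involution on the subunital predicates $\{A \mid 0 \leqlow A \leqlow I\}$, carries one into the other, using crucially that $S_1, S_2$ are AST. First I would record the relevant bookkeeping: since $S_1, S_2$ are AST, $\sem{S_1}$ and $\sem{S_2}$ are quantum channels and hence trace-preserving, and every coupling preserves trace by \Cref{lemma:traceequiv}. Therefore, for any $\rho \in \DD^1(\HH_{S_1} \otimes \HH_{S_2})$ and any coupling $\sigma : \langle \sem{S_1}(\tr_2(\rho)), \sem{S_2}(\tr_1(\rho)) \rangle$ we have $\tr(\sigma) = \tr(\sem{S_1}(\tr_2(\rho))) = \tr(\tr_2(\rho)) = \tr(\rho) = 1$. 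Moreover, by \Cref{lem:validity AST program} the statement $\vDash \rtriple{I-P}{S_1}{S_2}{I-Q}$ is equivalent to: for every $\rho \in \DD^1(\HH_{S_1}\otimes\HH_{S_2})$ there exists a coupling $\sigma : \langle \sem{S_1}(\tr_2(\rho)), \sem{S_2}(\tr_1(\rho)) \rangle$ with $\tr((I-P)\rho) \geq \tr((I-Q)\sigma)$, i.e. the witnessing partial coupling may be taken to be a genuine coupling.

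Next I would spell out the definition of $\rqPD$-validity from \cite{barthe_rqpd}: for AST programs $\vDash_{\rqPD}\rtriple{P}{S_1}{S_2}{Q}$ states that for every $\rho \in \DD^1(\HH_{S_1}\otimes\HH_{S_2})$ there exists a coupling $\sigma : \langle \sem{S_1}(\tr_2(\rho)), \sem{S_2}(\tr_1(\rho)) \rangle$ such that $\tr(P\rho) \leq \tr(Q\sigma)$. Comparing with the unfolding of $\vDash \rtriple{I-P}{S_1}{S_2}{I-Q}$ above, the two differ only in that the inequality is reversed and the predicates are complemented — and the universal/existential quantifier structure, as well as the class of admissible witness couplings, is identical.

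The equivalence is then immediate from the two trace identities $\tr((I-P)\rho) = \tr(\rho) - \tr(P\rho) = 1 - \tr(P\rho)$ and $\tr((I-Q)\sigma) = \tr(\sigma) - \tr(Q\sigma) = 1 - \tr(Q\sigma)$, valid because $\tr(\rho) = \tr(\sigma) = 1$ by the first paragraph. Indeed, for a fixed $\rho$ and a fixed coupling $\sigma$, $\tr(P\rho) \leq \tr(Q\sigma)$ holds if and only if $\tr((I-P)\rho) \geq \tr((I-Q)\sigma)$. Since in both logics the witness is an arbitrary coupling of $\langle \sem{S_1}(\tr_2(\rho)), \sem{S_2}(\tr_1(\rho)) \rangle$, the same $\sigma$ witnesses one side exactly when it witnesses the other, so the two validity statements coincide; this argument runs verbatim in both directions of the ``$\iff$''.

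I expect the only delicate point to be the reconciliation of conventions: $\rqPD$ uses a lower-bounding semantics over subunital predicates and has no logical variables (so $Z$ is taken to be a singleton here), whereas $\qqRHL$ uses an upper-bounding semantics over possibly infinite-valued predicates with partial couplings. Once these are matched — the infinite-valued machinery being irrelevant since all predicates in play lie in $[0,I]$, and partial couplings collapsing to couplings via \Cref{lem:validity AST program} — the proof is just the complementation identity above and requires no further computation.
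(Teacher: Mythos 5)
Your proposal is correct and follows essentially the same route as the paper's proof: unfold $\rqPD$-validity as the existence of a coupling with $\tr(P\rho)\le\tr(Q\sigma)$, use the AST assumption to get $\tr(\rho)=\tr(\sigma)$ (the paper does not even need to normalize to trace $1$), apply the complementation identity, and invoke \Cref{lem:validity AST program} to match this with $\qqRHL$-validity. The only detail worth stating explicitly, which the paper does and you leave implicit, is that $P,Q\sqsubseteq I$ guarantees $I-P,I-Q\in\Pos$ so the complemented judgment is well-formed.
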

\begin{proof}
  From \cite{barthe_rqpd}, we know that
  $\vDash_\rqPD\rtriple{P}{S_1}{S_2}{Q}$
  iff for all $\rho\in\DD(\HH_{S_1}\otimes\HH_{S_2})$, there exists a coupling $\sigma : \<\sem{S_1}(\tr_2(\rho)), \sem{S_2}(\tr_1(\rho))\>$ such that 
  $$\tr(P\rho)\le\tr(Q\sigma).$$
  Since $S_1,S_2\in\AST$, $\tr(\rho) = \tr(\sigma)$, thus it is equivalent to
  $$\tr((I-Q)\sigma)\le\tr((I-P)\rho),$$
  which, according to \Cref{lem:validity AST program}, it equivalent to $\vDash\rtriple{I-P}{S_1}{S_2}{I-Q}$ in our logic.
  Since $P,Q\sqsubseteq I$, so it is guaranteed that $I-P, I-Q\in\Pos$.
\end{proof}

\begin{proof}[Proof of \Cref{prop:semanticembedding projector logic}]
  ($\Rightarrow$) part. For any $\rho\in\DD(\HH_{S_1}\otimes\HH_{S_2})$ such that $\supp(\rho)\subseteq X$, by assumption and \Cref{lem:validity AST program}, we know that there exists a coupling $\sigma : \<\sem{S_1}(\tr_2(\rho)), \sem{S_2}(\tr_1(\rho))\>$ such that 
  $$0 \le \tr(Y^\bot \sigma)\le \tr((X\mid 0)\rho) = 0,$$
  the last equation is due to \Cref{lemma:basic-properties-of-ops-of-ivp} as $\tr(X^\bot\rho) = 0$. This asserts that $\supp(\sigma)\subseteq Y$, which conclude that $\vDash_\pqRHL\rtriple{X}{S_1}{S_2}{Y}$.

  ($\Leftarrow$) part. By \Cref{lem:validity AST program}, for any $\rho\in\DD(\HH_{S_1}\otimes\HH_{S_2})$, if $\tr(X^\bot\rho)\neq 0$, then $\tr((X\mid 0)\rho) = +\infty$, by convention then it holds. Otherwise, $\tr(X^\bot\rho) = 0$, so $\tr((X\mid 0)\rho) = 0$ and $\supp(\rho)\subseteq X$, by assumption, there exists a coupling $\sigma : \<\sem{S_1}(\tr_2(\rho)), \sem{S_2}(\tr_1(\rho))\>$ such that $\supp(\sigma) \subseteq Y$, which leads to
  $$\tr(Y^\bot \sigma) = 0 \le \tr((X\mid 0)\rho),$$
  and this completes the proof.
\end{proof}

\begin{proposition}
  \label{prop:postcondition supp}
  For programs $S_1$, $S_2$, any $Z$, and Z-parameterised predicate P and Z-parameterised projector $X\in\cS(\HH_{S_1}\otimes\HH_{S_2})$,
  the following holds:
  \begin{equation*}
  \vDash Z : \rtriple{X\mid 0}{S_1}{S_2}{P} \iff\
  \vDash Z : \rtriple{X\mid 0}{S_1}{S_2}{\supp(P)}.
  \end{equation*}
  As a corollary, if $X\in\cS(\HH_{S_1}\otimes\HH_{S_2})$, then 
  \begin{equation*}
  \vDash Z : \rtriple{X\mid 0}{S_1}{S_2}{Y\mid 0} \iff\
  \vDash Z : \rtriple{X\mid 0}{S_1}{S_2}{Y^\bot}.
  \end{equation*}
\end{proposition}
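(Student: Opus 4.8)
The plan is to unfold the definition of $\qqRHL$ validity and exploit the special shape of the precondition $X\mid 0=(+\infty)\cdot X^\bot$. By \Cref{lemma:basic-properties-of-ops-of-ivp}, for $\rho\in\DD^1(\HH_{S_1}\otimes\HH_{S_2})$ we have $\tr((X\mid 0)\rho)=0$ when $\supp(\rho)\subseteq X$ and $\tr((X\mid 0)\rho)=+\infty$ otherwise. Hence, for \emph{any} postcondition $Q$, the judgement $\vDash Z:\rtriple{X\mid 0}{S_1}{S_2}{Q}$ is equivalent to the following: for every $z\in Z$ and every $\rho\in\DD^1(\HH_{S_1}\otimes\HH_{S_2})$ with $\supp(\rho)\subseteq X$, there is a partial coupling $\sigma:\langle\sem{S_1}(\tr_2(\rho)),\sem{S_2}(\tr_1(\rho))\rangle_p$ with $\tr(Q_z\sigma)\le 0$. (When $\supp(\rho)\not\subseteq X$ the requirement is vacuous: the set of partial couplings is non-empty by \Cref{prop:partial-coupling} and $+\infty\ge\tr(Q_z\sigma)$.) Since $Q_z\in\PosI$ and $\sigma\in\Pos$ give $\tr(Q_z\sigma)\in[0,+\infty]$, the inequality $\tr(Q_z\sigma)\le 0$ is the same as $\tr(Q_z\sigma)=0$, and the family of $\sigma$'s over which we quantify does not depend on $Q$. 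So it suffices to show that, for fixed $z$ and $\rho$, the sets $\{\sigma : \tr(P_z\sigma)=0\}$ and $\{\sigma : \tr(\supp(P_z)\sigma)=0\}$ coincide.

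This reduces to a pointwise linear-algebra fact: for $R\in\PosI(\HH)$ and $\sigma\in\Pos(\HH)$, writing $R=R_0+(+\infty)\cdot X_R$ as in \Cref{lem:IVP-rep} and setting $\supp(R)\eqdef\supp(R_0)\vee X_R$, one has
\[ \tr(R\sigma)=0 \;\iff\; \supp(\sigma)\subseteq\supp(R)^\bot \;\iff\; \tr(\supp(R)\sigma)=0. \]
For the first equivalence: if $\tr(R\sigma)=0$, then $X_R\sigma=0$ (otherwise $\tr(R\sigma)=+\infty$), so $\supp(\sigma)\subseteq X_R^\bot$; then $\tr(R\sigma)=\tr(R_0\sigma)=0$ with $R_0,\sigma\in\Pos$, so decomposing $\sigma=\sum_j\mu_j\ket{f_j}\bra{f_j}$ with $\mu_j>0$, each term $\mu_j\bra{f_j}R_0\ket{f_j}$ vanishes, whence $R_0\ket{f_j}=0$ and $\supp(\sigma)\subseteq\supp(R_0)^\bot$; combining via De Morgan, $\supp(\sigma)\subseteq\supp(R_0)^\bot\cap X_R^\bot=(\supp(R_0)\vee X_R)^\bot=\supp(R)^\bot$. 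The converse uses only the conventions $(+\infty)\cdot 0=0$ and $0\cdot a=0$. The second equivalence is the standard fact that $\tr(\Pi\sigma)=0$ iff $\supp(\sigma)$ is orthogonal to the range of the projector $\Pi$. Instantiating with $R:=P_z$ and noting $\supp(\supp(P_z))=\supp(P_z)$ identifies the two witness sets, which proves the proposition.

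For the corollary, note $Y\mid 0=(+\infty)\cdot Y^\bot$, so in the above decomposition $R_0=0$ and $X_R=Y^\bot$, giving $\supp(Y\mid 0)=\{0\}\vee Y^\bot=Y^\bot$; applying the proposition with $P:=Y\mid 0$ yields
\[ \vDash Z:\rtriple{X\mid 0}{S_1}{S_2}{Y\mid 0} \;\iff\; \vDash Z:\rtriple{X\mid 0}{S_1}{S_2}{Y^\bot}. \]

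I expect the only subtle step to be the bookkeeping with infinite eigenvalues in the linear-algebra fact: one must verify that $\tr(R\sigma)=0$ genuinely forces the $(+\infty)$-eigenspace $X_R$ to be orthogonal to $\supp(\sigma)$, and that $\supp$ of an infinite-valued predicate decomposes as $\supp(R_0)\vee X_R$ so that $\supp(R)^\bot=\supp(R_0)^\bot\cap X_R^\bot$. Everything else is a direct unfolding of the validity definition together with non-emptiness of the set of partial couplings.
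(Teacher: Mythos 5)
Your proof is correct and follows essentially the same route as the paper's: split on whether $\tr((X\mid 0)\rho)$ is $0$ or $+\infty$, observe the latter case is vacuous, and in the former reduce both judgments to the existence of a partial coupling $\sigma$ with $\tr(Q\sigma)=0$, which is insensitive to replacing $P$ by $\supp(P)$. The only difference is that you spell out the pointwise fact $\tr(P\sigma)=0\iff\tr(\supp(P)\sigma)=0$ for infinite-valued $P$ (via the decomposition $P=P_0+(+\infty)\cdot X_P$), which the paper simply asserts; this extra care is welcome but not a departure in method.
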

\begin{proof}
  For any $z\in Z$, and $\rho\in\DD^1(\HH_{S_1}\otimes\HH_{S_2})$,
  if $\tr(X\rho) \neq 0$, then $\tr((X\mid 0)\rho) = +\infty$, thus both LHS and RHS holds, i.e., exists partial coupling $\sigma$ of outputs such that $\tr((X\mid 0)\rho)\ge \tr(P\sigma)$ and $\tr((X\mid 0)\rho)\ge \tr(\supp(P)\sigma)$. Otherwise, $\tr(X\rho) = 0$, then $\tr((X\mid 0)\rho) = 0$, then for any partial coupling $\sigma$ of outputs, $\tr(P\sigma)\le \tr((X\mid 0)\rho) = 0$ if and only if $\tr(\supp(P)\sigma)\le \tr((X\mid 0)\rho) = 0$, so LHS is equivalent to RHS.
\end{proof}

\section{Deferred Proofs in ``Two-Sided Rules'' Section}

\begin{theorem}[\Cref{thm:side-condition equal}]
  For AST programs $S_1,S_2$, and measurements $M = \{M_1,\cdots,M_k\}$ and $N = \{N_1,\cdots,N_k\}$, the following are equivalent:
  \begin{enumerate}
    \item $\emptyset\stackrel{(S_1,S_2)}{\vDash}M\approx N$;
    \item $\vDash (Y_1,\cdots,Y_k,Z_1,\cdots,Z_k,n)\in\mathcal{Y}_k: \{nI\}S_1\sim S_2\big\{ (\mbox{$\sum_i$}M_i^\dagger Y_i M_i)\otimes I +
      I \otimes \big[nI - (\mbox{$\sum_i$}N_i^\dagger Z_i N_i)\big]\big\}$
      where $\mathcal{Y}_k = \{  (Y_1,\cdots,Y_k,Z_1,\cdots,Z_k,n) \mid
      \forall\,i,\ 0\sqsubseteq Y_i,\ 0\sqsubseteq Z_i\sqsubseteq nI, Y_i\otimes I - I\otimes Z_i\sqsubseteq 0,
      \forall\,j\neq i,\ Y_i\otimes I - I\otimes Z_j\sqsubseteq I\}.$
  \end{enumerate}
\end{theorem}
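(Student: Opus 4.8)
The plan is to strip the $\qqRHL$ machinery out of statement (2), reducing it to an elementary inequality between the two post-measurement probability distributions, and then match that against (1). Throughout write $p_i(\rho_1)=\tr(M_i\sem{S_1}(\rho_1)M_i^\dagger)$ and $q_i(\rho_2)=\tr(N_i\sem{S_2}(\rho_2)N_i^\dagger)$; since $S_1,S_2$ are AST these are genuine probability distributions in $i$ for every state $\rho_1,\rho_2$. By \Cref{lem:validity AST program}, (2) holds iff for every tuple $\tau=(Y_1,\dots,Y_k,Z_1,\dots,Z_k,n)\in\mathcal Y_k$ and every state $\rho$ there is a coupling $\sigma$ of $\langle\sem{S_1}(\tr_2\rho),\sem{S_2}(\tr_1\rho)\rangle$ with $\tr(nI\,\rho)\ge\tr(Q_\tau\sigma)$, where $Q_\tau$ is the displayed split predicate. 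Because $Q_\tau$ is split, $\tr(Q_\tau\sigma)$ depends only on the (fixed) marginals of $\sigma$; and because $S_1,S_2$ are AST those marginals are density operators of equal trace, so a coupling always exists and the existential is vacuous. Expanding and cancelling the $n$-terms turns (2) into
$$\sum_i\tr\!\big(\sem{S_1}^\dagger(M_i^\dagger Y_iM_i)\,\rho_1\big)\ \le\ \sum_i\tr\!\big(\sem{S_2}^\dagger(N_i^\dagger Z_iN_i)\,\rho_2\big)$$
for all $\tau\in\mathcal Y_k$ and all states $\rho_1,\rho_2$; writing $A_\tau=\sum_i\sem{S_1}^\dagger(M_i^\dagger Y_iM_i)$ and $B_\tau=\sum_i\sem{S_2}^\dagger(N_i^\dagger Z_iN_i)$, this is exactly $\lambda_{\max}(A_\tau)\le\lambda_{\min}(B_\tau)$ for every $\tau\in\mathcal Y_k$.

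For $(1)\Rightarrow(2)$: from (1), $p_i(\rho_1)$ is independent of $\rho_1$ and equals $q_i(\rho_2)$, which is independent of $\rho_2$; calling this common value $c_i\ge0$ we get $\sem{S_1}^\dagger(M_i^\dagger M_i)=c_iI=\sem{S_2}^\dagger(N_i^\dagger N_i)$. For $\tau\in\mathcal Y_k$ the constraint $Y_i\otimes I\sqsubseteq I\otimes Z_i$ is equivalent to $\lambda_{\max}(Y_i)\le\lambda_{\min}(Z_i)$, and $0\sqsubseteq Y_i,Z_i$; using $M_i^\dagger Y_iM_i\sqsubseteq\lambda_{\max}(Y_i)M_i^\dagger M_i$ together with monotonicity and linearity of $\sem{S_j}^\dagger$ gives $A_\tau\sqsubseteq\big(\sum_ic_i\lambda_{\max}(Y_i)\big)I$ and $B_\tau\sqsupseteq\big(\sum_ic_i\lambda_{\min}(Z_i)\big)I$, hence $\lambda_{\max}(A_\tau)\le\sum_ic_i\lambda_{\max}(Y_i)\le\sum_ic_i\lambda_{\min}(Z_i)\le\lambda_{\min}(B_\tau)$, which is the normal form above.

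For $(2)\Rightarrow(1)$: for a fixed index $j$, the scalar tuple with $Y_j=Z_j=I$, $Y_i=Z_i=0$ for $i\ne j$, and $n=1$ lies in $\mathcal Y_k$ — this is precisely where the side constraints $Y_i\otimes I-I\otimes Z_{i'}\sqsubseteq I$ for $i\ne i'$ are used — and feeding it into the normal form gives $p_j(\rho_1)\le q_j(\rho_2)$ for all $\rho_1,\rho_2$, i.e. $\max_{\rho_1}p_j(\rho_1)\le\min_{\rho_2}q_j(\rho_2)$. Summing over $j$ and using $\sum_jp_j\equiv\sum_jq_j\equiv1$ forces $\sum_j\max_{\rho_1}p_j(\rho_1)=1=\sum_j\min_{\rho_2}q_j(\rho_2)$, which in turn forces each $p_j$ to be constant in $\rho_1$, each $q_j$ to be constant in $\rho_2$, and the two constants to coincide — which is exactly (1). (Equivalently, the scalar part of $\mathcal Y_k$ is the dual-feasible region of \Cref{prop:Kantorovich-Rubinstein-duality} for the cost $c(i,j)=[i\ne j]$, so (2) restricted to it says the total-variation distance between $p(\rho_1)$ and $q(\rho_2)$ is $\le0$.)

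The main obstacle is the normalization step: carefully bookkeeping traces and partial traces to see that, for AST programs with a split postcondition, the witness coupling plays no role and the judgment collapses to $\lambda_{\max}(A_\tau)\le\lambda_{\min}(B_\tau)$. Once that reduction is in place both implications are short, the only other delicate point being the correct choice of indicator tuples in the $(2)\Rightarrow(1)$ direction.
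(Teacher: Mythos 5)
Your proof is correct. The reduction of statement (2) to the ``normal form'' $\tr(A_\tau\rho_1)\le\tr(B_\tau\rho_2)$ for all density operators $\rho_1,\rho_2$ (equivalently $\lambda_{\max}(A_\tau)\le\lambda_{\min}(B_\tau)$) is sound: for AST programs with a split postcondition the witness coupling contributes nothing beyond its fixed marginals, the $n$-terms cancel because the traces are preserved, and every pair of density operators arises as the marginals of a product input. Your $(2)\Rightarrow(1)$ direction is essentially the paper's: the same indicator tuples $Y_j=Z_j=I$, $n=1$, the same verification that the cross-constraints admit them into $\mathcal{Y}_k$, and the same sum-to-one squeeze. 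Where you genuinely diverge is $(1)\Rightarrow(2)$. The paper embeds the post-measurement ensembles as block-diagonal states on $\HH_k\otimes\HH_{S_i}$, exhibits an explicit zero-defect lifting with respect to the off-diagonal predicate $A=\sum_{i\ne j}(|i\>\<i|\otimes I)\otimes(|j\>\<j|\otimes I)$, and invokes the quantum Strassen theorem (\Cref{thm:quantum strassen defect alter}) to extract the inequality for all $(Y_i,Z_i,n)\in\mathcal{Y}_k$. You instead observe that the empty-context entailment forces the output distributions to be \emph{constant}, i.e.\ $\sem{S_1}^\dagger(M_i^\dagger M_i)=c_iI=\sem{S_2}^\dagger(N_i^\dagger N_i)$, and then the constraint $Y_i\otimes I\sqsubseteq I\otimes Z_i\iff\lambda_{\max}(Y_i)\le\lambda_{\min}(Z_i)$ yields the bound by elementary L\"owner-order manipulations, with no extended space and no duality theorem. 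Your route is shorter and self-contained; the trade-off is that it leans on the global constancy that only the empty context $\Gamma=\emptyset$ provides, whereas the paper's Strassen argument needs only that the two output distributions agree for the particular pair of marginals at hand, and so is the version that would survive a generalization to nonempty measurement contexts. One small bookkeeping item you elide but which is needed for the postcondition to be a legitimate element of $\Pos$: $Z_i\sqsubseteq nI$ gives $\sum_iN_i^\dagger Z_iN_i\sqsubseteq n\sum_iN_i^\dagger N_i=nI$, so $nI-\sum_iN_i^\dagger Z_iN_i\sqsupseteq 0$.
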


\begin{proof}
  ($1\Rightarrow 2$). By \Cref{lem:validity AST program} , for any $\rho\in \DD(\HH_{S_1}\otimes\HH_{S_2})$, set $\rho_1 = \tr_2(\rho)$ and $\rho_2 = \tr_1(\rho)$, $\sigma_1 = \sem{S_1}(\rho_1)$ and $\sigma_2 = \sem{S_2}(\rho_2)$. By first assumption, we know that for all $i$, $\tr(M_i\sigma_1M_i^\dagger) = \tr(N_i\sigma_2N_i^\dagger)$ and denote it by $p_i$. Set the state 
  $$\sigma_1' = \begin{pmatrix}
      M_1\sigma_1M_1^\dagger & 0 & \cdots & 0 \\
      0 & M_2\sigma_1M_2^\dagger & \cdots & 0 \\
      \vdots & \vdots & \ddots & \vdots \\
      0 & 0 & \cdots & M_k\sigma_1M_k^\dagger
  \end{pmatrix}
  \qquad
  \sigma_2' = \begin{pmatrix}
      N_1\sigma_2N_1^\dagger & 0 & \cdots & 0 \\
      0 & N_2\sigma_2N_2^\dagger & \cdots & 0 \\
      \vdots & \vdots & \ddots & \vdots \\
      0 & 0 & \cdots & N_k\sigma_2N_k^\dagger
  \end{pmatrix}$$
  and obviously, $\sigma'_1 \in \DD(\HH_k\otimes\HH_{S_1})$ and $\sigma'_2 \in \DD(\HH_k\otimes\HH_{S_2})$ where $\HH_k$ is the $k$-dimensional Hilbert space. In other words, $\sigma_1' = \sum_i|i\>\<i|\otimes M_i\sigma_1M_i^\dagger$ and $\sigma_2' = \sum_i|i\>\<i|\otimes N_i\sigma_2N_i^\dagger$.
  Consider the PSD $A = \sum_{i\neq j} (|i\>\<i|\otimes I_{S_1}) \otimes (|j\>\<j|\otimes I_{S_2}) \in \Pos((\HH_k\otimes\HH_{S_1})\otimes(\HH_k\otimes\HH_{S_2}))$. We claim that $\sigma'_1A_0^\#\sigma'_2$, since we can construct coupling
  $$
  \sigma' = \sum_i (|i\>\<i|\otimes M_i\sigma_1M_i^\dagger)\otimes (|i\>\<i|\otimes N_i\sigma_2N_i^\dagger) / p_i,
  $$
  which $\sigma : \<\sigma'_1, \sigma'_2\>$ and $\tr(A\sigma') = 0$.
  By \Cref{thm:quantum strassen defect alter}, we know that for all $Y\in\Pos(\HH_k\otimes\HH_{S_1})$ and $Z\in\Pos(\HH_k\otimes\HH_{S_2})$ such that $A\sqsupseteq Y\otimes (I_k\otimes I_{S_2}) - (I_k\otimes I_{S_1})\otimes Z$, it holds that:
  $\tr(Y\sigma'_1)\le \tr(Z\sigma'_2)$.
  Now, back to (2) which we aim to prove, for any $(Y_1,\cdots,Y_k,Z_1,\cdots,Z_k,n)\in \mathcal{Y}_k$, set $Y = \sum_i|i\>\<i|\otimes Y_i$ and $Z = \sum_i|i\>\<i|\otimes Z_i$, we check that:
  \begin{align*}
      Y\otimes (I_k\otimes I_{S_1}) - (I_k\otimes I_{S_2})\otimes Z &= \Big(\sum_i|i\>\<i|\otimes Y_i\Big)\otimes\Big(\sum_j|j\>\<j|\otimes I_{S_2}\Big) - \Big(\sum_i|i\>\<i|\otimes I_{S_1}\Big)\otimes\Big(\sum_j|j\>\<j|\otimes Z_j\Big) \\
      &= \sum_i[(|i\>\<i|\otimes Y_i)\otimes (|i\>\<i|\otimes I_{S_2}) - (|i\>\<i|\otimes I_{S_1})\otimes (|i\>\<i|\otimes Z_i)] - \\
      &\quad \sum_{i\neq j}[(|i\>\<i|\otimes Y_i)\otimes (|j\>\<j|\otimes I_{S_2}) - (|i\>\<i|\otimes I_{S_1})\otimes (|j\>\<j|\otimes Z_j)]\\
      &= \sum_i(|i\>\<i|\otimes|i\>\<i|)\otimes(Y_i\otimes I_{S_2} - I_{S_1}\otimes Z_i) + \sum_{i\neq j}(|i\>\<i|\otimes|j\>\<j|)\otimes(Y_i\otimes I_{S_2} - I_{S_1}\otimes Z_j) \\
      &\sqsubseteq \sum_{i\neq j}(|i\>\<i|\otimes|j\>\<j|)\otimes(I_{S_1}\otimes I_{S_2}) \\
      &= A
  \end{align*}
  where, in the fourth and fifth line, we change the order of Hilbert space $(\HH_k\otimes\HH_{S_1})\otimes(\HH_k\otimes\HH_{S_2})\rightarrow(\HH_k\otimes\HH_k)\otimes(\HH_{S_1}\otimes\HH_{S_2})$ as they are isomorphic. Thus, it holds that $\tr(Y\sigma'_1)\le \tr(Z\sigma'_2)$, or equivalently,
  \begin{align*}
      0 \ge \tr(Y\sigma'_1) - \tr(Z\sigma'_2) &= \tr\Big(\Big(\sum_i|i\>\<i|\otimes Y_i\Big)\Big(\sum_i|i\>\<i|\otimes M_i\sigma_1M_i^\dagger\Big)\Big)
      - \tr\Big(\Big(\sum_i|i\>\<i|\otimes Z_i\Big)\Big(\sum_i|i\>\<i|\otimes N_i\sigma_2N_i^\dagger\Big)\Big) \\
      &= \sum_i\tr(Y_iM_i\sigma_1M_i^\dagger) - \sum_i\tr(Z_iN_i\sigma_2N_i^\dagger) \\
      &= \tr\big((\mbox{$\sum_i$}M_i^\dagger Y_iM_i)\sigma_1\big) - 
         \tr\big((\mbox{$\sum_i$}N_i^\dagger Z_iN_i)\sigma_2\big).
  \end{align*}
  Note that $\tr(\rho) = \tr(\sigma_1) = \tr(\sigma_2)$ and set it as $p$. Let $\sigma\triangleq \sigma_1\otimes\sigma_2/p$, realizing that $\sigma : \<\sigma_1,\sigma_2\>$, and observe that 
  \begin{align*}
      \tr((nI)\rho) &\ge \tr((nI)\sigma) + \tr\big((\mbox{$\sum_i$}M_i^\dagger Y_iM_i)\sigma_1\big) - 
         \tr\big((\mbox{$\sum_i$}N_i^\dagger Y_iN_i)\sigma_2\big) \\
        &= \tr((nI)\sigma) + \tr\big(((\mbox{$\sum_i$}M_i^\dagger Y_iM_i)\otimes I_{S_2})\sigma\big) - 
         \tr\big((I_{S_1}\otimes(\mbox{$\sum_i$}N_i^\dagger Z_iN_i))\sigma\big) \\
        &= \tr\big(\big((\mbox{$\sum_i$}M_i^\dagger Y_i M_i)\otimes I +
      I \otimes \big[nI - (\mbox{$\sum_i$}N_i^\dagger Z_i N_i)\big]\big)\sigma\big)
  \end{align*}
  As $\rho$ is arbitrary, so we finish the proof.
  
  ($2\Rightarrow 1$). 
  For any $\rho_1\in \DD^1(\HH_{S_1})$ and $\rho_2\in \DD^1(\HH_{S_2})$, by \Cref{lem:validity AST program} and \Cref{lem:validity quantum operation alter}, choose $\rho\triangleq \rho_1\otimes\rho_2\in\DD(\HH_{S_1}\otimes\HH_{S_2})$ which is a coupling of $\<\rho_1,\rho_2\>$. For any $i = 1,\cdots k$, select $Y_i = I$, $Z_i = I$, $n = 1$, $Y_j = 0$ and $Z_j = 0$ for all $j\neq i$. It is then obvious that $(Y_1,\cdots,Y_k,Z_1,\cdots,Z_k,1)\in\mathcal{Y}$, so by assumption, there exists a coupling $\sigma : \<\sem{S_1}(\tr_2(\rho)), \sem{S_2}(\tr_1(\rho))\>$ (or, equivalently, $\sigma : \<\sem{S_1}(\rho_1), \sem{S_2}(\rho_2)\>$) such that
  \begin{align*}
      n\tr(\rho) \ge\ &\tr\big(\big(\big(\mbox{$\sum_j$}M_j^\dagger Y_j M_j\big)\otimes I +
      I \otimes \big[nI - (\mbox{$\sum_j$}N_j^\dagger Z_j N_j)\big]\big)\big)\sigma\big) \\
      =\ &\tr(((M_i^\dagger M_i)\otimes I + nI - I\otimes(N_i^\dagger N_i))\sigma) \\
      =\ &\tr(M_i^\dagger M_i\tr_2(\sigma)) - \tr(N_i^\dagger N_i\tr_1(\sigma)) + n\tr(\sigma) \\
      =\ &\tr(M_i(\sem{S_1}(\rho_1))M_i^\dagger) - \tr(N_i(\sem{S_2}(\rho_2))N_i^\dagger) + n\tr(\rho),
  \end{align*}
  since $S_1,S_2\in\AST$. Thus, for all $i$, $\tr(M_i(\sem{S_1}(\rho_1))M_i^\dagger) \le \tr(N_i(\sem{S_2}(\rho_2))N_i^\dagger)$. Notice that 
  $$\sum_i\tr(M_i(\sem{S_1}(\rho_1))M_i^\dagger) = \tr(\sem{S_1}(\rho_1)) = \tr(\rho_1) = \tr(\rho_2) = \tr(\sem{S_2}(\rho_2)) = \sum_i\tr(N_i(\sem{S_2}(\rho_2))N_i^\dagger),$$
  so it must be the case that $\tr(M_i(\sem{S_1}(\rho_1))M_i^\dagger) = \tr(N_i(\sem{S_2}(\rho_2))N_i^\dagger)$ for all $i$, i.e., $(\sem{S_1}(\rho_1), \sem{S_2}(\rho_2))\vDash M\approx N$, and this completes the proof.
\end{proof}


\begin{definition}[Measurement Property, c.f. \cite{barthe_rqpd}]
  Define $\Gamma \vDash Z : \{P \} M\approx N \{Q_k\}$
  if for all $\rho,\sigma\in\DD^1$ such that $(\rho,\sigma)\vDash \Gamma$ and $z\in Z$, 
  if $T_P(\rho,\sigma) < +\infty$, then there exists \emph{couplings}
  $\delta_k : \< M_k\rho M_k^\dagger, N_k\sigma N_k^\dagger \>$ such that:
  $$T_P(\rho,\sigma) \ge \sum_k\tr(Q_k\delta_k).$$
\end{definition}

\begin{proposition}
\label{prop:measurement properties embed}
1). $M\approx N\vDash_{\rqPD} A\Rightarrow \{B_m\}$ if and only if 
$M\approx N \vDash \{I - A \} M\approx N \{I - B_m\} $, 
where $0\sqsubseteq A, B_m\sqsubseteq I$.
2). $\vDash_{\pqRHL} M\approx N : X\Rightarrow \{Y_m\}$ if and only if 
$\vDash \{X\mid 0\} M\approx N \{Y_m^\bot\}$ where $X,Y_m\in\cS$.
\end{proposition}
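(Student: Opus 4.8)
The plan is to prove both equivalences by unfolding, on each side, the relevant notion of ``measurement property'' and checking that the two resulting conditions are literally the same after elementary rewriting. Throughout I use three facts. First, since $\rho,\sigma$ range over $\DD^1$, every partial coupling of $\rho$ and $\sigma$ is a coupling, so $T_P(\rho,\sigma)=\min_{\theta:\<\rho,\sigma\>}\tr(P\theta)$, and by \Cref{prop:partial-coupling} this minimum is attained over a compact set of couplings. Second, by \Cref{lemma:traceequiv} every coupling $\theta:\<\rho,\sigma\>$ with $\rho,\sigma\in\DD^1$ has $\tr(\theta)=1$, and for a measurement $\{M_m\}$ with $\sum_m M_m^\dagger M_m=I$ one has $\sum_m\tr(M_m\rho M_m^\dagger)=\tr(\rho)=1$ (similarly for $N$); in particular couplings $\delta_m:\<M_m\rho M_m^\dagger,N_m\sigma N_m^\dagger\>$ exist exactly when $(\rho,\sigma)\vDash M\approx N$, which is why the measurement condition is carried as context. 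Third, $X\mid 0=+\infty\cdot X^\bot$, so by \Cref{def:operations-infinite-valued-pre} $\tr((X\mid 0)\theta)$ equals $0$ if $\supp(\theta)\subseteq X$ and $+\infty$ otherwise; hence $T_{X\mid 0}(\rho,\sigma)<+\infty$ iff $\rho,\sigma$ admit a coupling supported in $X$, in which case $T_{X\mid 0}(\rho,\sigma)=0$.

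For part (1): unfolding the $\rqPD$ measurement property (the common generalization of Definitions 5.4 and 7.2 of \cite{barthe_rqpd}), $M\approx N\vDash_{\rqPD} A\Rightarrow\{B_m\}$ asserts that for all $\rho,\sigma\in\DD^1$ with $(\rho,\sigma)\vDash M\approx N$ and every coupling $\theta:\<\rho,\sigma\>$ there are couplings $\delta_m:\<M_m\rho M_m^\dagger,N_m\sigma N_m^\dagger\>$ with $\tr(A\theta)\le\sum_m\tr(B_m\delta_m)$. Since the $\delta_m$ depend on $\theta$ only through its marginals and the coupling polytope is compact, this is equivalent to: for all such $\rho,\sigma$ there exist $\delta_m$ with $\max_{\theta:\<\rho,\sigma\>}\tr(A\theta)\le\sum_m\tr(B_m\delta_m)$. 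Using $\tr((I-A)\theta)=1-\tr(A\theta)$ and $\sum_m\tr((I-B_m)\delta_m)=1-\sum_m\tr(B_m\delta_m)$, this rewrites as: there exist $\delta_m$ with $T_{I-A}(\rho,\sigma)=1-\max_\theta\tr(A\theta)\ge\sum_m\tr((I-B_m)\delta_m)$. As $I-A\in\Pos$ is bounded, $T_{I-A}(\rho,\sigma)$ is always finite, so the hypothesis $T_{I-A}<+\infty$ in the Measurement Property is vacuous; the displayed condition is then exactly $M\approx N\vDash\{I-A\}M\approx N\{I-B_m\}$.

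For part (2): unfolding the projective measurement property, $\vDash_{\pqRHL} M\approx N:X\Rightarrow\{Y_m\}$ asserts that for every coupling $\theta$ with $\supp(\theta)\subseteq X$ whose marginals $\rho=\tr_2(\theta),\sigma=\tr_1(\theta)$ satisfy $M\approx N$ there exist couplings $\delta_m:\<M_m\rho M_m^\dagger,N_m\sigma N_m^\dagger\>$ with $\supp(\delta_m)\subseteq Y_m$. Again the existence of the $\delta_m$ depends on $\theta$ only through $\rho,\sigma$, so this is equivalent to: for all $\rho,\sigma\in\DD^1$ with $(\rho,\sigma)\vDash M\approx N$ that admit a coupling supported in $X$, such $\delta_m$ exist. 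On the other side, by the third fact above $T_{X\mid 0}(\rho,\sigma)<+\infty$ is exactly ``$\rho,\sigma$ admit a coupling supported in $X$'' and then $T_{X\mid 0}(\rho,\sigma)=0$; and since $0\leqlow Y_m^\bot$ and $0\leqlow\delta_m$, the inequality $0\ge\sum_m\tr(Y_m^\bot\delta_m)$ holds iff $\tr(Y_m^\bot\delta_m)=0$ for every $m$, i.e. iff $\supp(\delta_m)\subseteq Y_m$ for every $m$. Substituting into the Measurement Property for $\{X\mid 0\}M\approx N\{Y_m^\bot\}$ gives precisely the unfolded form of $\vDash_{\pqRHL} M\approx N:X\Rightarrow\{Y_m\}$.

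The one genuinely delicate point, and the step I expect to spend care on, is the quantifier order: both the $\rqPD$ and the $\pqRHL$ measurement properties quantify over \emph{all} admissible initial couplings $\theta$ and then assert existence of the output couplings $\delta_m$, whereas the Measurement Property on the other side compresses all $\theta$-dependence into the single scalar $T_P(\rho,\sigma)$. Reconciling these requires (a) that the $\delta_m$ depend on $\theta$ only through its marginals $\tr_2(\theta),\tr_1(\theta)$, and (b) compactness of the coupling set, so that $\max_\theta\tr(A\theta)$ is realized by an actual optimal coupling (resp.\ that $T_{X\mid 0}<+\infty$ is witnessed by a concrete coupling supported in $X$). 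Once this swap is justified, everything else is the routine arithmetic above together with correct handling of the vacuous cases ($T_{I-A}$ finite in part (1); $T_{X\mid 0}\in\{0,+\infty\}$ in part (2)).
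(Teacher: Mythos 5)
Your proof is correct and follows essentially the same route as the paper's: both directions of each clause are handled by picking the optimal (trace-minimizing) input coupling to realize $T_{I-A}$ resp.\ $T_{X\mid 0}$, rewriting $\tr((I-A)\theta)=1-\tr(A\theta)$ and $\sum_m\tr((I-B_m)\delta_m)=1-\sum_m\tr(B_m\delta_m)$ for clause (1), and exploiting the $\{0,+\infty\}$ dichotomy of $T_{X\mid 0}$ together with $\tr(Y_m^\bot\delta_m)=0\iff\supp(\delta_m)\subseteq Y_m$ for clause (2). The quantifier-swap issue you flag is resolved exactly as in the paper, by instantiating the universally quantified input coupling at the attained optimum in one direction and using $\tr(P\theta)\ge T_P$ for arbitrary $\theta$ in the other.
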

\begin{proof}
We first prove clause (1).

    (if) part. By Def. 5.2 and 5.4 in \cite{barthe_rqpd}, for any $\rho\in\DD^1(\HH_1\otimes\HH_2)$ such that $\rho\vDash_\rqPD M\approx N$ (if $\rho$ is partial, then we just normalize it and everything still holds since coupling, trace etc are all scalable), i.e., $\forall\,i,\ \tr(M_i\tr_2(\rho)M_i^\dagger) = \tr(N_i\tr_1(\rho)N_i^\dagger)$, in other words, $(\tr_2(\rho),\tr_1(\rho))\vDash M\approx N$. By assumption, since $A\sqsubseteq I$, so $T_{I-A}(\tr_2(\rho),\tr_1(\rho)) < +\infty$, there exists couplings $\delta_i : \<M_i\tr_2(\rho)M_i^\dagger, N_i\tr_1(\rho)N_i^\dagger\>$ such that
    \begin{align*}
        1 - \tr(A\rho) = \tr((I-A)\rho)\ge T_{I-A}(\tr_2(\rho),\tr_1(\rho))\ge \sum_i\tr((I-B_i)\delta_i
        = 1 - \sum_i\tr(B_i\delta_i),
    \end{align*}
    that is, $\tr(A\rho)\le\sum_i\tr(B_i\delta_i)$, or equivalently, $M\approx N\vDash_{\rqPD} A\Rightarrow \{B_m\}$.

    (only if) part. For any $\rho_1,\rho_2\in \DD^1$, select coupling $\rho : \<\rho_1,\rho_2\>$ such that $\tr((I-A)\rho) = T_{I-A}(\rho_1,\rho_2)$. Since $(\rho_1,\rho_2)\vDash M\approx N$, so $\forall\,i,\ \tr(M_i\tr_2(\rho)M_i^\dagger) = \tr(N_i\tr_1(\rho)N_i^\dagger)$ as $\tr_2(\rho) = \rho_1$ and $\tr_1(\rho) = \rho_2$, which implies $\rho\vDash_\rqPD M\approx N$, by assumption, there exists couplings $\delta_i : \<M_i\tr_2(\rho)M_i^\dagger, N_i\tr_1(\rho)N_i^\dagger\>$ such that
    $\tr(A\rho) \le \sum_i\tr(B_i\delta_i)$, or equivalently,
    \begin{align*}
        T_{I-A}(\rho_1,\rho_2) = \tr((I-A)\rho) = 1 - \tr(A\rho) \ge 1 - \sum_i\tr(B_i\delta_i) = \sum_i\tr((I-B_i)\delta_i)
    \end{align*}
    as desired.

Now we prove clause (2).

    (if) part. For any $\rho_1,\rho_2\in\DD^1$ (the case for partial state is similar just by normalize everything) such that $\rho_1 X^\#\rho_2$. Let $\rho$ be the witness, thus $\supp(\rho)\subseteq X$, in other words, $\tr((X\mid 0)\rho) = \tr(0\rho) = 0$ by \Cref{def:operations-infinite-valued-pre}, thus $T_{X\mid 0}(\rho_1,\rho_2) = 0$. By assumption, there exists couplings $\delta_i : \<M_i\rho_1 M_i^\dagger, N_i\rho_2N_i^\dagger\>$ such that
    $$0 = T_{X\mid 0}(\rho_1,\rho_2) \ge \sum_i\tr(Y_i^\bot\delta_i),$$
    so $\tr(Y_i^\bot\delta_i) = 0$, or equivalently, $\supp(\delta_i)\subseteq Y_i$. So $(M_i\rho_1 M_i^\dagger)Y_i^\# (N_i\rho_2N_i^\dagger)$ for all $i$, i.e., $\vDash_{\pqRHL} M\approx N : X\Rightarrow \{Y_m\}$.

    (only if) part. For any $\rho_1,\rho_2\in \DD^1$, if $T_{X\mid 0}(\rho_1,\rho_2) = +\infty$, then it trivially holds. Otherwise, there exists a coupling $\rho : \<\rho_1,\rho_2\>$ such that $\tr((X\mid 0)\rho) < +\infty$, so $\supp(\rho) \subseteq X$, thus $\rho_1 X^\# \rho_2$, by assumption, $(M_i\rho_1 M_i^\dagger)Y_i^\# (N_i\rho_2N_i^\dagger)$, and set $\delta_i$ as the witness. Thus, $\supp(\delta_i)\subseteq Y_i$, or equivalently, $\tr(Y_i\delta_i) = 0$. Thus, $\sum_i\tr(Y_i^\bot\delta_i) = 0 \le T_{X\mid 0}(\rho_1,\rho_2)$, and this completes the proof.
\end{proof}

\begin{theorem}[\Cref{thm:soundness extra}]
  The extra rules for $\qqRHL$ in Fig. \ref{fig:qqrhl_extra} are sound regarding the notion of validity.
\end{theorem}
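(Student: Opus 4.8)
The plan is to prove soundness of each of the three new rules---(seq+), (if) and (while)---independently, throughout fixing a logical parameter $z\in Z$ and arguing pointwise, and reducing $\qqRHL$ validity to one of its equivalent forms via \Cref{lem:validity and monotone}, \Cref{lem:validity alter} and \Cref{lem:validity AST program}. For (seq+) I would assume all three premises valid and thread an initial coupling through both triples: given $\rho$ with marginals $\rho_1=\tr_2(\rho)$, $\rho_2=\tr_1(\rho)$ satisfying $\Gamma$, the first premise yields a partial coupling $\sigma:\langle\sem{S_1}(\rho_1),\sem{S_1'}(\rho_2)\rangle_p$ with $\tr(P\rho)\ge\tr(Q\sigma)$; the entailment side condition $\Gamma\stackrel{(S_1,S_1')}{\vDash}\Gamma'$ then gives $(\sem{S_1}(\rho_1),\sem{S_1'}(\rho_2))\vDash\Gamma'$, so the second premise applies to $\sigma$ and produces $\tau:\langle\sem{S_1;S_2}(\rho_1),\sem{S_1';S_2'}(\rho_2)\rangle_p$ with $\tr(Q\sigma)\ge\tr(R\tau)$; composing yields $\tr(P\rho)\ge\tr(R\tau)$. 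The only care is the routine bookkeeping between couplings and partial couplings and the normalization of $\rho_1,\rho_2$ to $\DD^1$ so that the entailment relation (which is stated on $\DD^1$) applies, handled exactly as for the one-sided (seq) rule.

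For (if) I would take an initial partial coupling $\rho$; if $\tr(P\rho)=+\infty$ the goal is trivial, so assume it finite, whence $T_P(\rho_1,\rho_2)\le\tr(P\rho)<+\infty$ and the premise $\Gamma\vDash Z:\{P\}M\approx M'\{R_k\}$ supplies couplings $\delta_k:\langle M_k\rho_1 M_k^\dagger,M_k'\rho_2 M_k'^\dagger\rangle$ with $T_P(\rho_1,\rho_2)\ge\sum_k\tr(R_k\delta_k)$. Applying the branch premises $\vdash Z:\rtriple{R_k}{S_k}{S_k'}{Q}$ to each $\delta_k$ gives partial couplings $\tau_k:\langle\sem{S_k}(M_k\rho_1 M_k^\dagger),\sem{S_k'}(M_k'\rho_2 M_k'^\dagger)\rangle_p$ with $\tr(R_k\delta_k)\ge\tr(Q\tau_k)$. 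Using the denotational semantics of the case construct (\Cref{lem-structural}), the sum $\tau\triangleq\sum_k\tau_k$ is, by additivity of the partial trace and \Cref{lem: scale of partial coupling}, a partial coupling of the two output states, and the chain $\tr(P\rho)\ge T_P(\rho_1,\rho_2)\ge\sum_k\tr(R_k\delta_k)\ge\sum_k\tr(Q\tau_k)=\tr(Q\tau)$ closes the case.

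The (while) rule is the technical heart, and I would obtain it by an iterated, two-sided version of the soundness argument for (while-L). Writing $\EE_b$ (resp.\ $\EE_b'$) for the outcome-$b$ branch of $M$ (resp.\ $M'$) and using the least-fixed-point identity $\sem{\whileloop}=\EE_0+\sem{\whileloop}\circ(\sem{S}\circ\EE_1)$, I would start from an initial coupling $\rho$ with marginals $\rho_1,\rho_2$ (the case $\tr(P\rho)=+\infty$ being trivial) and inductively build couplings $\rho^{(n)}$ with marginals $(\sem{S}\circ\EE_1)^n(\rho_1)$ and $(\sem{S'}\circ\EE_1')^n(\rho_2)$: at stage $n$ the premise $\vDash Z:\{P\}M\approx M'\{Q_0,Q_1\}$ (applied to the normalized marginals of $\rho^{(n)}$) yields a $Q_0$-coupling $\sigma^{(n)}$ and a $Q_1$-coupling $\theta^{(n)}$ with $\tr(P\rho^{(n)})\ge\tr(Q_0\sigma^{(n)})+\tr(Q_1\theta^{(n)})$, and feeding $\theta^{(n)}$ to the loop-body premise $\vdash Z:\rtriple{Q_1}{S}{S'}{P}$ produces $\rho^{(n+1)}$ with $\tr(Q_1\theta^{(n)})\ge\tr(P\rho^{(n+1)})$. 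Setting $\sigma\triangleq\sum_n\sigma^{(n)}$, convergence follows since $\tr(\sigma)$ is dominated by $\sum_n\tr\bigl(\EE_0((\sem{S}\circ\EE_1)^n(\rho_1))\bigr)=\tr(\sem{\whileloop}(\rho_1))\le1$, exactly as in the (while-L) proof; one checks that $\sigma$ is a partial coupling of the two loop outputs (the marginal and uncorrelated-mass inequalities going through by continuity of the trace and of $\bigsqcup$), and telescoping $\tr(P\rho^{(n)})\ge\tr(Q_0\sigma^{(n)})+\tr(P\rho^{(n+1)})$ gives $\tr(P\rho)\ge\sum_n\tr(Q_0\sigma^{(n)})=\tr(Q_0\sigma)$.

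The main obstacle is precisely this (while) case: beyond the combinatorics of the stagewise construction it requires careful control of the infinitary sum, a case split on whether $T_P$ stays finite along the iteration (so that the measurement property remains applicable), and the trace/partial-coupling bookkeeping at the limit---namely $\tr_2(\sigma)\sqsubseteq\sem{\whileloop}(\rho_1)$, the analogous inequality for the second loop, and the third (uncorrelated-mass) inequality in the definition of partial coupling. All of these are discharged with the same devices (\Cref{lem: scale of partial coupling}, the $\star$-coupling machinery of \Cref{lemma: relation star coupling}, and continuity of the trace and of least upper bounds) already used for the one-sided rules, so no essentially new ideas are needed beyond the bookkeeping.
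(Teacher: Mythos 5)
Your proposal is correct and follows essentially the same route as the paper's proof: (seq+) by threading the entailment side condition through sequential composition, (if) by combining the measurement-property couplings $\delta_k$ with the branch judgments and summing via (sub-)convexity of partial couplings (the paper phrases this through joint convexity of QOT, \Cref{lem:QOT convexity}, which is the same device), and (while) by the same stagewise construction of normalized intermediate states, telescoping of $\tr(P\rho^{(n)})\ge\tr(Q_0\sigma^{(n)})+\tr(P\rho^{(n+1)})$, and verification of the partial-coupling conditions in the limit. The only imprecision is that the stage-$n$ marginals are merely dominated by $(\sem{S}\circ\EE_1)^n(\rho_1)$ rather than equal to it (since the loop-body premise yields only partial couplings), but you implicitly correct for this when stating the limit verification as $\tr_2(\sigma)\sqsubseteq\sem{\whileloop}(\rho_1)$, matching the paper.
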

\begin{proof}
(assign) and (apply) are the same as applying the corresponding one-side rule twice on the left and right.

(seq+)
We employ \Cref{lem:validity and monotone} and \Cref{lem:validity quantum operation alter} to interpret judgements.
For any $z\in Z$, $\rho,\sigma\in\DD^1$ such that $(\rho,\sigma)\vDash\Gamma$, by first assumption, $T_P(\rho,\sigma)\ge T_Q(\sem{S_1}(\rho),\sem{S_1'}(\sigma))$, by entailment we know that $(\sem{S_1}(\rho),\sem{S_1'}(\sigma))\vDash\Gamma'$, then by the second assumption, it holds that 
$$T_P(\rho,\sigma)\ge T_Q(\sem{S_1}(\rho),\sem{S_1'}(\sigma))\ge T_R(\sem{S_2}(\sem{S_1}(\rho)),\sem{S_2'}(\sem{S_1'}(\sigma))) = T_R(\sem{S_1;S_2}(\rho),\sem{S_1';S_2'}(\sigma)).$$

(if)
For any $z\in Z$ and $\rho,\sigma\in\DD^1$ such that  $(\rho,\sigma)\vDash\Gamma$, if $T_P(\rho,\sigma) = +\infty$,
then obviously $T_P(\rho,\sigma)\ge T_A(\sem{\ifb}(\rho),\sem{\ifb'}(\sigma))$, and then follows by 
\Cref{lem:validity and monotone}.
If $T_P(\rho,\sigma)$ is finite, by the first assumption, 
$\tr(M_k\rho M_k^\dagger) = \tr(M'_k\sigma M_k^{\prime\dagger})$ 
(follows by the existence of coupling and let it by $\delta_k$) and denote it by $p_k$.
By the second assumption,
$$\tr(R_k\delta_k) \ge p_k T_{R_k}(M_k\rho M_k^\dagger/p_k, M'_k\sigma M_k^{\prime\dagger}/p_k) \ge p_k T_{Q}(\sem{S_k}(M_k\rho M_k^\dagger/p_k), \sem{S_k'}(M'_k\sigma M_k^{\prime\dagger}/p_k)).$$
Sum it up over $k$, we have:
\begin{align*}
T_C(\rho,\sigma) &\ge \sum_k\tr(R_k\delta_k) \\
&\ge \sum_k p_k T_{Q}(\sem{S_k}(M_k\rho M_k^\dagger/p_k), \sem{S_k'}(M'_k\sigma M_k^{\prime\dagger}/p_k)) \\
&\ge T_{Q}(\sum_k p_k\sem{S_k}(M_k\rho M_k^\dagger/p_k), \sum_k p_k\sem{S_k'}(M'_k\sigma M_k^{\prime\dagger}/p_k) \\
&= T_{Q}(\sem{\ifb}(\rho),\sem{\ifb'}(\sigma)).
\end{align*}
Where the first inequality follows from the first assumption, the third inequality is due to \Cref{lem:QOT convexity} since $\{p_k\}$ is a (sub)distribution. This completes the proof.

(while) Let $\EE_0(\cdot) = M_0(\cdot)M_0^\dagger$, $\EE_1(\cdot) = M_1(\cdot)M_1^\dagger$, $\EE'_0(\cdot) = M'_0(\cdot)M_0^{\prime\dagger}$, $\EE'_1(\cdot) = M'_1(\cdot)M_1^{\prime\dagger}$. Fix $z\in Z$. For any $\sigma\in \DD^1(\HH_{S_1}\otimes\HH_{S_2})$, let $\rho = \tr_2(\sigma)$ and $\rho' = \tr_1(\sigma)$. It is trivial if $\tr(P\sigma) = +\infty$. Otherwise, $\tr(P\sigma)<+\infty$, so $T_P(\rho,\rho')<+\infty$.
Set $\rho_0 = \rho$, $\rho'_0 = \rho'$, $p_0 = 1$, so $\rho_0, \rho'_0\in \DD^1$, and $T_p(\rho_0,\rho'_0) < +\infty$. We inductively construct $\rho_n,\rho'_n,\sigma_n, p_n, q_n$ as follows:
\begin{itemize}
    \item Since $\rho_n, \rho'_n\in \DD^1$ and $T_p(\rho_n,\rho'_n) < +\infty$, by the first assumption, there exists $\sigma_n : \<\EE_0(\rho_n),\EE'_0(\rho'_n)\>$ and $\sigma'_n : \<\EE_1(\rho_n),\EE'_1(\rho'_n)\>$ such that 
    $$T_P(\rho_n,\rho'_n)\ge \tr(Q_0\sigma_n) + \tr(Q_1\sigma'_n).$$
    Let $q_n = p_n\tr(\EE_0(\rho_n)) = p_n\tr(\EE'_0(\rho'_n))$ and $q = \tr(\EE_1(\rho_n)) = \tr(\EE'_1(\rho'_n))$.
    \item By the second assumption, we know that 
        $$T_{Q_1}(\EE_1(\rho_n)/q, \EE'_1(\rho'_n)/q) \ge T_P(\sem{S}(\EE_1(\rho_n)/q),\sem{S'}(\EE'_1(\rho'_n)/q).$$
        Select the partial coupling $\delta_n : \<\sem{S}(\EE_1(\rho_n)/q), \sem{S'}(\EE'_1(\rho'_n)/q)\>_p$ such that $T_P(\sem{S}(\EE_1(\rho_n)/q),\sem{S'}(\EE'_1(\rho'_n)/q) = \tr(P\delta_n)$.
        We set $\rho_{n+1} = \tr_2(\delta_n) / \tr(\delta_n)$ and $\rho'_{n+1} = \tr_1(\delta_n) / \tr(\delta_n)$, $p_{n+1} = p_n q\tr(\delta_n)$. Obviously, 
        \begin{align*}
            T_P(\rho_{n+1},\rho'_{n+1}) &\le \tr(P\delta_n/\tr(\delta_n)) 
            = T_P(\sem{S}(\EE_1(\rho_n)/q),\sem{S'}(\EE'_1(\rho'_n)/q)/\tr(\delta_n) \\
            &\le T_{Q_1}(\EE_1(\rho_n)/q, \EE'_1(\rho'_n)/q)/\tr(\delta_n) \\
            &\le \tr(Q_1\sigma'_n/q) / tr(\delta_n) 
            = (p_n/p_{n+1})\tr(Q_1\sigma'_n)\\
            &\le (p_n/p_{n+1}) T_P(\rho_n,\rho_n') \\
            &< +\infty.
        \end{align*}
\end{itemize}

Set $\sigma = \sum_i p_i\sigma_i$, it is sufficient to show 1) $T_P(\rho,\rho')\ge \tr(Q_0\sigma)$ and 2) $\sigma$ is a partial coupling of the outputs of two $\while$s. 

We first show (1) is true. First, by the construction above, we know that:
\begin{align*}
p_nT_p(\rho_n,\rho_n') &\ge p_n(\tr(Q_0\sigma_n) + \tr(Q_1\sigma'_n)) \\
&\ge 
\tr(Q_0(p_n\sigma_n)) + p_n(p_{n+1}/p_n)T_P(\rho_{n+1},\rho'_{n+1}) \\
&= \tr(Q_0(p_n\sigma_n)) + p_{n+1}T_P(\rho_{n+1},\rho'_{n+1}).
\end{align*}
Thus, we have:
\begin{align*}
    T_P(\rho,\rho') &= p_0T_P(\rho_0,\rho_0') \\
    &\ge \tr(Q_0(p_0\sigma_0)) + p_1T_P(\rho_1,\rho_1') \\
    &\ge \tr(Q_0(p_0\sigma_0 + p_1\sigma_1)) + p_2T_P(\rho_2,\rho_2') \\
    & \cdots \\
    &\ge \tr\Big(Q_0\Big(\sum_ip_i\sigma_i\Big)\Big) \\
    &= \tr(Q_0\sigma)
\end{align*}
To show $\sigma$ is a partial coupling, we first observe that $p_{n}\rho_n\sqsubseteq (\sem{S}\circ\EE_1)^n(\rho)$ since:
\begin{align*}
    p_{n+1}\rho_{n+1} = p_nq\tr(\delta_n)\tr_2(\delta_n)/\tr(\delta_n)
    \sqsubseteq p_nq\sem{S}(\EE_1(\rho_n)/q) = (\sem{S}\circ\EE_1)(p_n\rho_n)\sqsubseteq\cdots\sqsubseteq (\sem{S}\circ\EE_1)^{n+1}(p_0\rho_0).
\end{align*}
Similarly, $p_{n}\rho'_n\sqsubseteq (\sem{S'}\circ\EE'_1)^n(\rho')$. Thus, we have:
\begin{align*}
    p_{n+1}\tr_2(\sigma_{n+1}) &= p_{n+1}\EE_0(\rho_{n+1}) = p_{n+1}\EE_0(\tr_2(\delta_n)/\tr(\delta_n)) \\
    &\sqsubseteq p_{n+1}\EE_0(\sem{S}(\EE_1(\rho_n)/q)/\tr(\delta_n))
    = \EE_0((\sem{S}\circ\EE_1)(p_n\rho_n))) \\
    &\sqsubseteq \EE_0\circ (\sem{S}\circ\EE_1)^{n+1}(\rho))
\end{align*}
which leads to 
\begin{align*}
    \tr_2(\sigma) = \sum_np_n\tr_2(\sigma_n)\sqsubseteq \sum_n\EE_0\circ (\sem{S}\circ\EE_1)^n(\rho)) = \sem{\while[M,S]}(\rho).
\end{align*}
and similarly, $\tr_1(\sigma) \sqsubseteq \sem{\while[M,S]}(\rho')$.

We further observe that
\begin{align*}
    \tr(\rho) - \tr(\sem{\while[M,S](\rho)})
    &= \tr(\rho) - \sum_n\tr((\EE_0\circ (\sem{S}\circ \EE_1)^n)(\rho)) \\
    &= \tr(\rho) - \tr(\EE_0(\rho)) - \sum_n\tr((\EE_0\circ (\sem{S}\circ \EE_1)^{n+1})(\rho)) \\
    &= \tr(\EE_1(\rho)) - \tr((\sem{S}\circ \EE_1)(\rho)) + \\
    &\qquad \tr((\sem{S}\circ \EE_1)(\rho)) - \sum_n\tr((\EE_0\circ (\sem{S}\circ \EE_1)^n)((\sem{S}\circ \EE_1)(\rho))) \\
    &\ge \sum_n\tr((\EE_1 - \sem{S}\circ\EE_1)((\sem{S}\circ\EE_1)^n(\rho))) + \lim_n\tr((\sem{S}\circ\EE_1)^n(\rho)) \\
    &\ge \sum_n((\EE_1 - \sem{S}\circ\EE_1)(p_n\rho_n)) + \lim_n p_n
\end{align*}
where the last inequality comes from 1). $\tr((\mathcal{I} - \sem{S})\alpha)\le \tr((\mathcal{I} - \sem{S})\beta)$ where $\mathcal{I}$ is the identity quantum channel, if $\alpha,\beta\in\DD$ such that $\alpha\sqsubseteq\beta$; 2). $p_{n}\rho_n\sqsubseteq (\sem{S}\circ\EE_1)^n(\rho)$ as we proved above; 3). $\tr(p_n\rho_n) = p_n$. Similar result holds:
\begin{align*}
    \tr(\rho') - \tr(\sem{\while[M',S'](\rho')})
    \ge \sum_n((\EE'_1 - \sem{S'}\circ\EE'_1)(p_n\rho'_n)) + \lim_n p_n
\end{align*}
From the fact that $\delta_n$ is a partial coupling, we have the following equivalent forms:
\begin{align*}
    &\tr(\sem{S}(\EE_1(\rho_n)/q)) + \tr(\sem{S'}(\EE'_1(\rho'_n)/q)) \le 1 + \tr(\delta_n) \\
    \Longleftrightarrow\ &\tr((\sem{S}\circ\EE_1)(\rho_n)) + \tr((\sem{S'}\circ\EE'_1)(\rho'_n)) \le \tr(\EE_1(\rho_n)) + \tr(\EE'_1(\rho'_n)) - \tr(\EE_1(\rho_n)) + \tr(\EE_1(\rho_n))\tr(\delta_n) \\
    \Longleftrightarrow\ &
    0 \le 
    \tr((\EE_1 - \sem{S}\circ\EE_1)(\rho_n)) + \tr((\EE'_1 - \sem{S'}\circ\EE'_1)(\rho'_n)) - \tr(\EE_1(\rho_n)) + \tr(\EE_1(\rho_n))\tr(\delta_n) \\
    \Longleftrightarrow\ &
    \tr(\EE_1(p_n\rho_n)) - p_{n+1} \le 
    \tr((\EE_1 - \sem{S}\circ\EE_1)(p_n\rho_n)) + \tr((\EE'_1 - \sem{S'}\circ\EE'_1)(p_n\rho'_n)) \\
    \Longleftrightarrow\ &
    (p_n - p_{n+1}) - \tr(p_n\sigma_n) \le 
    \tr((\EE_1 - \sem{S}\circ\EE_1)(p_n\rho_n)) + \tr((\EE'_1 - \sem{S'}\circ\EE'_1)(p_n\rho'_n))
\end{align*}
since $\tr(p_n\sigma_n) = p_n\tr(\EE_0(\rho_n)) = p_n(\tr(\rho_n) - \tr(\EE_1(\rho_n)) = p_n - \tr(\EE_1(p_n\rho_n))$.

Combine these fact and back to what we aim to prove:
\begin{align*}
    &\tr(\sem{\while[M,S](\rho)}) + \tr(\sem{\while[M',S'](\rho')}) \\
    \le\ & \tr(\rho) + \tr(\rho') - \sum_n((\EE_1 - \sem{S}\circ\EE_1)(p_n\rho_n)) - \sum_n((\EE'_1 - \sem{S'}\circ\EE'_1)(p_n\rho'_n)) - 2\lim_n p_n \\
    \le\ & 2 - \sum_n((p_n - p_{n+1}) - \tr(p_n\sigma_n)) - 2\lim_n p_n \\
    =\ & 2 + \tr(\sigma) - (p_0 - \lim_np_n) - 2\lim_n p_n \\
    =\ & 1 + \tr(\sigma) - \lim_np_n \\
    \le\ &1 + \tr(\sigma).
\end{align*}
Take these all together, $\sigma$ is a partial coupling of $\<\sem{\while[M,S](\rho)}, \sem{\while[M',S'](\rho')}\>$ and this complete the proof.
\end{proof}

\section{Deferred Proofs in ``Applications''}

\begin{theorem}[\Cref{thm:equal rule}]
    Let $S_1, S_2$ be AST programs acting on the same Hilbert spaces, $\HH_{S_1} = \HH_{S_2} = \HH$. $S_1$ and $S_2$ are semantically equivalent, i.e., $\sem{S_1} = \sem{S_2}$, if and only if,
    \begin{align*}
        \vdash (Y_1,Y_2,n)\in\mathcal{Y} : \{nI + P_{sym}^\bot\}
        S_1 \sim S_2 \{ \tr_2(Y_1)\otimes I + I \otimes(nI - \tr_2(Y_2)) \}.
    \end{align*}
where $\mathcal{Y} = \{(Y_1, Y_2\in \Pos(\HH\otimes\HH_2), n\in\mathbb{N}) \mid 0\sqsubseteq Y_1, 0\sqsubseteq 2Y_2\sqsubseteq nI, P_{sym}^\bot[\HH\otimes\HH_2] \ge 2(Y_1\otimes I - I\otimes Y_2)\}$.
\end{theorem}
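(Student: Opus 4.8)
The plan is to prove the biconditional by showing that the displayed judgment is \emph{valid} precisely when $\sem{S_1}=\sem{S_2}$, and then to bridge validity and derivability using soundness (\Cref{thm:soundness}) and completeness for split postconditions (\Cref{thm:weakcompleteness}). For the latter to apply, I would first check that the postcondition $\tr_2(Y_1)\otimes I+I\otimes(nI-\tr_2(Y_2))$ is indeed split with PSD summands: $Y_1\in\Pos$ gives $\tr_2(Y_1)\in\Pos$, and since $\HH_2$ is a qubit (as in \Cref{thm:strassen QOT}) the constraint $2Y_2\sqsubseteq nI$ forces $\tr_2(Y_2)\sqsubseteq nI$, so $nI-\tr_2(Y_2)\in\Pos$. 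It therefore suffices to characterise validity.

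Next I would unfold validity through the QOT picture. By \Cref{lem:validity and monotone} and \Cref{lem:validity quantum operation alter}, the judgment is valid iff for every $(Y_1,Y_2,n)\in\mathcal{Y}$ and all $\rho_1,\rho_2\in\DD^1(\HH)$,
\[
T_{Q}\bigl(\sem{S_1}(\rho_1),\sem{S_2}(\rho_2)\bigr)\ \le\ T_{nI+P_{sym}^\bot}(\rho_1,\rho_2),\qquad Q\triangleq\tr_2(Y_1)\otimes I+I\otimes(nI-\tr_2(Y_2)).
\]
Since $\rho_1,\rho_2$ are total, every coupling of them has trace $1$, so $T_{nI+P_{sym}^\bot}(\rho_1,\rho_2)=n+T(\rho_1,\rho_2)$ with $T\triangleq T_{P_{sym}^\bot}$; and since $Q$ is split while $\sem{S_i}(\rho_i)$ are total, $\tr(Q\sigma)$ is the same for every coupling $\sigma$ of $\sem{S_1}(\rho_1),\sem{S_2}(\rho_2)$, equal to $\tr(\tr_2(Y_1)\sem{S_1}(\rho_1))+n-\tr(\tr_2(Y_2)\sem{S_2}(\rho_2))$. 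The $nI$ contributions cancel, so validity is equivalent to: for all $(Y_1,Y_2,n)\in\mathcal{Y}$ and all $\rho_1,\rho_2\in\DD^1(\HH)$,
\[
\tr\bigl(\tr_2(Y_1)\sem{S_1}(\rho_1)\bigr)\ \le\ \tr\bigl(\tr_2(Y_2)\sem{S_2}(\rho_2)\bigr)+T(\rho_1,\rho_2).
\]
This inequality does not mention $n$, and for every $(Y_1,Y_2)$ with $0\sqsubseteq Y_1$ and $P_{sym}^\bot[\HH\otimes\HH_2]\geqlow 2(Y_1\otimes I-I\otimes Y_2)$ there is some $n\in\NN$ with $(Y_1,Y_2,n)\in\mathcal{Y}$, so quantifying over $\mathcal{Y}$ reduces to quantifying over all such $(Y_1,Y_2)$. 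Applying \Cref{thm:strassen QOT} to the density operators $\sem{S_1}(\rho_1),\sem{S_2}(\rho_2)$ with defect $\epsilon=T(\rho_1,\rho_2)$ (finite because $P_{sym}^\bot\sqsubseteq I$), this family of inequalities is exactly $T_s(\sem{S_1}(\rho_1),\sem{S_2}(\rho_2))\le T(\rho_1,\rho_2)$ for all $\rho_1,\rho_2\in\DD^1(\HH)$.

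It then remains to see that this condition is equivalent to $\sem{S_1}=\sem{S_2}$. Instantiating $\rho_1=\rho_2=\rho$ gives $T_s(\sem{S_1}(\rho),\sem{S_2}(\rho))\le T(\rho,\rho)=0$, which — exactly as in the proof of \Cref{lem: channel equivalence} — forces $\sem{S_1}(\rho)=\sem{S_2}(\rho)$ for every $\rho$, hence $\sem{S_1}=\sem{S_2}$. Conversely, if $\sem{S_1}=\sem{S_2}=:\EE$, then $T_s(\EE(\rho_1),\EE(\rho_2))\le T_s(\rho_1,\rho_2)\le T(\rho_1,\rho_2)$, using monotonicity of $T_s$ under data processing and $T_s\le T$ (which holds because $T$ is monotone under tensoring with a fixed state). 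Combining this equivalence with \Cref{thm:weakcompleteness} and \Cref{thm:soundness} yields the statement.

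I expect the main difficulty to be bookkeeping rather than a new idea: keeping the ambient spaces straight (the precondition's $P_{sym}^\bot$ lives on $\HH_{S_1}\otimes\HH_{S_2}$, whereas the $P_{sym}^\bot[\HH\otimes\HH_2]$ inside $\mathcal{Y}$ uses the qubit stabiliser), noticing the cancellation of the $nI$ terms, exploiting that a split postcondition makes $T_Q$ independent of the chosen coupling, and verifying that the resulting $T$-based — rather than $T_s$-based — bound still pins down $\sem{S_1}=\sem{S_2}$. The genuinely deep input, established separately, is \Cref{thm:strassen QOT}: that extending by a single maximally mixed qubit already realises the stabilised transport cost; this is what makes the characterisation possible using only a finite-valued split postcondition, without an intermediate application of the duality rule.
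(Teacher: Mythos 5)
Your proposal is correct and follows essentially the same route as the paper: reduce derivability to validity via soundness and completeness for split postconditions, rewrite validity as the family of inequalities $\tr(\tr_2(Y_1)\sem{S_1}(\rho_1))\le\tr(\tr_2(Y_2)\sem{S_2}(\rho_2))+T(\rho_1,\rho_2)$, and identify that family with $T_s(\sem{S_1}(\rho_1),\sem{S_2}(\rho_2))\le T(\rho_1,\rho_2)$ via \Cref{thm:strassen QOT}, closing the loop with \Cref{lem: channel equivalence}. The only (cosmetic) differences are that you route both directions through \Cref{thm:strassen QOT} where the paper's ``if'' direction invokes \Cref{thm:quantum strassen defect alter} directly on $\sem{S_i}(\rho)\otimes\halfI$, and you explicitly verify that $2Y_2\sqsubseteq nI$ with $\HH_2$ a qubit forces $nI-\tr_2(Y_2)\in\Pos$, a detail the paper leaves implicit.
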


\begin{proof}
The if part is relatively easy, while, the only if part requires \Cref{lem: channel equivalence} that conclude from stable QOP \cite{mullerhermes2022monotonicity}.

\noindent(\textbf{if} part).
Suppose Eqn. (\ref{eqn:equal program sym1}) holds. For any $\rho\in\DD(\HH)$, select the input coupling as the witness of $\rho(=_{sym})^\#\rho$ whose existence is ensured by Prop 3.2 in \cite{barthe_rqpd}, i.e., the coupling $\rho_{in} : \<\rho, \rho\>$ such that $\tr(P_{sym}^\bot\rho_{in}) = 0$.
By Eqn. (\ref{eqn:equal program sym1}), we know for any $(Y_1,Y_2,n)\in\mathcal{Y}$, there exists coupling $\sigma : \<\sem{S_1}(\rho), \sem{S_2}(\rho)\>$ such that:
\begin{align*}
    &\tr((\tr_2(Y_1)\otimes I - I \otimes\tr_2(Y_2))\sigma)\\
    \le\ &\tr(P_{sym}^\bot[\HH]\rho) = 0,
\end{align*}
since $\tr(\rho) = \tr(\sigma)$, or equivalently, 
\begin{align*}
    0 \ge &\tr((\tr_2(Y_1)\otimes I - I \otimes\tr_2(Y_2))\sigma)\\
= \ &\tr(2Y_1(\sem{S_1}(\rho)\otimes\halfI)) - \tr(2Y_2(\sem{S_2}(\rho)\otimes\halfI))
\end{align*}
Since $Y_1, Y_2\in \Pos(\HH\otimes\HH_2)$ are arbitrary (since we can always select sufficient large $n\in\mathbb{N}$) such that $P_{sym}^\bot[\HH\otimes\HH_2] \ge 2(Y_1\otimes I - I\otimes Y_2)\}$,
according to Theorem \ref{thm:quantum strassen defect alter}, we have:
$$(\sem{S_1}(\rho)\otimes\halfI) (=_{sym})^\# (\sem{S_2}(\rho)\otimes\halfI),$$
which, again by Prop 3.2 in \cite{barthe_rqpd}, leads to $\sem{S_1}(\rho)\otimes\halfI = \sem{S_2}(\rho)\otimes\halfI$, or equivalently, 
$\sem{S_1}(\rho) = \sem{S_2}(\rho)$. Since $\rho$ is arbitrary, we must have $\sem{S_1} = \sem{S_2}$.

\noindent(\textbf{only if} part).
Since two programs are equivalent, by \Cref{lem: channel equivalence}, we know that for any $\rho_1,\rho_2\in\DD^1$, $T_s(\sem{S_1}(\rho_1),\sem{S_2}(\rho_2))\le T_s(\rho_1,\rho_2)\le T(\rho_1,\rho_2)$. Next, by \Cref{thm:strassen QOT}, we know that for all $Y_1, Y_2\in \Pos(\HH\otimes\HH_2)$ such that $P_{sym}^\bot[\HH\otimes\HH_2] \ge 2(Y_1\otimes I - I\otimes Y_2)$, it holds that:
    $$\tr(\tr_2(Y_1)\sem{S_1}(\rho_1)) \le \tr(\tr_2(Y_2))\sem{S_2}(\rho_2)) + T(\rho_1,\rho_2).$$
If $n\in \mathbb{N}$ such that $0\sqsubseteq 2Y_2\sqsubseteq nI$, then we have:
\begin{align*}
    T_{\tr_2(Y_1)\otimes I + I\otimes (nI - \tr_2(Y_2))}(\sem{S_1}(\rho_1), \sem{S_2}(\rho_2)) &=
    n + \tr(\tr_2(Y_1)\sem{S_1}(\rho_1)) - \tr(\tr_2(Y_2))\sem{S_2}(\rho_2))\\
    &\le n + T(\rho_1,\rho_2) = T_{nI + P_{sym}^\bot}(\rho_1,\rho_2)
\end{align*}
where we use the fact that $S_1,S_2$ are AST programs. The rest is straightforward \Cref{lem:validity quantum operation alter} and \Cref{lem:validity and monotone}.
\end{proof}

\begin{proposition}[Encoding of Trace Distance]
\label{prop:encodingoftracedistance-appendix}
  The following are equivalent for all AST programs $S_1, S_2$ such that\footnote{We could also just ask all programs to be interpreted over $\HH = \HH_{\text{all variables}}$, or over $\HH = \HH_{\var(S_1) \cup \var(S_2)}$.} $\HH_{S_1} = \HH_{S_2}$:
  \begin{enumerate}
    \item $\TD(\sem{S_1}(\rho_1), \sem{S_2}(\rho_2)) \leq \tr(\Phi_1 \rho_1) + \tr(\Phi_2 \rho_2)$ for all $z \in Z$ and $\rho_1 X^{\#} \rho_2$;
    \item $\vDash  0 \leqlow P \leqlow I: \rtriple{X\mid (I+\Phi_1\otimes I + I\otimes \Phi_2)}{S_1}{S_2}{P \otimes I + I \otimes (I-P)}$.
  \end{enumerate}
\end{proposition}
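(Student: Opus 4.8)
The plan is to unfold statement (2) using the characterization of validity for AST programs (\Cref{lem:validity AST program}) and reduce it, one routine step at a time, to statement (1); the only external input is the variational formula $\TD(\rho,\sigma) = \max_{0\leqlow P\leqlow I}\tr(P(\rho-\sigma))$ (Lemma 9.1.1 in \cite{Wilde_2017}). The parameter $z\in Z$, on which $\Phi_1,\Phi_2$ may depend, plays no role, so I treat $\Phi_1,\Phi_2$ as fixed and the whole argument is pointwise in $z$.

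First I would expand (2): for every $P$ with $0\leqlow P\leqlow I$ and every $\rho\in\DD^1(\HH\otimes\HH)$ there is a coupling $\sigma:\langle\sem{S_1}(\rho_1),\sem{S_2}(\rho_2)\rangle$ (a genuine coupling, since $S_1,S_2\in\AST$) with $\tr\big((X\mid(I+\Phi_1\otimes I+I\otimes\Phi_2))\rho\big)\ge\tr\big((P\otimes I+I\otimes(I-P))\sigma\big)$, writing $\rho_1\eqdef\tr_2(\rho)$ and $\rho_2\eqdef\tr_1(\rho)$. Because the postcondition is \emph{split}, \Cref{lem: IVP algebraic} gives $\tr\big((P\otimes I+I\otimes(I-P))\sigma\big)=\tr(P\,\sem{S_1}(\rho_1))+\tr((I-P)\,\sem{S_2}(\rho_2))$, which depends only on the already-fixed marginals; moreover $\tr(\sem{S_i}(\rho_i))=\tr(\rho_i)=\tr(\rho)$ for AST programs (\Cref{lemma:traceequiv}), so the two output marginals have equal trace and a coupling always exists. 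Hence the existential over $\sigma$ is vacuous and (2) is equivalent to an inequality purely about the pair $(\rho_1,\rho_2)$.

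Next I would evaluate both sides on a fixed $\rho$. By \Cref{def:operations-infinite-valued-pre} the left side is $+\infty$ when $\supp(\rho)\nsubseteq X$ (so the inequality is trivially true), and otherwise equals $\tr(\rho)+\tr(\Phi_1\rho_1)+\tr(\Phi_2\rho_2)=1+\tr(\Phi_1\rho_1)+\tr(\Phi_2\rho_2)$; using AST once more, $\tr(\sem{S_i}(\rho_i))=1$, so the right side is $1+\tr\big(P(\sem{S_1}(\rho_1)-\sem{S_2}(\rho_2))\big)$. Cancelling the $1$'s, (2) becomes: for all $0\leqlow P\leqlow I$ and all $\rho\in\DD^1(\HH\otimes\HH)$ with $\supp(\rho)\subseteq X$, $\tr(\Phi_1\rho_1)+\tr(\Phi_2\rho_2)\ge\tr\big(P(\sem{S_1}(\rho_1)-\sem{S_2}(\rho_2))\big)$. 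Swapping the two universal quantifiers and taking the maximum over $P$ on the inside (attained, and equal to $\TD(\sem{S_1}(\rho_1),\sem{S_2}(\rho_2))$ by the variational formula, legitimately since $\sem{S_i}(\rho_i)$ are density operators), this is equivalent to: for all $\rho\in\DD^1(\HH\otimes\HH)$ with $\supp(\rho)\subseteq X$, $\TD(\sem{S_1}(\rho_1),\sem{S_2}(\rho_2))\le\tr(\Phi_1\rho_1)+\tr(\Phi_2\rho_2)$. Finally I would observe that the $\rho\in\DD^1(\HH\otimes\HH)$ with $\supp(\rho)\subseteq X$ are exactly the witness couplings of liftings $\rho_1 X^\#\rho_2$ between trace-one density operators, so quantifying over such $\rho$ coincides with quantifying over pairs $\rho_1 X^\#\rho_2$; this yields statement (1). (If (1) is intended for partial density operators, a one-line homogeneity remark closes the gap: both sides of (1) scale linearly under $\rho_i\mapsto c\rho_i$, with $c\rho$ a witness for $c\rho_1 X^\# c\rho_2$, so the trace-one case suffices.)

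The individual steps are mechanical; the point that needs genuine care is the bookkeeping around the infinite-valued precondition $X\mid A$ — confirming that its ``$+\infty$'' branch really makes the quantitative judgment trivially valid and that its finite branch corresponds exactly to the lifting side condition $\supp(\rho)\subseteq X$ — together with the elimination of the existential over couplings, which relies on both the split form of the postcondition and the AST hypothesis ensuring the two output marginals have equal trace.
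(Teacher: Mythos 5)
Your proof is correct and follows essentially the same route as the paper's: unfold validity for AST programs, use the split form of the postcondition to make the existential over couplings vacuous, case-split on the infinite-valued precondition to recover the support condition $\supp(\rho)\subseteq X$ (i.e.\ the lifting $\rho_1 X^{\#}\rho_2$), and invoke the variational formula $\TD(\rho,\sigma)=\max_{0\leqlow P\leqlow I}\tr(P(\rho-\sigma))$. Your added remarks on the vacuity of the coupling existential and on homogeneity for partial density operators are slightly more explicit than the paper's write-up but do not change the argument.
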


\begin{proof}
  Firstly, (2) is equivalent to saying that for all $\rho:\langle\rho_1, \rho_2\rangle, P$ there exists a coupling $\sigma$ such that
  \begin{align*}
    \tr((X\mid (I+\Phi_1\otimes I + I\otimes \Phi_2)) \rho) &\geq \tr((P \otimes I)\sigma) + \tr((I \otimes (I-P))\sigma)\\
    &= \tr(P\sigma_1) + \tr(\sigma) - \tr(P\sigma_2)
  \end{align*}
  where $\sigma_1 = \sem{S_1}(\rho_1)$ and $\sigma_2 = \sem{S_2}(\rho_2)$. Because $S_1, S_2$ are AST, this is in turn equivalent to saying that for all $\rho, z$ and $P$, 
  \[
   \tr((X\mid (I+\Phi_1\otimes I + I\otimes \Phi_2)) \rho) - \tr(\rho)  \geq \tr(P(\rho_1 - \rho_2)),
  \]
  which is equivalent to saying that 
  \[
  \TD(\rho_1, \rho_2) = \max_{0 \leqlow P \leqlow I} \tr(P(\rho_1-\rho_2)) \leq  \tr((X\mid (I+\Phi_1\otimes I + I\otimes \Phi_2)) \rho) - \tr(\rho). 
  \]

  Now, if $\rho_1X^{\#}\rho_2$ does not hold,
  then 
  $\tr(X^{\bot} \rho) > 0$
  for any $\rho$,
  In this case, 
  $\tr((X\mid (I+\Phi_1\otimes I + I\otimes \Phi_2)) \rho) =+\infty$
  and the inequality trivially holds.
  Thus, 
  we only need to consider
  the case when $\rho_1X^{\#}\rho_2$.
  In this case,
  we only need to 
  consider any coupling $\rho$
  with $\tr(\rho X^{\bot}) = 0$
  (by our assumption, 
  such $\rho$ must exist).
  This gives
\[
  \TD(\rho_1, \rho_2) = \max_{0 \leqlow P \leqlow I} \tr(P(\rho_1-\rho_2)) \leq  \tr(\Phi_1 \rho_1) + \tr(\Phi_2 \rho_2)
\]
    as we desired.

\end{proof}

\begin{proposition}[\Cref{prop:encoding-of-wasserstein-distances-lipschitz}]%
  Let $\lambda>0$. 
  The following are equivalent for all AST programs $S_1, S_2$ such that 
  $\HH_{S_1} = \HH_{S_2}$:
  \begin{enumerate}
    \item $W(\sem{S_1}(\tr_2(\rho)), \sem{S_2}(\tr_1(\rho))) \leq \lambda \cdot W(\tr_2(\rho), \tr_1(\rho))$ for all $\rho \in \DD(\HH_{S_1} \otimes \HH_{S_2})$;
    \item $ \vDash  \rtriple{\lambda^2 P_{sym}^{\bot}}{S_1}{S_2}{P_{sym}^{\bot}}$.
  \end{enumerate}
\end{proposition}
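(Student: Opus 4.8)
The plan is to unwind both sides into statements about the quantum optimal transport cost $T = T_{P_{sym}^\bot}$ and to recognise that, after a harmless rescaling of the cost, each becomes the definition of monotonicity of $(\sem{S_1},\sem{S_2})$ with respect to $\lambda^2 P_{sym}^\bot$ and $P_{sym}^\bot$. First, since $W(\mu,\nu)=\sqrt{T(\mu,\nu)}$ with $T=T_{P_{sym}^\bot}$ (the formula making sense for sub-normalized operators as well, as $T_C$ is defined through partial couplings) and $t\mapsto\sqrt{t}$ is strictly increasing on $[0,+\infty]$, statement~(1) is equivalent to requiring $T\big(\sem{S_1}(\tr_2(\rho)),\sem{S_2}(\tr_1(\rho))\big)\le\lambda^2\,T\big(\tr_2(\rho),\tr_1(\rho)\big)$ for every $\rho\in\DD(\HH_{S_1}\otimes\HH_{S_2})$. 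Next I would record the elementary scaling identity $T_{\lambda^2 C}(\mu,\nu)=\lambda^2\,T_C(\mu,\nu)$: the set of partial couplings of $\mu$ and $\nu$ does not depend on the cost operator, while $\tr\big((\lambda^2 C)\sigma\big)=\lambda^2\tr(C\sigma)$, so the two minima differ only by the factor $\lambda^2$. With $C=P_{sym}^\bot$ this rewrites the bound in~(1) as $T_{P_{sym}^\bot}\big(\sem{S_1}(\tr_2(\rho)),\sem{S_2}(\tr_1(\rho))\big)\le T_{\lambda^2 P_{sym}^\bot}\big(\tr_2(\rho),\tr_1(\rho)\big)$.

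On the other side, \Cref{lem:validity and monotone} already rephrases~(2), i.e.\ $\vDash\rtriple{\lambda^2 P_{sym}^\bot}{S_1}{S_2}{P_{sym}^\bot}$, as the statement that $(\sem{S_1},\sem{S_2})$ is monotone with respect to input cost $\lambda^2 P_{sym}^\bot$ and output cost $P_{sym}^\bot$, which by definition means $T_{P_{sym}^\bot}\big(\sem{S_1}(\mu),\sem{S_2}(\nu)\big)\le T_{\lambda^2 P_{sym}^\bot}(\mu,\nu)$ for all $\mu\in\DD(\HH_{S_1})$ and $\nu\in\DD(\HH_{S_2})$. The implication (2)$\Rightarrow$(1) is then immediate by instantiating $\mu=\tr_2(\rho)$, $\nu=\tr_1(\rho)$. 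For (1)$\Rightarrow$(2), I would apply~(1) to the product states $\rho=\mu\otimes\nu$ with $\mu\in\DD^1(\HH_{S_1})$, $\nu\in\DD^1(\HH_{S_2})$, whose marginals are exactly $\mu$ and $\nu$; this yields the monotonicity inequality for every pair of \emph{normalized} density operators, and \Cref{lem:validity quantum operation alter} upgrades it to monotonicity for all (possibly sub-normalized) $\mu,\nu$, hence to~(2).

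I expect the whole argument to be routine once the three ingredients are lined up — the scaling identity $T_{\lambda^2 C}=\lambda^2 T_C$, \Cref{lem:validity and monotone}, and \Cref{lem:validity quantum operation alter}. The one point that genuinely needs attention is the (1)$\Rightarrow$(2) direction: a joint state $\rho$ only exposes the \emph{correlated} marginal pair $(\tr_2(\rho),\tr_1(\rho))$, so reaching all pairs of inputs really does require passing through product states, and then invoking \Cref{lem:validity quantum operation alter} to move from normalized to arbitrary inputs. No optimal-transport input beyond what is already established in the paper is needed.
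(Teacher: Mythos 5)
Your proof is correct and follows essentially the same route as the paper: both reduce validity of the judgment to the QOT inequality $T_{P_{sym}^\bot}(\sem{S_1}(\rho_1),\sem{S_2}(\rho_2))\le \lambda^2\,T_{P_{sym}^\bot}(\rho_1,\rho_2)$ via the observation that input and output couplings can be optimized independently, and then conclude by the scaling identity and monotonicity of the square root. Your version is, if anything, slightly more careful than the paper's in the $(1)\Rightarrow(2)$ direction, where passing through product states and \Cref{lem:validity quantum operation alter} makes explicit why the inequality for coupled marginals suffices to recover monotonicity for all pairs of inputs.
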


\begin{proof}
    By definition, we know that 
    the second condition is equivalent to: 
    for every $\rho \in \DD(\HH_{S_1} \otimes \HH_{S_2})$,
    there is a coupling 
    $\sigma :\langle \sem{S_1}(\tr_2(\rho)), \sem{S_2}(\tr_1(\rho))\rangle$
    such that
    \[
    \lambda^2 \tr (\rho P_{sym}^{\bot})
    \ge \tr (\sigma P_{sym}^{\bot}).
    \]
    Note that 
    $\sigma$ only depends on
    $\tr_2(\rho), \tr_1(\rho)$.
    Thus, 
    for fixed $\rho_1$ and $\rho_2$
    we can write the second 
    condition equivalently as
    \[
    \min_{\rho:\langle\rho_1, \rho_2\rangle} \max_{\sigma:\langle \sem{S_1}(\rho_1), \sem{S_2}(\rho_2)\rangle} \lambda^2  \tr (\rho P_{sym}^{\bot})
    -  \tr (\sigma P_{sym}^{\bot}) \ge 0.
    \]
    This can be simplified to 
    \[
     \lambda^2\min_{\rho:\langle\rho_1, \rho_2\rangle}  \tr (\rho P_{sym}^{\bot})
    \ge \min_{\sigma:\langle \sem{S_1}(\rho_1), \sem{S_2}(\rho_2)\rangle} \tr (\sigma P_{sym}^{\bot}),
    \]
    which is equivalent to the first condition by definition.
\end{proof}

\begin{proposition}[\Cref{prop:encoding-of-non-interference}]
    For a quantum system $\mathbb{S} = \left\langle\mathcal{H},\rho_0,A,C,do,measure \right\rangle$ with $\rho_0 = \ket{0}\bra{0}$,
    let $G_1, G_2\subseteq A$ be 
    two groups of agents, and $D\subseteq C$
    be a set of commands.
    The following are equivalent:
    \begin{itemize}
        \item $G_1, D : |G_2$.
        \item $\forall \alpha\in (A\times C)^{*}$,  
        \[
        \begin{aligned}
                    &\vDash  a\in G_2, E = \{E_{\lambda}|\lambda\in \Lambda_E\}\in \mathbb{M}_a,  T\subseteq \Lambda_E: \\
        &\rtriple{I}{q:=\ket{0};S_{\alpha}}{q:=\ket{0};S_{\mathsf{purge}_{G_1, D}(\alpha)}}{M}{},
        \end{aligned}
        \]
    \end{itemize}
    where $M= M_T\otimes I+I\otimes (I-M_T)$ with $M_T = \sum_{\lambda\in T} E_{\lambda}$.
\end{proposition}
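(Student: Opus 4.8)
The plan is to reduce both conditions to the same family of scalar inequalities indexed by a sequence of actions $\alpha$, an agent $a\in G_2$, a POVM $E=\{E_\lambda\}_{\lambda\in\Lambda_E}\in\mathbb{M}_a$, and a subset $T\subseteq\Lambda_E$, and then observe that the two families coincide. First I would unfold the left-hand side: by Definition (Interference Degree) and the definition of the pseudo-distance $d_a=d_{\mathbb{M}_a}$, the statement $G_1, D : | G_2$, i.e.\ $Int(G_1,D|G_2)=0$, is equivalent to the requirement that, for every $\alpha\in(A\times C)^*$ and every $a\in G_2$, the total variation distance between the distributions $p_{E,\EE_\alpha(\rho_0)}$ and $p_{E,\EE_{\mathsf{purge}_{G_1,D}(\alpha)}(\rho_0)}$ vanishes for every $E\in\mathbb{M}_a$. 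Using the characterization $d(p,q)=\max_{T}\big(p(T)-q(T)\big)$ of total variation distance, this is equivalent to $p_{E,\EE_\alpha(\rho_0)}(T)\le p_{E,\EE_{\mathsf{purge}_{G_1,D}(\alpha)}(\rho_0)}(T)$ for all $T\subseteq\Lambda_E$; writing $M_T=\sum_{\lambda\in T}E_\lambda$, this is $\tr(M_T\,\EE_\alpha(\rho_0))\le\tr(M_T\,\EE_{\mathsf{purge}_{G_1,D}(\alpha)}(\rho_0))$.

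Next I would unfold the right-hand side. The programs $q:=\ket{0};S_\alpha$ and $q:=\ket{0};S_{\mathsf{purge}_{G_1,D}(\alpha)}$ are AST, so by \Cref{lem:validity AST program} the judgment is valid iff for every index $(a,E,T)$ and every $\rho\in\DD^1(\HH\otimes\HH)$ there is a coupling $\sigma:\langle\sem{q:=\ket{0};S_\alpha}(\tr_2(\rho)),\sem{q:=\ket{0};S_{\mathsf{purge}_{G_1,D}(\alpha)}}(\tr_1(\rho))\rangle$ with $\tr(I\rho)\ge\tr(M\sigma)$. Two routine simplifications then apply. First, since $\tr(\tr_2(\rho))=\tr(\tr_1(\rho))=1$, the initialization prefix makes the outputs independent of $\rho$: $\sem{q:=\ket{0};S_\alpha}(\tr_2(\rho))=\EE_\alpha(\rho_0)$ and $\sem{q:=\ket{0};S_{\mathsf{purge}_{G_1,D}(\alpha)}}(\tr_1(\rho))=\EE_{\mathsf{purge}_{G_1,D}(\alpha)}(\rho_0)$. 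Second, $M=M_T\otimes I+I\otimes(I-M_T)$ is a split operator, so $\tr(M\sigma)$ depends only on the marginals of $\sigma$ and hence not on the coupling chosen; explicitly, by \Cref{lem: IVP algebraic}, $\tr(M\sigma)=\tr(M_T\EE_\alpha(\rho_0))+1-\tr(M_T\EE_{\mathsf{purge}_{G_1,D}(\alpha)}(\rho_0))$. Therefore $\tr(I\rho)=1\ge\tr(M\sigma)$ is equivalent to $\tr(M_T\EE_\alpha(\rho_0))\le\tr(M_T\EE_{\mathsf{purge}_{G_1,D}(\alpha)}(\rho_0))$, which is exactly the inequality obtained in the first step, quantified over the same index set. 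This matching yields the equivalence.

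The only genuinely delicate point is the asymmetry between the two formulations: $d_a$ is symmetric in its two arguments, whereas the split postcondition of the $\qqRHL$ judgment treats $S_\alpha$ and $S_{\mathsf{purge}_{G_1,D}(\alpha)}$ asymmetrically and so only yields the one-sided inequality. The resolution, which I would spell out, is that quantifying over all $T\subseteq\Lambda_E$ makes the one-sided family equivalent to the two-sided one: instantiating at the complement $\Lambda_E\setminus T$ and using $p(\Lambda_E)=1$ forces equality of the two distributions, hence $d_a=0$ (idempotence of $\mathsf{purge}_{G_1,D}$ can be used as an additional consistency check). Everything else is bookkeeping about the denotational semantics of initialization and the algebra of split operators.
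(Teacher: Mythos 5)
Your proposal is correct and follows essentially the same route as the paper: both unfold validity to the family of scalar inequalities $\tr(M_T\,\EE_\alpha(\rho_0))\le\tr(M_T\,\EE_{\mathsf{purge}_{G_1,D}(\alpha)}(\rho_0))$ indexed by $(a,E,T)$, exploit that the split postcondition depends only on the marginals of the coupling (so the existential over couplings is vacuous), and recover $d_a=0$ from the quantification over all $T\subseteq\Lambda_E$ via complementation. Your explicit remark on the one-sided/two-sided asymmetry is a worthwhile clarification of a step the paper's proof leaves implicit in the identity $\max_T\bigl(p_{E,\sigma_1}(T)-p_{E,\sigma_2}(T)\bigr)\le 0$.
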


\begin{proof}
   By the definition of validity, the second condition
   can be equivalently written as the following.
   
   $\forall \alpha\in (A\times C)^{*}, a\in G_2, E = \{E_{\lambda}|\lambda\in \Lambda_E\}\in \mathbb{M}_a$ and  $T\subseteq \Lambda_E$,
   we have
   \[
   \tr (\rho I )\ge \tr(\sigma (M_T\otimes I+I\otimes (I-M_T))),
   \]
   which could be simplified to
   \[
   \tr(M_T(\sigma_1-\sigma_2))\le 0,
   \]
   with $\sigma_1 = \tr_2(\sigma) = \mathcal{E}_{\alpha} (\ket{0}\bra{0})$, and 
   $\sigma_2 = \tr_1(\sigma) =\mathcal{E}_{\mathsf{purge}_{G_1, D}(\alpha)} (\ket{0}\bra{0})$.
    Now, notice that
    $\forall T\subseteq \Lambda_E$
    $\tr(M_T(\sigma_1-\sigma_2))\le 0$
    is equivalent to
    \[
    \max_{T} \left(p_{E, \sigma_1}(T) - p_{E, \sigma_2}(T) \right)\le 0,
    \]
    we can rewrite the second condition as 
    $\forall \alpha\in (A\times C)^{*}, a\in G_2$,
     \[
     d_a \left(\mathcal{E}_{\alpha} (\ket{0}\bra{0}) , \mathcal{E}_{\mathsf{purge}_{G_1, D}(\alpha)} (\ket{0}\bra{0})\right) \le 0.
     \]
     Thus, it is equivalent to 
     $G_1, D : |G_2$ by definition.
\end{proof}

\begin{proposition}[\Cref{prop:encoding-of-differential-privacy}]%
  The following are equivalent for all AST programs $S_1$ on an $n$-qubit system with:
  \begin{enumerate}
    \item $S_1$ is  $(\varepsilon, \delta)$-differentially private;
    \item $ \vDash i\in [n], 0\sqsubseteq M \sqsubseteq I: \rtriple{P_{i,sym}\vert (\exp(\varepsilon)+ \delta) I}{S_1}{S_{1}}{M\otimes I  + \exp(\varepsilon) I\otimes (I-M)}$.
  \end{enumerate}
  Here $P_{i, sym} = P_{sym}[\mathcal{H}_{[n] - i}] \otimes (I_{i\varsinone}\otimes I_{i\varsintwo})$ for $i\in [n]$.
\end{proposition}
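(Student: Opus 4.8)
The plan is to peel the validity of statement~(2) down to a concrete inequality about the single channel $\sem{S_1}$, and then to recognize that inequality as $(\varepsilon,\delta)$-differential privacy; the structure mirrors the treatment of trace distance in \Cref{prop:encodingoftracedistance}. First, since $S_1$ is AST, I would use \Cref{lem:validity AST program} to rephrase (2) with honest couplings: it holds iff for every $i\in[n]$, every $M$ with $0\sqsubseteq M\sqsubseteq I$, and every $\rho\in\DD^1(\HH_{S_1}\otimes\HH_{S_1})$ there is a coupling $\sigma:\langle\sem{S_1}(\tr_2\rho),\sem{S_1}(\tr_1\rho)\rangle$ with $\tr((P_{i,sym}\mid(\exp(\varepsilon)+\delta)I)\rho)\ge\tr((M\otimes I+\exp(\varepsilon)I\otimes(I-M))\sigma)$. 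The postcondition is split, so its value is $\tr(M\sem{S_1}(\tr_2\rho))+\exp(\varepsilon)(\tr(\sigma)-\tr(M\sem{S_1}(\tr_1\rho)))$, and $\tr(\sigma)=\tr(\rho)=1$ because $\sigma$ is a coupling; the existential over $\sigma$ is moreover vacuous, since equal-trace operators always admit a coupling (e.g.\ the normalized tensor product). By the definition of $X\mid A$ and of the trace on $\PosI$ (\Cref{def:operations-infinite-valued-pre}), the left-hand side is $\exp(\varepsilon)+\delta$ when $\supp(\rho)\subseteq P_{i,sym}$ and $+\infty$ otherwise. Hence (2) is equivalent to: for all $i$, all $0\sqsubseteq M\sqsubseteq I$, and all $\rho\in\DD^1$ with $\supp(\rho)\subseteq P_{i,sym}$,
\[
\tr\big(M\,\sem{S_1}(\tr_2\rho)\big)-\exp(\varepsilon)\,\tr\big(M\,\sem{S_1}(\tr_1\rho)\big)\;\le\;\delta .
\]

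The structural heart is the claim that, as $\rho$ ranges over density operators with $\supp(\rho)\subseteq P_{i,sym}$, the pair $(\tr_2\rho,\tr_1\rho)$ ranges exactly over the $i$-neighboring pairs, i.e.\ pairs $(\rho_1,\rho_2)$ of $n$-qubit density operators with $\tr_i\rho_1=\tr_i\rho_2$ (identifying the two copies of $\HH_{[n]-i}$). The ``$\Rightarrow$'' direction is the relational analogue of state equivalence (\Cref{lemma:stateequiv}): writing $P_{i,sym}=\tfrac12(I+W)$ with $W$ the swap exchanging the two copies of $\HH_{[n]-i}$, $\supp(\rho)\subseteq P_{i,sym}$ forces $W\rho W=\rho$, and tracing out everything outside the first, resp.\ second, copy of $\HH_{[n]-i}$ yields $\tr_i\rho_1=\tr_i\rho_2$. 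For the ``$\Leftarrow$'' direction one must realize an $i$-neighboring pair as the marginals of some density operator supported on $P_{i,sym}$; I would obtain this from the quantum duality theorem \Cref{thm:quantum strassen defect alter} applied with $X=P_{i,sym}^\bot$ and defect $\epsilon=0$, reducing the existence of the coupling to $\tr(Y_1\rho_1)\le\tr(Y_2\rho_2)$ for all $0\sqsubseteq Y_1,Y_2$ with $Y_1\otimes I-I\otimes Y_2\sqsubseteq P_{i,sym}^\bot$, and then checking that any such $Y_1,Y_2$, evaluated against $\rho_1$ and $\rho_2$, only depend on the common reduced state $\tr_i\rho_1=\tr_i\rho_2$. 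Combined with ``$\Rightarrow$'', this identifies the displayed condition above with: for all $i$, all $i$-neighboring $(\rho_1,\rho_2)$, and all $0\sqsubseteq M\sqsubseteq I$, $\tr(M\sem{S_1}(\rho_1))-\exp(\varepsilon)\tr(M\sem{S_1}(\rho_2))\le\delta$.

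Finally I would match this with \Cref{def:quantum-differential-privacy}: every $0\sqsubseteq M\sqsubseteq I$ arises as $M_S=\sum_{m\in S}M_m$ for the two-outcome POVM $\{M,I-M\}$ with $S$ the first outcome, and conversely any $M_S$ satisfies $0\sqsubseteq M_S\sqsubseteq I$; ``$\rho,\sigma$ differ in at most one qubit'' is precisely ``$\exists i:\tr_i\rho=\tr_i\sigma$''; and the $\delta$-shift and the $\exp(\varepsilon)$-factor appearing in the postcondition and precondition of~(2) are exactly the $\delta$ and $\exp(\varepsilon)$ of the definition. The particular shape $P_{i,sym}=P_{sym}[\HH_{[n]-i}]\otimes(I_{i\varsinone}\otimes I_{i\varsintwo})$, together with the convention $(+\infty)\cdot 0=0$, is what lets the precondition enforce $i$-neighboring-ness on the $[n]-i$ part while leaving the qubit-$i$ parts free, so the embedding is tight.

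I expect the main obstacle to be the ``$\Leftarrow$'' direction of the structural step, i.e.\ realizing $i$-neighboring pairs as marginals of $P_{i,sym}$-supported states: the naive constructions (symmetrizing a product coupling, or gluing the two inputs along a common purification of their shared $[n]-i$ marginal) either spoil the prescribed marginals or leak support out of $P_{i,sym}$, so the essential move is to argue existence through the duality theorem rather than by an explicit coupling, and the bulk of the work is the verification of the dual inequality for all admissible $Y_1,Y_2$.
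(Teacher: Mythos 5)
Your reduction of statement (2) to the concrete inequality $\tr(M\,\sem{S_1}(\tr_2\rho)) \le \exp(\varepsilon)\,\tr(M\,\sem{S_1}(\tr_1\rho)) + \delta$ for all $\rho\in\DD^1$ with $\supp(\rho)\subseteq P_{i,sym}$ is exactly the paper's first step, and your ``$\Rightarrow$'' half of the structural claim (a $P_{i,sym}$-supported coupling has $i$-neighboring marginals, via invariance under the swap on the two copies of $\HH_{[n]-i}$) is also what the paper uses. The genuine gap is the ``$\Leftarrow$'' half, which you correctly single out as the crux but cannot close the way you propose. The claim that every $i$-neighboring pair $(\rho_1,\rho_2)$ arises as the marginals of some coupling supported in $P_{i,sym}$ is false: take $n=2$, $i$ the second qubit, $\ket{\psi}=(\ket{a}\ket{0}+\ket{b}\ket{1})/\sqrt2$ with $\ket{a},\ket{b}$ orthonormal in $\HH_{[n]-i}$, and set $\rho_1=\ket{\psi}\bra{\psi}$ and $\rho_2=\tfrac12(\ket{a}\bra{a}+\ket{b}\bra{b})\otimes\ket{0}\bra{0}$. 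Then $\tr_i\rho_1=\tr_i\rho_2$, so the pair is $i$-neighboring; but since $\rho_1$ is pure, the \emph{only} coupling of $\rho_1$ and $\rho_2$ is $\rho_1\otimes\rho_2$, whose support contains the vector $\ket{\psi}\otimes\ket{a}\ket{0}$, which has a nonzero antisymmetric component (coming from the $\ket{b}\ket{a}$ term on the two copies of $\HH_{[n]-i}$) and hence does not lie in $P_{i,sym}$. Consequently the judgement in (2) places no constraint on the pair $(\sem{S_1}(\rho_1),\sem{S_1}(\rho_2))$, while \Cref{def:quantum-differential-privacy} does, so the identification of the two quantified families breaks down exactly where you expected trouble.

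Your proposed repair via \Cref{thm:quantum strassen defect alter} cannot succeed either: by that very theorem, the failure of the lifting $\rho_1\,(P_{i,sym}^\bot)^{\#}_0\,\rho_2$ in the example above means there \emph{exist} admissible $Y_1,Y_2$ with $\tr(Y_1\rho_1)>\tr(Y_2\rho_2)$, so the dual inequality you plan to verify ``for all admissible $Y_1,Y_2$'' is simply not true for all $i$-neighboring pairs. The heuristic that such $Y_1,Y_2$ ``only depend on the common reduced state'' is the point of failure: they act on the full $n$-qubit space, and $\tr(Y_1\rho_1)$ sees the correlations between qubit $i$ and the rest. To be fair, the paper's own proof passes over exactly this point (it restricts to $\supp(\rho_0)\sqsubseteq P_{i,sym}$ ``without loss of generality''), so you have located the real difficulty; but as written the step needed for $(2)\Rightarrow(1)$ would fail, and closing it would require either weakening the neighboring relation to realizable pairs (e.g.\ pairs admitting a classically correlated coupling on $\HH_{[n]-i}$) or a separate argument that differential privacy on realizable pairs already implies it on all neighboring pairs.
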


\begin{proof}
  We first notice that,
  1) states that $\forall M, S$ and $\forall \rho,\sigma$,
  if $\exists i\in [n]$ such that
  $\tr_i(\rho) = \tr_i(\sigma)$,
  then 
  \[
  \Pr[\EE(\rho) \in_M S] \leq \exp(\varepsilon) \cdot \Pr(\EE(\sigma) \in_M S) + \delta.
  \]
  This is equivalent to say
  $\forall M, S, \rho, \sigma$ and $\forall i\in [n]$,
  if 
  $\tr_i(\rho) = \tr_i(\sigma)$,
  then the above inequality holds.
  
  Now consider some arbitrary fixed $i\in [n]$,
  and any $\rho$ and $\sigma$ with a coupling
  $\rho_{0}:\langle \rho, \sigma\rangle$.
  If $\tr_{i}(\rho) \ne \tr_{i}(\sigma)$,
  then $\tr(P_{i,sym}^{\bot}\rho_{0}) > 0 $,
  and thus
  $\tr(\rho_0 (P_{i,sym}\vert (\exp (\varepsilon)+\delta)I)) = +\infty$,
  meaning that it is always valid in this case.
  Therefore, we only need to consider the case
  where $\tr_{i}(\rho) = \tr_{i}(\sigma)$.
  In this case,
  we consider
  the case $\supp(\rho_0)\sqsubseteq P_i$
  without loss of generality
  (otherwise the validity condition
  holds directly).
  The condition is
  for any $M$ satisfying $0\sqsubseteq M \sqsubseteq I$,
  \[
    \exp(\varepsilon) + \delta \ge \tr(\sem{S_{1}}(\rho) M) + \exp(\varepsilon)(1- \tr(\sem{S_{1}}(\sigma) M)),
  \]
  which could be simplified to
  \[
        \tr(\sem{S_{1}}(\rho) M) \le  \exp(\varepsilon) \tr(\sem{S_{1}}(\sigma) M) + \delta.
  \]
  Note that when $M$ goes over all $0\sqsubseteq M \sqsubseteq I$,
  it exactly goes over all POVMs $\sum_{m\in S} M_{m}$,
  we know that the second condition is then equivalent to the first
  as we want.
\end{proof}

We need the following proposition
about the projector onto the symmetric subspace.

\begin{proposition}[Adapted from Proposition 3.2 in \cite{Gilles19Relational}]
    $\rho_1 = \rho_2$ if and only if $\rho_1 (=_{sym})^{\#} \rho_2$, where $=_{sym}$ stands for the space of $\supp (P_{sym})$.
\end{proposition}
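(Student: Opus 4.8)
The plan is to prove both implications directly from the definition of a quantum lifting, using the fact that the symmetric subspace $\supp(P_{sym})$ is exactly the $+1$-eigenspace of $\swap$. Recall that a coupling $\rho:\<\rho_1,\rho_2\>$ witnesses the lifting $\rho_1\,(\symeq)^{\#}\,\rho_2$ iff $\supp(\rho)\subseteq\supp(P_{sym})$, and that this containment holds iff every vector in $\supp(\rho)$ is fixed by $\swap$, i.e.\ iff $\swap\,\rho=\rho$ (equivalently $\rho\,\swap=\rho$, since $\rho$ is Hermitian and $\swap$ is self-adjoint), in which case also $\swap\,\rho\,\swap=\rho$.

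First I would do the ``only if'' direction, which is the constructive one. Suppose $\rho_1=\rho_2=:\sigma$ and fix a spectral decomposition $\sigma=\sum_i p_i\ket{\psi_i}\bra{\psi_i}$ with $p_i\ge 0$ and $\{\ket{\psi_i}\}$ orthonormal. Put $\rho\eqdef\sum_i p_i\,\ket{\psi_i\psi_i}\bra{\psi_i\psi_i}$. Then $\rho$ is positive semi-definite with $\tr(\rho)=\tr(\sigma)$, so $\rho\in\DD(\HH\otimes\HH)$; each $\ket{\psi_i\psi_i}$ is fixed by $\swap$, so $\supp(\rho)\subseteq\supp(P_{sym})$; and $\tr_2(\rho)=\tr_1(\rho)=\sum_i p_i\ket{\psi_i}\bra{\psi_i}=\sigma$. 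Hence $\rho$ witnesses $\rho_1\,(\symeq)^{\#}\,\rho_2$.

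For the ``if'' direction, let $\rho$ be any witness, so $\swap\,\rho\,\swap=\rho$ by the observation above. A short computation in the computational basis gives the identity $\tr_2(\swap\,\rho\,\swap)=\tr_1(\rho)$ for every operator $\rho$ on $\HH\otimes\HH$. Combining the two, $\rho_1=\tr_2(\rho)=\tr_2(\swap\,\rho\,\swap)=\tr_1(\rho)=\rho_2$.

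No step is a real obstacle; the only care needed is in identifying $\supp(P_{sym})$ with the swap-fixed operators and in verifying the index-shuffling identity $\tr_2(\swap\,\rho\,\swap)=\tr_1(\rho)$. (This proposition restates \Cref{lemma:stateequiv}; one could alternatively derive the ``if'' direction from \Cref{thm:quantum strassen defect alter} taken with $X=P_{sym}^{\bot}$ and $\epsilon=0$, but choosing the dual witnesses $Y_1,Y_2$ there is less transparent than the direct argument, and the genuinely delicate analogue is the channel-level characterization in \Cref{sec: QOT equal}, where symmetric couplings need not exist.)
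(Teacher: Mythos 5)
Your proof is correct; the paper itself gives no argument for this proposition (it is stated as adapted from the cited reference, and the same fact appears as \Cref{lemma:stateequiv}), and your two directions — the diagonal coupling $\sum_i p_i\ket{\psi_i\psi_i}\bra{\psi_i\psi_i}$ built from a spectral decomposition for the forward implication, and swap-invariance of the witness's support plus $\tr_2(\swap\,\rho\,\swap)=\tr_1(\rho)$ for the converse — are exactly the standard argument used in the cited works. No gaps.
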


\begin{proposition}[\Cref{prop:encodingofdiamondnorm}]
  Let $c\in\mathbb{R}^+$. The following are equivalent for all AST programs $S_1, S_2$ such that $\HH = \HH_{S_1} = \HH_{S_2}$:
  \begin{enumerate}
    \item $\|\sem{S_1} - \sem{S_2}\|_\diamond \leq 2c$;
    \item $\vDash  0 \leqlow P \leqlow I_{\HH\otimes\HH}: \rtriple{P_{sym}[\HH\otimes\HH]\mid (1+c)I}{S_1}{S_2}{P \otimes I + I \otimes (I-P)}$.
  \end{enumerate}
\end{proposition}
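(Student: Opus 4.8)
The plan is to derive \Cref{prop:encodingofdiamondnorm} as a direct specialization of the trace-distance encoding \Cref{prop:encodingoftracedistance}, combined with the operator characterization of the diamond distance recalled after \Cref{def:diamond-norm}. First I would interpret both $S_1$ and $S_2$ over the enlarged state space $\HH\otimes\HH$, obtained by adjoining a fresh ancilla register of dimension $\dim\HH$ on which both programs act trivially; by the structural definition of the denotational semantics (\Cref{lem-structural}) this simply replaces $\sem{S_i}$ by $\sem{S_i}\otimes\mathcal{I}$. Applying \Cref{prop:encodingoftracedistance} to these enlarged AST programs with $X = P_{sym}[\HH\otimes\HH]$, $\Phi_1 = \Phi_2 = \tfrac{c}{2} I_{\HH\otimes\HH}$, and an empty block of logical variables, the computation $I + \Phi_1\otimes I + I\otimes\Phi_2 = (1+\tfrac{c}{2}+\tfrac{c}{2})I = (1+c)I$ makes clause (2) of \Cref{prop:encodingoftracedistance} literally the judgement $\vDash 0\leqlow P\leqlow I_{\HH\otimes\HH}:\ \rtriple{P_{sym}[\HH\otimes\HH]\mid (1+c)I}{S_1}{S_2}{P\otimes I + I\otimes(I-P)}$, i.e.\ clause (2) of \Cref{prop:encodingofdiamondnorm}. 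Note that this precondition is genuinely infinite-valued, which is exactly where the $X\mid A$ construction is needed.

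Second I would rewrite clause (1) of \Cref{prop:encodingoftracedistance} under this specialization. It states: for all partial density operators $\rho_1,\rho_2$ on $\HH\otimes\HH$ with $\rho_1\,(P_{sym}[\HH\otimes\HH])^{\#}\rho_2$, one has $\TD\big((\sem{S_1}\otimes\mathcal{I})(\rho_1),(\sem{S_2}\otimes\mathcal{I})(\rho_2)\big)\le \tfrac{c}{2}\tr(\rho_1)+\tfrac{c}{2}\tr(\rho_2)$. By \Cref{lemma:stateequiv} applied to the space $\HH\otimes\HH$, a coupling supported on the symmetric subspace is $\swap$-invariant and hence has equal marginals, so $\rho_1\,(\symeq)^{\#}\rho_2$ holds iff $\rho_1=\rho_2$ (and in particular $\tr(\rho_1)=\tr(\rho_2)$). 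Since $\TD$ and the right-hand bound are both positively homogeneous in $\rho_i$, the whole family of inequalities reduces to the single statement $\TD\big((\sem{S_1}\otimes\mathcal{I})(\rho),(\sem{S_2}\otimes\mathcal{I})(\rho)\big)\le c$ for every $\rho\in\DD^1(\HH\otimes\HH)$.

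Third I would close the loop with the definition of the diamond distance. From $\TD=\tfrac{1}{2}\norm{\cdot}_1$ and the formula $\|\sem{S_1}-\sem{S_2}\|_\diamond = \max_{\rho\in\DD^1(\HH\otimes\HH)}\norm{(\sem{S_1}\otimes\mathcal{I})(\rho)-(\sem{S_2}\otimes\mathcal{I})(\rho)}_1$, the reduced statement above holds for all $\rho$ exactly when $\|\sem{S_1}-\sem{S_2}\|_\diamond\le 2c$. Chaining the three equivalences yields $(1)\iff(2)$.

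I do not anticipate a deep obstacle; the argument is bookkeeping on top of \Cref{prop:encodingoftracedistance}. The step needing the most care is the middle one: verifying that the abstract support side-condition $\rho_1\,(\symeq)^{\#}\rho_2$ on the \emph{doubled} space $\HH\otimes\HH$ is precisely what turns clause (1) of the trace-distance encoding into the auxiliary-system variational expression defining $\|\cdot\|_\diamond$, together with the fact that the quantification over $0\leqlow P\leqlow I$ in the postcondition faithfully recovers $\TD=\max_{0\leqlow P\leqlow I}\tr(P(\cdot-\cdot))$ — though this last point is already internal to the proof of \Cref{prop:encodingoftracedistance} and need not be repeated. A further minor point to state cleanly is that interpreting a $\qWhile$-program over an ancilla-extended Hilbert space tensors its denotation with the identity channel, which follows by an easy induction on program structure from \Cref{lem-structural}.
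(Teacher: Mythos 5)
Your proposal is correct and follows essentially the same route as the paper: the paper's own proof is exactly the specialization of \Cref{prop:encodingoftracedistance} with $X = P_{sym}[\HH\otimes\HH]$ and $\Phi_1=\Phi_2=cI/2$, using $\rho_1\,(\symeq)^{\#}\rho_2 \iff \rho_1=\rho_2$ and the variational formula for the diamond distance. You merely spell out the bookkeeping (ancilla extension tensors the denotation with the identity channel, homogeneity to reduce to normalized inputs) that the paper leaves implicit.
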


\begin{proof}
    This is direct by applying 
    \Cref{prop:encodingoftracedistance}
    with $X = P_{sym} [\HH\otimes \HH]$,
    $\Phi_1 = \Phi_2 = cI/2$,
    and the definition of diamond distance
    for completely positive and trace non-increasing
    linear maps.
\end{proof}

\section{Stabilized Quantum Optimal Transport Cost}
\label{sec:stablized-quantum-ot-cost}

In this section, we briefly review the 
proof of
\[
 T_s(\rho, \sigma) = T(\rho\otimes \frac{I}{2}, \sigma \otimes \frac{I}{2})
\]
in~\cite{mullerhermes2022monotonicity} for completeness. 

We begin with a decomposition 
of the projector onto the asymmetric 
subspace.

\begin{lemma}[Identity (1) in \cite{mullerhermes2022monotonicity}]%
\label{lemma:asym-projector-tensor-decompose}
    We have
    \[
    \mathrm{P}_{\mathrm{asym}}(d_1\otimes d_2)=\mathrm{P}_{\mathrm{asym}}(d_1)\otimes\mathrm{P}_{\mathrm{sym}}(d_2)+\mathrm{P}_{\mathrm{sym}}(d_1)\otimes\mathrm{P}_{\mathrm{asym}}(d_2).
    \]
\end{lemma}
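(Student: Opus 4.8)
The plan is to reduce the identity to the corresponding factorization of the swap operator. Write $\HH_i = \mathbb{C}^{d_i}$ for $i=1,2$, so that $\HH_1\otimes\HH_2\cong\mathbb{C}^{d_1 d_2}$, and for a Hilbert space $\HH$ let $F_\HH = \swap[\HH]$ denote the swap on $\HH\otimes\HH$, which is an involution with $F_\HH = F_\HH^\dagger$. With this notation $\mathrm{P}_{\mathrm{sym}}(d) = \frac12(I + F_{\mathbb{C}^d})$, $\mathrm{P}_{\mathrm{asym}}(d) = \frac12(I - F_{\mathbb{C}^d})$, these are orthogonal projectors with $\mathrm{P}_{\mathrm{sym}}(d) + \mathrm{P}_{\mathrm{asym}}(d) = I$ and $F_{\mathbb{C}^d} = \mathrm{P}_{\mathrm{sym}}(d) - \mathrm{P}_{\mathrm{asym}}(d)$, and the left-hand side of the claim is $\frac12\big(I - F_{\HH_1\otimes\HH_2}\big)$ viewed as an operator on $(\HH_1\otimes\HH_2)^{\otimes 2}$.

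First I would establish the key factorization: under the canonical reordering isomorphism $(\HH_1\otimes\HH_2)^{\otimes 2}\cong(\HH_1^{\otimes 2})\otimes(\HH_2^{\otimes 2})$, fixed on the computational basis by $|i_1 i_2\rangle\otimes|j_1 j_2\rangle \mapsto |i_1 j_1\rangle\otimes|i_2 j_2\rangle$, one has $F_{\HH_1\otimes\HH_2} = F_{\HH_1}\otimes F_{\HH_2}$. This is immediate on product vectors: $F_{\HH_1\otimes\HH_2}$ sends $(u_1\otimes u_2)\otimes(v_1\otimes v_2)$ to $(v_1\otimes v_2)\otimes(u_1\otimes u_2)$, which under the reordering equals $\big(F_{\HH_1}(u_1\otimes v_1)\big)\otimes\big(F_{\HH_2}(u_2\otimes v_2)\big)$; by bilinearity and spanning this extends to all of $(\HH_1\otimes\HH_2)^{\otimes 2}$.

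Then I would simply expand. Substituting $I_{\HH_i^{\otimes 2}} = \mathrm{P}_{\mathrm{sym}}(d_i) + \mathrm{P}_{\mathrm{asym}}(d_i)$ and $F_{\HH_i} = \mathrm{P}_{\mathrm{sym}}(d_i) - \mathrm{P}_{\mathrm{asym}}(d_i)$ into $\frac12\big(I\otimes I - F_{\HH_1}\otimes F_{\HH_2}\big)$ and distributing, the two ``same-parity'' cross terms ($\mathrm{P}_{\mathrm{sym}}\otimes\mathrm{P}_{\mathrm{sym}}$ and $\mathrm{P}_{\mathrm{asym}}\otimes\mathrm{P}_{\mathrm{asym}}$) cancel against their copies coming from $F_{\HH_1}\otimes F_{\HH_2}$, while the two ``mixed-parity'' terms each pick up a factor $2$; dividing by $2$ leaves $\mathrm{P}_{\mathrm{sym}}(d_1)\otimes\mathrm{P}_{\mathrm{asym}}(d_2) + \mathrm{P}_{\mathrm{asym}}(d_1)\otimes\mathrm{P}_{\mathrm{sym}}(d_2)$, which is exactly the right-hand side.

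The computation is entirely routine, so there is essentially no obstacle of substance; the only point requiring care is the bookkeeping of the tensor-factor reordering — making precise which factor sits in which slot so that ``$F_{\HH_1\otimes\HH_2} = F_{\HH_1}\otimes F_{\HH_2}$'' holds literally rather than merely up to a permutation of factors. I would discharge this by fixing the reordering isomorphism on basis vectors once and for all, as above, and checking the swap identity there.
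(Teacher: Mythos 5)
Your proof is correct. The paper itself gives no proof of this lemma --- it is imported verbatim as ``Identity (1)'' from the cited reference on stabilized quantum optimal transport --- so there is no in-paper argument to compare against; your derivation via the factorization $F_{\HH_1\otimes\HH_2} = F_{\HH_1}\otimes F_{\HH_2}$ under the canonical reordering, followed by expanding $\tfrac12(I\otimes I - F_{\HH_1}\otimes F_{\HH_2})$ with $I = \mathrm{P}_{\mathrm{sym}}+\mathrm{P}_{\mathrm{asym}}$ and $F = \mathrm{P}_{\mathrm{sym}}-\mathrm{P}_{\mathrm{asym}}$, is the standard one and the algebra checks out (the same-parity terms cancel and the mixed-parity terms survive with coefficient one). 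Your care about fixing the reordering isomorphism on basis vectors is exactly the right point to make explicit, since the identity only holds literally after that identification.
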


The following lemma is a special case of 
the Schur-Weyl duality.
Suppose $X\in \mathcal{D}(\mathbb{C}^d\otimes \mathbb{C}^d)$,
and let $\Sigma_d$ be the following $UU$-twirling channel
\[
    \Sigma_d (X) = \int_{\mathcal{U}_d} (U\otimes U) X (U^{\dagger}\otimes U^{\dagger}) \text{d}U,
\]
where the integral is with respect to the Haar measure on group $\mathcal{U}_d$ of $d\times d$ unitary matrices. Then, we have:

\begin{lemma}[Theorem 10 in~\cite{Mel24}]\label{thm:schur-weyl-2-dim}
    \[
    \Sigma_d(X)=\mathrm{Tr}[XP_{\mathrm{sym}}(d)]\frac{P_{\mathrm{sym}}(d)}{\mathrm{Tr}[P_{\mathrm{sym}}(d)]}+\mathrm{Tr}[XP_{\mathrm{asym}}(d)]\frac{P_{\mathrm{asym}}(d)}{\mathrm{Tr}[P_{\mathrm{asym}}(d)]},
    \]
    where $P_{\mathrm{sym}}(d) = (I-F_d)/2$ is the projector onto the symmetric subspace, and
    $P_{\mathrm{asym}}(d) = I - P_{\mathrm{sym}}(d)$.
\end{lemma}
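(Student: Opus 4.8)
The plan is to prove \Cref{thm:schur-weyl-2-dim} as the standard $n=2$ instance of Schur--Weyl duality. First I would observe that $\Sigma_d$ is (up to normalization already built in) the orthogonal projection onto the commutant of the representation $U\mapsto U\otimes U$ of $\mathcal{U}_d$ on $\mathbb{C}^d\otimes\mathbb{C}^d$. Indeed, by invariance of the Haar measure under left and right translation, $(V\otimes V)\,\Sigma_d(X)\,(V^\dagger\otimes V^\dagger)=\Sigma_d(X)$ for every $V\in\mathcal{U}_d$, so $\Sigma_d(X)$ always commutes with all $U\otimes U$; conversely, if $M$ commutes with every $U\otimes U$ then $\Sigma_d(M)=M$. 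Moreover each summand $X\mapsto (U\otimes U)X(U^\dagger\otimes U^\dagger)$ is self-adjoint for the Hilbert--Schmidt inner product, hence so is $\Sigma_d$, which makes it the \emph{orthogonal} projection onto the commutant algebra.

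Next I would invoke Schur--Weyl duality: the commutant of $\{U\otimes U : U\in\mathcal{U}_d\}$ inside $M_d(\CC)\otimes M_d(\CC)$ is the two-dimensional algebra spanned by the identity $I$ and the swap $F_d$, equivalently by the orthogonal projectors $P_{\mathrm{sym}}(d)$ and $P_{\mathrm{asym}}(d)$ onto the symmetric and antisymmetric subspaces. Consequently there exist scalars $\alpha,\beta$ (depending on $X$) with $\Sigma_d(X)=\alpha P_{\mathrm{sym}}(d)+\beta P_{\mathrm{asym}}(d)$.

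To identify $\alpha$ and $\beta$ I would take Hilbert--Schmidt inner products against the two projectors. Since $P_{\mathrm{sym}}(d)$ lies in the commutant it is a fixed point of $\Sigma_d$, so by self-adjointness $\tr[P_{\mathrm{sym}}(d)\,\Sigma_d(X)]=\tr[\Sigma_d(P_{\mathrm{sym}}(d))\,X]=\tr[P_{\mathrm{sym}}(d)\,X]$; on the other hand, using $P_{\mathrm{sym}}(d)P_{\mathrm{asym}}(d)=0$ and idempotence, $\tr[P_{\mathrm{sym}}(d)\,\Sigma_d(X)]=\alpha\,\tr[P_{\mathrm{sym}}(d)]$. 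Hence $\alpha=\tr[P_{\mathrm{sym}}(d)X]/\tr[P_{\mathrm{sym}}(d)]$, and symmetrically $\beta=\tr[P_{\mathrm{asym}}(d)X]/\tr[P_{\mathrm{asym}}(d)]$; substituting back yields exactly the claimed identity.

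The main obstacle is the second step: justifying that the commutant of $\{U\otimes U\}$ is precisely two-dimensional and spanned by $\{I,F_d\}$. This is the content of Schur--Weyl duality for the symmetric group $S_2$, namely that the permutation action of $\CC[S_2]$ generates the commutant of $U^{\otimes 2}$. If a self-contained argument is preferred to a citation, one can verify it directly: averaging a commuting $M$ first over diagonal phase matrices forces $M$ to be block-diagonal in the index pattern, and then averaging over permutation matrices forces it into the span of $I$ and $F_d$. For our purposes, however, citing the classical result suffices, and the rest of the argument is the elementary twirling computation above.
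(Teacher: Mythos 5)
Your argument is correct, but note that the paper itself does not prove this lemma at all: it is imported verbatim as Theorem 10 of the cited reference, so you are supplying a proof where the paper only gives a citation. Your route is the standard one and it is sound: the Haar twirl $\Sigma_d$ is idempotent and Hilbert--Schmidt self-adjoint (the latter is exactly the paper's \Cref{fact:sigma-d-self-dual}), hence it is the orthogonal projection onto the commutant of $\{U\otimes U\}$; Schur--Weyl duality for $n=2$ identifies that commutant with $\mathrm{span}\{I,F_d\}=\mathrm{span}\{P_{\mathrm{sym}}(d),P_{\mathrm{asym}}(d)\}$; and pairing against the two (mutually orthogonal, $\Sigma_d$-fixed) projectors pins down the coefficients as $\tr[XP_{\mathrm{sym}}(d)]/\tr[P_{\mathrm{sym}}(d)]$ and $\tr[XP_{\mathrm{asym}}(d)]/\tr[P_{\mathrm{asym}}(d)]$. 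This buys a self-contained justification of a step the paper leaves external, at the cost of invoking Schur--Weyl (or the double-commutant theorem).

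One small caveat on your optional ``self-contained'' verification of the commutant: averaging over diagonal phase matrices and permutation matrices alone does \emph{not} cut the commutant down to $\mathrm{span}\{I,F_d\}$; it leaves the three-dimensional algebra spanned by $\sum_{j\neq k}|jk\>\<jk|$, $\sum_{j\neq k}|jk\>\<kj|$ and $\sum_j|jj\>\<jj|$, and one needs further unitaries (e.g.\ Hadamard-type rotations mixing basis vectors) to force the diagonal block into the right two-dimensional span. Since you explicitly fall back on citing the classical Schur--Weyl result, this does not affect the validity of your main argument, but the sketch as stated would not close on its own. (Also, the statement's formula $P_{\mathrm{sym}}(d)=(I-F_d)/2$ is a typo for $(I+F_d)/2$; your proof correctly treats the two projectors symmetrically, so the identity is unaffected.)
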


\begin{proposition}\label{fact:sigma-d-self-dual}
        The channel $\Sigma_d$ is self-dual with respect 
    to the Hilbert-Schmidt inner product.
\end{proposition}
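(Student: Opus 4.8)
The plan is to verify directly that $\langle\Sigma_d(A),B\rangle=\langle A,\Sigma_d(B)\rangle$ for all operators $A,B$ on $\mathbb{C}^d\otimes\mathbb{C}^d$, where $\langle X,Y\rangle=\tr(X^\dagger Y)$ denotes the Hilbert--Schmidt inner product. First I would record the elementary fact that for any unitary $W$ on $\mathbb{C}^d\otimes\mathbb{C}^d$ the conjugation $\mathrm{Ad}_W\colon X\mapsto WXW^\dagger$ has Hilbert--Schmidt adjoint $\mathrm{Ad}_{W^\dagger}$: indeed, by cyclicity of the trace,
\[
\langle WAW^\dagger,B\rangle=\tr\!\big(WA^\dagger W^\dagger B\big)=\tr\!\big(A^\dagger(W^\dagger BW)\big)=\langle A,W^\dagger BW\rangle .
\]
Taking $W=U\otimes U$ shows that the adjoint of the map $X\mapsto(U\otimes U)X(U^\dagger\otimes U^\dagger)$ is $X\mapsto(U^\dagger\otimes U^\dagger)X(U\otimes U)$.

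Next I would push the adjoint through the Haar integral. Since taking Hilbert--Schmidt adjoints is a bounded linear operation on the finite-dimensional space of superoperators, it commutes with the integral defining $\Sigma_d$, so
\[
\Sigma_d^\dagger(X)=\int_{\mathcal{U}_d}(U^\dagger\otimes U^\dagger)\,X\,(U\otimes U)\,\mathrm{d}U .
\]
Substituting $V=U^\dagger$ and using invariance of the Haar measure on $\mathcal{U}_d$ under inversion, the right-hand side becomes $\int_{\mathcal{U}_d}(V\otimes V)\,X\,(V^\dagger\otimes V^\dagger)\,\mathrm{d}V=\Sigma_d(X)$. Hence $\Sigma_d^\dagger=\Sigma_d$.

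There is essentially no hard step here; the only two points worth a sentence are the interchange of adjoint and integral (immediate in finite dimension) and the change of variables (the standard inversion-invariance of Haar measure). As an integral-free alternative, one may instead start from Lemma~\ref{thm:schur-weyl-2-dim}, which exhibits $\Sigma_d$ in the form $X\mapsto\tfrac{\tr(P_{\mathrm{sym}}X)}{\tr P_{\mathrm{sym}}}P_{\mathrm{sym}}+\tfrac{\tr(P_{\mathrm{asym}}X)}{\tr P_{\mathrm{asym}}}P_{\mathrm{asym}}$ with $P_{\mathrm{sym}},P_{\mathrm{asym}}$ Hermitian projectors; a one-line computation shows that each summand $X\mapsto\tfrac{\tr(PX)}{\tr P}P$ with $P=P^\dagger$ and $\tr P>0$ is Hilbert--Schmidt self-adjoint, and a sum of self-adjoint maps is self-adjoint.
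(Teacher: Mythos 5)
Your proof is correct and follows essentially the same route as the paper's: both arguments rest on cyclicity of the trace, interchanging the adjoint (or trace) with the Haar integral, and the substitution $U\mapsto U^\dagger$ justified by inversion-invariance of the Haar measure. The integral-free alternative via Lemma~\ref{thm:schur-weyl-2-dim} is a nice additional observation but not needed.
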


\begin{proof}
    For any $A$ and $B$, we have
    \[
    \begin{aligned}
        \tr \sbra{A^{\dagger}\Sigma_d(B)} &= \tr \sbra{A^{\dagger} \int_{\mathcal{U}_d} (U\otimes U) B (U^{\dagger}\otimes U^{\dagger}) \text{d}U} \\
        &= \int_{\mathcal{U}_d} \tr \sbra{A^{\dagger}  (U\otimes U) B (U^{\dagger}\otimes U^{\dagger}) } \text{d}U \\
        &= \int_{\mathcal{U}_d} \tr \sbra{(U^{\dagger}\otimes U^{\dagger}) A^{\dagger}  (U\otimes U) B } \text{d}U \\
        &= \int_{\mathcal{U}_d} \tr \sbra{(U\otimes U) A^{\dagger}  (U^{\dagger}\otimes U^{\dagger}) B } \text{d}U \\
        &= \tr \sbra{\Sigma_d(A^{\dagger})B}.
    \end{aligned}
    \]
\end{proof}

\begin{proposition}\label{prop:fixed-point-asym}
    Using the above notations, we have
    \[
        (id_{d_1}\otimes id_{d_1} \otimes \Sigma_{d_2})
        (\mathrm{P}_{\mathrm{asym}}(d_1\otimes d_2))
        = \mathrm{P}_{\mathrm{asym}}(d_1\otimes d_2).
    \]
\end{proposition}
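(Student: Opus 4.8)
The plan is to prove $(\mathrm{id}_{d_1}\otimes\mathrm{id}_{d_1}\otimes\Sigma_{d_2})(\mathrm{P}_{\mathrm{asym}}(d_1\otimes d_2)) = \mathrm{P}_{\mathrm{asym}}(d_1\otimes d_2)$ by combining the tensor decomposition of the asymmetric projector (Lemma~\ref{lemma:asym-projector-tensor-decompose}) with the action of the twirling channel $\Sigma_{d_2}$ on the symmetric and asymmetric projectors (Lemma~\ref{thm:schur-weyl-2-dim}). First I would rewrite the left-hand side, applying Lemma~\ref{lemma:asym-projector-tensor-decompose}: the asymmetric projector on $d_1\otimes d_2$ decomposes as $\mathrm{P}_{\mathrm{asym}}(d_1)\otimes\mathrm{P}_{\mathrm{sym}}(d_2) + \mathrm{P}_{\mathrm{sym}}(d_1)\otimes\mathrm{P}_{\mathrm{asym}}(d_2)$, where here the tensor structure is between the ``first copy'' systems and the ``second copy'' systems (i.e. one should read $d_1\otimes d_2$ as $(\CC^{d_1}\otimes\CC^{d_1})\otimes(\CC^{d_2}\otimes\CC^{d_2})$ up to the natural reordering of factors). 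By linearity, applying $\mathrm{id}_{d_1}\otimes\mathrm{id}_{d_1}\otimes\Sigma_{d_2}$ to this sum reduces the problem to computing $\Sigma_{d_2}(\mathrm{P}_{\mathrm{sym}}(d_2))$ and $\Sigma_{d_2}(\mathrm{P}_{\mathrm{asym}}(d_2))$.

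The key observation is that $\mathrm{P}_{\mathrm{sym}}(d_2)$ and $\mathrm{P}_{\mathrm{asym}}(d_2)$ are each fixed points of $\Sigma_{d_2}$. This follows directly from Lemma~\ref{thm:schur-weyl-2-dim}: substituting $X = \mathrm{P}_{\mathrm{sym}}(d_2)$ into the formula for $\Sigma_{d_2}(X)$, one gets $\tr[\mathrm{P}_{\mathrm{sym}}(d_2)\mathrm{P}_{\mathrm{sym}}(d_2)] = \tr[\mathrm{P}_{\mathrm{sym}}(d_2)]$ for the first coefficient (since $\mathrm{P}_{\mathrm{sym}}(d_2)$ is a projector, so $\mathrm{P}_{\mathrm{sym}}(d_2)^2 = \mathrm{P}_{\mathrm{sym}}(d_2)$) and $\tr[\mathrm{P}_{\mathrm{sym}}(d_2)\mathrm{P}_{\mathrm{asym}}(d_2)] = 0$ for the second (orthogonality of the two projectors), yielding $\Sigma_{d_2}(\mathrm{P}_{\mathrm{sym}}(d_2)) = \mathrm{P}_{\mathrm{sym}}(d_2)$. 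The identical computation with $X = \mathrm{P}_{\mathrm{asym}}(d_2)$ gives $\Sigma_{d_2}(\mathrm{P}_{\mathrm{asym}}(d_2)) = \mathrm{P}_{\mathrm{asym}}(d_2)$. Substituting these back, the left-hand side becomes $\mathrm{P}_{\mathrm{asym}}(d_1)\otimes\mathrm{P}_{\mathrm{sym}}(d_2) + \mathrm{P}_{\mathrm{sym}}(d_1)\otimes\mathrm{P}_{\mathrm{asym}}(d_2)$, which by Lemma~\ref{lemma:asym-projector-tensor-decompose} again equals $\mathrm{P}_{\mathrm{asym}}(d_1\otimes d_2)$, completing the proof.

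I expect the main obstacle to be purely bookkeeping: carefully tracking how the four tensor factors $\CC^{d_1}\otimes\CC^{d_1}\otimes\CC^{d_2}\otimes\CC^{d_2}$ are grouped, and verifying that $\Sigma_{d_2}$ indeed acts only on the last two factors (the two ``$d_2$'' copies) while the identity acts on the first two, so that Lemma~\ref{thm:schur-weyl-2-dim} can be invoked in the correct tensor slot. Proposition~\ref{fact:sigma-d-self-dual} (self-duality of $\Sigma_d$) is not strictly needed for this particular statement, though it may be used in the surrounding development; I would note that the only nontrivial input is the projector-fixed-point computation, which is immediate from the Schur--Weyl formula once the indexing is set up correctly.
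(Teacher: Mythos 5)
Your proposal is correct and matches the paper's (very terse) proof, which likewise just invokes Lemma~\ref{lemma:asym-projector-tensor-decompose} and Lemma~\ref{thm:schur-weyl-2-dim}; you have simply filled in the direct verification, namely that $\Sigma_{d_2}$ fixes $\mathrm{P}_{\mathrm{sym}}(d_2)$ and $\mathrm{P}_{\mathrm{asym}}(d_2)$ by idempotence and orthogonality of the projectors. Your observation that Proposition~\ref{fact:sigma-d-self-dual} is not needed here is also accurate.
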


\begin{proof}
    This can be verified directly by using \cref{lemma:asym-projector-tensor-decompose}
    and \cref{thm:schur-weyl-2-dim}.
\end{proof}

\begin{theorem}[Theorem 3.3 in~\cite{mullerhermes2022monotonicity}]
\label{thm:min-of-tensor-ot-cost}
    For $d_1$-dimensional quantum density matrices $\rho_1$
    and $\sigma_1$, and $d_2$-dimensional quantum matrices
    $\rho_2$ and $\sigma_2$,
    we have
    \[
        T(\rho_1\otimes\rho_2,\sigma_1\otimes\sigma_2)\geq T\left(\rho_1\otimes\frac{I_2}{2},\sigma_1\otimes\frac{I_2}{2}\right).
    \]
\end{theorem}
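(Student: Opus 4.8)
The plan is to start from an optimal coupling for the left-hand side, push it through a partial unitary twirl acting on the two copies of the second system, invoke \Cref{prop:fixed-point-asym} to see that the transport cost is unchanged, and use a short partial-trace computation to see that the twirled coupling is a coupling of the two states appearing on the right-hand side.

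In detail, I would identify the doubled Hilbert space $(\mathbb{C}^{d_1}\otimes\mathbb{C}^{d_2})^{\otimes 2}$ with $A_1\otimes A_2\otimes B_1\otimes B_2$, where $A_i\cong B_i\cong\mathbb{C}^{d_i}$, so that the cost function of $T(\rho_1\otimes\rho_2,\sigma_1\otimes\sigma_2)$ is the antisymmetric projector $\mathrm{P}_{\mathrm{asym}}(d_1\otimes d_2)$ on this space. Let $\pi\in\DD(A_1A_2B_1B_2)$ be an optimal coupling, i.e.\ $\tr_{B_1B_2}(\pi)=\rho_1\otimes\rho_2$, $\tr_{A_1A_2}(\pi)=\sigma_1\otimes\sigma_2$, and $\tr(\mathrm{P}_{\mathrm{asym}}(d_1\otimes d_2)\,\pi)=T(\rho_1\otimes\rho_2,\sigma_1\otimes\sigma_2)$; such a $\pi$ exists because the relevant states are density operators and the set of couplings is nonempty, closed and convex (\Cref{prop:partial-coupling}). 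Set $\mathcal{T}\eqdef id_{A_1}\otimes id_{B_1}\otimes\Sigma_{d_2}$, with $\Sigma_{d_2}$ the $UU$-twirl of \Cref{thm:schur-weyl-2-dim} acting on $A_2\otimes B_2$, and let $\pi'\eqdef\mathcal{T}(\pi)$.

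Two facts then finish the argument. First, the cost is preserved: $id$ and $\Sigma_{d_2}$ are self-dual for the Hilbert--Schmidt inner product (\Cref{fact:sigma-d-self-dual}), hence so is $\mathcal{T}$, and combining this with the fixed-point identity $\mathcal{T}(\mathrm{P}_{\mathrm{asym}}(d_1\otimes d_2))=\mathrm{P}_{\mathrm{asym}}(d_1\otimes d_2)$ from \Cref{prop:fixed-point-asym} gives $\tr(\mathrm{P}_{\mathrm{asym}}(d_1\otimes d_2)\,\pi')=\tr(\mathcal{T}(\mathrm{P}_{\mathrm{asym}}(d_1\otimes d_2))\,\pi)=\tr(\mathrm{P}_{\mathrm{asym}}(d_1\otimes d_2)\,\pi)$. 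Second, $\pi'$ is a coupling of $\rho_1\otimes\tfrac{I_2}{2}$ and $\sigma_1\otimes\tfrac{I_2}{2}$: since $\mathcal{T}$ acts on $\pi$ by conjugation with $I_{A_1}\otimes I_{B_1}\otimes U\otimes U$ averaged over the Haar measure on $\mathcal{U}_{d_2}$, and partial trace over $B_2$ is invariant under conjugating the $B_2$ factor by a unitary, the trace over $B_1B_2$ commutes through $\mathcal{T}$ up to the $U$ on $A_2$, so $\tr_{B_1B_2}(\pi')=\int_{\mathcal{U}_{d_2}}(I_{A_1}\otimes U)\,\tr_{B_1B_2}(\pi)\,(I_{A_1}\otimes U^\dagger)\,\mathrm{d}U=\rho_1\otimes\!\int_{\mathcal{U}_{d_2}}U\rho_2U^\dagger\,\mathrm{d}U=\rho_1\otimes\tfrac{I_2}{2}$, and symmetrically $\tr_{A_1A_2}(\pi')=\sigma_1\otimes\tfrac{I_2}{2}$. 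Hence $\pi'$ is a valid coupling and
\[
T\!\Big(\rho_1\otimes\tfrac{I_2}{2},\,\sigma_1\otimes\tfrac{I_2}{2}\Big)\leq\tr\!\big(\mathrm{P}_{\mathrm{asym}}(d_1\otimes d_2)\,\pi'\big)=T(\rho_1\otimes\rho_2,\sigma_1\otimes\sigma_2),
\]
which is the desired inequality.

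The main obstacle is bookkeeping rather than substance: one must be careful that $\Sigma_{d_2}$ in \Cref{prop:fixed-point-asym} twirls the two copies $A_2,B_2$ of the second system jointly (rather than acting inside a single factor $\mathbb{C}^{d_1}\otimes\mathbb{C}^{d_2}$), so that the fixed-point identity applies verbatim and so that the partial-trace computation yields exactly $\rho_1\otimes\tfrac{I_2}{2}$ and $\sigma_1\otimes\tfrac{I_2}{2}$ and not an entangled state. \Cref{lemma:asym-projector-tensor-decompose} together with \Cref{thm:schur-weyl-2-dim} is precisely what makes \Cref{prop:fixed-point-asym} hold, so once those are in place the remaining steps are routine.
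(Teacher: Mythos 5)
Your twirling argument is correct as far as it goes, and its first half (self-duality of $\Sigma_{d_2}$ plus the fixed-point identity of \Cref{prop:fixed-point-asym} to show the cost is unchanged) coincides with the first half of the paper's proof. The gap is in the last step: the Haar average $\int_{\mathcal{U}_{d_2}}U\rho_2U^\dagger\,\mathrm{d}U$ is the maximally mixed state $I_{d_2}/d_2$ on the \emph{$d_2$-dimensional} system, not the maximally mixed qubit state $I_2/2$ that appears in the statement. So your $\pi'$ is a coupling of $\rho_1\otimes I_{d_2}/d_2$ and $\sigma_1\otimes I_{d_2}/d_2$, and what you have proved is $T(\rho_1\otimes\rho_2,\sigma_1\otimes\sigma_2)\geq T(\rho_1\otimes I_{d_2}/d_2,\sigma_1\otimes I_{d_2}/d_2)$. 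That is genuinely weaker: the whole point of the theorem (and of the identity $T_s(\rho,\sigma)=T(\rho\otimes\frac{I_2}{2},\sigma\otimes\frac{I_2}{2})$ it is used for) is that the ancilla can be taken to be a single qubit, with dimension independent of $d_2$ and of the auxiliary state $\gamma$ in the infimum defining $T_s$. Closing the gap from $I_{d_2}/d_2$ down to $I_2/2$ by "applying the theorem again" would be circular.

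The missing idea is the dimensional reduction that occupies the second half of the paper's proof. After twirling, the $A_2B_2$ part of the coupling is supported on $\mathrm{span}\{P_{\mathrm{sym}}(d_2),P_{\mathrm{asym}}(d_2)\}$, and by \Cref{lemma:asym-projector-tensor-decompose} the cost $\tr[\Phi(\tau)\,\mathrm{P}_{\mathrm{asym}}(d_1\otimes d_2)]$ collapses to $\tr[X_{A_1B_1}\mathrm{P}_{\mathrm{asym}}(d_1)]+\tr[Y_{A_1B_1}\mathrm{P}_{\mathrm{sym}}(d_1)]$, which depends only on the two weight operators $X_{A_1B_1},Y_{A_1B_1}$ and not on $d_2$ at all. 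One then re-embeds these same weights against $P_{\mathrm{sym}}(2)$ and $P_{\mathrm{asym}}(2)$ to build a new state $\widetilde{\tau}$ on a $d_1\times d_1\times 2\times 2$-dimensional space, checks its marginals are exactly $\rho_1\otimes\frac{I_2}{2}$ and $\sigma_1\otimes\frac{I_2}{2}$, and verifies it achieves the same cost against $\mathrm{P}_{\mathrm{asym}}(d_1\otimes 2)$. Without this re-encoding step your argument does not establish the stated inequality.
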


\begin{proof}
    Let $\tau_{A_1A_2B_1B_2}$ be an optimal coupling state
    that gives the value $T(\rho_1 \otimes \rho_2, \sigma_1\otimes \sigma_2)$.
    Equivalently speaking, we have
    \[
        T\left(\rho_{1}\otimes\rho_{2},\sigma_{1}\otimes\sigma_{2}\right)=\mathrm{Tr}\left[\tau_{A_{1}B_{1}A_{2}B_{2}}\operatorname{P}_{\mathrm{asym}}(d_{1}\otimes d_{2})\right],
    \]
    with $\tau_{A_1A_2} = \rho_1\otimes \rho_2$,
    and $\tau_{B_1B_2} = \sigma_1\otimes \sigma_2$.
    By \cref{prop:fixed-point-asym} and \cref{fact:sigma-d-self-dual}, we have
    \[
    T\left(\rho_{1}\otimes\rho_{2},\sigma_{1}\otimes\sigma_{2}\right)=\mathrm{Tr}\left[\tau_{A_{1}B_{1}A_{2}B_{2}}\Phi(\operatorname{P}_{\mathrm{asym}}(d_{1}\otimes d_{2}))\right] = \mathrm{Tr}\left[\Phi(\tau_{A_{1}B_{1}A_{2}B_{2}})\operatorname{P}_{\mathrm{asym}}(d_{1}\otimes d_{2})\right]
    \]
    where $\Phi$ is the $id_{d_1}\otimes id_{d_1} \otimes \Sigma_{d_2}$ channel.
    Using \cref{lemma:asym-projector-tensor-decompose},
    we have
    \[
    \begin{aligned}
        \Phi(\tau_{A_{1}B_{1}A_{2}B_{2}})
    &= \tr_{A_2 B_2} [\tau_{A_{1}B_{1}A_{2}B_{2}} (I\otimes \operatorname{P}_{\mathrm{sym}}(d_{2}))] \otimes \frac{P_{\mathrm{sym}}(d_2)}{\mathrm{Tr}[P_{\mathrm{sym}}(d_2)]} \\ 
    &+ 
    \tr_{A_2 B_2} [\tau_{A_{1}B_{1}A_{2}B_{2}} (I\otimes \operatorname{P}_{\mathrm{asym}}(d_{2}))] \otimes \frac{P_{\mathrm{asym}}(d_2)}{\mathrm{Tr}[P_{\mathrm{asym}}(d_2)]}.
    \end{aligned}
    \]
    Therefore, we could write 
    \[
    X_{A_1B_1} =  \tr_{A_2 B_2} [\tau_{A_{1}B_{1}A_{2}B_{2}} (I\otimes \operatorname{P}_{\mathrm{asym}}(d_{2}))]
    \]
    and 
    \[
    Y_{A_1B_1} =  \tr_{A_2 B_2} [\tau_{A_{1}B_{1}A_{2}B_{2}} (I\otimes \operatorname{P}_{\mathrm{sym}}(d_{2}))],
    \]
    with
    \[
    X_{A_1B_1} + Y_{A_1B_1} = \tr_{A_2 B_2} [\tau_{A_{1}B_{1}A_{2}B_{2}} ].
    \]
    Then, we have
    \[
    \begin{aligned}
        \mathrm{Tr}\left[\Phi(\tau_{A_{1}B_{1}A_{2}B_{2}})\operatorname{P}_{\mathrm{asym}}(d_{1}\otimes d_{2})\right]
    &= \mathrm{Tr}\left[ \rbra*{X_{A_1B_1}  \otimes \frac{P_{\mathrm{sym}}(d_2)}{\mathrm{Tr}[P_{\mathrm{sym}}(d_2)]} }\operatorname{P}_{\mathrm{asym}}(d_{1}\otimes d_{2})\right] \\
    &+ \mathrm{Tr}\left[ \rbra*{Y_{A_1B_1}  \otimes \frac{P_{\mathrm{asym}}(d_2)}{\mathrm{Tr}[P_{\mathrm{asym}}(d_2)]} }\operatorname{P}_{\mathrm{asym}}(d_{1}\otimes d_{2})\right] \\
    &= \mathrm{Tr}\left[ X_{A_1B_1}  \operatorname{P}_{\mathrm{asym}}(d_{1})\right]
    +  \mathrm{Tr}\left[ Y_{A_1B_1}  \operatorname{P}_{\mathrm{sym}}(d_{1})\right].
    \\
    \end{aligned}
    \]
    The last step is by using \cref{lemma:asym-projector-tensor-decompose} and the orthogonality of 
    $\operatorname{P}_{\mathrm{asym}}$ and
    $\operatorname{P}_{\mathrm{sym}}$.

    Now, define the state
     \[
        \widetilde{\tau}_{A_{1}B_{1}A_{2}B_{2}}
    = X_{A_1B_1}\otimes \frac{P_{\mathrm{sym}}(2)}{\mathrm{Tr}[P_{\mathrm{sym}}(d_2)]} 
    + 
    Y_{A_1B_1}\otimes \frac{P_{\mathrm{asym}}(2)}{\mathrm{Tr}[P_{\mathrm{asym}}(2)]}.
    \]
    Note that $\widetilde{\tau}_{A_{1}B_{1}A_{2}B_{2}}$
    is a density operator of dimension $d_1\times d_2 \times 2\times 2$. We claim it is a coupling state
    of $\rho_1 \otimes I/2$ and $\sigma_1 \otimes I/2$,
    this is because
    \[
    \tau_{A_1A_2} = X_{A_1}\otimes I/2 + Y_{A_1}\otimes I/2 = \rho_1 \otimes I/2,
    \]
    and 
    \[
    \tau_{B_1B_2} = X_{B_1}\otimes I/2 + Y_{B_1}\otimes I/2 = \sigma_1 \otimes I/2.
    \]
    From a similar argument as above, 
    we know
    \[
    \mathrm{Tr}\left[\widetilde{\tau}_{A_{1}B_{1}A_{2}B_{2}}\operatorname{P}_{\mathrm{asym}}(d_{1}\otimes 2)\right] = \mathrm{Tr}\left[ X_{A_1B_1}  \operatorname{P}_{\mathrm{asym}}(d_{1})\right]
    +  \mathrm{Tr}\left[ Y_{A_1B_1}  \operatorname{P}_{\mathrm{sym}}(d_{1})\right]
    \]
    Therefore,
    we know 
    \[
    T\left(\rho_{1}\otimes\rho_{2},\sigma_{1}\otimes\sigma_{2}\right) = \mathrm{Tr}\left[\widetilde{\tau}_{A_{1}B_{1}A_{2}B_{2}}\operatorname{P}_{\mathrm{asym}}(d_{1}\otimes 2)\right] \ge T\left(\rho_{1}\otimes I/2 ,\sigma_{1}\otimes I/2 \right),
    \]
    as $T\left(\rho_{1}\otimes I/2 ,\sigma_{1}\otimes I/2 \right)$ is the minimum value
    of $\mathrm{Tr}\left[\tau'_{A_{1}B_{1}A_{2}B_{2}}\operatorname{P}_{\mathrm{asym}}(d_{1}\otimes 2)\right] $
    over all coupling states $\tau'_{A_{1}B_{1}A_{2}B_{2}}$.
\end{proof}

From \cref{thm:min-of-tensor-ot-cost},
we could immediately get
$T_s(\rho, \sigma) = T(\rho\otimes \frac{I_2}{2}, \sigma \otimes \frac{I_2}{2})$.
This is because
\[
    T(\rho\otimes \frac{I_2}{2}, \sigma \otimes \frac{I_2}{2}) \ge \inf_{\gamma} T(\rho\otimes \gamma, \sigma\otimes \gamma) 
    \ge  T(\rho\otimes \frac{I_2}{2}, \sigma \otimes \frac{I_2}{2}). 
\]
The first inequality is by the definition of $\inf$,
and the second is by applying \cref{thm:min-of-tensor-ot-cost}.

\section{Kantorovich-Rubinstein Duality}

We first review the Kantorovich duality theorem in the theory
of optimal transport.

\begin{theorem}[Kantorovich Duality, Theorem 5.10 in~\cite{Villani2008Optimal}]
\label{thm:Kantorovich-duality}
    Suppose$(\mathcal{X}, \mu)$ and $(\mathcal{Y}, \nu)$
    are two Polish spaces.
    Let $c:\mathcal{X}\times \mathcal{Y}\to \mathbb{R}^{+\infty}$
    be a lower semi-continuous function such that there
    exists some real-valued upper semi-continuous functions
    $a\in L^1(\mu)$ and $b\in L^1(\nu)$, with
    \[
    c(x,y) \ge a(x) + b(y), \forall (x,y)\in \mathcal{X}\times \mathcal{Y}.
    \]
    Then, we have
    \[
    \min_{\pi\in \Gamma(\mu, \nu)} 
    \int_{\mathcal{X} \times \mathcal{Y}} c(x,y) \textup{d}\pi(x,y)
    =
    \sup_{
    \substack{\psi\in\mathcal{C}_b(\mathcal{X}),\\
    \phi\in\mathcal{C}_b(\mathcal{X});\\
    \psi-\phi \le c}}
    \rbra{\int_{\mathcal{Y}} \phi(y) \textup{d}\nu(y)
    - \int_{\mathcal{X}} \psi(x) \textup{d}\mu(x)},
    \]
    where $\mathcal{C}_b$ denotes the set of continuous bounded 
    functions, and $L^1$ is the Lesbegue space of exponent $1$.
\end{theorem}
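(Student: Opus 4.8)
The plan is to prove this via the Fenchel--Rockafellar convex duality theorem on a suitable Banach space of continuous functions, which is the standard route to Kantorovich duality at this level of generality (see \cite{Villani2008Optimal}). I would first dispatch weak duality, the elementary half: for any coupling $\pi \in \Gamma(\mu,\nu)$ and any admissible pair $(\psi,\phi)$ with $\phi(y) - \psi(x) \le c(x,y)$ pointwise (the admissibility condition in the statement), integrating this inequality against $\pi$ and using that $\pi$ has marginals $\mu$ and $\nu$ gives
\[
\int_{\mathcal{Y}} \phi \, \mathrm{d}\nu - \int_{\mathcal{X}} \psi \, \mathrm{d}\mu = \int_{\mathcal{X}\times\mathcal{Y}} \big(\phi(y) - \psi(x)\big) \, \mathrm{d}\pi \le \int_{\mathcal{X}\times\mathcal{Y}} c \, \mathrm{d}\pi .
\]
Hence the dual supremum never exceeds the primal infimum; the substance of the theorem is the reverse inequality together with attainment of the infimum.

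Before invoking the duality theorem I would normalize the cost. Since $c(x,y) \ge a(x) + b(y)$ with $a \in L^1(\mu)$ and $b \in L^1(\nu)$, subtracting $a(x)+b(y)$ shifts the primal value by the fixed constant $\int a\, \mathrm{d}\mu + \int b\, \mathrm{d}\nu$, while replacing $\psi$ by $\psi + a$ and $\phi$ by $\phi - b$ preserves admissibility and shifts the dual objective by the same constant; so without loss of generality $c \ge 0$. A nonnegative lower semicontinuous function on a metric space is the pointwise increasing limit of a sequence $c_k$ of bounded Lipschitz functions, so I would establish duality for each bounded continuous $c_k$ and then pass to the limit. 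On the primal side $\int c\, \mathrm{d}\pi = \sup_k \int c_k \, \mathrm{d}\pi$ by monotone convergence, and the infimum over the weakly compact coupling set commutes with this supremum by a minimax/Dini argument; on the dual side monotonicity in $c$ supplies the matching limit. This reduces everything to bounded continuous cost.

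For bounded continuous $c$ I would set up Fenchel--Rockafellar on the Banach space $E = \mathcal{C}_b(\mathcal{X}\times\mathcal{Y})$, having first reduced to compact $\mathcal{X},\mathcal{Y}$ via Prokhorov tightness of $\Gamma(\mu,\nu)$ so that $E^\ast$ is identified with Radon measures by Riesz representation. On $E$ I would define two convex functionals: $\Theta$, equal to $0$ when $u(x,y) \ge -c(x,y)$ everywhere and $+\infty$ otherwise; and $\Xi$, equal to $\int_{\mathcal{X}} \psi\,\mathrm{d}\mu + \int_{\mathcal{Y}} \phi\,\mathrm{d}\nu$ when $u$ splits as $u(x,y) = -\psi(x) + \phi(y)$ and $+\infty$ otherwise. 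The constant function $u \equiv 1$ lies in $\mathrm{dom}\,\Theta$ and $\Xi$ is continuous there, so the constraint qualification holds and
\[
\inf_{u \in E} \big(\Theta(u) + \Xi(u)\big) = \max_{\pi \in E^\ast} \big(-\Theta^\ast(-\pi) - \Xi^\ast(\pi)\big).
\]
A direct computation of the Legendre transforms identifies, up to the sign conventions of the transform, $\Xi^\ast(\pi)$ as $0$ exactly when $\pi \ge 0$ has marginals $\mu,\nu$ and $\Theta^\ast(-\pi)$ as $\int c\,\mathrm{d}\pi$, so the identity becomes precisely the asserted equality, with the right-hand maximum certifying attainment.

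The main obstacle is the infinite-dimensional convex analysis in the third step: computing the conjugate functionals correctly, which hinges on the Riesz identification of $E^\ast$ with measures and therefore on first reducing to compact domains. On non-compact Polish spaces $\mathcal{C}_b^\ast$ is strictly larger than the measures, so the reduction via tightness of $\Gamma(\mu,\nu)$ (Prokhorov, exploiting that the marginals are fixed) is essential and must be combined with lower semicontinuity of $\pi \mapsto \int c\,\mathrm{d}\pi$ in the weak topology to turn the infimum into a genuine minimum. The secondary delicate point is the sup/limit interchange when promoting the result from bounded continuous to lower semicontinuous costs, which I would justify by monotone convergence together with the weak compactness of the coupling set.
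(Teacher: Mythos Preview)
The paper does not supply its own proof of this statement: it is quoted verbatim as Theorem~5.10 of \cite{Villani2008Optimal} and used as a black box to derive the subsequent Kantorovich--Rubinstein duality (\Cref{prop:Kantorovich-Rubinstein-duality}). So there is nothing to compare your proposal against in the paper itself.

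That said, your sketch is precisely the standard route taken in the cited reference: weak duality by direct integration, reduction to bounded continuous cost via monotone approximation, and strong duality for bounded continuous cost via Fenchel--Rockafellar on $\mathcal{C}_b$ after a compactness reduction. The delicate points you flag (identifying $\mathcal{C}_b^\ast$ with measures only after restricting to compact supports via Prokhorov tightness, and the sup/limit interchange) are exactly the ones Villani handles, so your plan is sound and aligned with the source the paper defers to.
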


Here, 
a Polish space refers to a topological space that is separable and completely metrizable.
With the above theorem, we have:

\begin{theorem}[Kantorovich-Rubinstein Duality]
\label{prop:Kantorovich-Rubinstein-duality-appendix}
Let $\mu,\nu$ be discrete probability distributions over $X$ and $Y$
respectively, and let $c:X\times Y \rightarrow [0,+\infty)$ be a
  bounded function. Then
$$\inf_{\theta \in \Gamma (\mu,\nu)} \E_\theta [c] =
  \sup_{(n,c_1,c_2)\in\mathcal{W}} (\E_\mu[c_1]+\E_\nu[c_2] -n)
  $$ where $\Gamma (\mu,\nu)$ denotes the set of probabilistic
  couplings of $\mu$ and $\nu$ and $(n, c_1,c_2) \in \mathcal{W}$ iff
  for every $x\in X$ and $y\in Y$, we have $0\leq c_1(x),c_2(y)$
  and $c_1(x)+c_2(y)\leq c(x,y)+n$.
\end{theorem}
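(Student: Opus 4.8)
The plan is to obtain the identity directly from the Kantorovich duality theorem (\Cref{thm:Kantorovich-duality}). First I would note that it applies here: a discrete space with the discrete topology is Polish, and a bounded nonnegative cost $c$ is (trivially) lower semi-continuous and dominated below by the integrable functions $a=b=0$. After the routine sign normalization that casts \Cref{thm:Kantorovich-duality} in the symmetric form with potentials on the two factors added rather than subtracted, it reads
\[
  \inf_{\theta\in\Gamma(\mu,\nu)}\E_\theta[c]\;=\;\sup_{\substack{f:X\to\RR,\ g:Y\to\RR\\ f(x)+g(y)\le c(x,y)\ \forall x,y}}\bigl(\E_\mu[f]+\E_\nu[g]\bigr).
\]
Moreover it is harmless to restrict this supremum to \emph{bounded} pairs $(f,g)$: replacing $f$ by the $c$-transform $g^c(x)=\inf_y\bigl(c(x,y)-g(y)\bigr)$ and then $g$ by $(g^c)^c$ keeps the pair feasible and never decreases $\E_\mu[f]+\E_\nu[g]$, and boundedness of $c$ together with finiteness of the relevant expectations forces the resulting potentials to be bounded (above, by evaluating at a point of the opposite support where the partner potential is finite; below, then, because $c\ge 0$). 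So it suffices to prove that the right-hand side above equals $\sup_{(n,c_1,c_2)\in\mathcal{W}}\bigl(\E_\mu[c_1]+\E_\nu[c_2]-n\bigr)$.

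For one inequality I would take any $(n,c_1,c_2)\in\mathcal{W}$ and set $f=c_1-n$, $g=c_2$; then $f(x)+g(y)=c_1(x)+c_2(y)-n\le c(x,y)$, so $(f,g)$ is feasible for the Kantorovich dual, and since $\mu$ is a probability measure $\E_\mu[f]+\E_\nu[g]=\E_\mu[c_1]+\E_\nu[c_2]-n$; hence the $\mathcal{W}$-supremum is at most $\inf_\theta\E_\theta[c]$. For the converse, take a bounded feasible pair $(f,g)$. Using the shift $f\mapsto f-t$, $g\mapsto g+t$, which preserves both feasibility and (because $\mu$ and $\nu$ are both probability measures) the objective $\E_\mu[f]+\E_\nu[g]$, I can arrange $\alpha:=\inf_x f(x)$ and $\beta:=\inf_y g(y)$ (both finite by boundedness) so that $n:=-(\alpha+\beta)\ge 0$. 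Then $c_1:=f-\alpha\ge 0$, $c_2:=g-\beta\ge 0$ satisfy $c_1(x)+c_2(y)=f(x)+g(y)+n\le c(x,y)+n$, so $(n,c_1,c_2)\in\mathcal{W}$ with $\E_\mu[c_1]+\E_\nu[c_2]-n=\E_\mu[f]+\E_\nu[g]$. Taking suprema and combining the two inequalities gives the claim; $\eRHL$-completeness for bounded postconditions (\Cref{prop:completeness-erhl}) then follows as indicated, by reading a bounded postcondition as the pointwise sum of a universally quantified split postcondition.

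The main obstacle is the reduction to bounded potentials when $\mu$ and $\nu$ may have countably infinite support: one must check that the double $c$-transform genuinely yields potentials bounded on \emph{both} sides and that only the vacuous cases $\E_\mu[f]=-\infty$ or $\E_\nu[g]=-\infty$ get discarded in the process. Everything else — the sign normalization of \Cref{thm:Kantorovich-duality}, the placement of the constant $n$, and the shift exploiting that $\mu,\nu$ have total mass one — is straightforward bookkeeping.
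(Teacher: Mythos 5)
Your proof follows the same route as the paper's: invoke \Cref{thm:Kantorovich-duality}, normalize signs to the symmetric form $\E_\mu[f]+\E_\nu[g]$ subject to $f(x)+g(y)\le c(x,y)$, and then translate between sign-unrestricted bounded potentials and nonnegative potentials with an additive constant $n$. Your final translation step is correct and is essentially the paper's (the paper adds a uniform bound $m$ to each potential and takes $n=2m$; you subtract the infima and take $n=-(\inf f+\inf g)$; both work). Two remarks. First, the detour you identify as the ``main obstacle''---restricting an a priori unbounded supremum to bounded pairs via the double $c$-transform---is unnecessary: the cited duality theorem already takes the supremum over $\mathcal{C}_b\times\mathcal{C}_b$, and on a countable set with the discrete topology every bounded function is continuous, so the sign normalization lands you directly in the bounded-pair supremum with nothing left to check. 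Carrying out the $c$-transform argument on a countably infinite support is exactly the kind of work the paper avoids by citing the theorem in its $\mathcal{C}_b$ form, and your proposal leaves that step unverified. Second, the shift $f\mapsto f-t$, $g\mapsto g+t$ leaves $\inf f+\inf g$ invariant, so it cannot be used to arrange $n=-(\inf f+\inf g)\ge 0$; if $c\ge 1$ everywhere and $f=g=1/2$ you genuinely get $n<0$. This is harmless here because the definition of $\mathcal{W}$ in \Cref{prop:Kantorovich-Rubinstein-duality-appendix} places no sign constraint on $n$ (and the one inequality that could be endangered, namely that the $\mathcal{W}$-supremum is at most the primal infimum, does not use $n\ge 0$), but the claim as written is false and should be dropped; if you do want $n\ge 0$, use the paper's construction instead.
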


\begin{proof}
    We first note that a countable set with the discrete topology is always a Polish space,
    and bounded functions on the discrete space
    is always continuous.
    Now, 
    by applying \cref{thm:Kantorovich-duality}
    with $\psi = -b_1$ and $\phi = b_2$,
    we have
    \[
    \inf_{\theta \in \Gamma (\mu,\nu)} \E_\theta [c] =
  \sup_{(b_1,b_2)\in\mathcal{B}} (\E_\mu[b_1]+\E_\nu[b_2] ),
    \]
    where 
    $(b_1, b_2)\in \mathcal{B}$
    iff $b_1$ and $b_2$
    are bounded functions,
    and for every $x\in X$
    and $y\in Y$,
    we have 
    $b_1(x) + b_2(y)\le c(x,y)$.
   
    We now show 
    \[
     \sup_{(b_1,b_2)\in\mathcal{B}} (\E_\mu[b_1]+\E_\nu[b_2] )
     =
     \sup_{(n,c_1,c_2)\in\mathcal{W}} (\E_\mu[c_1]+\E_\nu[c_2] -n).
    \]
    Since 
    $b_1$ and $b_2$ are bounded,
    there exists some 
    integer
    $m$
    satisfying
    $\abs{b_1(x)}\le m$
    and 
    $\abs{b_2(y)}\le m$.
    We can then 
    write 
    $c_1(x) = b_1(x)+m$
    and 
    $c_2(y) = b_2(y)+m$
    with $n = 2m$,
    getting
    $0\le c_1(x)$
    and 
    $0\le c_2(y)$.
    This gives
    \[
     \sup_{(b_1,b_2)\in\mathcal{B}} (\E_\mu[b_1]+\E_\nu[b_2] )
     \le
     \sup_{(n,c_1,c_2)\in\mathcal{W}} (\E_\mu[c_1]+\E_\nu[c_2] -n).
    \]
    
    For the other direction,
    observe that for fixed $n$, 
    $c$ is bounded and
    $c_1(x) + c_2(y) \le c(x,y)+n$,
    we can
    know
    $c_1$ and $c_2$
    is bounded.
    Therefore,
    write 
    $b_1 = c_1$
    and 
    $b_2 = c_2-n$
    gives the desired result.

\end{proof}

\section{Postponed Technical Proofs}
\label{sec:postponed-technical-proofs}

In this section, we give proofs of the lemmas
that are omitted in the previous part of the
appendix.

\begin{lemma}[\Cref{{lemma:basic-properties-of-ops-of-ivp}}]
We have the following properties for $A, A_1, A_2\in \PosI(\HH)$:
\begin{itemize}
  \item Scalar product $cA$ for $c\in \mathbb{R}^{+\infty}$ is defined such that for all $\ket{\psi}$, $\bra{\psi}cA\ket{\psi}= c\bra{\psi}A\ket{\psi}$.
  \item Addition $A_1+A_2$ such that for all $|\psi\>\in \HH$, $\<\psi|(A_1+A_2)|\psi\> = \<\psi|A_1|\psi\>+\<\psi|A_2|\psi\>$.
  \item Tensor product $A_1\otimes A_2$ such that for all $|\psi_1\>, |\psi_2\>$, $(\<\psi_1|\otimes\<\psi_2|)(A_1\otimes A_2)(|\psi_1\>\otimes|\psi_2\>) = (\<\psi_1|A_1|\psi_1\>)\cdot(\<\psi_2|A_2|\psi_2\>)$.
  \item Let $M$ be a linear operator with $\HH$ as its domain, $M^\dagger AM$ can be defined such that for all $|\psi\>$, $\<\psi|(M^\dagger AM)|\psi\> = \<\phi|A|\phi\>$ where $|\phi\> = M|\psi\>$.
  \item For $P\in \Pos$ 
  with decomposition $P = \sum_ia_i|\psi_i\>\<\psi_i|$ ($0\le a_i$), 
  the trace is $\tr(AP) = \sum_i a_i \<\psi_i| A |\psi_i\>$. 
  Note that the value is unique for any decomposition.
  \item For $\EE\in \QO$ (more generally, CP maps) with Kraus operators $\{E_i\}$, $\EE^\dagger(A) = \sum_i E_i^\dagger A E_i$. Note that it is unique for arbitrary Kraus operators.
  \item $A_1 = A_2$ if for all $|\psi\>$, $\bra{\psi}A_1\ket{\psi} = \bra{\psi}A_2\ket{\psi}$.
  \item $A_1\sqsubseteq A_2$ if for all $|\psi\>$, $\<\psi|A_1|\psi\> \le \<\psi|A_2|\psi\>$.
\end{itemize}
\end{lemma}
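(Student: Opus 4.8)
The plan is to read every clause of the lemma as a statement about the quadratic form $|\psi\rangle\mapsto\langle\psi|A|\psi\rangle$ of an infinite-valued predicate, so I would begin with the two foundational clauses: that an element of $\PosI(\HH)$ is completely determined by its quadratic form, and that the extended L\"owner order is characterised pointwise. For the equality clause the ``only if'' direction is immediate; for ``if'', write $A_j\triangleq(P_{A_j},X_{A_j})$ as in \Cref{lem:IVP-rep} and argue in two steps. First $X_{A_1}=X_{A_2}$: if these subspaces differed there would be a nonzero $|\psi\rangle\in X_{A_1}\cap X_{A_2}^{\bot}$, and then $\langle\psi|A_1|\psi\rangle=+\infty$ while $\langle\psi|A_2|\psi\rangle=\langle\psi|P_{A_2}|\psi\rangle<+\infty$, a contradiction. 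Second, with $X\triangleq X_{A_1}=X_{A_2}$, the quadratic forms of $P_{A_1}$ and $P_{A_2}$ agree on $X^{\bot}$ (hence everywhere, since both are supported there), so $P_{A_1}=P_{A_2}$ because a positive operator is determined by its quadratic form via polarisation. The order clause is proved in exactly the same way, splitting a test vector into its $X$- and $X^{\bot}$-components. Once this extensionality is available, every ``can be defined such that'' clause is legitimate: it suffices, for each construction, to exhibit an element of $\PosI(\HH)$ whose quadratic form is the prescribed one.

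For scalar multiplication and addition I would simply unfold \Cref{def:operations-infinite-valued-pre}. For $cA$ with $0<c<+\infty$ one has $cA=(cP_A,X_A)$, and $\langle\psi|cA|\psi\rangle=c\langle\psi|A|\psi\rangle$ follows by cases on whether $X_A|\psi\rangle=0$, using $c\cdot(+\infty)=+\infty$; the cases $c=0$ and $c=+\infty$ use the conventions $0\cdot(+\infty)=0$ and $(+\infty)\cdot a=+\infty$ for $a>0$. For $A_1+A_2=(X^{\bot}(P_{A_1}+P_{A_2})X^{\bot},X)$ with $X=X_{A_1}\vee X_{A_2}$, the key observation is that $X|\psi\rangle=0$ iff $X_{A_1}|\psi\rangle=0$ and $X_{A_2}|\psi\rangle=0$ (the orthogonal complement of a join is the meet of the complements, \Cref{def:vee-wedge-perp}); on that event $X^{\bot}|\psi\rangle=|\psi\rangle$ and both sides equal $\langle\psi|P_{A_1}|\psi\rangle+\langle\psi|P_{A_2}|\psi\rangle$, while off it both sides are $+\infty$ by the convention $a+(+\infty)=+\infty$. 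The operator $M^{\dagger}AM$ I would construct explicitly: set $X'\triangleq(\ker(X_AM))^{\bot}$, so that $X_AM$ is injective on $X'$, and put $M^{\dagger}AM\triangleq(\Pi_{X'^{\bot}}M^{\dagger}P_AM\,\Pi_{X'^{\bot}},\,X')$; one checks that $X_AM|\psi\rangle\neq 0$ iff $|\psi\rangle$ has a nonzero $X'$-component, and that on $X'^{\bot}$ the finite part reproduces $\langle M\psi|P_A|M\psi\rangle$, so the quadratic form is $\langle\psi|M^{\dagger}AM|\psi\rangle=\langle M\psi|A|M\psi\rangle$ as required; well-definedness is then automatic by the extensionality clause.

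The main obstacle is the tensor-product clause, where $A_1\otimes A_2=(P_{A_1}\otimes P_{A_2},X)$ with $X$ the join of $\supp(P_{A_1})\otimes X_{A_2}$, $X_{A_1}\otimes\supp(P_{A_2})$, and $X_{A_1}\otimes X_{A_2}$. For a product vector $|\psi_1\rangle\otimes|\psi_2\rangle$ I would introduce $p_j\triangleq\langle\psi_j|P_{A_j}|\psi_j\rangle$ and the boolean $x_j\triangleq[\,X_{A_j}|\psi_j\rangle\neq 0\,]$, note that $|\psi_j\rangle$ overlaps $\supp(P_{A_j})$ exactly when $p_j>0$, use the fact that $|\psi_1\rangle\otimes|\psi_2\rangle\perp Y_1\otimes Y_2$ iff $|\psi_1\rangle\perp Y_1$ or $|\psi_2\rangle\perp Y_2$, and then carry out the boolean bookkeeping: $(\langle\psi_1|\otimes\langle\psi_2|)(A_1\otimes A_2)(|\psi_1\rangle\otimes|\psi_2\rangle)=+\infty$ iff $(x_1\wedge x_2)\vee(x_1\wedge p_2>0)\vee(x_2\wedge p_1>0)$, which is exactly the negation of ``$|\psi_1\rangle\otimes|\psi_2\rangle$ is orthogonal to all three joinands''; on the complementary (finite) event both sides equal $p_1p_2$, checked in the subcases $\neg x_1\wedge\neg x_2$, $\neg x_1\wedge x_2$ (which forces $p_1=0$), and $x_1\wedge\neg x_2$ (which forces $p_2=0$), the last two invoking $0\cdot(+\infty)=0$. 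The trace clause then follows from \Cref{prop:support}, which gives $\supp(P)=\bigvee_{a_i>0}\spanv\{|\psi_i\rangle\}$: if $X_AP=0$ every such $|\psi_i\rangle$ lies in $X_A^{\bot}$, so $\sum_i a_i\langle\psi_i|A|\psi_i\rangle=\sum_i a_i\langle\psi_i|P_A|\psi_i\rangle=\tr(P_AP)$, manifestly independent of the decomposition; if $X_AP\neq 0$ some $|\psi_i\rangle$ with $a_i>0$ lies outside $X_A^{\bot}$, making the sum $+\infty$, again matching the definition and decomposition-independent. Finally the Kraus clause follows from $\langle\psi|\bigl(\sum_iE_i^{\dagger}AE_i\bigr)|\psi\rangle=\sum_i\sum_k\lambda_k\langle\psi|E_i^{\dagger}X_kE_i|\psi\rangle=\sum_k\lambda_k\langle\psi|\EE^{\dagger}(X_k)|\psi\rangle$, where the interchange of the nonnegative double sum is Tonelli and the final expression depends only on $\EE$ and $A$; the extensionality clause then shows $\EE^{\dagger}(A)$ is well defined.
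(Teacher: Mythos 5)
Your proposal is correct and follows essentially the same route as the paper's own proof: represent each $A\in\PosI(\HH)$ as $(P_A,X_A)$ via \Cref{lem:IVP-rep}, case-split on whether the test vector meets the infinite eigenspace, and reduce each ``can be defined such that'' clause to exhibiting an operator with the prescribed quadratic form together with the extensionality (equality) clause; your handling of $M^\dagger AM$ (projecting the finite part onto $X'^\perp$) and of the Kraus clause (Tonelli plus representation-independence of $\EE^\dagger$ on projectors) is in fact slightly more careful than the paper's. The one imprecision -- shared with the paper -- is the claim that $X_{A_1}\neq X_{A_2}$ yields a nonzero $\ket{\psi}\in X_{A_1}\cap X_{A_2}^\bot$, which can fail for two distinct subspaces; the correct patch is to note that the quadratic form of $A_j$ is finite exactly on $X_{A_j}^\bot$, so $X_{A_1}^\bot=X_{A_2}^\bot$, or equivalently to pick $\ket{\psi}\in X_{A_2}^\bot$ with $X_{A_1}\ket{\psi}\neq 0$.
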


\begin{proof}
  In the following, let $A = \sum_{i} \lambda_{i} X_{i}$, where $X_{i}$ is the
  projection onto the corresponding eigenspace.
  We will also write $A$ as $A = P_{A} + \infty X_{A}$, where
  $P_{A} = \sum_{\lambda_{i}< +\infty} \lambda_{i} X_{i}$ is the ``finite''
  component of $A$, and $X_{A} = \sum_{\lambda_{i} = +\infty} X_{i}$ is the
  ``infinite'' space of $A$.
  Similarly, we write
  $A_{1} = \sum_{i} \lambda^{(1)}_{i} X_{i}^{(1)} = P_{A_{1}} + \infty\cdot X_{A_{1}}$
  and
  $A_{2} = \sum_{i} \lambda_{i}^{(2)} X_{i}^{(2)} = P_{A_{2}} + \infty\cdot X_{A_{2}}$.
  \begin{itemize}
    \item For the scalar product $cA$, we define it as
          $cA = \sum_{i} c\lambda_{i} X_{i}$.
          By the definition of inner product, it is direct that
          $\bra{\psi}cA\ket{\psi}= c\bra{\psi}A\ket{\psi}$.
    \item For the addition operation, we define it as
          $A_{1}+ A_{2} = X^{\bot}(P_{A_{1}}+P_{A_{2}})X^{\bot} + \infty\cdot X$,
          where $X = X_{A_{1}}\vee X_{A_{2}}$.
          We now verify that
          $\bra{\psi} A_{1}+A_{2}\ket{\psi} = \bra{\psi} A_{1}\ket{\psi} + \bra{\psi} A_{2}\ket{\psi}$.
          If $\bra{\psi} A_{1}+A_{2}\ket{\psi} < +\infty$, then we know
          $\ket{\psi}\in X^{\bot}$, meaning that
          $X^{\bot}\ket{\psi} = \ket{\psi}$.
          Then, by the definition of $X$, we know
          $\bra{\psi} A_{1}\ket{\psi} < +\infty$ and
          $\bra{\psi} A_{2}\ket{\psi} < +\infty$.
          Thus, in this case the condition holds.
          Now, consider the case
          when $\bra{\psi} A_{1}+A_{2}\ket{\psi} = +\infty$.
          Then, we know
          $\bra{\psi} X\ket{\psi} > 0 $.
          We claim either
          $\bra{\psi} X_{1}\ket{\psi} > 0 $
          or
          $\bra{\psi} X_{2}\ket{\psi} > 0 $,
          because otherwise
          $\ket{\psi} \in X_{1}^{\bot} \cap X_{2}^{\bot}$,
          contradicting
          $\bra{\psi} X\ket{\psi} > 0 $.
    \item For the tensor product,
          we define it as
          $P_{A_1} \otimes P_{A_2}+\infty\cdot X $,
          where $X = \left( \supp \left( P_{A_{1}} \right)\otimes X_{A_{2}}\right)\vee \left( X_{A_{1}} \otimes \supp \left( P_{A_{2}} \right) \right)\vee \left(X_{A_{1}}\otimes X_{A_{2}}\right)$.
          We now verify that it satisfies the property.
          For any $\ket{\psi_{1}}$ and $\ket{\psi_{2}}$,
          first suppose that
          $\bra{\psi_{1}}\bra{\psi_{2}}A_{1}\otimes A_{2}\ket{\psi_{1}}\ket{\psi_{2}} =0$.
          We know that
          $X \ket{\psi_{1}}\ket{\psi_{2}} = 0$
          and $\bra{\psi_{1}}\bra{\psi_{2}}P_{A_{1}}\otimes P_{A_{2}}\ket{\psi_{1}}\ket{\psi_{2}} = 0$.
          Therefore,
          without loss of generality
          we can assume
          $\bra{\psi_{1}}P_{A_{1}}\ket{\psi_{1}} = 0$.
          From
          $X \ket{\psi_{1}}\ket{\psi_{2}} = 0$,
          we know
          $X_{A_{1}} \ket{\psi_{1}} = 0$,
          meaning that
          $\bra{\psi_{1}}A_{1}\ket{\psi_{1}} = 0$
          as we want.

          Now, consider the case
          $0< \bra{\psi_{1}}\bra{\psi_{2}}A_{1}\otimes A_{2}\ket{\psi_{1}}\ket{\psi_{2}} < +\infty$.
          By definition, we know
          $X \ket{\psi_{1}}\ket{\psi_{2}} = 0$,
          $\bra{\psi_{1}}P_{A_{1}}\ket{\psi_{1}} \neq 0$,
          $\bra{\psi_{2}}P_{A_{2}}\ket{\psi_{2}} \neq 0$,
          We first claim
          $\bra{\psi_{1}}A_{1}\ket{\psi_{1}} \neq 0$,
          and
          $\bra{\psi_{2}}A_{2}\ket{\psi_{2}} \neq 0$.
          If not, without loss of generality,
          we assume
          $\bra{\psi_{1}}A_{1}\ket{\psi_{1}} = 0$.
          It gives
          $\ket{\psi_{1}}\in \left( \supp(P_{A_{1}}) \vee X_{A_{1}} \right)$,
          meaning that $X\ket{\psi_{1}}\ket{\psi_{2}} = 0$.
          We then know $\bra{\psi_{1}}\bra{\psi_{2}}A_{1}\otimes A_{2}\ket{\psi_{1}}\ket{\psi_{2}} = 0 $,
          a contradiction.
          We then claim
          $\bra{\psi_{1}}A_{1}\ket{\psi_{1}} < +\infty$,
          and
          $\bra{\psi_{2}}A_{2}\ket{\psi_{2}} < +\infty$.
          Suppose for simplicity that
          $\bra{\psi_{1}}A_{1}\ket{\psi_{1}} = +\infty$,
          we know
          $X_{A_{1}}\ket{\psi_{1}} \ne 0$.
          Combined with
          $\bra{\psi_{2}}A_{2}\ket{\psi_{2}} \neq 0$.
          we conclude that
          projecting $\ket{\psi_{1}}\ket{\psi_{2}}$
          onto the space
          $X_{A_{1}}\otimes \supp \left( P_{A_{2}} \right) \vee X_{A_{1}}\otimes X_{A_{2}}$
          is non-zero,
          meaning
          $\bra{\psi_{1}}\bra{\psi_{2}}A_{1}\otimes A_{2}\ket{\psi_{1}}\ket{\psi_{2}} = +\infty$,
          a contradiction.
          Since both $\bra{\psi_{1}}A_{1}\ket{\psi_{1}}$ and $\bra{\psi_{2}}A_{2}\ket{\psi_{2}}$
          are non-zero and finite,
          a direct computation will give the equation we want.

          For the case when
          $\bra{\psi_{1}}\bra{\psi_{2}}A_{1}\otimes A_{2}\ket{\psi_{1}}\ket{\psi_{2}} = +\infty$,
          we know
          $X\ket{\psi_{1}}\ket{\psi_{2}}\ne 0$.
          Suppose
          $\bra{\psi_{1}}A_{1}\ket{\psi_{1}} < +\infty$,
          and
          $\bra{\psi_{2}}A_{2}\ket{\psi_{2}} < +\infty$.
          we could conclude
          $X_{A_{1}}\ket{\psi_{1}} = X_{A_{2}}\ket{\psi_{2}}=  0$,
          giving $X\ket{\psi_{1}}\ket{\psi_{2}}=  0$,
          a contradiction.
          Thus, without loss of generality,
          we can assume
          $\bra{\psi_{1}}A_{1}\ket{\psi_{1}} = +\infty$.
          We then claim
          $\bra{\psi_{2}}A_{2}\ket{\psi_{2}} \ne 0$.
          Otherwise we will get
          $X\ket{\psi_{1}}\ket{\psi_{2}}=  0$.
          We then conclude
          $\bra{\psi_{1}}A_{1}\ket{\psi_{1}} \bra{\psi_{2}}A_{2}\ket{\psi_{2}}= +\infty$
          as we want.

    \item $MAM^{\dagger}$ can be defined as $MP_{A} M^{\dagger} + \infty\cdot \supp(MXM^{\dagger})$.
          For any $\ket{\phi} = M\ket{\psi}$,
          if $\bra{\phi}A\ket{\phi} < +\infty$, we know that $X\ket{\phi} = 0$,
          meaning that
          $\bra{\psi}MXM^{\dagger}\ket{\psi} = 0$
          Thus,
          $\bra{\psi} MAM^{\dagger}\ket{\psi} = \bra{\psi}MP_{A}M^{\dagger} \ket{\psi}< +\infty$,
          and by definition
          $\bra{\psi} MAM^{\dagger}\ket{\psi} = \bra{\psi}MP_{A}M^{\dagger} \ket{\psi} = \bra{\phi}A\ket{\phi} $ as we want.
          If $\bra{\phi}A\ket{\phi} = +\infty$,
          we know that
          $X\ket{\phi} \ne 0$,
          or equivalently,
          $XM^{\dagger}\ket{\psi} \ne 0$,
          meaning $\bra{\psi} MAM^{\dagger}\ket{\psi} = +\infty$ as we want.

    \item For $P\in \Pos$,
          define $\tr(AP) = \tr(P_{A} P)$ if $X_{A} P = 0$, and $+\infty$ otherwise.
          If $P = \sum_{i} a_{i} \ket{\psi_{i}}\bra{\psi_{i}}$ with $a_{i}\ge 0$.
          We first notice that
          it suffices to consider the set of $j$ with $a_{j} >0$
          as $0\cdot +\infty = 0$.
          We then notice that
          $\supp (P) =\text{span}_{i} \left\{ \ket{\psi_{i}} \right\}$.
          Thus,
          if $\tr(AP) < +\infty$,
          then $X_{A}\ket{\psi_{i}} = 0$ for any $i$,
          with $\tr(AP) = \tr(P_{A}P) = \sum_{i} a_{i}\bra{\psi_{i}} P_{A} \ket{\psi_{i}}$ as we want.
          If $\tr(AP) = +\infty$,
          then $X_{A}\cap \spanv_{i} \left\{ \ket{\psi_{i}} \right\} \ne 0$,
          meaning that
          there must be some $i$ with $\bra{\psi_{i}}A \ket{\psi_{i}} = +\infty$.
          Then, we know $\sum_{i} a_{i} \bra{\psi_{i}}A \ket{\psi_{i}} = +\infty $ as we want.
    \item For a CP map $\mathcal{E}^{\dagger}$ and $A = P_{A}+ \infty X_{A}$,
          define $\mathcal{E}^{\dagger}(A) = X^{\bot}\mathcal{E}(P_{A})X^{\bot}+ \infty X$,
          where $X = \supp(\mathcal{E}(X_{A}))$.
          For $\mathcal{E}$ with Kraus operators $E_{i}$,
          we know $X$ can be written as $X = \supp(\sum_{i} E_{i}^{\dagger} X_{A} E_{i})  = \vee_{i} \supp (E_{i}^{\dagger} X_{A} E_{i})$.
          By definition, we know in this case
          $\mathcal{E}^{\dagger}(A)$ can be written as $\sum_{i} E_{i}^{\dagger}A E_{i}$ as we desired.

    \item We first prove that $X_{A_{1}} =  X_{A_{2}}$.
          If not, let $\ket{\psi}$ be a normalized state in $X_{A_{1}}\cap X_{A_{2}}^{\bot}$.
          We know $X_{A_{2}}\ket{\psi} = 0$ but $X_{A_{1}}\ket{\psi} = \ket{\psi}$.
          This means $\bra{\psi}A_{1}\ket{\psi} = +\infty$ but  $\bra{\psi}A_{2}\ket{\psi} < +\infty$,
          a contradiction.
          Now, given that $X_{A_{1}} =  X_{A_{2}}$,
          we know that for any $\ket{\psi} \in  X_{A_{1}}^{\bot} \supseteq (\supp(P_{A_{1}})\vee \supp(P_{A_{2}}))$,
          $\bra{\psi}P_{A_{1}}\ket{\psi} = \bra{\psi}P_{A_{2}}\ket{\psi}$,
          meaning that $P_{A_{1}} = P_{A_{2}}$ as we desired.
    \item We extend the L\"{o}wner order of positive semi-definite operators to infinite-valued positive semi-definite operators as follows:
          for $A, B\in \PosI(\mathcal{H})$,
          we say $A\sqsubseteq B$ if for any $\ket{\psi}$, $\bra{\psi} A\ket{\psi} \le \bra{\psi} B\ket{\psi}$.
          By the definition of trace,
          it is clear that $A\sqsubseteq B$ if for any $\rho$, $\tr(A\rho)\le \tr(B\rho)$.
          It is clear that it satisfies reflexivity and transitivity.
          For antisymmetry, it follows from the previous property.
          Thus it is a partial order.
  \end{itemize}
\end{proof}

\begin{lemma}[\Cref{lem: IVP algebraic}]%
  In the following, 
  let $a,b,c\in \mathbb{R}^{+\infty}$, $A,A_1,A_2 \in\PosI$, $M,M_1,M_2,\cdots\in \LL$, and $P,P_1,P_2\cdots\in\Pos$.
  We have the following properties:
  \begin{itemize}
    \item $0A = 0$, $1A = A$, $a(bA) = (ab)A$;
    \item $0+A = A+0 = A$, $A_1 + A_2 = A_2 + A_1$, $A_1 + (A_2 + A_3) = (A_1 + A_2) + A_3$;
    \item $0\otimes A = A\otimes 0 = 0$, $A_1\otimes (A_2 \otimes A_3) = (A_1 \otimes A_2) \otimes A_3$;
    \item $A\otimes (cA_1 + A_2) = c(A\otimes A_1) + (A\otimes A_2)$;
          $(cA_1 + A_2)\otimes A = c(A_1\otimes A) + (A_2\otimes A)$;
    \item $0^\dagger A0 = 0$, $M_2^\dagger (M_1^\dagger AM_1) M_2 = (M_1M_2)^\dagger A (M_1M_2)$; 
          $M^\dagger(cA_1+A_2)M = c(M^\dagger A_1M) + M^\dagger A_2M$;
    \item $(M_1\otimes M_2)^\dagger (A_1\otimes A_2) (M_1\otimes M_2) = 
           (M_1^\dagger A_1M_1)\otimes (M_2^\dagger A_2M_2)$;
    \item $\tr(A(cP_1+P_2)) = c\tr(AP_1) + \tr(AP_2)$;
          $\tr((cA_1 + A_2)P) = c\tr(A_1P) + \tr(A_2P)$;
    \item $\tr((A_1\otimes A_2)(P_1\otimes P_2)) = \tr(A_1P_1)\tr(A_2P_2)$;
          $\tr((M^\dagger AM)P) = \tr(A (MPM^\dagger))$.
    \item $\tr((A\otimes I)P) = \tr(A\tr_2(P))$;
          $\tr((I\otimes A)P) = \tr(A\tr_1(P))$;
    \item $\tr(A\ket{\phi}\bra{\phi}) = \bra{\phi}A\ket{\phi}$.
    \item $A_1=A_2$ iff for all $P\in\Pos$ (or $P\in\DD$) such that $\tr(A_1 P) = \tr(A_2 P)$;
    \item $A_1\sqsubseteq A_2$ iff for all $P\in\Pos$ (or $P\in\DD$) such that $\tr(A_1 P) \le \tr(A_2 P)$;
    \item $A_1\sqsubseteq A_2$ implies $M^\dagger A_1M\sqsubseteq M^\dagger A_2M$; $A_1\sqsubseteq A_2$ and $A_3\sqsubseteq A_4$ implies $cA_1+A_3\sqsubseteq cA_2+A_4$.
  \end{itemize}
  As direct corollaries, for CP map $\EE, \EE_1,\EE_2$,
  \begin{itemize}
    \item $\tr(A\EE(P)) = \tr(\EE^\dagger(A) P)$;
          $A_1\sqsubseteq A_2$ implies $\EE (A_1)\sqsubseteq \EE (A_2)$;
    \item $(c\EE_1 + \EE_2)(A) = c\EE_1(A) + \EE_2(A)$;
          $\EE(cA_1+A_2) = c\EE(A_1) + \EE(A_2)$;
    \item $\EE_2(\EE_1(A)) = (\EE_2\circ\EE_1)(A)$;
          $(\EE_1\otimes\EE_2)(A_1\otimes A_2) = \EE_1(A_1)\otimes \EE_2(A_2)$.
  \end{itemize}
\end{lemma}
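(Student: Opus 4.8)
The plan is to reduce almost every identity to a statement about scalar values in $\mathbb{R}^{+\infty}$, exploiting the characterizations already established in \Cref{lemma:basic-properties-of-ops-of-ivp}: an element of $\PosI(\HH)$ is determined by the map $\ket{\psi}\mapsto\bra{\psi}A\ket{\psi}$; one has $A_1\sqsubseteq A_2$ iff $\bra{\psi}A_1\ket{\psi}\le\bra{\psi}A_2\ket{\psi}$ for all $\ket{\psi}$; and $\tr(AP)=\sum_i a_i\bra{\psi_i}A\ket{\psi_i}$ for any decomposition $P=\sum_i a_i\ket{\psi_i}\bra{\psi_i}$ with $a_i\ge 0$, in particular $\tr(A\ket{\phi}\bra{\phi})=\bra{\phi}A\ket{\phi}$. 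Throughout I would also use the canonical representation $A=P_A+(+\infty)\cdot X_A$ of \Cref{lem:IVP-rep}.

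First, the scalar, addition, and conjugation identities (bullets one, two, five) together with their mutual compatibilities follow by fixing an arbitrary $\ket{\psi}$, expanding both sides via the defining properties, and invoking the corresponding identity in $\mathbb{R}^{+\infty}$ under the stated conventions ($(+\infty)\cdot 0=0$, $(+\infty)+a=+\infty$, associativity and distributivity of $+$ and $\cdot$). For instance $\bra{\psi}M_2^\dagger(M_1^\dagger A M_1)M_2\ket{\psi}=\bra{M_1M_2\psi}A\ket{M_1M_2\psi}=\bra{\psi}(M_1M_2)^\dagger A(M_1M_2)\ket{\psi}$, and $\bra{\psi}M^\dagger(cA_1+A_2)M\ket{\psi}=c\bra{M\psi}A_1\ket{M\psi}+\bra{M\psi}A_2\ket{M\psi}$. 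The order facts in the last bullet are the same computation with $=$ replaced by $\le$, using monotonicity of $+$ and $\cdot$ on $\mathbb{R}^{+\infty}$; e.g. $A_1\sqsubseteq A_2$ gives $\bra{\psi}M^\dagger A_1M\ket{\psi}=\bra{M\psi}A_1\ket{M\psi}\le\bra{M\psi}A_2\ket{M\psi}=\bra{\psi}M^\dagger A_2M\ket{\psi}$.

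Second, the trace identities (bullets seven through ten) are handled by writing $A=P_A+(+\infty)\cdot X_A$, recalling that $\tr(AP)=\tr(P_AP)$ when $X_AP=0$ and $+\infty$ otherwise, and reducing to standard finite-dimensional trace identities on the finite part together with a support check for the infinite part. For $\tr((A\otimes I)P)=\tr(A\tr_2(P))$ one uses that $A\otimes I$ has finite part $P_A\otimes I$ and infinite space $X_A\otimes I$, that $(X_A\otimes I)P=0\iff\tr((X_A\otimes I)P)=0\iff\tr(X_A\tr_2(P))=0\iff X_A\tr_2(P)=0$ (all PSD, $X_A\otimes I$ a projector), and the ordinary identity $\tr((P_A\otimes I)P)=\tr(P_A\tr_2(P))$; the identity $\tr((M^\dagger AM)P)=\tr(A(MPM^\dagger))$ is analogous with $\supp(M^\dagger X_AM)$ in place of $X_A\otimes I$. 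Linearity in $P$ and in $A$ follows from the addition and scalar characterizations and extends to finite sums by induction; the equality and $\sqsubseteq$ characterizations against all $P\in\Pos$ (resp.\ $P\in\DD$) follow by specializing to $P=\ket{\psi}\bra{\psi}$ (resp.\ normalized $\ket{\psi}$) and invoking \Cref{lemma:basic-properties-of-ops-of-ivp}.

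Third, the tensor-product identities (bullets three and four) are the one place I would not argue purely through inner products, since a quadratic form on product vectors does not determine an operator on $\HH_1\otimes\HH_2$; instead I would work with the concrete representation $A_1\otimes A_2\triangleq(P_{A_1}\otimes P_{A_2},X)$ of \Cref{def:operations-infinite-valued-pre}. Associativity and distributivity over $cA_1+A_2$ then split into the standard identity on the finite parts ($P_{A_1}\otimes(P_{A_2}\otimes P_{A_3})=(P_{A_1}\otimes P_{A_2})\otimes P_{A_3}$, and so on) and an identity between the two prescribed infinite subspaces, which is a routine computation in the subspace lattice using that $\otimes$ distributes over $\vee$ and that $\supp(P\otimes Q)=\supp(P)\otimes\supp(Q)$ (via \Cref{prop:support}). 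Finally, the corollaries for CP maps are immediate once $\EE^\dagger(A)=\sum_i E_i^\dagger A E_i$ is available (\Cref{lemma:basic-properties-of-ops-of-ivp}): $\tr(A\EE(P))=\sum_i\tr(AE_iPE_i^\dagger)=\sum_i\tr(E_i^\dagger AE_i\,P)=\tr(\EE^\dagger(A)P)$, and monotonicity, linearity, composition and tensoriality of $\EE^\dagger$ follow termwise from the operator-level identities just established. I expect the main obstacle to be purely organizational: keeping the $+\infty$ case analysis (which expressions are finite, which subspaces are annihilated by which states) consistent across all the items, and, within that, the subspace-lattice bookkeeping for the tensor-product identities, where the inner-product arguments are insufficient and one must manipulate the explicit $\vee$/$\otimes$ formulas directly; none of the steps is conceptually deep, the care being entirely in the conventions.
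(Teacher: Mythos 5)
Your proposal is correct and follows essentially the same route as the paper's proof: pointwise inner-product arguments (via the characterizations in \Cref{lemma:basic-properties-of-ops-of-ivp}) for the scalar, addition, conjugation and order items; the $(P_A,X_A)$ decomposition of \Cref{lem:IVP-rep} with a finiteness case analysis for the trace identities; explicit subspace-lattice computations with $\vee$, $\supp$, and $\mathcal{E}(X_1\vee X_2)=\mathcal{E}(X_1)\vee\mathcal{E}(X_2)$ for the tensor-product items; and Kraus-operator manipulations for the CP-map corollaries. Your observation that the quadratic form on product vectors alone does not determine the operator, forcing the concrete representation for bullets three and four, is a correct refinement of exactly the bookkeeping the paper also performs there.
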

\begin{proof}
  We prove the above properties as follows:
  \begin{itemize}
    \item For $0A = 0$, $1A = A$, $a(bA) = (ab)A$, it follows directly from the definition.
    \item For $0+A = A+0 = A$, $A_1 + A_2 = A_2 + A_1$, $A_1 + (A_2 + A_3) = (A_1 + A_2) + A_3$,
          it follows from the definition of addition of infinite-valued predicates.
          For instance, to prove $A+0 = A$,
          we have, for any $\ket{\psi}$, $\bra{\psi}A+0\ket{\psi}  = \bra{\psi}A\ket{\psi} + \bra{\psi}0\ket{\psi} =  \bra{\psi}A\ket{\psi}$.
          Then the claim follows directly by noting that
          $A_{1} = A_{2}$ if for any $\ket{\psi}$, $\bra{\psi}A_{1}\ket{\psi} = \bra{\psi}A_{2}\ket{\psi}$.

    \item For $0\otimes A = A\otimes 0 = 0$, $A_1\otimes (A_2 \otimes A_3) = (A_1 \otimes A_2) \otimes A_3$,
          it follows from the definition.

    \item For $A\otimes (cA_1 + A_2) = c(A\otimes A_1) + (A\otimes A_2)$;
          $(cA_1 + A_2)\otimes A = c(A_1\otimes A) + (A_2\otimes A)$, it follows from the definition.

    \item To prove $0^\dagger A0 = 0$, we notice that for any $\ket{\psi}$,
          $\bra{\psi} 0^{\dagger}A 0 \ket{\psi} = \bra{\phi} A \ket{\phi} = 0 = \bra{\psi} 0 \ket{\psi}$ for $\ket{\phi} = 0\ket{\psi} = 0$.
          Then we know $0^{\dagger}A 0 = 0$ as we want.
          The propositions
          $M_2^\dagger (M_1^\dagger AM_1) M_2 = (M_1M_2)^\dagger A (M_1M_2)$ and
          $M^\dagger(cA_1+A_2)M = c(M^\dagger A_1M) + M^\dagger A_2M$
          can be proved similarly.

    \item To prove $(M_1\otimes M_2)^\dagger (A_1\otimes A_2) (M_1\otimes M_2) =
          (M_1^\dagger A_1M_1)\otimes (M_2^\dagger A_2M_2)$,
          it follows from the definition and the fact that
          $X_{M_{1}^{\dagger} A_{1} M_{1}} = \supp (M_{1}^{\dagger} X_{A_{1}} M_{1})$.
          Actually,
          $(M_1\otimes M_2)^\dagger (A_1\otimes A_2) (M_1\otimes M_2) =
          (M_{1}^{\dagger} P_{A_{1}} M_{1}) \otimes (M_{2}^{\dagger} P_{A_{2}} M_{2})
          + \supp((M_1\otimes M_2)^\dagger)X(M_{1}\otimes M_{2}))$,
          where $X = \left( \supp \left( P_{A_{1}} \right)\otimes X_{A_{2}}\right)\vee \left( X_{A_{1}} \otimes \supp \left( P_{A_{2}} \right) \right)\vee \left(X_{A_{1}}\otimes X_{A_{2}}\right)$.
          $(M_1^\dagger A_1M_1)\otimes (M_2^\dagger A_2M_2) =
          (M_{1}^{\dagger} P_{A_{1}} M_{1}) \otimes (M_{2}^{\dagger} P_{A_{2}} M_{2})
          +\infty Y$,
          where
          $Y = \left( \supp \left(M_{1}^{\dagger} P_{A_{1}} M_{1} \right)\otimes \supp(M_{2}^{\dagger}X_{A_{2}} M_{2})\right)\vee \left( \supp(M_{1}^{\dagger} X_{A_{1}}M_{1}) \otimes \supp \left( M_{2}^{\dagger} P_{A_{2}}M_{2} \right) \right)\vee \left(\supp(M_{1}^{\dagger}X_{A_{1}} M_{1})\otimes \supp(M_{2}^{\dagger}X_{A_{2}} M_{2})\right)$.
          It can be shown that
          $X = Y$
          by using the properties
          $\mathcal{E}(X_{1}\vee X_{2}) = \mathcal{E}(X_{1})\vee \mathcal{E}(X_{2})$
          and $\mathcal{E}(\supp (\rho)) = \supp (\mathcal{E}(\rho))$.

    \item For $\tr(A(cP_1+P_2)) = c\tr(AP_1) + \tr(AP_2)$,
          if $c = 0$ then it is direct.
          In the following, we assume $c > 0$.
          if $X_{A}P_{1} = X_{A} P_{2} = 0$, then the equation is direct by definition.
          Suppose $X_{A}P_{1} \ne 0$,
          which means $\tr(AP_{1}) = +\infty$.
          In this case, we have
          $X_{A}(cP_{1} + P_{2})\ne 0$, meaning the left hand side is also $+\infty$.
          The case $X_{A}P_{2} \ne 0$ and $\tr((cA_1 + A_2)P) = c\tr(A_1P) + \tr(A_2P)$
          can be proved similarly.

    \item For $\tr((A_1\otimes A_2)(P_1\otimes P_2)) = \tr(A_1P_1)\tr(A_2P_2)$,
          if $X_{A_{1}} P_{1} = X_{A_{2}} P_{2} = 0$, then it is direct by computation.
          Now,
          by symmetry
          consider the case $X_{A_{1}} P_{1}\ne 0$,
          which means $\tr(A_{1} P_{1}) = +\infty$
          (the case $X_{A_{2}}P_{2}\ne 0$ can be proved similarly).
          If $\tr(A_{2} P_{2}) = 0$, then $P_{2}\in (\supp(P_{A_{2}})\vee X_{A_{2}})^{\bot}$.
          In this case,
          $X_{A_{1}\otimes A_{2}}P_{1}\otimes P_{2} = 0$,
          and $\tr((A_{1}\otimes A_{2})(P_{1}\otimes P_{2})) = \tr((P_{A_{1}}\otimes P_{A_{2}})(P_{1}\otimes P_{2})) = 0$.
          If $\tr(A_{2}P_{2})\ne 0$, then
          $X_{A_{1}\otimes A_{2}}P_{1}\otimes P_{2} \ne 0$,
          and both left and right hand sides takes $+\infty$ as we desired.

          For $\tr((M^\dagger AM)P) = \tr(A (MPM^\dagger))$,
          we note that $\supp(M^{\dagger}X_{A} M) P = 0$ is equivalent to $M^{\dagger}\supp(X_{A})M P = 0$,
          and the latter can be written as $\supp(X_{A}) MPM^{\dagger} = 0$.
          Then, the property follows directly from definition.

    \item For $\tr((A\otimes I)P) = \tr(A\tr_2(P))$,
          we note that
          $A\otimes I = P_{A}\otimes I + X_{A}\otimes I$.
          Thus,
          $X_{A} \tr_{2}(P) = 0$
          if and only if $X_{A\otimes I} P = 0$.
          Then, the equation follows directly from the definition.
          $\tr((I\otimes A)P) = \tr(A\tr_1(P))$
          can be proved in a similar way.

    \item For $\tr(A\ket{\phi}\bra{\phi}) = \bra{\phi}A\ket{\phi}$,
          it is direct by definition.

    \item For $A_1=A_2$ iff for all $P\in\Pos$ (or $P\in\DD$) such that $\tr(A_1 P) = \tr(A_2 P)$,
          the ``if'' part can be proved using the previous property
          and the proposition that $A_1 = A_2$ if for all $\ket{\psi}$, $\bra{\psi}A_1\ket{\psi} = \bra{\psi}A_2\ket{\psi}$.
          The ``only if'' part is direct by definition.

    \item For $A_1\sqsubseteq A_2$ iff for all $P\in\Pos$ (or $P\in\DD$) such that $\tr(A_1 P) \le \tr(A_2 P)$,
          the ``if'' part can be proved by limiting $P$ to be rank-$1$ projectors $\ket{\psi}\bra{\psi}$.
          For the ``only if'' part,
          consider the spectral decomposition of $P = \sum_{i} a_{i} \ket{\psi_{i}}\bra{\psi_{i}}$.
          We have
          $\tr(A_{1}P) = \tr(A_{1}\sum_{i} a_{i}\ket{\psi_{i}}\bra{\psi_{i}}) = \sum_{i} a_{i}\bra{\psi_{i}}A_{1}\ket{\psi_{i}}$,
          and
          $\tr(A_{2}P) = \tr(A_{2}\sum_{i} a_{i}\ket{\psi_{i}}\bra{\psi_{i}}) = \sum_{i} a_{i}\bra{\psi_{i}}A_{2}\ket{\psi_{i}}$.
          Since $A_{1}\sqsubseteq  A_{2}$,
          we have $\bra{\psi_{i}}A_{1}\ket{\psi_{i}} \le \bra{\psi_{i}}A_{2}\ket{\psi_{i}} $ for every $i$.
          The result then follows directly.

    \item For $A_1\sqsubseteq A_2$ implies $M^\dagger A_1M\sqsubseteq M^\dagger A_2M$,
          let $\ket{\psi}$ be any state.
          Then,
          a direct computation gives
          $\bra{\psi}M^{\dagger} A_{1} M\ket{\psi} = \bra{\phi} A_{1} \ket{\phi} \le \bra{\phi} A_{2} \ket{\phi}  = \bra{\psi}M^{\dagger} A_{2} M\ket{\psi}$,
          where $\ket{\phi} = M\ket{\psi}$.
          Thus,
          $M^\dagger A_1M\sqsubseteq M^\dagger A_2M$ follows by definition.
          For $A_1\sqsubseteq A_2$ and $A_3\sqsubseteq A_4$ implies $cA_1+A_3\sqsubseteq cA_2+A_4$,
          consider any $\ket{\psi}$,
          $A_{1}\sqsubseteq A_{2}$ implies
          $\bra{\psi}A_{1}\ket{\psi} \le \bra{\psi}A_{2}\ket{\psi}$.
          Similarly we have
          $\bra{\psi}A_{3}\ket{\psi} \le \bra{\psi}A_{4}\ket{\psi}$.
          Therefore
          we know
          $\bra{\psi}cA_{1} + A_{3}\ket{\psi} = c\bra{\psi}A_{1}\ket{\psi} + \bra{\psi}A_{3}\ket{\psi}\le c\bra{\psi}A_{2}\ket{\psi}+ \bra{\psi}A_{4}\ket{\psi}
          = \bra{\psi}cA_{2} + A_{4}\ket{\psi}$,
          and the result follows by definition.

    \item For $\tr(A\EE(P)) = \tr(\EE^\dagger(A) P)$,
          we write $\mathcal{E}(P) = \sum_{j} E_{j} P E_{j}^{\dagger}$.
          Then, we have
          $\tr(A \mathcal{E}(P)) = \tr(A \sum_{j} E_{j} P E_{j}^{\dagger})
          = \sum_{j}\tr(A E_{j} P E_{j}^{\dagger}) = \sum_{j} \tr(E_{j}^{\dagger} A E_{j} P)  = \tr(\mathcal{E}^{\dagger}(A)P)$.
          For
          $A_1\sqsubseteq A_2$ implies $\EE (A_1)\sqsubseteq \EE (A_2)$,
          write $\mathcal{E}(A) = \sum_{j} E_{j}^{\dagger} A E_{j}$.
          We now for any $j$, $E_{j}^{\dagger} A_{1} E_{j} \sqsubseteq E_{j}^{\dagger} A_{2} E_{j}$,
          thus $\sum_{j}E_{j}^{\dagger} A_{1} E_{j} \sqsubseteq \sum_{j}E_{j}^{\dagger} A_{2} E_{j}$ as we want.

    \item For $(c\EE_1 + \EE_2)(A) = c\EE_1(A) + \EE_2(A)$
          take any $P\in \Pos$,
          we have
          $\tr(P(c\EE_1 + \EE_2)(A) ) = \tr((c \mathcal{E}_{1} + \mathcal{E}_{2})^{\dagger}(P) (A))
          c\tr(\mathcal{E}_{1}^{\dagger}(P)A) + \tr(\mathcal{E}_{2}^{\dagger}(P)A)
          = \tr(P c\mathcal{E}_{1}(A) ) + \tr(P \mathcal{E}_{2}(A))= \tr(P(c \mathcal{E}_{1} + \mathcal{E}_{2})(A))$.
          Then the result follows.
          For $\EE(cA_1+A_2) = c\EE(A_1) + \EE(A_2)$,
          write $\mathcal{E}(A) = \sum_{j} E_{j} A E_{j}^{\dagger}$.
          We then have $\mathcal{E}(cA_{1} + A_{2}) = \sum_{j} E_{j} (cA_{1}+A_{2}) E_{j}^{\dagger} = c\sum_{j} E_{j} A_{1} E_{j}^{\dagger} + \sum_{j} E_{j} A_{2} E_{j}^{\dagger}
          = c\mathcal{E}(A_{1}) + \mathcal{E} (A_{2})$ as we want.
    \item $\EE_2(\EE_1(A)) = (\EE_2\circ\EE_1)(A)$
          is by definition.
          For $(\EE_1\otimes\EE_2)(A_1\otimes A_2) = \EE_1(A_1)\otimes \EE_2(A_2)$,
          let $A_{1} = P_{A_{1}} + \infty X_{A_{1}}$,
          $A_{2} = P_{A_{2}} + \infty X_{A_{2}}$,
          and $A_{1} \otimes A_{2} = P_{A_{1}}\otimes P_{A_{2}} + \infty X $,
          where
          $X = \left( \supp \left( P_{A_{1}} \right)\otimes X_{A_{2}}\right)\vee \left( X_{A_{1}} \otimes \supp \left( P_{A_{2}} \right) \right)\vee \left(X_{A_{1}}\otimes X_{A_{2}}\right)$.
          Then,
          $(\EE_1\otimes\EE_2)(A_1\otimes A_2) =  (\EE_1\otimes\EE_2) (P_{A_{1}}\otimes P_{A_{2}}) + \infty Y$,
          where
          $Y = \supp((\EE_1\otimes\EE_2)(X)) $
          and
          $\EE_1(A_1)\otimes \EE_2(A_2) =
          (\EE_1\otimes\EE_2) (P_{A_{1}}\otimes P_{A_{2}}) + \infty Z$,
          where
          $Z = \left( \supp \left( \EE_{1}(P_{A_{1}}) \right)\otimes \EE_{2}(X_{A_{2}})\right)\vee \left( \EE_{1}(X_{A_{1}}) \otimes \supp \left( \EE_{2}(P_{A_{2}}) \right) \right)\vee \left(\EE_{1}(X_{A_{1}})\otimes \EE_{2}(X_{A_{2}})\right)$.
          It is clear that $Y = Z$
          by using the properties
          $\mathcal{E}(X_{1}\vee X_{2}) = \mathcal{E}(X_{1})\vee \mathcal{E}(X_{2})$
          and $\mathcal{E}(\supp (\rho)) = \supp (\mathcal{E}(\rho))$.
  \end{itemize}
\end{proof}






\end{appendices}

\end{document}